\newtheorem{lemma}{Lemma}
\newtheorem{prop}{Proposition}
\theoremstyle{definition}
\newtheorem{definition}{Definition}
\theoremstyle{remark}
\newtheorem{stm}{Statement}
\newcommand{\R}{\mathbb{R}}
\newcommand{\Z}{\mathbb{Z}}
\newcommand{\N}{\mathbb{N}}
\newcommand{\C}{\mathbb{C}}
\newcommand{\CW}{\mathcal{CW}}
\newcommand{\WF}{\mathcal{WF}}
\newcommand{\bS}{\mathbb S}
\newcommand{\pr}{\mathrm{pr}}
\newcommand{\ind}{\mathrm{ind}}
\newcommand{\otheta}{\overline{\theta}}
\newcommand{\id}{\mathrm{id}}
\newcommand{\oL}{\overline{L}}
\newcommand{\oH}{\overline{H}}
\newcommand{\skel}{\mathrm{Skel}}
\newcommand{\cN}{\mathcal N}
\newcommand{\Ls}{\mathrm{L}}
\title{TILT: topological interface recovery in limited-angle tomography }
\author{Elli Karvonen}
\address{Department of Mathematics and Statistics, P.O. Box 68 (Pietari Kalmin katu 5), FI-00014 University of Helsinki, Finland}
\email{elli.karvonen@helsinki.fi}
\author{Matti Lassas}
\address{Department of Mathematics and Statistics, P.O. Box 68 (Pietari Kalmin katu 5), FI-00014 University of Helsinki, Finland}
\email{matti.lassas@helsinki.fi}
\author{Pekka Pankka}
\address{Department of Mathematics and Statistics, P.O. Box 68 (Pietari Kalmin katu 5), FI-00014 University of Helsinki, Finland}
\email{pekka.pankka@helsinki.fi}
\author{Samuli Siltanen}
\address{Department of Mathematics and Statistics, P.O. Box 68 (Pietari Kalmin katu 5), FI-00014 University of Helsinki, Finland}
\email{samuli.siltanen@helsinki.fi}
\begin{document}

\maketitle
\begin{abstract}
A novel reconstruction method is introduced for the severely ill-posed inverse problem of limited-angle tomography. It is well known that, depending on the available measurement, angles specify a subset of the wavefront set of the unknown target, while some oriented singularities remain invisible in the data. Topological Interface recovery for Limited-angle Tomography, or TILT, is based on lifting the visible part of the wavefront set under a universal covering map. In the space provided, it is possible to connect the appropriate pieces of the lifted wavefront set correctly using dual-tree complex wavelets, a dedicated metric, and persistent homology. The result is not only a suggested invisible boundary but also a computational representation for all interfaces in the target.
\end{abstract}

\tableofcontents

\section{Introduction}

\noindent
We study two-dimensional tomography where the aim is to recover a compactly supported, non-negative  function $f:\R^2\rightarrow \R$ from the knowledge of its line integrals. Our focus is on the highly ill-posed case called {\it limited-angle tomography}, where the collection of lines is restricted to a subset of directions.

We introduce a new algorithm called TILT, for Topological Interface recovery for Limited-angle Tomography. Our focus is on imaging targets with different homogeneous regions separated by crisp interfaces. In limited-angle tomography, parts of those interfaces are not encoded in data in a robust way, and we use topological data analysis with a dedicated non-Euclidean metric to connect the known parts of interfaces to the unknown parts.

We found that, under certain assumptions, persistent homology can be used to connect the visible object boundaries by filling in the invisible boundaries. Theoretically, a suitable set of prior knowledge is this: the target consists of a small number of inclusions, which 
\begin{itemize}
    \item are well separated,
    \item have smooth boundaries with bounded curvature,
    \item are either simply connected, or at least their boundary components are not close to each other.
\end{itemize}
Under these assumptions, and if the missing angle is small, our analysis shows that TILT can approximately recover the full boundaries of inclusions {\it and provide computational representations for those close curves.} Furthermore, our computational experiments suggest that TILT can be useful even in severly limited cases such as 60 degree view angle.

The idea of TILT is as follows. We assume that the above suitable priors hold for our target. Microlocal analysis tells us which parts of boundaries (or more precisely, which parts of the wavefront set of the unknown body $f$) can be reconstruct stably. We will compute these boundary points, i.e. microlocal singularities (in our case, interfaces between domains of constant material), using complex wavelets from an initial reconstruction. Moreover, complex wavelets give us information on the direction of these singularities. The idea is to estimate the neighborhoods of invisible singularities utilizing visible singularities and a dedicated non-Euclidean distance. Notably, we know that each boundary curve's wavefront set lifted to the space $\R^2\times \R$ forms a connected path. Here the first coordinate is the planar location, and the second coordinate  specifies the direction of a singularity.  However, only part of this path can be detected in a limited-angle problem. With the help of persistent homology, we can find the minimum size of invisible singularities' neighborhoods, so that these neighborhoods together with known parts of the path contain the full path, i.e., the boundary.

Limited-angle X-ray tomography appears in  medical applications, such as digital breast tomosynthesis \cite{dobbins2003digital,wu2004comparison,niklason1997digital,vedantham2015digital,rantala2006wavelet,piccolomini2016fast, landi2017limited, landi2019nonlinear}, intraoral dental imaging \cite{webber1997tuned,kolehmainen2003statistical,mauriello2020role}, and improving the temporal resolution of dynamic CT scans. Nondestructive testing in industry sometimes forces angle limitation, for example when the object is too large \cite{de2014industrial,kurfiss20123} or otherwise not accessible from all sides, as in weld inspection of pipelines \cite{haith2017defect,silva2021x} or underwater pipeline inspection \cite{riis2018limited}. As in medicine, one can attempt to increase the temporal resolution of dynamic scans by restricting the angle \cite[Sec. 2.1.3]{goethals2022dynamic}. Limited-angle tomography also appears in applications that do not use X-rays, such as 3D transmission electron microscopy \cite{engelhardt2000electron,turk2020promise}, neutron tomography \cite{osterloh2011limited}, satellite-based ionospheric tomography \cite{norberg2015ionospheric}, or visible light applications, including phase microscopy \cite{guo2021limited} and adaptive optics \cite{gerth2015method,helin2018atmospheric}. 

Since the limited-angle tomography problem is extremely ill-posed \cite{natterer2001mathematics,davison1983ill,louis1986incomplete}, the reconstruction process needs to be augmented with prior knowledge about the unknown to make it robust. However, the targets of imaging in the above applications are wildly different, and therefore it is necessary to have a variety of tomographic algorithms available to promote diverse kinds of prior information. 

Reconstruction from limited-angle tomography data has been approached in various ways in the literature, for example using algebraic methods \cite{andersen1989algebraic}, variational regularization \cite{delaney1998globally,sidky2006accurate,ritschl2011improved,frikel2013sparse,chen2013limited,huang2016new,huang2018scale} and Bayesian inversion \cite{hanson1983bayesian,kolehmainen2003statistical,rantala2006wavelet}. In recent years, machine learning has offered impressive new possibilities for reconstructions \cite{hammernik2017deep,hauptmann2018model,bubba2019learning,huang2019traditional,barutcu2021limited,huang2020limited,Bubba2021}. All of the methods are made robust against modelling errors and noise by incorporating prior information about the target. TILT is fundamentally different from all the previous work. 

Persistent homology has become a popular topic in the past two decades. Most of its applications are related to data analysis, see e.g. \cite{pershomMedicalBrodzki, pershomMedicalOyama, pershomPhysicsHamilton}. So far, it has not been used much for inverse problems, apart from the work of  Bubenik {\it et. al.} in solving constant curvature from uniformly sampled points on disks using  Vietoris–Rips complexes \cite{Bubenik_2020}. TILT uses  persistent homology quite differently. We propose a unique way to solve an inverse problem using cubical complexes formed by using a non-euclidean metric without analysing persistent diagrams, barcodes, or persistence landscapes. 
We believe that TILT may also find use in applications outside the context of tomography, for example in image inpainting. 

One of the novelties of our paper is that in the computational topological methods,
we use a non-Euclidean metric. Indeed,
to analyze the curves $\alpha(t)$ in $\R^2$, that are parametrized along the arc length, we first lift the curves to 
curves  $(\alpha(t),\partial_t\alpha(t))$ in  the unit sphere bundle
$S\R^2$. Then, to define a neighborhood of a lifted curve 
$\{(\alpha(t),\partial_t\alpha(t)):\ t\in [0,L]\}$, we define the topological neighborhoods in  $S\R^2=\{(x,v)\in \R^2\times \R^2:\ |v|=1\}$ using a distance
function that penalize curves that have a large curvature. To do this, we use
the elastic energy, originally defined by the Bernoullis and Euler \cite{elastica_Euler,elasticaHistory_Levien2008}, and show that such energy 
 defines a non-Euclidean metric in $S\R^2$. This metric
can be used to find moderately curved continuations for a given segment of a curve,
and using persistent homology we can consider continuations of the curve segment that are closed curves. Microlocal methods make it possible to determine a part $\gamma$  from tomographic data, and by using computational topology we find continuations of $\gamma$ which are moderately curved
closed curves. The union of such curves give us an estimate of the set where the true boundary possibly lies.

This paper is organized as follows. In Section \ref{sec:theoretical_background}, we go through the needed mathematical background, namely limited-angle tomography, wavefront set and its lift to $\R^2\times \R$, and the non-Euclidean distance called {\it candywrap distance}. Moreover, we go through all the needed computational tools, including complex wavelets, morphological operations, and persistent homology. In Section \ref{sec:method}, we describe the method, the process of estimating neighborhoods of invisible singularities, in detail. Next, we show our experimental results for chosen phantoms in Section \ref{sec:results}. Finally in Section \ref{sec:discussion}, we discuss the results and possible follow-up studies.

\section{Theoretical background} \label{sec:theoretical_background}

In this section, we discuss briefly the theoretical path from measurements to recognition of boundary components.

\subsection{Limited-angle tomography, continuous formulation}\label{sec:continuousRadon}

In 2-D tomography, we aim to recover a density function  $f\colon \R^2 \to \R$.  Typically, $f$ is a non-negative, compactly supported function modelling X-ray attenuation inside a physical body. Mathematically, tomographic data is determined by the Radon transform, which we will recall next. Let $\theta \in [0,2\pi]$ and let $p\in\R$. Denote $\otheta=(\sin\theta,\cos\theta)$ and $\theta^\perp=(-\sin\theta, \cos\theta)$. Now the line with normal vector $\theta^\perp$ and directed distance $p$ from the origin is denoted by
\begin{align}\label{Def:L_line}
    L(\otheta,p)=\{x \in \R^2 \mid x \cdot \otheta=p \}.
\end{align} 

The continuous two-dimensional Radon transform $\mathcal{R}f$ of a function $f \in L^1(\R^2)$ is 
 \begin{align*}
     \mathcal{R}f(\otheta,p)=\int_{ L(\otheta,p)}f(x)ds = \int_{-\infty}^\infty f(p\otheta+t\theta^\perp)dt,
 \end{align*}
where $ds$ is the measure on $L(\otheta,p)$ induced from Lebesgue measure on $\R^2$, see Figure \ref{fig:radon}. In a limited-angle tomography problem, tomographic data is given on some proper subset of $\theta \in [0,\pi)$; we focus on the case where $\theta$ is restricted between two extremes, $\theta_0$ and $\theta_1$:
\begin{equation}\label{thetarestr}
    0<\theta_0<\theta<\theta_1<\pi.
\end{equation}

Reconstructing $f$ from its Radon transform is an ill-posed problem, especially in the limited-angle case \cite{natterer2001mathematics,davison1983ill}.

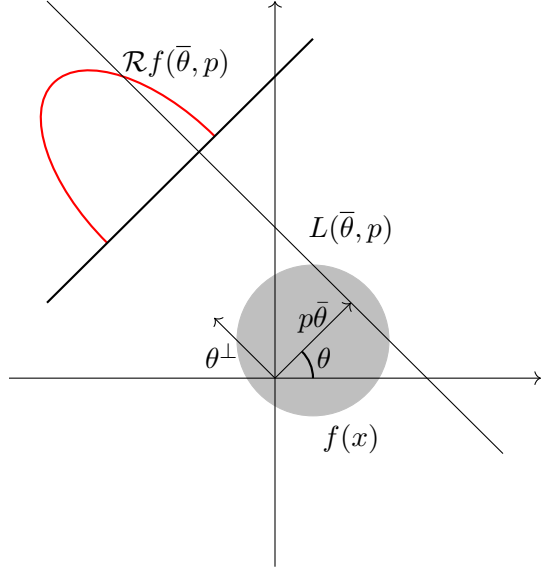
\begin{figure}
\centering
\begin{tikzpicture}
\draw[rotate around={45:(-1.5,2.5)},red,thick] (-1.5,2.5) ellipse (1 and 2); 
\draw[rotate around={45:(0,0)},fill,white] (-0.5,-0.5) rectangle (2,2.82);  
\draw[fill,color=lightgray] (0.5,0.5) circle [radius=1];  
\draw[->](0,-2.5) -- (0,5); 
\draw[->](-3.5,0) -- (3.5,0); 
\draw[-](3,-1) -- (-3,5); 
\draw[-,thick](-3,1) -- (0.5,4.5); 
\draw[->](0,0)-- (1,1); 
\draw[->](0,0) -- (-0.8,0.8); 
\draw
    (3,0) coordinate (a) 
    (0,0) coordinate (b)
    (2,2) coordinate (c)
    pic["$\theta$", draw=black, thick, -, angle eccentricity=1.4, angle radius=0.5cm]
    {angle=a--b--c};
\node[] (p) at (0.5,0.8) {$p\Bar{\theta}$};  
\node[] (t) at (-0.7,0.3) {$\theta^\perp$};  
\node[] (L) at (1,2) {$L(\otheta,p)$};
\node[] (f) at (1,-0.8) {$f(x)$};
\node[] (r) at (-1.3,4.2) {$\mathcal{R}f(\otheta,p)$};
\end{tikzpicture}
\caption{Illustration of a geometrical setup of the Radon transform of $f(x)$.}
\label{fig:radon}
\end{figure}

\subsection{Limited-angle tomography, discrete formulation}

While the continuous formulation of Section \ref{sec:continuousRadon} is convenient for theoretical purposes, in practical computations we prefer a discrete setup. This is especially crucial with incomplete tomographic data, such as the limited-angle case.

We write the reconstruction problem in matrix form:
\begin{align}\label{basicdiscreteinverseproblem}
m = Af+\epsilon,
\end{align}
where $m$ is data, $A$ is a known discretized linear forward operator, $f$ is a density function we aim to solve, and  $\epsilon$ models random noise. Typically, $f$ is represented as a pixel image with a constant attenuation value in each pixel, and the matrix $A$ encodes the lengths of X-ray paths travelling inside each pixel. It is also possible to model some finite width for the ray in this so-called {\it pencil beam model}. See e.g. \cite[Section 2.3.4]{mueller2012linear}.

The ill-posedness of the continuous model in Section \ref{sec:continuousRadon} shows up in the discrete model in the form of a ridiculously large condition number of the matrix $A$. This hampers any attempt to solve (\ref{basicdiscreteinverseproblem}) using basic methods, such as least squares, as the measurement noise is amplified uncontrollably \cite[Figure 2.19]{mueller2012linear}.

There exist multiple different strategies to recover the function $f$ using model (\ref{basicdiscreteinverseproblem}). One well-known and widely used strategy is total variation (TV) regularization, first introduced in \cite{rudin1992nonlinear} for image noise removal. The TV regularized problem takes the form 
 \begin{align*}
     \min_f \{ \| Af-m \|_2^2 + \alpha \text{TV}(f)\},
 \end{align*}
 where $\alpha$ is a regularization parameter and TV$(x)$ is a regularization term defined informally as $\int_\Omega |\nabla f(x)|dx$. The result in TV regularization is typically piecewise constant with sharp edges. In this work we use an adaptation of the algorithm in \cite{bredies2014recovering} to tomography.

\subsection{Wavefront set} \label{sec:wavefrontset}

In this section, we define the wavefront set $\WF(D)$ of a bounded subset $D\subset \R^n$ as the wavefront set of the distribution induced by the characteristic function $\chi_D$ of $D$. For the definition, we first give related definitions. Recall that a \emph{distribution $g \in \mathcal{D'}(\R^n)$} is a continuous linear functional $g\colon C^{\infty}_0(\R^n) \to \R$.

A distribution $g\in \mathcal{D'}(\R^n)$ has a \emph{compact support} if there exists a compact set $K\subset\R^n$ having the property that $g(\Phi)=0$ for each function $\Phi\in C^{\infty}_0(\R^n\setminus K)$. Also, a distribution's $g\in \mathcal{D'}(\R^n)$ Fourier transformation \emph{decreases rapidly} if, for each $N\in\N$, there exists a constant $C_N$ for which  $|\mathcal{F}g(\xi)| \leq C_N(1+|\xi|)^{-N}$ for all $\xi\in\R^n$; here $\mathcal{F}g$ is the Fourier transform of $g$. Recall that a distribution $g$ with compact support is equal to a $C^{\infty}$ function almost everywhere if and only if its Fourier transform $\mathcal{F}$ decreases rapidly.

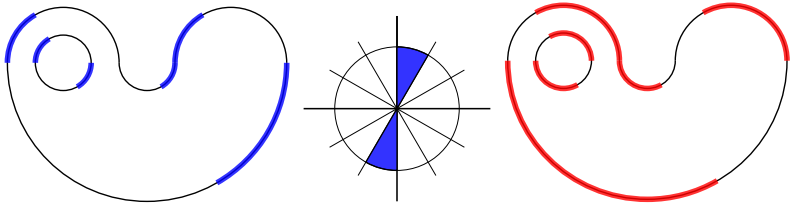
\begin{figure}

    \begin{subfigure}[b]{0.3\textwidth}
    \resizebox{\textwidth}{!}{%
    \begin{tikzpicture}
    \draw[black, line width=0.5mm, opacity=1] (-4,0) arc[start angle=0, end angle=180, radius=2]; 
    \draw[black, line width=0.5mm, opacity=1] (-2,0) arc[start angle=0, end angle=-180, radius=1]; 
    \draw[black, line width=0.5mm, opacity=1] (-6,0) circle [radius=1]; 
    \draw[black, line width=0.5mm, opacity=1] (2,0) arc[start angle=0, end angle=180, radius=2]; 
    \draw[black, line width=0.5mm, opacity=1] (2,0) arc[start angle=0, end angle=-180, radius=5]; 
    \draw[blue, line width=2mm, opacity=0.8] (-8,0) arc[start angle=180, end angle=120, radius=2]; 
    \draw[blue, line width=2mm, opacity=0.8] (-2,0) arc[start angle=180, end angle=120, radius=2]; 
    \draw[blue, line width=2mm, opacity=0.8] (-2,0) arc[start angle=0, end angle=-60, radius=1]; 
    \draw[blue, line width=2mm, opacity=0.8] (2,0) arc[start angle=0, end angle=-60, radius=5]; 
    \draw[blue, line width=2mm, opacity=0.8] (-7,0) arc[start angle=180, end angle=120, radius=1]; 
    \draw[blue, line width=2mm, opacity=0.8] (-5,0) arc[start angle=0, end angle=-60, radius=1]; 
    \end{tikzpicture}
    }%
    \end{subfigure}
      \begin{subfigure}[b]{0.2\textwidth}
    \resizebox{\textwidth}{!}{%
    \begin{tikzpicture}
    \draw[black, line width=0.5mm, opacity=1] (-3,0)--(3,0); 
    \draw[black, line width=0.5mm, opacity=1] (0,-3)--(0,3); 
    \draw[black, line width=0.5mm, opacity=1] (0,0)--(0,3); 
    \draw[] (0,0) -- (30:2.5);
    \draw[] (0,0) -- (60:2.5);
    \draw[] (0,0) -- (120:2.5);
    \draw[] (0,0) -- (150:2.5);
    \draw[] (0,0) -- (210:2.5);
    \draw[] (0,0) -- (240:2.5);
    \draw[] (0,0) -- (300:2.5);
    \draw[] (0,0) -- (330:2.5);
    \draw[very thick,fill=blue!80] (0,0) --  (90:2) arc(90:60:2) -- cycle;
    \draw[very thick,fill=blue!80] (0,0) --  (270:2) arc(270:240:2) -- cycle;
    \draw[] (0,0) circle [radius=2];    
    \end{tikzpicture}
    }%
    \end{subfigure}
    \begin{subfigure}[b]{0.3\textwidth}
    \resizebox{\textwidth}{!}{%
    \begin{tikzpicture}
    \draw[black, line width=0.5mm, opacity=1] (-4,0) arc[start angle=0, end angle=180, radius=2]; 
    \draw[black, line width=0.5mm, opacity=1] (-2,0) arc[start angle=0, end angle=-180, radius=1]; 
    \draw[black, line width=0.5mm, opacity=1] (-6,0) circle [radius=1]; 
    \draw[black, line width=0.5mm, opacity=1] (2,0) arc[start angle=0, end angle=180, radius=2]; 
    \draw[black, line width=0.5mm, opacity=1] (2,0) arc[start angle=0, end angle=-180, radius=5]; 
    \draw[red, line width=2mm, opacity=0.8] (-4,0) arc[start angle=0, end angle=120, radius=2]; 
    \draw[red, line width=2mm, opacity=0.8] (2,0) arc[start angle=0, end angle=120, radius=2]; 
    \draw[red, line width=2mm, opacity=0.8] (-4,0) arc[start angle=180, end angle=300, radius=1]; 
    \draw[red, line width=2mm, opacity=0.8] (-8,0) arc[start angle=180, end angle=300, radius=5]; 
    \draw[red, line width=2mm, opacity=0.8] (-5,0) arc[start angle=0, end angle=120, radius=1]; 
    \draw[red, line width=2mm, opacity=0.8] (-7,0) arc[start angle=180, end angle=300, radius=1]; 
    \end{tikzpicture}
    }%
    \end{subfigure}
    \caption{Visible singular, bolded in blue (illustrated on the left-hand side), and invisible singularities, bolded in red (illustrated on the right-hand side), of an object whose boundaries are drawn black when X-rays span from 60  to 90 degrees (illustrated in middle).}
    \label{fig:VisVsInvis}
\end{figure}

We are now ready to define the wavefront set of $D$ as follows. 

\begin{definition}
The \emph{(unit) wavefront set $\WF_1(D)\subset \R^n\times \bS^{n-1}$} of a bounded set $D\subset \R^2$ consists of the pairs $(x,\xi)$ that have the following property: for each function $\Phi \in C^\infty_0(\R^n)$ satisfying $\Phi(x)\ne 0$, the Fourier transform $\mathcal F(\Phi g)$ does not decrease rapidly in any open cone neighborhood of $\{ x + t \xi \in \R^n \mid t>0\}$.
\end{definition}

The wavefront set plays an important role in X-ray tomography: we know precisely which singularities of a density function $f$ can be detected and stably recovered \cite{greenleaf1989non,Quinto1993,frikel2013sparse,FrikelQuinto}.  Let $D\subset \R^n$  be a bounded set. Consider that we have X-ray data for angles in a set $(\theta_0,\theta_1)$ as in (\ref{thetarestr}). Now we know that $(x_0,\xi_0) \in \WF_1(D)$ can be detected if there exist $\theta\in(\theta_0,\theta_1)$ for which the line $L(\otheta,|x_0|)$ is perpendicular to $\xi_0$. Notably, we have the X-ray data for angles in a set $(\theta-\varepsilon,\theta+\varepsilon)\subset(\theta_0,\theta_1)$, which is essential to be able to detect singularity. We say that a singularity is \emph{visible} if it can be detected from X-ray data. Otherwise, it is \emph{invisible}. Since in limited-angle data we do not have all the directions, some singularities might be invisible (Figure \ref{fig:VisVsInvis}).

\subsection{Lift of a wavefront set under universal covering map}

In this section, we prove a useful lifting property for the wavefront set. For the statement, let $\widehat \pi \colon \R\to \bS^1$ be the standard covering map $t\mapsto e^{i t}$ and denote $\widetilde \pi = \id_{\R^2}\times \widehat \pi \colon \R^2\times \R\to \R^2\times \bS^1$ the covering map $(x,t) \mapsto (x,e^{i t})$.

\begin{prop}
\label{prop:lifting-proposition}
Let $\Gamma \subset \R^2$ be a smooth loop, $\beta \colon \bS^1 \to \Gamma$ a diffeomorphism, and $\alpha = \beta \circ \hat\pi \colon \R\to \R^2$ be a $C^\infty$-smooth parametrization of $\Gamma$. Then 
\begin{enumerate}
\item $\sigma \colon \R\to \R^2\times \bS^1$, 
\[
t\mapsto \left(\alpha(t), \left( \frac{\dot \alpha(t)}{|\dot \alpha(t)|}\right)^\perp\right),
\]
is a path into $\cN_1(\Gamma)$, and
\item each lift $\tilde \sigma \colon \R \to \R^2\times \R$ under $\tilde \pi$ meets all levels of $\R^2\times \{t\}$ for $t\in \R$.
\end{enumerate}
\end{prop}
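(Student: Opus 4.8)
The plan is to handle the two assertions separately. Assertion (1) is essentially a matter of unwinding definitions. Since $\beta$ is a diffeomorphism and $\hat\pi$ is a smooth covering map, hence a local diffeomorphism, the composition $\alpha=\beta\circ\hat\pi$ is $C^\infty$ and regular, i.e.\ $\dot\alpha(t)\ne 0$ for every $t\in\R$. Therefore $t\mapsto \dot\alpha(t)/|\dot\alpha(t)|$ is a well-defined $C^\infty$ map $\R\to\bS^1$, and composing with the fixed rotation $v\mapsto v^\perp$ preserves this; so $\sigma$ is $C^\infty$, in particular a continuous path. Its image lies in $\cN_1(\Gamma)$ because $\alpha(t)\in\Gamma$ and $(\dot\alpha(t)/|\dot\alpha(t)|)^\perp$ is a unit vector orthogonal to $T_{\alpha(t)}\Gamma$, which is precisely the condition defining membership in $\cN_1(\Gamma)$.

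For assertion (2), write $g\colon\R\to\bS^1$ for the second component of $\sigma$, so $\sigma=(\alpha,g)$. From $\hat\pi(t+2\pi)=\hat\pi(t)$ and $\alpha=\beta\circ\hat\pi$ we get that $\alpha$, and hence $\dot\alpha$ and $g$, are $2\pi$-periodic. Because the first factor of $\tilde\pi$ is the identity, any lift $\tilde\sigma$ of $\sigma$ under $\tilde\pi$ has the form $\tilde\sigma=(\alpha,\tilde g)$ with $\tilde g\colon\R\to\R$ a continuous lift of $g$ under $\hat\pi$ (i.e.\ $e^{i\tilde g(t)}=g(t)$); such lifts exist, are continuous, and any two differ by a constant in $2\pi\Z$, so it suffices to show that one of them maps $\R$ onto $\R$. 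Periodicity of $g$ forces $\tilde g(t+2\pi)-\tilde g(t)\in 2\pi\Z$ for all $t$, and by continuity this difference is a constant $2\pi k$ with $k\in\Z$; iterating gives $\tilde g(2\pi n)=\tilde g(0)+2\pi k n$ for every $n\in\Z$.

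It remains to see that $k\ne 0$, and this is the one nonroutine step: $k$ equals the turning number of the unit tangent of the simple closed regular curve $\alpha|_{[0,2\pi]}$ (the rotation by $\pi/2$ relating $g$ to the unit tangent is a degree-one self-map of $\bS^1$ and does not affect $k$), and by the theorem of turning tangents (Hopf's Umlaufsatz) this number is $\pm 1$. The embeddedness built into the hypothesis that $\beta$ is a diffeomorphism is exactly what is needed here, since non-simple loops — for instance a figure-eight — can have turning number $0$. Granted $k\ne 0$, the sequence $\tilde g(2\pi n)$ tends to $+\infty$ and to $-\infty$ as $n\to+\infty$ and $n\to-\infty$ (in one order or the other), so the continuous function $\tilde g$ is unbounded above and below, and the intermediate value theorem gives $\tilde g(\R)=\R$. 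Hence for each $s\in\R$ there is $t$ with $\tilde g(t)=s$, i.e.\ $\tilde\sigma(t)\in\R^2\times\{s\}$, which is the claim. The main obstacle is conceptual rather than computational: recognising that the statement reduces, via covering-space lifting, to the nonvanishing of the turning number, and then citing the Umlaufsatz.
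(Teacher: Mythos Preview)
Your proof is correct and follows essentially the same route as the paper: the key step in both is that the winding/degree of the unit normal (equivalently, unit tangent) map of a simple closed curve is $\pm 1$, which you invoke as Hopf's Umlaufsatz and the paper cites as the degree of the Gauss map via do~Carmo. The paper packages the argument through an auxiliary proposition giving a slightly stronger parametrized statement, but the substance is identical.
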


The heuristic interpretation of this proposition is that we may detect boundary components $\Omega$ from lifts of components $\Gamma$ of the wavefront sets $\WF_1(\Omega)$.

The main tool in the proof of Proposition \ref{prop:lifting-proposition} is the notion of a unit normal bundle of a smooth loop $\Gamma \subset \R^2$. By assumption, $\Gamma$ is diffeomorphic to $\bS^1$. For each $x\in \Gamma$, the tangent space of $\Gamma$ is a line $T_x(\Gamma)$ in $T_x \R^2$ and we denote $N_x \subset T_x \R^2$ the line orthogonal to $T_x(\Gamma)$. Thus $N_x$ is the \emph{normal of $\Gamma$ at $x$}. As usual, we consider $T_x \R^2$ as the space of directions at $x$ and write $T_x \R^2 = \{x\}\times \R^2$. We denote
\[
\cN_1(\Gamma) = \{ (x,\nu) \in \R^2\times \bS^1 \colon x\in \Gamma,\ \nu\in N_x,\ |\nu|=1\}
\]
the \emph{unit normal bundle of $\Gamma$}.

The unit normal bundle $\cN_1(\Gamma)$ has two components, which we denote -- in what follows -- as $\Gamma_+$ and $\Gamma_-$. The symbols $+$ and $-$ here refer to the heuristics that, if $\Omega$ is a domain bounded by $\Gamma$, then we may take $\Gamma_+$ to correspond the outward pointing normals of $\Gamma = \partial \Omega$ and $\Gamma_-$ the inward pointing normals. The projection $\pr_\Gamma \colon \cN_1(\Gamma) \to \Gamma$, $(x,\xi) \mapsto x$, restricts to an isomorphism on both components $\Gamma_+$ and $\Gamma_-$. 

 Since the components of the wavefront set $\WF_1(\Omega)$ are exactly the components $\Gamma^+$ and $\Gamma^-$ of $\cN_1(\Gamma)$ for boundary components $\Gamma \subset \partial \Omega$ and 
 \[
 \left( \frac{\dot \alpha(t)}{|\dot \alpha(t)|}\right)^\perp \in \WF_1(\Omega),
 \]
Proposition \ref{prop:lifting-proposition} follows immediately from the following proposition.

\begin{prop}
\label{prop:lifting}
Let $\Gamma \subset \R^2$ be a $C^\infty$-smooth loop in $\R^2$ and $(x_0,\xi_0)\in \cN_1(\Gamma)$. Then there exists a path $\gamma = (\gamma_1,\gamma_2) \colon \R\to \R^2\times \R$ having the following properties:
\begin{enumerate}
\item $\gamma_1(\R) = \Gamma$ and $\gamma_2(t+2\pi) = \gamma_2(t) + 2\pi$ for each $t\in \R$, \label{item:lifting-1}
\item  $(x_0,\xi_0) \in (\widetilde \pi \circ \gamma)(\R)$,\label{item:lifting-2}
\item $(\widetilde \pi \circ \gamma)(t) \in  \cN_1(\Gamma)$ and $(\widetilde \pi \circ \gamma)(t+2\pi) = (\widetilde \pi \circ \gamma)(t)$ for each $t\in \R$.\label{item:lifting-3}
\end{enumerate}
Moreover, the path $\gamma$ is essentially unique in the sense that, if $\gamma' \colon \R \to \R^2\times \R$ is another curve satisfying the same conditions, then there exists $j \in \N$ for which $\gamma'(t) = \gamma(t + j2\pi)$ for every $t\in \R$. Also, if $\gamma' = (\gamma'_1, \gamma'_2) \colon \R \to \R^2\times \R$ is a curve satisfying the same conditions, but for $(x_0,-\xi_0)\in \cN_1(\Gamma)$, then there exists $j \in \N$ for which $\gamma'_1(t) = \gamma'_1(t)$ and $\gamma_2(t) = \gamma'_2(t + \pi + j2\pi)$ for every $t\in \R$.
\end{prop}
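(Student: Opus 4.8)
The plan is to reduce everything to the topology of the unit normal bundle $\cN_1(\Gamma)$, whose structure is already recorded above: $\cN_1(\Gamma)=\Gamma_+\sqcup\Gamma_-$, each component a smooth circle carried diffeomorphically onto $\Gamma$ by $\pr_\Gamma$. First I would fix a $C^\infty$ parametrization $\alpha=\beta\circ\widehat\pi\colon\R\to\R^2$ of $\Gamma$ (so $\alpha$ is a $2\pi$-periodic immersion) and let $C\in\{\Gamma_+,\Gamma_-\}$ be the component containing $(x_0,\xi_0)$. The unit normal along $\alpha$ that picks out $C$ is $n(s)=\varepsilon\,(\dot\alpha(s)/|\dot\alpha(s)|)^\perp$ for the appropriate sign $\varepsilon\in\{\pm1\}$, so that the map $c\colon\bS^1\to C$, $e^{is}\mapsto(\alpha(s),n(s))$, is well defined and, being $(\pr_\Gamma|_C)^{-1}\circ\beta$, is a diffeomorphism. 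The one genuinely non-formal ingredient enters here: the winding number of $s\mapsto n(s)\in\bS^1$ is $\pm1$, because $n$ is the unit tangent of $\alpha$ rotated by the fixed angle $\pm\pi/2$ and the unit tangent of an \emph{embedded} smooth loop has turning number $\pm1$ (Hopf's Umlaufsatz, the theorem of turning tangents). Replacing $\beta$ by $e^{is}\mapsto\beta(e^{-is})$ if necessary, I may assume this winding number equals $+1$.

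Since $\R$ is simply connected, $s\mapsto n(s)$ lifts through $\widehat\pi$ to a smooth $h\colon\R\to\R$ with $e^{ih(s)}=n(s)$, and the winding normalization forces $h(s+2\pi)=h(s)+2\pi$. Then $\gamma=(\alpha,h)\colon\R\to\R^2\times\R$ does the job: $\widetilde\pi\circ\gamma(s)=(\alpha(s),e^{ih(s)})=c(e^{is})$, so $\widetilde\pi\circ\gamma=c\circ\widehat\pi$ is exactly the universal covering of the circle $C$. In particular its image is $C\subset\cN_1(\Gamma)$ and it is $2\pi$-periodic, which is condition~(\ref{item:lifting-3}); $(x_0,\xi_0)\in C$ gives condition~(\ref{item:lifting-2}); and $\gamma_1(\R)=\alpha(\R)=\Gamma$ together with $\gamma_2(s+2\pi)=h(s)+2\pi$ gives condition~(\ref{item:lifting-1}). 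The remark that $\widetilde\pi\circ\gamma$ is the universal cover of $C$ is what will power the uniqueness part.

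For uniqueness, let $\gamma'=(\gamma'_1,\gamma'_2)$ satisfy (\ref{item:lifting-1})--(\ref{item:lifting-3}); I would read the claim as referring to the situation where the parametrization of the spatial projection is prescribed, i.e.\ $\gamma'_1=\alpha$ (conditions (\ref{item:lifting-1}) and the periodicity coming from (\ref{item:lifting-3}) make $\gamma'_1$ a parametrization of $\Gamma$ of the same kind as $\alpha$, and once that parametrization is fixed one may take $\gamma'_1=\alpha$). Connectedness of $\R$ with (\ref{item:lifting-2}),(\ref{item:lifting-3}) places $\widetilde\pi\circ\gamma'$ in the single component $C$, and lifting it through the universal cover $\widetilde\pi\circ\gamma=c\circ\widehat\pi$ of $C$ gives $\psi\colon\R\to\R$ with $c(e^{i\psi(s)})=\widetilde\pi\circ\gamma'(s)$. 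Comparing $\R^2$-coordinates yields $\alpha(\psi(s))=\alpha(s)$, hence $\psi(s)=s+2\pi k_0$ (a continuous $\Z$-valued function is constant), and comparing $\bS^1$-coordinates yields $\gamma'_2(s)=h(\psi(s))+2\pi n_0$ for a constant $n_0\in\Z$; using $h(u+2\pi k)=h(u)+2\pi k$ this becomes $\gamma'(s)=\gamma(s+2\pi j)$ with $j=k_0+n_0\in\Z$ (and after possibly interchanging the roles of $\gamma$ and $\gamma'$ one gets $j\in\N$). For the base point $(x_0,-\xi_0)$: it lies in the component $C'$ complementary to $C$, and the fiberwise rotation $R\colon(x,t)\mapsto(x,t+\pi)$ is a homeomorphism of $\R^2\times\R$ carrying $\widetilde\pi^{-1}(C)$ onto $\widetilde\pi^{-1}(C')$ and sending curves satisfying (\ref{item:lifting-1})--(\ref{item:lifting-3}) for $(x_0,\xi_0)$ to curves satisfying them for $(x_0,-\xi_0)$; applying the uniqueness just proved, now in $C'$, to $R\circ\gamma$ and an arbitrary valid $\gamma'$ for $(x_0,-\xi_0)$ gives $\gamma'_1=\gamma_1$ and $\gamma_2(t)=\gamma'_2(t)+\pi+2\pi j$, i.e.\ the $\pi$-shift of the fiber coordinate asserted in the statement (up to the sign convention fixing which component of $\cN_1(\Gamma)$ is $\Gamma_+$).

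The part I expect to require the most care is not any single computation but the bookkeeping: keeping the choice of component $C$, the normal sign $\varepsilon$, the orientation of $\beta$, and the $2\pi$- and $\pi$-shifts mutually consistent, and making explicit the normalization of the spatial parametrization under which ``essentially unique'' is to be understood (for an unnormalized continuous, or even immersed, curve one can precompose with any increasing self-map of $\R$ commuting with $t\mapsto t+2\pi$ and still satisfy (\ref{item:lifting-1})--(\ref{item:lifting-3})). The only substantive external input is the turning-tangent theorem, which is what forces $\gamma_2(t+2\pi)-\gamma_2(t)=2\pi$ rather than a different multiple of $2\pi$, and which genuinely uses that $\Gamma$ is embedded.
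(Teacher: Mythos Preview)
Your proposal is correct and follows essentially the same route as the paper: choose the component of $\cN_1(\Gamma)$ containing $(x_0,\xi_0)$, lift the unit-normal map through the covering $\widehat\pi$, invoke the turning-tangent theorem (the paper phrases this as ``the Gauss maps $\iota^\pm$ have degree $\pm1$'', citing do~Carmo) to force $\gamma_2(t+2\pi)=\gamma_2(t)+2\pi$, and handle uniqueness and the antipodal case via uniqueness of lifts and the $\pi$-shift in the fiber. Your added caveat about normalizing the spatial parametrization before asserting uniqueness is well taken; the paper's argument implicitly makes the same assumption.
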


Before the proof, we introduce the following notation. Let $\pi_{\bS^1} \colon \R^2\times \bS^1\to \bS^1$ be the projection to the second coordinate and denote 
$\iota^+ \colon \Gamma \to \bS^1$ and $\iota^- \colon \Gamma \to \bS^1$ the maps $\iota^+ = \pi_{\bS^1} \circ (\pi_\Gamma|_{\Gamma_+})^{-1}$ and $\iota^- = \pi_{\bS^1}\circ (\pi_\Gamma|_{\Gamma_-})^{-1}$. 

The proof is based on the fact that the mappings $\iota^\pm$ are so-called {\it Gauss maps} of $\Gamma$ and they have the following property (see e.g.~do Carmo \cite[Theorem 5-7.2]{DoCarmo}): 
\begin{quote}
The mapping $\iota^\pm \colon \Gamma \to \bS^1$ has degree $\pm 1$, where the sign depends on the orientation of $\Gamma$. In particular, the mappings $\iota^\pm$ are not null homotopic.
\end{quote}

\begin{proof}[Proof of Proposition \ref{prop:lifting}]
Let $\sigma \colon \bS^1 \to \R^2$ be an embedding onto $\Gamma$, that is, $\sigma(\bS^1) = \Gamma$ and the restriction of $\sigma$ to its image is a homeomorphism. We denote $\alpha^\pm = (\sigma, \iota^\pm \circ \sigma) \colon \bS^1 \to \R^2\times \bS^1$. Note that $\alpha^\pm(\theta) = (\sigma(\theta), \iota^\pm(\sigma(\theta)))\in \cN_1(\Gamma)$ for each $\theta\in \bS^1$. Moreover, $\alpha^+(\bS^1) \cap \alpha^-(\bS^1) = \emptyset$. Let now $\widetilde \alpha^\pm = (\widetilde \alpha^\pm_1, \widetilde \alpha^\pm_2) \colon \R\to \R^2 \times \R$ be a lift of $\alpha \circ \widehat \pi$ under $\widetilde \pi$. Condition \eqref{item:lifting-3} now holds for both paths $\widetilde \alpha^\pm$.

To satisfy condition \eqref{item:lifting-2}, we now choose $\gamma=(\gamma_1,\gamma_2)$ to be the path $\widetilde \alpha^\pm$ for which $(x_0,\xi_0)$ belongs to the image of $(\sigma,\nu^\pm)$. We may assume that $\gamma = \widetilde \alpha^+$.

Since $\widetilde \pi \circ \widetilde \alpha^\pm = \alpha^\pm = (\sigma, \nu^\pm)$, we have that $\gamma_1(\R) = \Gamma$. To satisfy the second part of \eqref{item:lifting-1}, we observe that the difference $\gamma_2(t+2\pi) - \gamma_2(t)$ is determined by the degree of the map $\iota^+ \colon \Gamma \to \bS^1$. Thus $|\gamma_2(t+2\pi) - \gamma_2(t)|=2\pi$. Hence, by changing the direction of $\gamma$ if necessary, we have that $\gamma_2(t+2\pi) = \gamma_2(t) + 2\pi$. Condition \eqref{item:lifting-1} now holds for $\gamma$.

If $\gamma' \colon \R \to \R^2\times \R$ is another curve satisfying the same conditions, then it is a lift of $\widetilde \pi \circ \gamma$ under covering map $\widetilde \pi$. Thus the claim follows by uniqueness of lifts.

Finally, suppose that $\gamma' = (\gamma'_1, \gamma'_2) \colon \R \to \R^2\times \R$ is a curve satisfying the same conditions for $(x_0,-\xi_0)$. Since $\widehat \pi(t+\pi) = - \widehat \pi(t)$ for each $t\in \R$, we have that $|\gamma_2(t)-\gamma'_2(t)|=\pi$ for each $t\in \R$. The claim follows again from the uniqueness of lifts. 
\end{proof}

Figure \ref{fig:lift} illustrates a lift of a wavefront set under universal covering map.
We finish this section with a formulation of Proposition \ref{prop:lifting} in terms of tangent vectors of parametrizations, which will be used in the forthcoming sections.

\begin{figure}
    \centering
     \begin{subfigure}[t]{0.45\textwidth}
        \includegraphics[width=\textwidth]{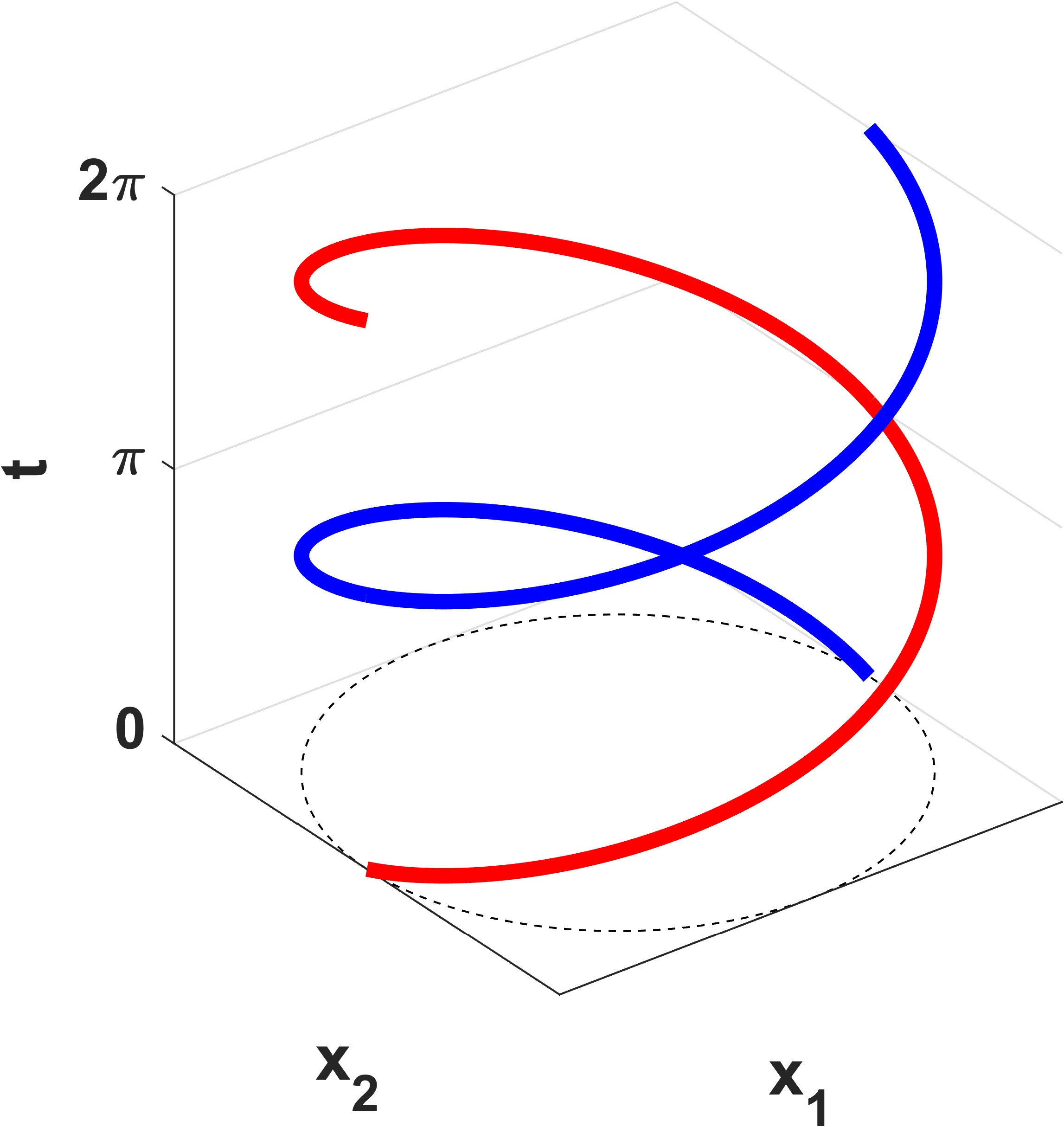}
    \caption{Here $\Gamma$ is a boundary of unit disk. The blue path illustrates the path $\gamma_+$, i.e., the lift of outwards pointing singularities and the red path illustrates the path $\gamma_-$.}
    \end{subfigure}
        \hspace{5mm}
    \begin{subfigure}[t]{0.45\textwidth}
        \includegraphics[width=\textwidth]{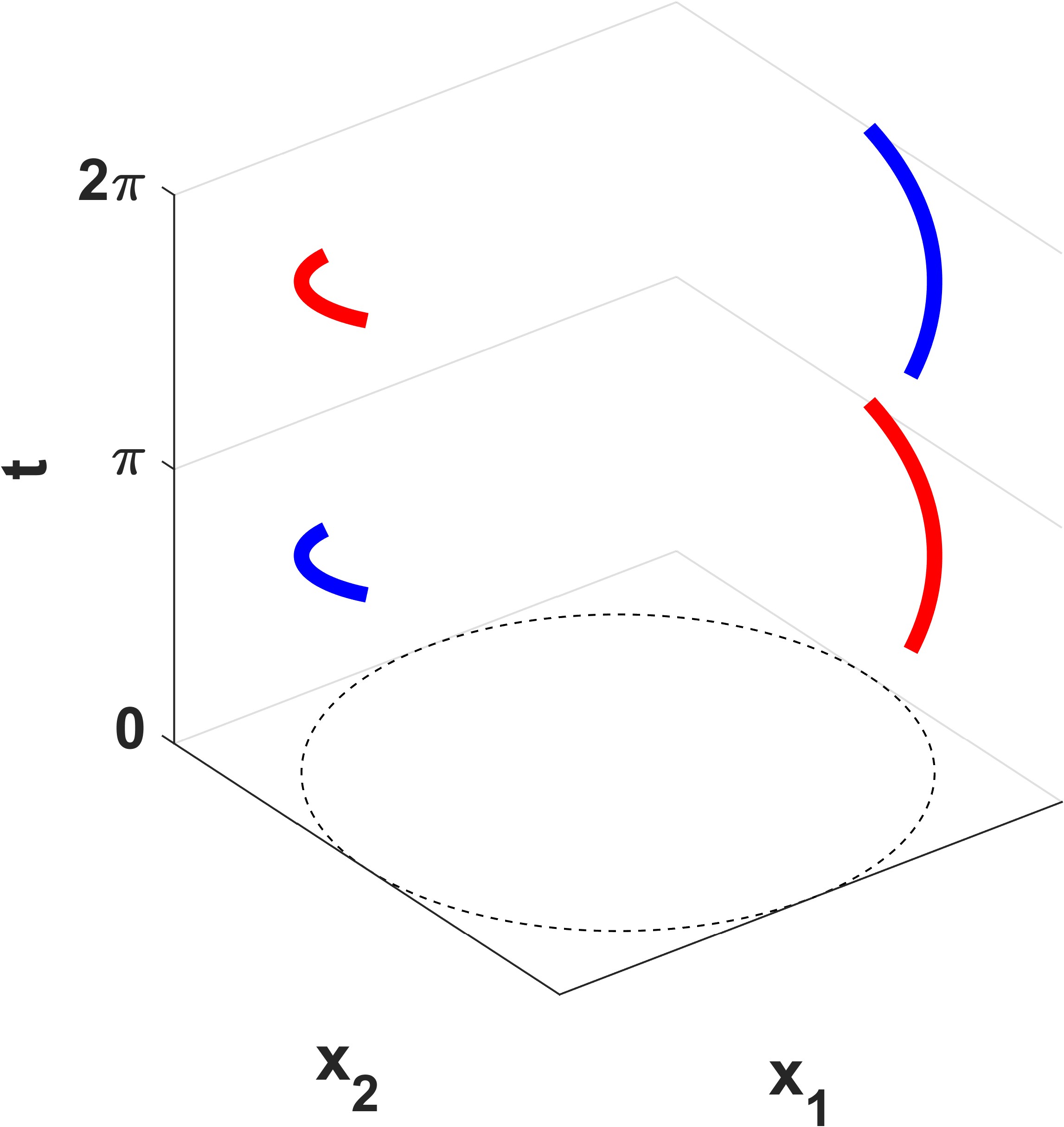}
    \caption{In a limited-angle problem, when measured angles are in a set $(\theta_0,\theta_1)$, we can recover two disjoint parts of the path $\gamma_\pm$.   }
    \end{subfigure}
    \begin{subfigure}[t]{0.45\textwidth}
        \includegraphics[width=\textwidth]{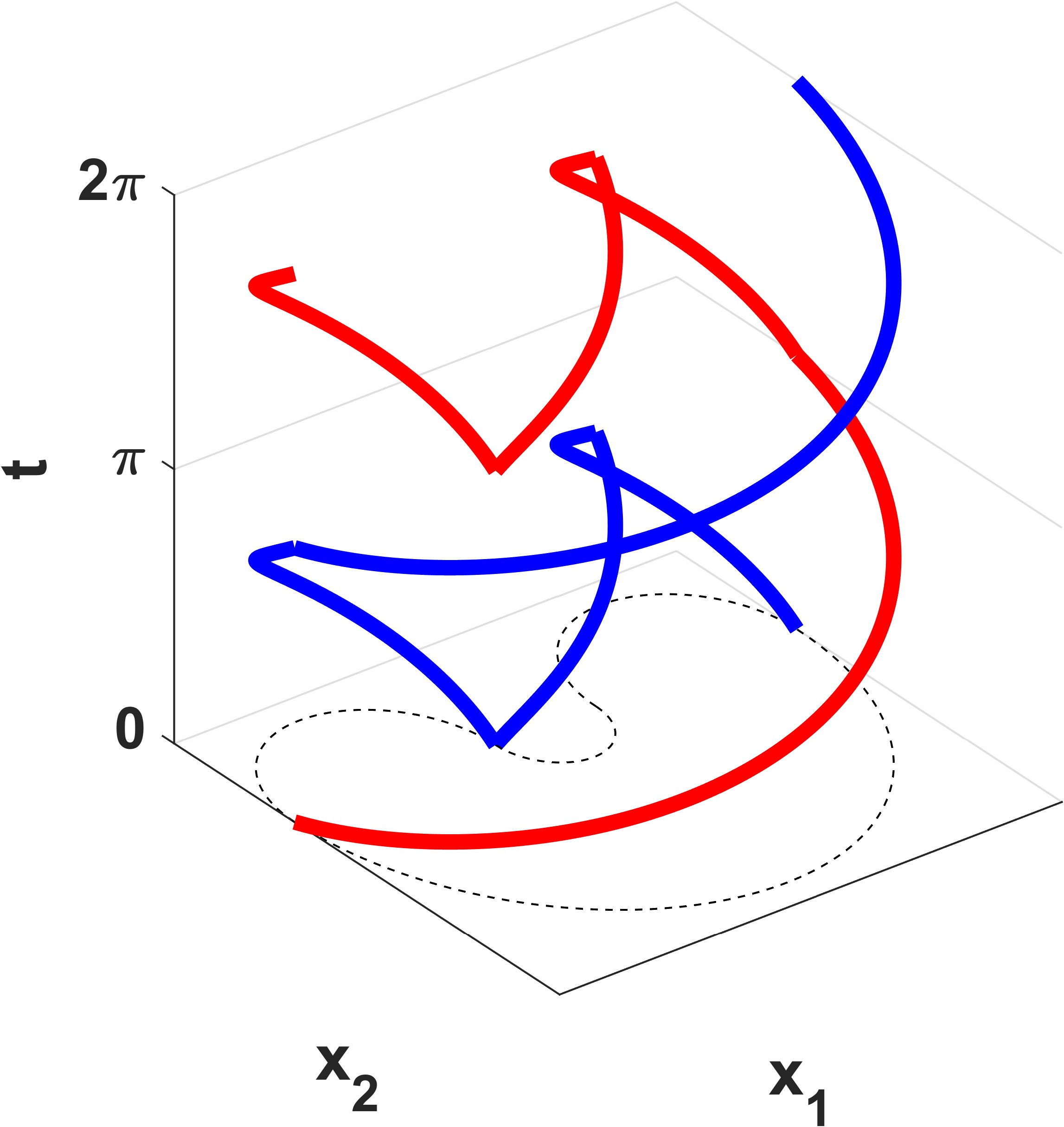}
    \caption{Here $\Gamma$ is the boundary of the non-convex set with smooth boundary. The blue path illustrates the path $\gamma_+$, and the red path illustrates the path $\gamma_-$.}

    \end{subfigure}
    \hspace{5mm}
    \begin{subfigure}[t]{0.45\textwidth}
        \includegraphics[width=\textwidth]{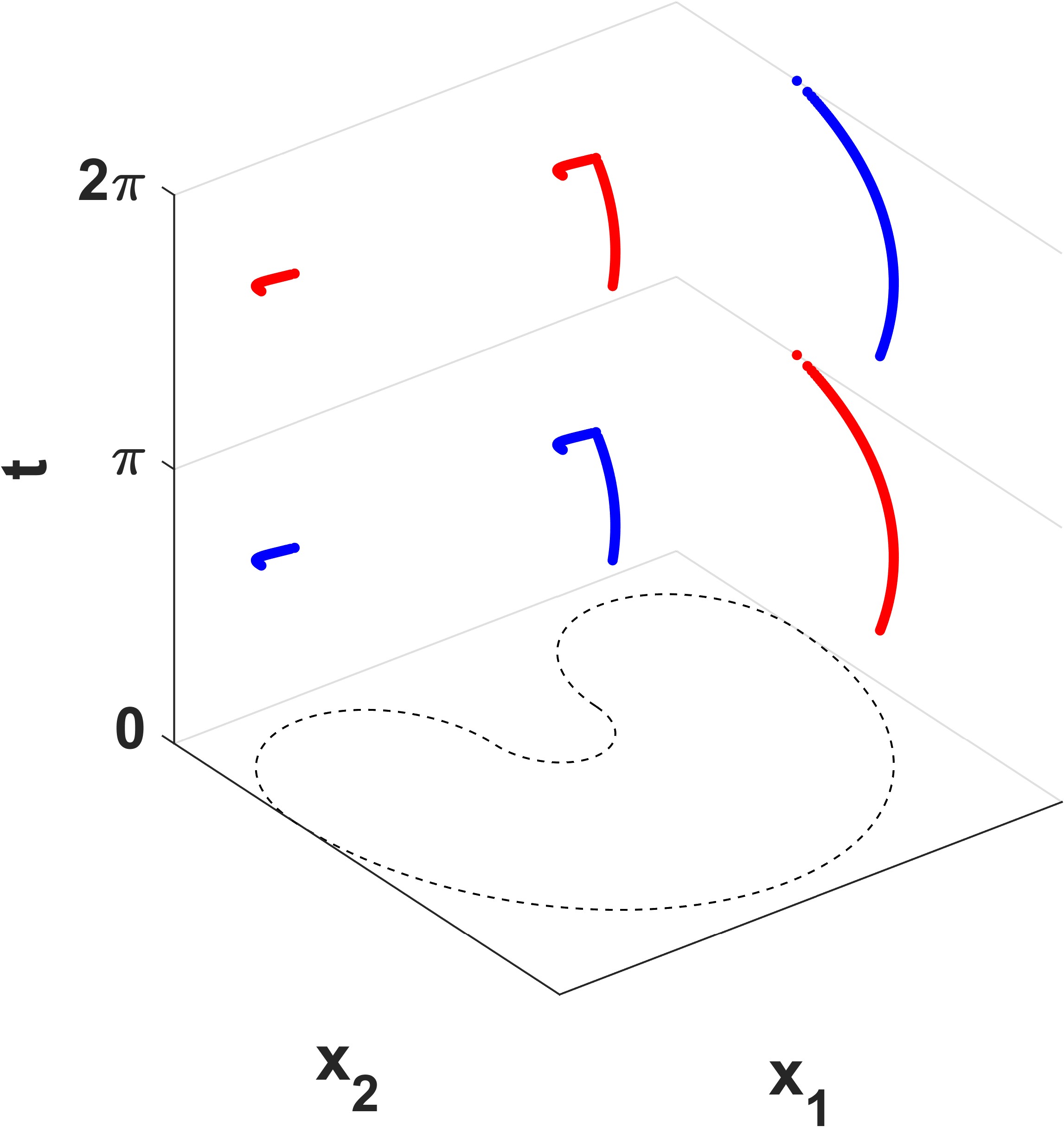}
    \caption{In a limited-angle problem, when measured angles are in a set $(\theta_0,\theta_1)$, we can recover multiple disjoint parts of the path $\gamma_\pm$.}
    \end{subfigure}
    \caption{
        The wavefront set of a smooth boundary component $\Gamma$ has two components, $\Gamma^+$ and $\Gamma^-$, i.e., the set of outward pointing singularities and the set of inward pointing singularities. These sets lifted to the space $\R^2\times \R$ under universal covering map form paths  $\gamma_+:\R \to \R^2 \times \R$ and  $\gamma_-:\R \to \R^2 \times \R$. Only parts of these paths can be recovered stably from limited-angle data. The inverse problem that we consider is to recover the boundary component from parts of the paths.}
       \label{fig:lift}
\end{figure}

\subsection{Candywrap distance} \label{sec:cw_dist}

In this section, we will introduce a tailored non-Euclidean metric on the unit sphere bundle $S\mathbb R^2$ motivated by elasticas. Elasticas are plane curves which minimizes the bending $p$-energy $E_p[\alpha]$, defined as 
$$
E_p[\alpha]=\int_0^L |K_\alpha(t)|^p|\partial_t\alpha(t)|dt,
$$ 
where $K_\alpha(t)$ is the curvature of $\alpha(t)$ and 

$p\ge 1$.
The problem of minimizing $E_p[\alpha]$ have been studied already in the 18th century by the Bernoullis and Euler \cite{elastica_Euler,elasticaHistory_Levien2008}. Minimizing
the energy $E_p[\alpha]$ is non-convex problem  \cite{elastica_Miura2020}. In the case $p=2$,  Mumford \cite{Elastica_Mumford1994} 
showed that the minimizers $\alpha$ of $E_p[\alpha]$ can be represented as $\vartheta$-functions
associated to periodic lattices. However, this gives only a necessary but not a sufficient condition for the minimizers of  $E_p[\alpha]$. Energy functions similar to  $E_p[\alpha]$ 
have found application in various fields, for example, in shape processing and image inpainting, see results of Chambolle and Pock \cite{elastica_Antonin_Pock2019}. In our case, elasticas have inspired us to define a metric that meaningfully estimates the neighborhoods of invisible singularities. In particular, we show that the infimum of $(E_2[\alpha]+|\alpha|^2)^{1/2}$ over the curves connecting points $(x_1,v_1)$ and $(x_2,v_2)$ defines a metric  in  $S\mathbb R^2$. Later in this section, we introduce an approximate version of this  distance function which 
leads to convenient numerical algorithms.

For building intuition behind the upcoming metric, let us consider a line segment. We know that from endpoints the boundary continues smoothly with normal directions between interval $[0,\pi/2]$. For the moment, we do continuations using arcs of circles. However, we cannot know the correct radius for a continuation. Thus we take all radii between a chosen interval $[a,b]\subset \R$. These continuations form a shape that looks like a twisted candy wrapper, see Figures \ref{fig:candywrap_shape_A}, \ref{fig:candywrap_shape_B}. When known parts of the boundary are arcs with the same center point and the same but unknown radius, then arcs with the same radius would give perfect continuations, see Figure \ref{fig:candywrap_circle_fill}. These candywrap-like shapes were the original inspiration for introducing a metric which defines neighborhoods containing various, not only arc-like, smooth curves. The approximation of this metric is honored with being called the name candywrap distance in accordance with the original observation.

\begin{figure}[!ht]
\begin{subfigure}[t]{0.45\textwidth}
    \centering
   \includegraphics[width=\textwidth]{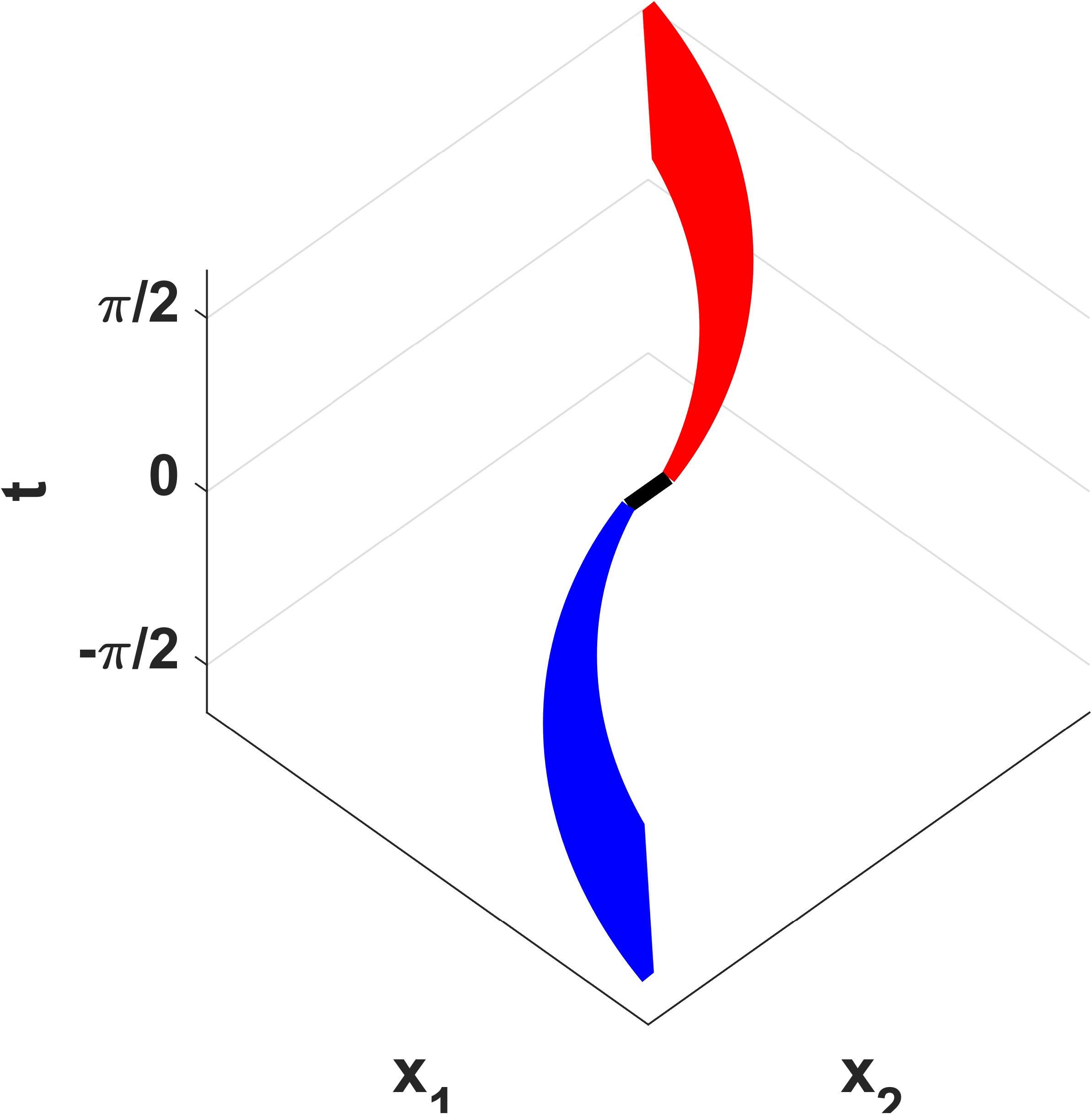}
   \caption{The boundary continuations from a line segment with specific normal directions and arc-like curves form a candywrap-like shape. Here are all arcs which have radius in the interval $[5,10]$. }
    \label{fig:candywrap_shape_A}
\end{subfigure}
\hspace{5mm}
\begin{subfigure}[t]{0.3\textwidth}
    \centering
    \includegraphics[width=\textwidth]{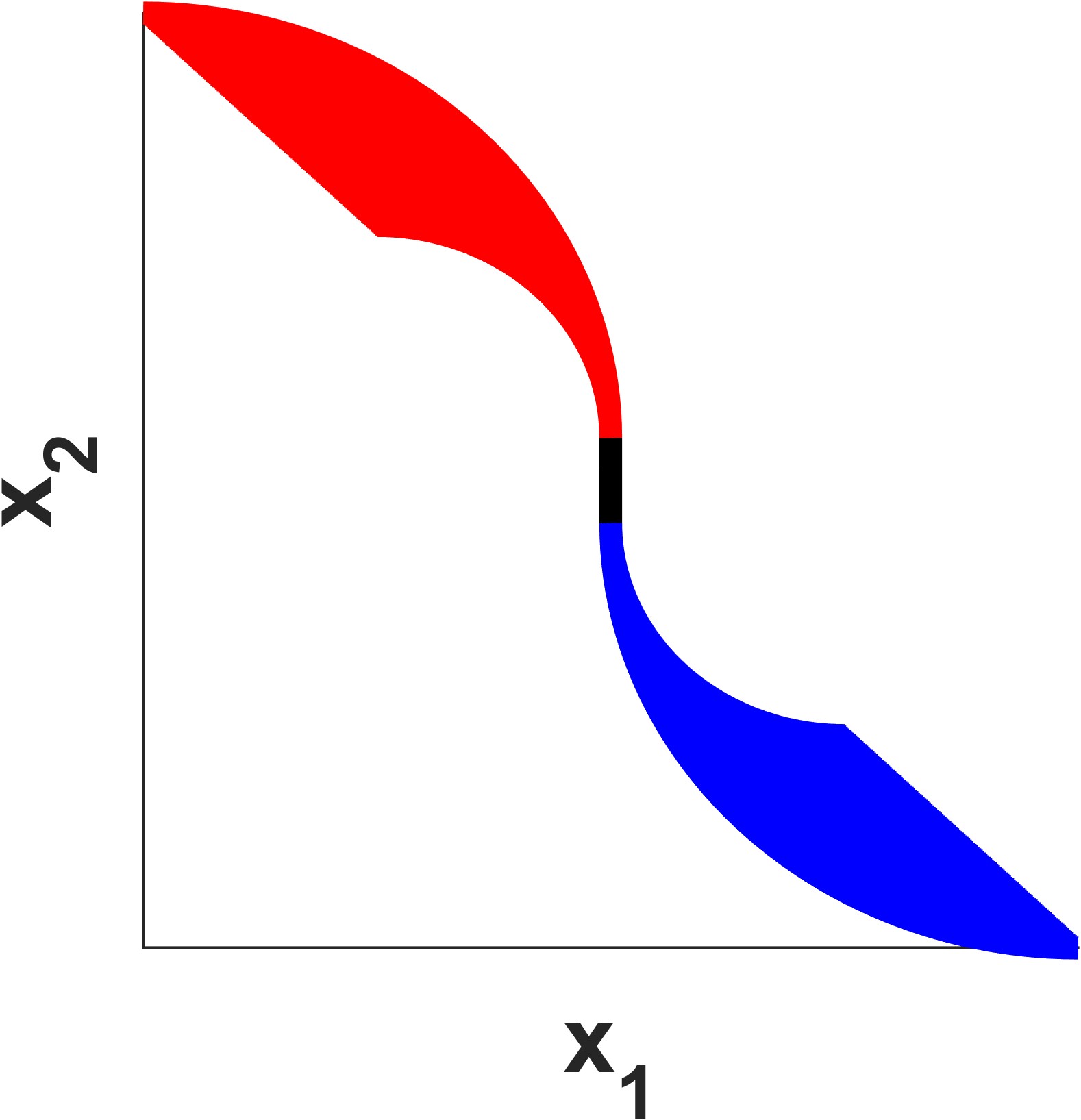}
    \caption{A candywrap-like shape viewed from a different direction.} 
        \label{fig:candywrap_shape_B}
\end{subfigure}
\begin{subfigure}[]{0.5\textwidth}
    \centering
    \includegraphics[width=\textwidth]{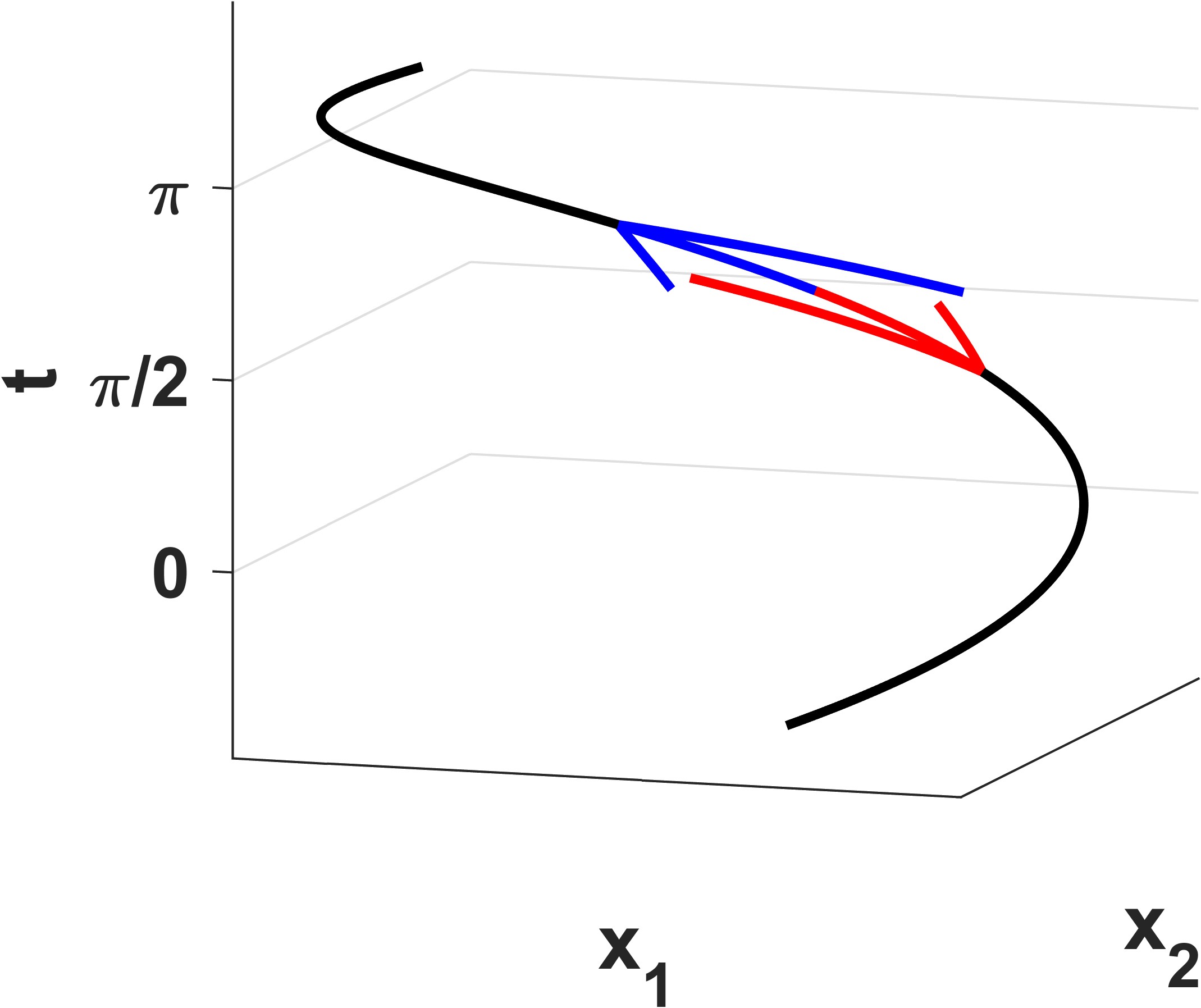}
    \caption{Two arcs with the same center point and radius, here illustrated in black, can be connected perfectly with continuation from endpoints using arcs with the same radius. Continuations using arcs with different radii do not meet each other. Here continuations are illustrated in blue and red.} 
        \label{fig:candywrap_circle_fill}
\end{subfigure}
    \caption{Inspiration behind the metric in Definition \ref{def:theor_cw_dist} and the approximation of this metric are continuations done by arcs with different radii.}
\end{figure}

We start by recalling that a curvature of a curve $\alpha\colon [t_0,t_1]\to \R^2$ is 
\begin{align*}
    K_\alpha(t)=\frac{\det(\alpha'(t),\alpha''(t))}{\mid\alpha'(t)\mid^3}.
\end{align*} 
Recall also that a path $\alpha \in C^1([a,b],\R^2)$ is \emph{piecewise $C^2$-smooth} if there exists a partition $a=t_0 <t_1 < \cdots <t_k-1 < t_k$=b of the interval $[t_0,t_k]$ for which $\alpha|_{[t_{j-1},t_j]}\in C^2([t_{j-1}, t_j])$ for each $j=1,\ldots, k$. For $(p_0,v_0),(p_1,v_1)\in S\R^2$, we denote $X[(p_0,v_0),(x_1,v_1)]$ the family of all piecewise $C^2$-smooth paths $\alpha \in C^1([a,b],\R^2)$ for which $\alpha(a)=p_0$, $\alpha(b)=p_1$, $\alpha'(a)=v_0$, and $\alpha'(b)=v_1$.
\begin{definition}
\label{def:theor_cw_dist}
We define a distance on the sphere bundle $S\R^2$ as follows: 

\begin{align*}
    D\big((p_0,v_0),(p_1,v_1)\big)=\inf_{\alpha \in X[(p_0,v_0),(p_1,v_1)]} \sum_{j=1}^{k}\int_{t_{j-1}}^{t_{j}} (|K_\alpha(t)|^2+1)|\alpha'(t)| \,dt,
\end{align*}
when $(p_0,v_0)\neq (p_1,v_1)$, and
\begin{align*}
     D\big((p_0,v_0),(p_1,v_1)\big) = 0,
\end{align*}
when $(p_0,v_0) = (p_1,v_1)$.
\end{definition}

The minimizers $\alpha$ in the definition of the function
$D\big((p_0,v_0),(p_1,v_1)\big)$ can be considered as generalized Eulerian elasticas
as the minimized cost function is the sum of the Eulerian $2$-energy $E_2[\alpha]$ and  the sum of the squared of the length of the curve.
Next we show that the function $D\big((p_0,v_0),(p_1,v_1)\big)$ defines a metric on the unit sphere bundle of $\mathbb R^2$.

\begin{prop}
    The distance $D:S\R^2 \to \R$, defined in Definition \ref{def:theor_cw_dist}, is a metric in $S\R^2$.
\end{prop}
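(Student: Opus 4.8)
The plan is to check the four metric axioms for $D$. Non-negativity and $D(x,x)=0$ are immediate: the integrand $(|K_\alpha|^2+1)|\alpha'|$ is non-negative, and the diagonal case is set to $0$ by definition. Before anything else I would record that $X[(p_0,v_0),(p_1,v_1)]$ is never empty — any two prescribed position--velocity states in $S\R^2$ can be joined by a $C^1$, piecewise $C^2$ path assembled from finitely many line segments and circular arcs with tangents matched at the junctions — so that $D$ is genuinely $\R$-valued.

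For the triangle inequality, given $\alpha\in X[x,y]$ and $\beta\in X[y,z]$ I would shift the parameter interval of $\beta$ so that it begins where that of $\alpha$ ends and concatenate the two, obtaining a path $\gamma$. Since the common endpoint $y=(p_1,v_1)$ forces $\alpha$ and $\beta$ to agree both in position and in velocity at the joining point, $\gamma$ is $C^1$ there; it is piecewise $C^2$ because the union of the two partitions together with the joining point is a partition of the new interval. Hence $\gamma\in X[x,z]$, and because curvature and arc length are local quantities, the cost of $\gamma$ equals the sum of the costs of $\alpha$ and $\beta$; infimizing over $\alpha$ and then over $\beta$ gives $D(x,z)\le D(x,y)+D(y,z)$ (the cases where two of $x,y,z$ coincide being trivial). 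This is precisely where allowing \emph{piecewise} $C^2$ competitors in Definition \ref{def:theor_cw_dist} is used.

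For symmetry I would reverse the orientation of a competitor, $t\mapsto\alpha(a+b-t)$: this leaves $|K_\alpha|$, the speed, the length, and hence the cost unchanged, and it exchanges the roles of the two endpoints. The one subtlety is that reversal negates the endpoint velocities, so it literally produces a path in $X[(p_1,-v_1),(p_0,-v_0)]$; symmetry of $D$ therefore rests on reading the velocity constraints up to sign, i.e.\ on viewing the construction on the bundle of unoriented tangent lines — the natural setting here, since the boundary curves to be recovered carry no preferred orientation. With that reading, $D(x,y)=D(y,x)$.

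The heart of the matter, and the step I expect to be the main obstacle, is showing $D(x,y)>0$ whenever $x\ne y$; this comes down to coercivity of the cost functional. For $\alpha\in X[(p_0,v_0),(p_1,v_1)]$ write $\Lambda[\alpha]=\int|\alpha'|\,dt$ for the length and $E[\alpha]=\int|K_\alpha|^2|\alpha'|\,dt$ for the Euler $2$-energy, so the cost is $E[\alpha]+\Lambda[\alpha]$. On the one hand $\Lambda[\alpha]\ge|\alpha(b)-\alpha(a)|=|p_1-p_0|$. On the other hand the unit tangent $u=\alpha'/|\alpha'|$ satisfies $u(a)=v_0$, $u(b)=v_1$ and $|u'|=|K_\alpha|\,|\alpha'|$, so Cauchy--Schwarz with respect to the measure $|\alpha'(t)|\,dt$ gives
\[
|v_1-v_0|=\Big|\int_a^b u'(t)\,dt\Big|\le\int_a^b|K_\alpha(t)|\,|\alpha'(t)|\,dt\le E[\alpha]^{1/2}\,\Lambda[\alpha]^{1/2},
\]
and AM--GM then yields $E[\alpha]+\Lambda[\alpha]\ge 2\,(E[\alpha]\Lambda[\alpha])^{1/2}\ge\max\{|p_1-p_0|,\,2|v_1-v_0|\}$ for every admissible $\alpha$. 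Hence $D\big((p_0,v_0),(p_1,v_1)\big)\ge\max\{|p_1-p_0|,\,2|v_1-v_0|\}$, which is strictly positive whenever $x\ne y$; together with the definition this gives $D(x,y)=0$ if and only if $x=y$, and incidentally shows that the $D$-topology refines the subspace topology inherited from $\R^2\times\R^2$. The essential point is exactly this interplay: a curve joining two distinct states may have small length or small bending energy, but not both at once, and it is the two summands pulling against each other — quantified by the Cauchy--Schwarz/AM--GM chain above — that forces $D$ to separate points. The only other delicate spot is the orientation bookkeeping in the symmetry step.
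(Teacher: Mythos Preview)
Your argument is essentially the paper's. The coercivity bound $D\ge\max\{|p_1-p_0|,\,2|v_1-v_0|\}$ via Cauchy--Schwarz and then AM--GM is exactly how the paper obtains positive-definiteness (the paper works piece by piece after passing to arc length, you phrase it with the measure $|\alpha'|\,dt$, but the content is identical), and your concatenation argument for the triangle inequality is what the paper gestures at in a single sentence.

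The one real difference is the symmetry step, and here you are more careful than the paper. The paper simply declares $D((p_0,v_0),(p_1,v_1))=D((p_1,v_1),(p_0,v_0))$ to be ``clear,'' but your observation that time reversal lands in $X[(p_1,-v_1),(p_0,-v_0)]$ rather than in $X[(p_1,v_1),(p_0,v_0)]$ is correct, and with the definition taken literally $D$ is \emph{not} symmetric on $S\R^2$: for instance, with $v=(1,0)$ one has $D\big(((0,0),v),((1,0),v)\big)=1$ via the straight segment, whereas every competitor for the reversed pair must start at $(1,0)$ with tangent $v$ and end at $(0,0)$ with tangent $v$, so its unit tangent is forced to visit the half-circle $\{u\cdot v<0\}$ and return, which pushes the cost strictly above $1$. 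Your remedy --- read the velocity constraint modulo sign, i.e.\ pass to the bundle of unoriented tangent lines --- is the natural fix and is consistent with how the construction is actually used later. So your proof is sound and matches the paper's approach, with the bonus that you have flagged a genuine slip in the statement as printed.
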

\begin{proof}

Clearly $D((p_0,v_0),(p_1,v_1)) \geq 0$. 
Suppose that $\alpha \in X[(p_0,v_0),(p_1,v_1)]$.
Note that the distance $D$ does not depend on parametrization of a curve $\alpha$. Thus without loss of generality, we may assume that $|\alpha'(t)|=1$. That implies $\alpha'\perp \alpha''$. Furthermore $K_\alpha(t)=|\alpha''(t)|$.

Consider a path $\alpha\in C^1([0,L])$, parametrized along the path length,
such that $\alpha|_{[t_{j-1},t_j]}\in C^2([t_{j-1}, t_j])$, $t_0=0$, $t_k=L$, $q_j=\alpha(t_j)$, $j=0,1,\dots,k$, $\alpha(t_0)=p_0$, and $\alpha(t_k)=p_1$. Then 
\begin{align*}
\sum_{j=1}^k|q_{j}-q_{j-1}| &= |q_{k}-q_{k-1}| + |q_{k-1}-q_{k-2}| +\cdots + |q_{1}-q_{0}| \\
 &\geq |q_k-q_0|.
\end{align*}
Thus \begin{align*}
    \sum_{j=1}^{k}\int_{t_{j-1}}^{t_j} 1 \,dt &=    \sum_{j=1}^{k}\int_{t_{j-1}}^{t_j} |\alpha'(t)| \,dt 
     \geq \sum_{j=1}^{k}|\int_{t_{j-1}}^{t_j} \alpha'(t) \,dt|\\ &
     = \sum_{j=1}^{k}|\alpha(t_j)-\alpha(t_{j-1})|=\sum_{j=1}^{k}|q_j-q_{j-1}| \\
     &\geq |q_k-q_0| = |p_1-p_0|.
\end{align*}
Denote $w_j=\alpha'(t_j)$, so that $\alpha'(t_0)=v_0$ and $\alpha'(t_k)=v_1$.
Now,
\begin{align*}
|w_{j}-w_{j-1}|&=|\alpha'(t_j)-\alpha'(t_{j-1})|=|\int_{t_{j-1}}^{t_j}\alpha''(t)\,dt|\leq \int_{t_{j-1}}^{t_j}|\alpha''(t)|\,dt \\
&= \langle|\alpha''|,1\rangle_{\Ls^2(t_{j-1},t_j)}
\leq \|\alpha''\|_{\Ls^2(t_{j-1},t_j)}\cdot (t_j-t_{j-1})^\frac{1}{2}. 
\end{align*}
Denote that $T_j=t_j-t_{j-1}$.
Then $\frac{1}{T_j}|w_j-w_{j-1}|^2 \leq \|\alpha''(t)\|^2_{\Ls^2(t_{j-1},t_j)}$.
Now \begin{align*}
    \int_{t_{j-1}}^{t_j}|\alpha''(t)|^2\,dt+\int_{t_{j-1}}^{t_j}1\,dt \geq \frac{1}{T_j}|w_j-w_{j-1}|^2+T_j.
\end{align*}
Let $f_j(T)=\frac{1}{T}|w_j-w_{j-1}|^2+T$. Now $f_j(T)\geq \min_{t>0}f_j(t)=f_j(T_0)$, where  $T_0=2\sqrt{|w_j-w_{j-1}|^2}=2|w_j-w_{j-1}|$ satisfies $f_j'(T_0)=0$. Thus 
\begin{align*}
    \int_{t_{j-1}}^{t_j}|\alpha''(t)|\,dt+\int_{t_{j-1}}^{t_j}1\,dt \geq 2|w_j-w_{j-1}|.
\end{align*}
Furthermore \begin{align*}
      \sum_{j=1}^{k} \big( \int_{t_{j-1}}^{t_j}|\alpha''(t)|^2\,dt+\int_{t_{j-1}}^{t_j}1\,dt \big)\geq  \sum_{j=1}^{k}2|w_j-w_{j-1}| \geq 2|w_k-w_0| = 2|v_1-v_0|.
\end{align*}
To conclude, 
\begin{align*}
    D((p_0,v_0),(p_1,v_1)) \geq
    \max(
        |p_1-p_0|,
        2|v_1-v_0|).
\end{align*}
Especially, $D((p_0,v_0),(p_1,v_1))=0$ only if $(p_0,v_0)=(p_1,v_1)$. Otherwise $D((p_0,v_0),(p_1,v_1))>0$. 

Clearly it is true that $D((p_0,v_0),(p_1,v_1))=D((p_1,v_1),(p_0,v_0)$. 
Finally $D((p_0,v_0),(p_2,v_2))\leq D((p_0,v_0),(p_1,v_1)) + D((p_1,v_1),(p_2,v_2))$ due to the fact that we are taking infinite sums over the integral of piece-wise smooth paths.
\end{proof}

The metric $D$ defined in Definition \ref{def:theor_cw_dist} is difficult to calculate in practice. Thus we will approximate it with a more computable function. Recall that we may assume $|\alpha'(t)|=1$ so that $|K_\alpha(t)|=|\alpha''(t)|$. When $p_0=(0,0)$ 
$v_0=(1,0)$, and $(p_1,v_1)\in S\R^2$, then 
\begin{align*}    
D((p_0,v_0),(p_1,v_1))=\inf_{\alpha \in X[(p_0,v_0),(p_1,v_1)],|\alpha'(t)|=1} \int_0^{L(\alpha)} |\alpha''(t)|^2\,dt + \int_0^{L(\alpha)} 1 \,dt,            
\end{align*}
where $L(\alpha)$ is the length of a curve $\alpha$. 
Let us next write $p_j=(x_j,y_j)$ and denote by  $\theta_j$ the angle between the $x$-axis and $v_j$. Using this convention, we identify $(p_j,v_j)$ with $(x_j,y_j,\theta_j)$.
By translating the origin and rotating the coordinate axis, we may assume that 
$(x_0,y_0,\theta_0)=(0,0,0)$. Then, we approximate $D((p_0,v_0),(p_1,v_1))=
D((0,0,0),(x_1,y_1,\theta_1))$  by the function
\begin{align*}    
\widetilde{D}((0,0,0),(x_1,y_1,\theta_1))=\inf_{g} \int_0^{x_1} |g''(x)|^2\,dx + |(x_1,y_1)-(0,0)|,       
\end{align*}
where the infimum
is taken over functions $g\in C^2([0,x_1])$ that satisfy $g(0)=0,$ $g'(0)=0$
and $g(x_1)=y_1$, $g'(x_1)=\tan(\theta_1).$ This approximation is motivated by considering curves $\alpha_g(s)=(s,g(s))$. Note that for $\alpha_g$ we have $\alpha_g''(s)=(0,g''(s))$ but $|\alpha'_g(s)|=\sqrt{1+|g'(s)|^2}\neq 1$.

\begin{definition}
We define a modified distance function $\widetilde{D}\colon (\R^3)^2 \to \R$, called a \emph{candywrap distance}, on the space $\R^3$ as follows:

For $(x_0,y_0,\theta_0)\in \R^3$, satisfying $x_0\in (0,\infty)$
and $\cos(\theta_0)>0$,
we define
\begin{align}\label{min tilde D}
\widetilde{D}\big((x_0,y_0,\theta_0),(0,0,0)\big) = \inf_g \int_{0}^{x_0} | g''(s) | ^2 \,ds+ \sqrt{x_0^2+y_0^2},    
\end{align}
where infimum is taken over functions $g:[0,x_0]\to \R$,
$g\in C^2([0,x_0])$ satisfying  $g(0)=0$, $g(x_0)=y_0$, $g'(0)=0$ and $g'(x_0)=\tan(\theta_0)$.
Moreover, when $x_0<0$ and $\cos(\theta_0)>0$, we define 
\begin{align*}
\widetilde{D}\big((x_0,y_0,\theta_0),(0,0,0)\big) =
\widetilde{D}\big((-x_0,-y_0,\theta_0),(0,0,0)\big). 
\end{align*}
When $(x_0,y_0)\in \{0\}\times (\R\setminus \{0\})$ or 
$\cos(\theta_0)\le 0$,
we define $\widetilde{D}\big((x_0,y_0,\theta_0),(0,0,0)\big)=\infty$
and $\widetilde{D}\big(0,0,0),(0,0,0)\big)=0$.

For $(x_i,y_i,\theta_i)\in \R^3$, $i=1,2$, let
\begin{align*}
\widetilde{D}\big((x_1,y_1,\theta_1),(x_2,y_2,\theta_2)\big) = 
\widetilde{D}\big( \rm{Rot}_{\theta_2}\rm{Mov}_{(x_2,y_2)}
(x_1,y_1,\theta_1),(0,0,0)
\big)
\end{align*}
where $\rm{Mov}_{(s,t)}:\R^3 \to \R^3,$ $(x,y,\theta)\mapsto (x-s,y-t,\theta)$ is a shift operation,  $\rm{Rot}_{\theta_0}:\R^3 \to \R^3$, $(x,y,\theta) \mapsto (\cos(\theta_0)x+\sin(\theta_0)y, -\sin(\theta_0)x+\cos(\theta_0)y,\theta-\theta_0)$ is a rotation operator.
\end{definition}

As the modified distance function $\widetilde{D}\big((x_0,y_0,\theta_0),(0,0,0)\big)$ in \eqref{min tilde D} is defined using a quadratic form $Q[g,g]=\|\frac {d^2}{ds^2}g\|_{L^2}^2$,
we see that the minimum at \eqref{min tilde D} is obtained and  
the minimizer $g:[0,x_0]\to \R$ satisfies $\frac{d^4}{ds^4}g=0$ in the sense of distributions
and thus $g$
is the unique third-order polynomial satisfying   $g(0)=0$, $g(x_0)=y_0$, $g'(0)=0$ and $g'(x_0)=\tan(\theta_0)$.
We emphasize that the modified distance function $\widetilde{D}\big((x_0,y_0,\theta_0),(0,0,0)\big)$
is not symmetric, see Figure \ref{fig:candywrap_ball_sizes_and_angles}. 

\begin{lemma}
\label{lemma:cw_dist}
The candywrap distance is given by the formula \begin{align*}
    \widetilde{D}\big( (0,0,0),(x,y,\theta)\big) = \begin{cases}
        4(3a^2x^3+3abx^2+b^2x) + \sqrt{x^2+y^2} \text{, if } x>0 \\        -4(3a^2x^3+3abx^2+b^2x) + \sqrt{x^2+y^2} \text{, if } x<0,
    \end{cases}
\end{align*} 
where $a=\frac{1}{x^2}(\tan\theta-2\frac{y}{x})$ and $b=\frac{1}{x}(-\tan\theta+3\frac{y}{x})$.
\end{lemma}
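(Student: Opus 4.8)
The plan is to reduce the lemma to a single explicit integration, by exploiting the minimizer that has already been identified in the paragraph immediately preceding the statement. Recall that there it was shown that the infimum in \eqref{min tilde D} is attained, and that the minimizing $g\colon[0,x_0]\to\R$ is the unique cubic polynomial with $g(0)=0$, $g'(0)=0$, $g(x_0)=y_0$, $g'(x_0)=\tan\theta_0$. Since evaluating the candywrap distance amounts, for a positive first coordinate, to computing \eqref{min tilde D}, and for a negative first coordinate reduces to that case through the reflection convention in the definition, the whole statement comes down to: (i) writing this cubic down explicitly, (ii) integrating $|g''|^2$ over the relevant interval, and (iii) tracking the result through that reflection.

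First I would treat the case $x>0$ (with $\cos\theta>0$, so that $\tan\theta$ is defined and the distance is finite). The conditions $g(0)=0$ and $g'(0)=0$ force $g(s)=as^3+bs^2$, so the two remaining endpoint conditions turn into the linear system $ax^3+bx^2=y$, $3ax^2+2bx=\tan\theta$. Solving this $2\times 2$ system gives precisely $a=\frac{1}{x^2}\bigl(\tan\theta-2\frac{y}{x}\bigr)$ and $b=\frac{1}{x}\bigl(-\tan\theta+3\frac{y}{x}\bigr)$, which are exactly the quantities named in the lemma. Since $g''(s)=6as+2b$, a direct integration yields
\[
\int_0^{x}\bigl(6as+2b\bigr)^2\,ds=12a^2x^3+12abx^2+4b^2x=4\bigl(3a^2x^3+3abx^2+b^2x\bigr),
\]
and adding the term $\sqrt{x^2+y^2}$ from \eqref{min tilde D} produces the first branch of the claimed formula.

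Next I would deal with $x<0$. By definition the distance then equals its value at $(-x,-y,\theta)$, whose first coordinate $-x$ is positive, so the computation of the previous paragraph applies with $(x,y)$ replaced by $(-x,-y)$. A short substitution shows that $a$ is unchanged under $(x,y)\mapsto(-x,-y)$ while $b$ changes sign; inserting $-x$, $-y$, $a$ and $-b$ into $4\bigl(3a^2x^3+3abx^2+b^2x\bigr)$ and simplifying converts it into $-4\bigl(3a^2x^3+3abx^2+b^2x\bigr)$, while $\sqrt{(-x)^2+(-y)^2}=\sqrt{x^2+y^2}$ is unaffected. This gives the second branch of the formula.

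There is no real analytic difficulty here; the points that require attention are (a) that the reduction to a cubic minimizer — already established above, and ultimately resting on the fact that prescribing all four endpoint data annihilates the boundary terms in the first variation of the quadratic form $Q[g,g]=\|g''\|_{L^2}^2$ — is invoked correctly, and (b) the sign bookkeeping in the reflection step, where one must keep in mind that $\widetilde D$ is not symmetric, so that the two branches come out with the signs displayed in the statement.
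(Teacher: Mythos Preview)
Your proposal is correct and follows essentially the same route as the paper: write the minimizing cubic as $g(s)=as^3+bs^2$, solve the $2\times 2$ linear system for $a$ and $b$, and integrate $|g''|^2=4(9a^2s^2+6abs+b^2)$ over $[0,x]$. Your treatment of the $x<0$ branch via the reflection $(x,y)\mapsto(-x,-y)$ and the observation that $a$ is preserved while $b$ flips sign is in fact more explicit than the paper's, which only gestures at this case.
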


\begin{proof}
Let $g(s)=as^3+bs^2+cs+d$. We know by the definition of the candywrap distance that $g(0)=0$ and $g'(0)=0$. These imply that $d=0$ and $c=0$. Hence $g(s)=as^3+bs^2$, $g'(s)=3as^2+2bs$ and $g''(s)=6as+2b$. Let us solve the coefficients $a$ and $b$. We get a system of equations from $g(x)=y$ and $g'(x)=\tan\theta$:
\begin{align*}
    \begin{cases}
        ax^3+bx^2 = y \\
        3ax^2 + 2bx = \tan\theta
    \end{cases}
    \Leftrightarrow
    \begin{cases}
        2ax^3+2bx^2 = 2y \\
        3ax^3 + 2bx^x = x\tan\theta
    \end{cases}
    \Leftrightarrow
    \begin{cases}
        2ax^3+2bx^2 = 2y \\
        ax^3 = x\tan\theta-2y.
    \end{cases}
\end{align*}
We compute that $a=\frac{1}{x^2}(\tan\theta-2\frac{y}{x})$. Furthermore we compute that $b=\frac{1}{x}(-\tan\theta + 3\frac{y}{x} )$. 

We note that \begin{align*}
  | g''(s) |^2  = 4(9a^2s^2+6abs+b^2). 
\end{align*}
Now if $x>0$, we get that
\begin{align*}
    \int_0^x  |g''(s)|^2 \,ds = \int_0^x 4(9a^2s^2+6abs+b^2) \,ds =  4(3a^2x^3+3abx^2+b^2x.)
\end{align*}
The statement follows by adding $\sqrt{x^2+y^2}$ to $\int_x^0   | g''(s)|^2 \,ds$.  
\end{proof}

With the candywrap distance, we can approximate neighborhoods of invisible singularities. This can be done by setting intersections of balls defined with the candywrap distance using known boundary curves' endpoints as center points. The intersections can be done taking only the directions of unknown singularities. By growing the radii of these balls, we can study when known disjoint parts of boundaries with these balls meet all the levels $\R^2\times{t}$, $t\in[0,2\pi]$. If so, we may say that invisible singularities belong to these balls, i.e., we have found neighborhoods of invisible singularities. The goal is to find the minimum radius which gives the neighborhoods. This can be done using persistent homology, which will be explained later.

Note that the candywrap distance from the origo to the point near the origo (in the Euclidean sense) is sufficiently large. Thus the minimum radius might have to be unreasonably large for all levels to be met. Thus one might use all points as center points in small Euclidean balls for candywrap distance-defined balls.               

\begin{figure}
     \centering
     \includegraphics[width=\textwidth]{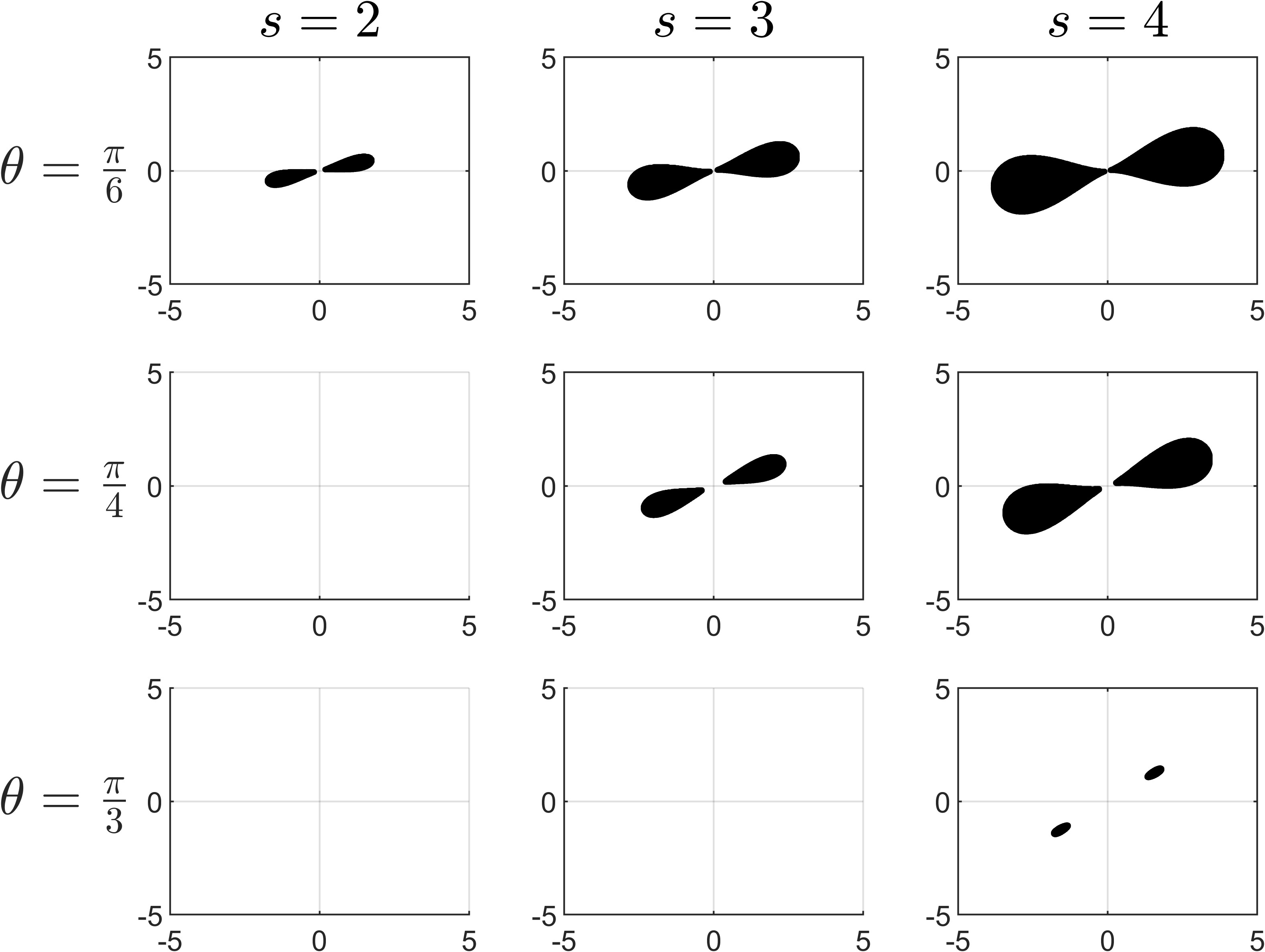}
     \caption{The illustrations of sets $\{(x,y)\in\R^2 \mid \Tilde{D}\big((0,0,0),(x,y,\theta)\big)<s\}$, where $s=2,3,4$ and $\theta=\frac{\pi}{6},\frac{\pi}{4},\frac{\pi}{3}$.} 
     \label{fig:candywrap_ball_sizes_and_angles}
 \end{figure}

\subsection{Complex wavelet transform}

We need computational tools for dealing with the (incomplete) wavefront set recovered from tomographic reconstructions. Shearlets and curvelets have exact results for characterizing the wavefront set, which was the main reason they were originally introduced \cite{shearlets,curvelet}. However, we look for a simpler, computationally efficient algorithm and explore the complex dual-tree wavelet transform \cite{DoubleDensityDualTree_Selesnick} for that purpose. 

Theoretically, complex wavelets lack the supreme approximation properties enjoyed by curvelets and shearlets. However, in practice complex wavelets are easy to adapt, computationally cheap, and by experience they track singularities and their directions accurately enough for our practical purposes. 

To be more precise, it is observed that if a detail coefficient is large it is 'more likely' that there is a singularity with the studied direction. As an example, the $LH$ wavelet can capture singularities with direction $[60^\circ,90^\circ]$. See Table \ref{table:wavelet_orientations} for which directions other wavelets capture, and Figure \ref{fig:ellipse_cw_example} as an example. We can utilize this observation when we study a wavefront set of a density function.

A complex wavelet transform ($\C$WT) is a computational tool to find the subsets of the wavefront set of given data. In our study, the given data is reconstructed density function $f$ from tomographic data. A complex wavelet transform gives more directional information about singularities in two and higher dimensions than discrete real wavelet transformations. Computationally complex wavelets' detail and approximation coefficients can be formed from two 1D real mother wavelets and two 1D real scaling functions using the dual-tree structure.  This makes $\C$WT computationally easy to adapt. In this section, we will go through the general idea of $\C$WT. For more details, the authors suggest reading \cite{Daubechies} for general wavelet theory, and \cite{DualTree} for complex wavelets. 

Let us first define a one-dimensional complex-valued wavelet, that is
 \begin{align*}
 \psi(x)=\psi_r(x)+i\psi_i(x),    
 \end{align*}
 where both $\psi_r$ and $\psi_i$ are real-valued wavelets. Additionally, $\psi_r$ is even and $\psi_i$ is odd. Moreover, $\psi_r$ and $\psi_i$ are almost Hilbert transform pairs, that is, $(\mathcal{H}\psi_i)(x) \approx \frac{1}{\pi}\int_{-\infty}^{\infty}\frac{\psi_r(\tau)}{x-\tau}\,d\tau$. We define a one-dimensional complex scaling function similarly, $\phi(x) = \phi_r(x)+i\phi_i(x)$, where $\phi_r\approx\mathcal{H}\phi_i$. 

A one-dimensional complex wavelet transform is
\begin{align*}
    f(x) &= \sum_{n=-\infty}^{\infty}a(n)\phi(x-n)+\sum_{j=0}^{\infty}\sum_{n=-\infty}^{\infty}d(j,n)2^{j/2}\psi(2^j x-n), \\
\end{align*}
where approximation coefficients $a(n)$ are defined as \begin{align*}
     a(n)=a_i(n)+ia_r(n)=\int_{-\infty}^\infty f(x)\phi_r(x-n)\,dx +i \int_{-\infty}^\infty f(x)\phi_i(x-n)\,dx  
 \end{align*}
 and detail coefficients $d(j,n)$ are defined as
 \begin{align*}
d(j,n)&=d_r(j,n)+id_i(j,n) \\
&=2^{j/2}\int_{-\infty}^\infty f(x)\psi_r(2^jx-n)\,dx+i 2^{j/2}\int_{-\infty}^\infty f(x)\psi_i(2^jx-n)\,dx.
 \end{align*} 
We call $j=0,1,\dots,\infty$ a level. 

We obtain 2-D complex wavelets by taking the tensor product of a wavelet and scaling function and switching their roles along the $x$-direction and $y$-direction.
 For example, a two-dimensional, $-45^\circ$ orientated, complex wavelet is 
 \begin{align*}
 \psi(x)\psi(y) 
 &= [\psi_r(x)+i\psi_i(x)][\psi_r(y)+i\psi_i(y)] \\
 &= \psi_r(x)\psi_r(y)-\psi_i(x)\psi_i(y) + i [\psi_r(x)\psi_i(y)+\psi_i(x)\psi_r(y)].
 \end{align*}
We get another diagonal orientated, $+45^\circ$, wavelet from taking the complex conjugate of $\psi(y)$. That is, 
 \begin{align*}
 \psi(x)\overline{\psi(y)} 
 &= [\psi_r(x)+i\psi_i(x)][\psi_r(y)-i\psi_i(y)] \\
 &= \psi_r(x)\psi_r(y)-\psi_i(x)\psi_i(y) + i [\psi_r(x)\psi_i(y)-\psi_i(x)\psi_r(y)].
 \end{align*}
 We get four other oriented ($-75^\circ, +75^\circ, -15^\circ$, and  $+15^\circ$) wavelets by considering wavelets $\psi(x)\phi(y)$, $\psi(x)\overline{\phi(y)}$, $\phi(x)\psi(y)$, and $\phi(x)\overline{\psi(y)}$ respectively. A 2-dimensional scaling function is defined as $\phi(x)\phi(y)$.  The two-dimensional complex wavelets being oriented comes from the fact that the Fourier spectrum is supported in only one quadrant of the two-dimensional frequency plane.

Computationally, we use a dual-tree structure and filters to compute complex wavelet coefficients \cite{DualTree}. In practice, we perform two real wavelet transformations which together form complex wavelet transformations. Both transformations have a set of filters, a high-pass filter and a low-pass filter. They are designed so that together transformation is approximately analytic. 

A high-pass filter $H$ is associated with the mother wavelet $\psi$ and a low-pass filter $L$ is associated with the scaling function $\phi$ to compute the approximation and the detail coefficients. In a complex wavelet transform, we have two sets of high-pass filters, $H_{r}$ and $H_{i}$, and low-pass filters, $L_{r}$ and $L_{i}$. The filters $H_{r}$ and $L_{r}$ are associated with the mother wavelet $ \psi_r$ and scaling function $\phi_r$ respectively. The filters $H_{i}$ and $L_{i}$ are associated with the mother wavelet $ \psi_i$ and scaling function $\phi_i$ respectively. 

We say that the wavelet $\psi(x)\psi(y)$ is an $HH$ wavelet since it is computationally obtained by using high-pass filters $H_i$ and $H_r$. The wavelet $\psi(x)\overline{\psi}(y)$ is an $H\oH$ wavelet since it is computed using high-pass filters like the previous wavelet $\psi(x)\psi(y)$ but we are taking the complex conjugate. The other four wavelets' relationship to filters and orientations are presented in Table \ref{table:wavelet_orientations}.  
\begin{table}[!ht]
\begin{center}
 \caption{Computationally, filters correspond to complex wavelets. Each wavelet is oriented differently and they can capture singularities in different directions.}
\begin{tabular}{|c|c|c|c|}
\hline
Wavelet & Filters & Orientation of wavelet & Direction of singularity \\
\hline
$\phi(x)\psi(y)$&$LH$& $-15^\circ$ & $[60^\circ,90^\circ]$\\
$\phi(x)\phi(y)$&$HH$& $-45^\circ$ & $[30^\circ,60^\circ]$\\
$\psi(x)\phi(y)$&$HL$&$-75^\circ$  & $[0^\circ,30^\circ]$ \\
$\psi(x)\overline{\phi(y)}$&$H\oL$& $+75^\circ$ & $[-30^\circ,0^\circ]$ \\
$\psi(x)\overline{\psi(y)}$&$H\oH$& $+45^\circ$ & $[-60^\circ,-30^\circ]$\\
$\phi(x)\overline{\psi(y)}$&$L\oH$& $+15^\circ$ & $[-90^\circ,-60^\circ]$ \\
\hline
\end{tabular}
\label{table:wavelet_orientations}
\end{center}
\end{table}

\begin{figure}
    \centering
     \begin{subfigure}[t]{0.12\textwidth}
         \centering
         \includegraphics[width=\textwidth]{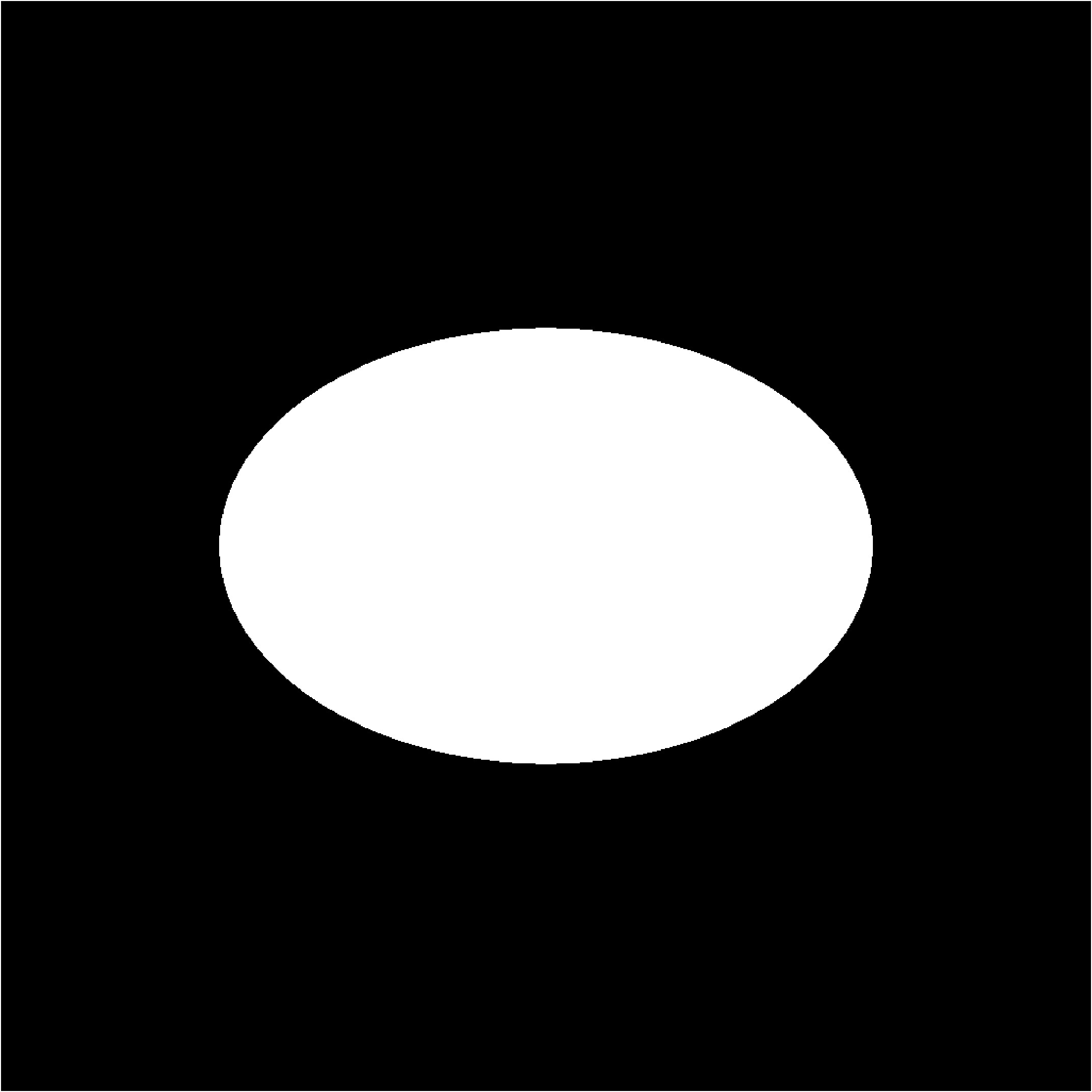}
         \caption{An ellipse}
         \label{fig:ellipse}
     \end{subfigure}
     \begin{subfigure}[t]{0.12\textwidth}
         \centering
         \includegraphics[width=\textwidth]{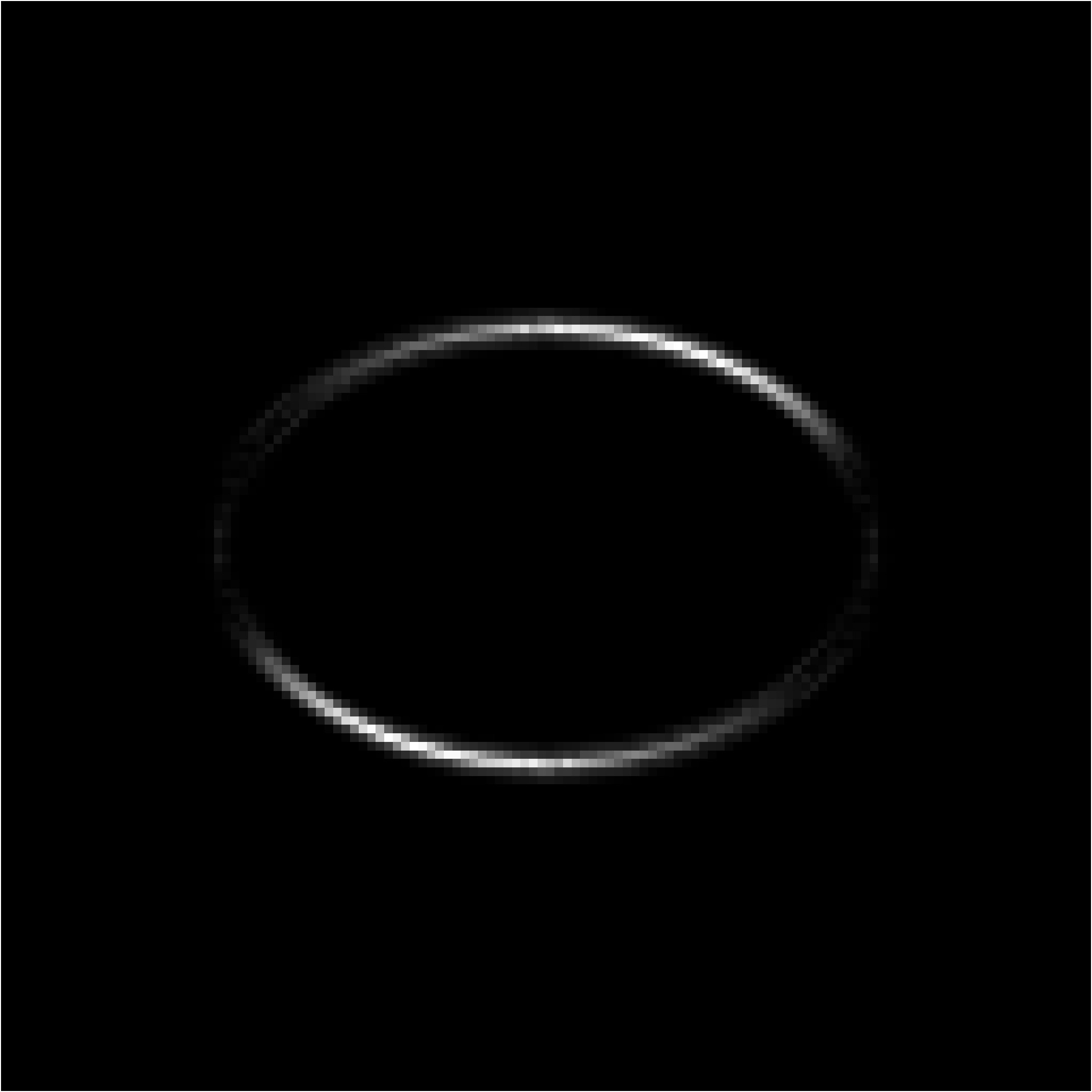}
         \caption{$LH$}
         \label{fig:coefs_abs6}
     \end{subfigure}
     \begin{subfigure}[t]{0.12\textwidth}
         \centering
         \includegraphics[width=\textwidth]{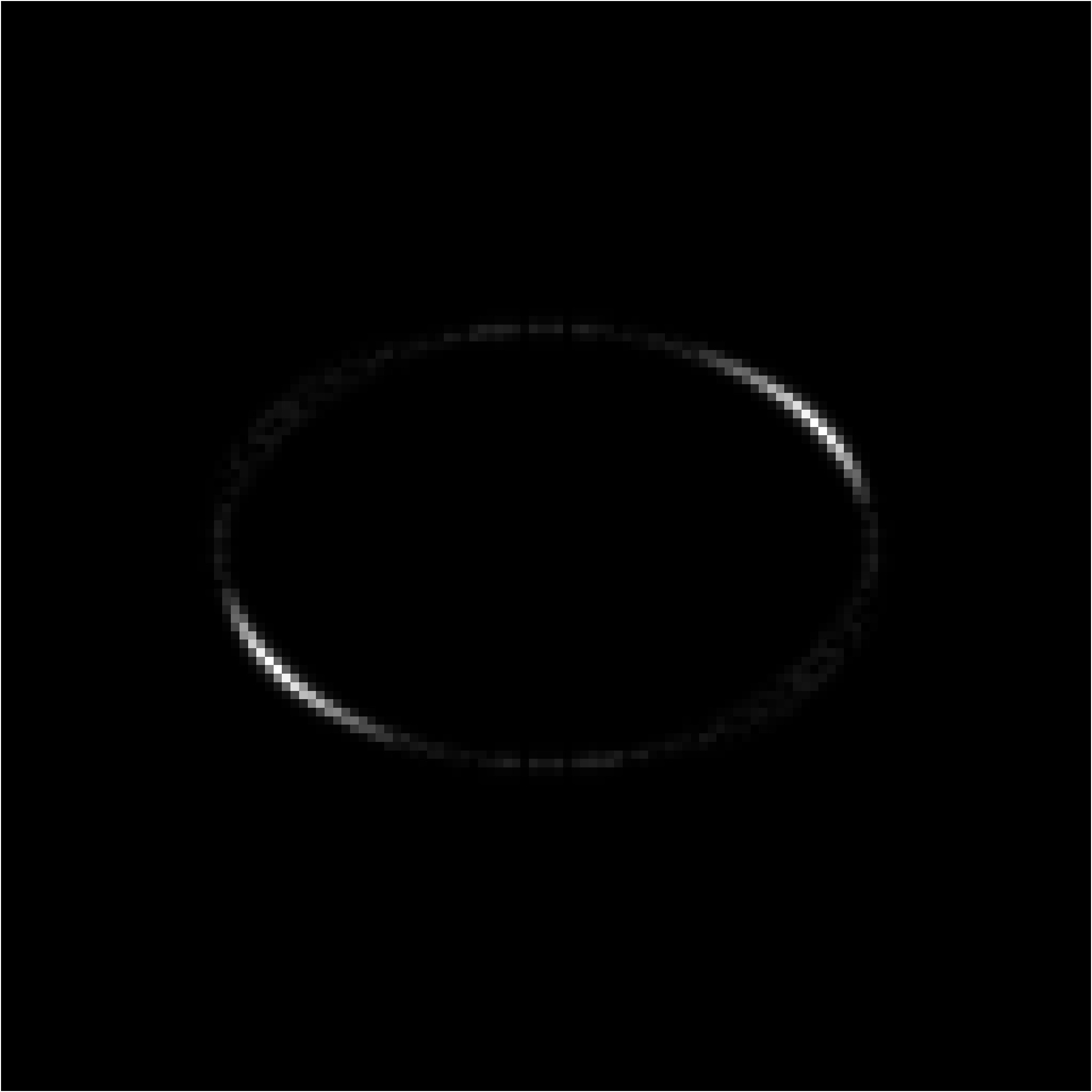}
         \caption{$HH$}
         \label{fig:coefs_abs7}
     \end{subfigure}
     \begin{subfigure}[t]{0.12\textwidth}
         \centering
         \includegraphics[width=\textwidth]{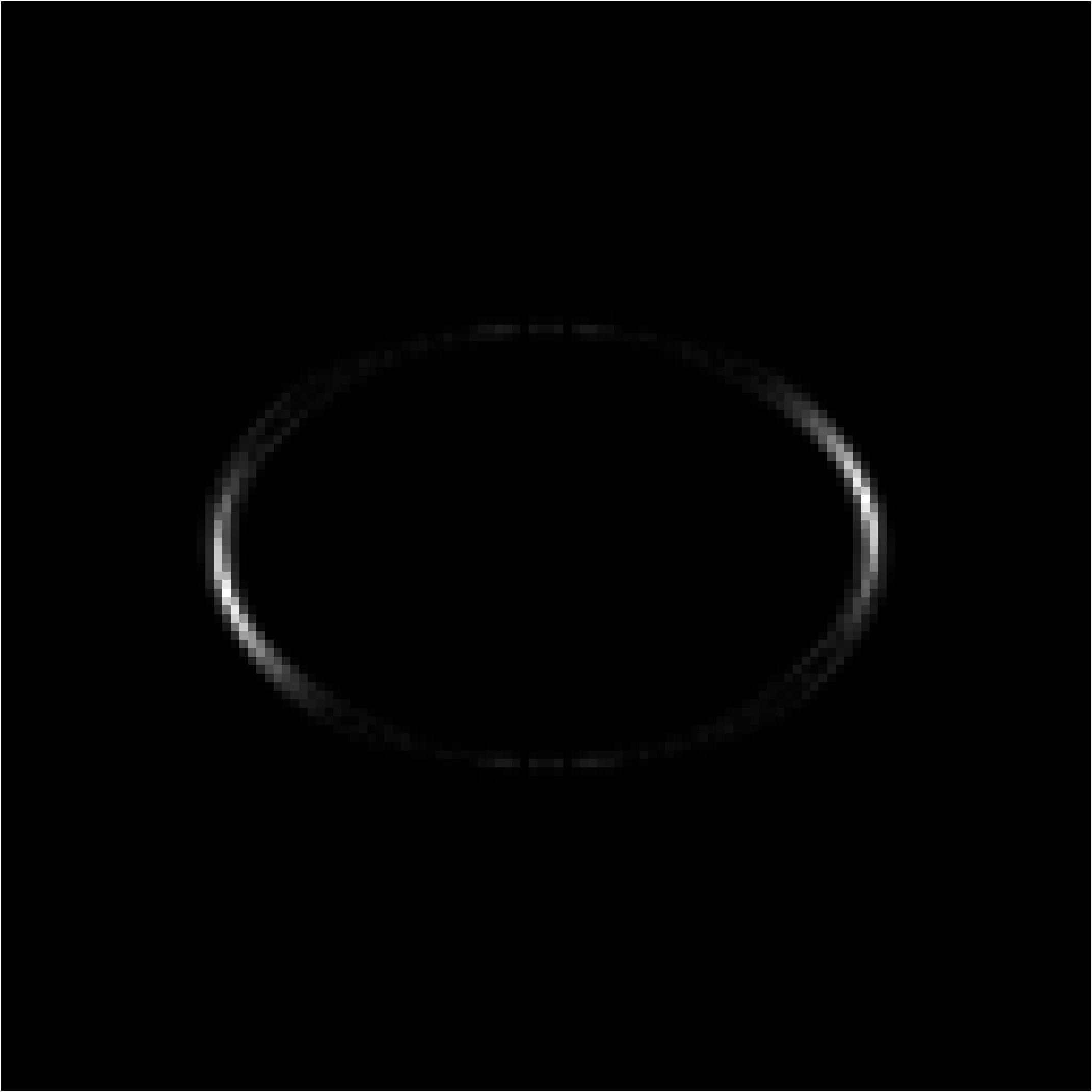}
         \caption{$HL$}
         \label{fig:coefs_abs8}
     \end{subfigure}
     \begin{subfigure}[t]{0.12\textwidth}
         \centering
         \includegraphics[width=\textwidth]{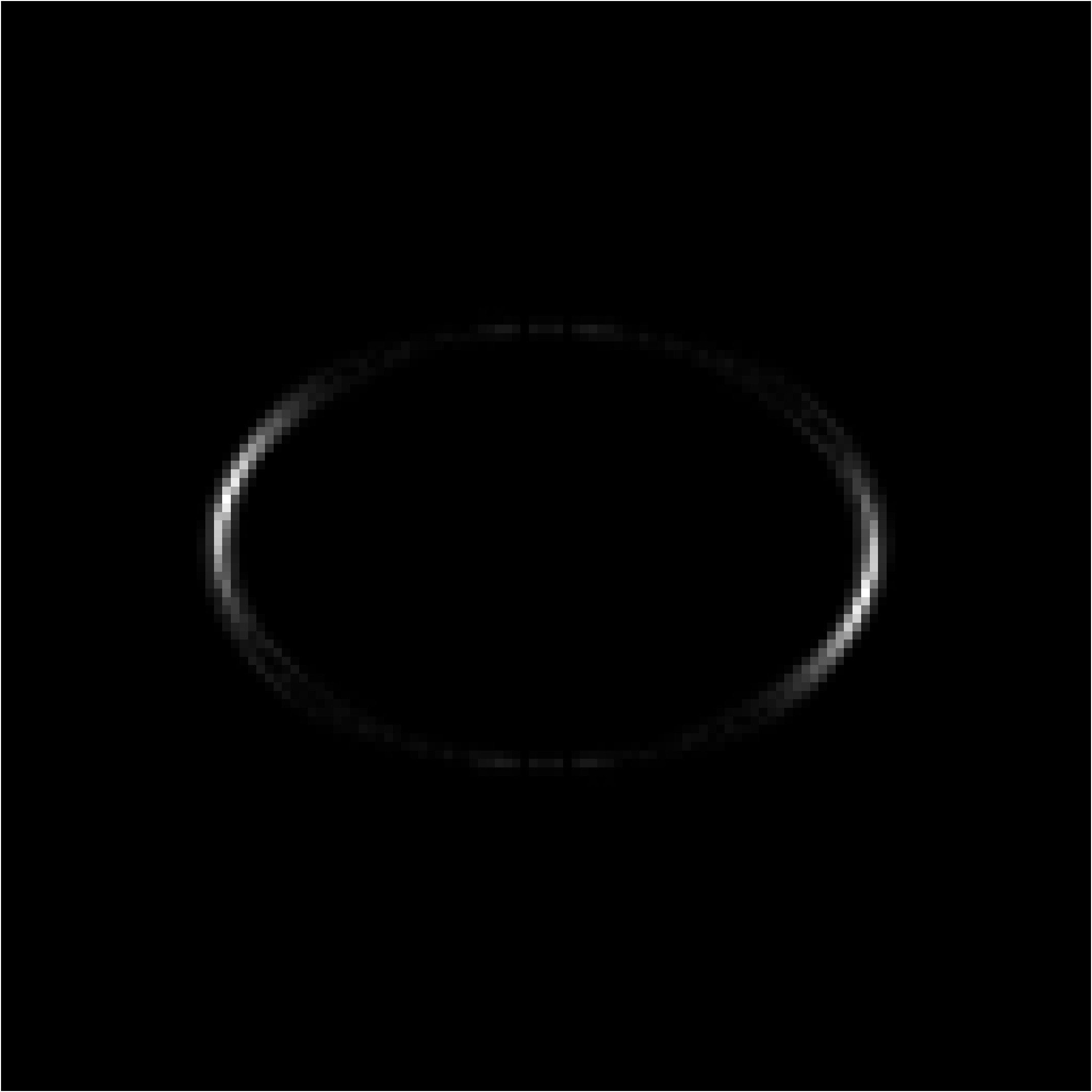}
         \caption{$H\oL$}
         \label{fig:coefs_abs9}
              \end{subfigure}
         \begin{subfigure}[t]{0.12\textwidth}
         \centering
         \includegraphics[width=\textwidth]{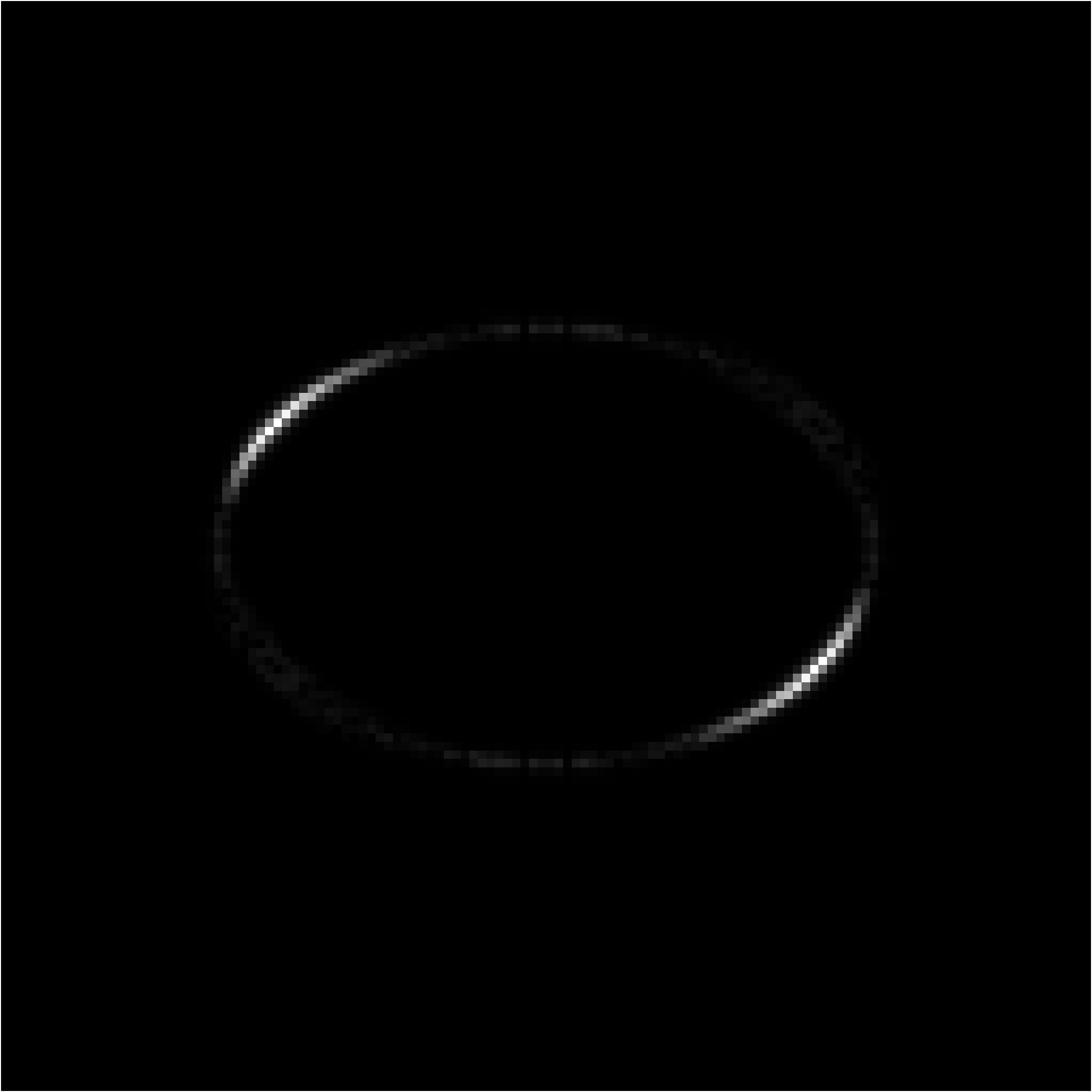}
         \caption{$H\oH$}
         \label{fig:coefs_abs10}
     \end{subfigure}
     \begin{subfigure}[t]{0.12\textwidth}
         \centering
         \includegraphics[width=\textwidth]{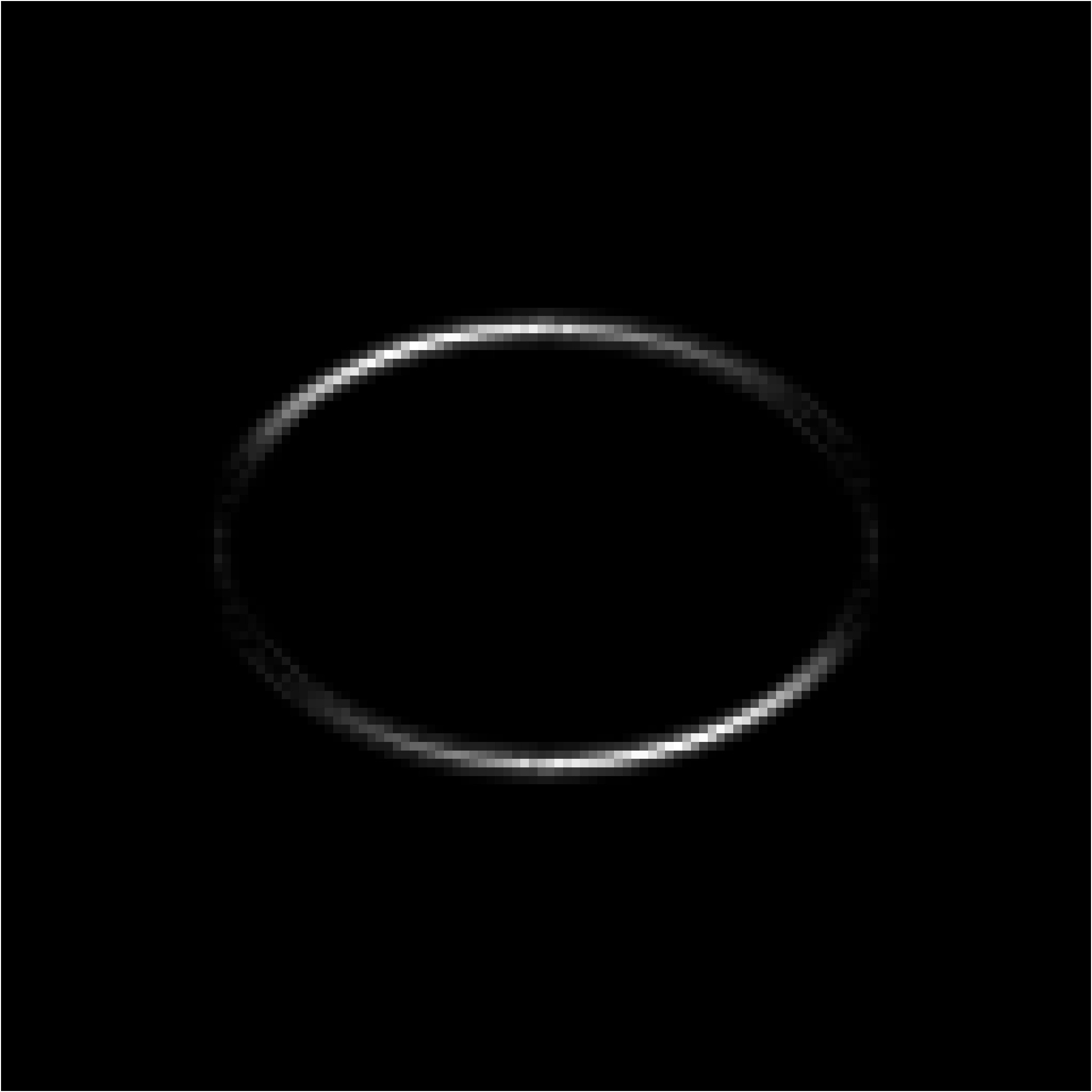}
         \caption{$L\oH$}
         \label{fig:coefs_abs11}
     \end{subfigure}
     \caption{Subfigures (b)-(g) show magnitudes of detail coefficients of an ellipse in (a) computed with dual-tree complex wavelets, when $j=7$.} 
        \label{fig:ellipse_cw_example}
\end{figure}

\subsection{Persistent homology}

In our study, we will use persistent homology to find minimum distance $s$ for invisible singularities' neighborhoods, as explained in Section \ref{sec:cw_dist}. In this section, we will go through the required background of cubical (persistent) homology.
Persistent homology is a tool to consider the homology of data across multiple scales. In this paper, we focus on cubical (persistent) homology, in which building blocks are elementary cubes. We limit the theory to elementary cubes in $\R^3$ since that is the space we are working within our study. However, the theory can be done for any dimension, see \cite{CompHomKaczynski}.

Let us recall basic terminology and notation. An \emph{elementary interval} is an interval of the form $[l,l+1]\subset \R$, or $[l]=[l,l]$ for some $l\in\Z$, i.e. a unit length interval or a single point.
A product 
\[
C=[l_1,l_1+s_1] \times [l_2,l_2+s_2]\times [l_3,l_3+s_3] \subset \R^3
\]
of elementary intervals, where $l_1,l_2,l_3\in \Z$ and $s_1,s_2,s_3\in \{0,1\}$, is a \emph{$p$-cube for $p=0,1,2,3$} if $s_1+s_2+s_3 = p$. In particular, a $p$-cube is isometric to the standard $p$-dimensional unit cube $[0,1]^p$. Note that there are $2p$ $(p-1)$-cubes contained in a $p$-cube $C$. We call these $(p-1)$-cubes the faces of $C$.

A cubical complex $Q\subset \R^3$ is a finite collection of $p-$cubes, for $p =0,1,2,3$, satisfying the following conditions: 
\begin{itemize}
\item $Q$ contains the faces of all its cubes, and 
\item intersection of cubes in $Q$ are in $Q$.
\end{itemize}

A chain group $C_p(Q)$ is a free $\Z_2$ vector space generated by the set of $p$-cubes. We call elements of $C_p(Q)$ $p-$chains.
The boundary operator $\partial_p \colon C_p(Q)\to C_{p-1}(Q)$ is the map satisfying, for each $C\in C_p(Q)$, that $\partial_p(C)$ is the sum of the faces of $C$. Since each $C_p(Q)$ is a $\Z_2$-vector space, we have that $\partial_{p-1} \circ \partial_p = 0$.

We call $p$-chain $c$ a \emph{$p-$cycle} if $\partial_p(c)=0$ and a \emph{$p-$boundary} if there exists a $(p+1)$-chain $b$ such that  $\partial_{p+1}(b)=c$. We denote $Z_p(Q)$ the set of all $p-$cycles related to a cubical complex $Q$.  
Similarly, we denote $B_p(Q)$ the set of all $p-$boundaries. Now we can define a homology group.

\begin{definition}
A $p-$th homology group $H_p(Q)$ is a quotient group of $p$-cycles mod $p$-boundaries, $H_p(Q)=Z(p)/B(p)$. 
\end{definition}

We call a nested sequence of cubical complexes
\begin{align*}
    Q_0 \subset Q_1 \subset \cdots \subset Q_k
\end{align*}
a \emph{filtration}. Then the inclusion between $Q_i$ and $Q_j$ induces a map
\begin{align*}
    \ind_{i,j}\colon H_p(Q_{i}) \to H_p(Q_{j}).
\end{align*}

Especially, we are interested in homology groups $H_0(Q)$ which encode components of $Q$. By forming a filtration we can study the evolution of components through different scales. Recall that a component of a cubical complex $Q$ is a connected collection of $p-$cubes of $Q$.

Next, we will consider how 3D matrices can be seen as cubical complexes. We will define a filtration related to 3D matrices and thus consider zeroth homology groups. First, we will define a relation between a binary 3D matrix, $M\in \{0,1\}^{n\times m\times h}$, which can be seen as stack of $\{0,1\}^{n \times m}$ matrices of height $h$, and a cubical complex $Q$.

Consider 3D binary matrix $M\in \{0,1\}^{n\times m\times h}$. We call elements of $M$ voxels. We link voxels to 3-cubes as follows. A voxel $M(x,y,z)=1$ is linked to $3-$cube $q_{x,y,z}=[x-1,x]\times[y-1,y]\times [z-1,z]$. Now we say that a cubical complex $Q$ is related to a 3D matrix $M$, denoted by $M\rightleftharpoons Q$, if the following holds:
$M(x,y,z)=1$, if and only if  $q_{x,y,z}\in Q$.

Consider that we have 3D matrices $M_0,M_1,\dots,M_k$ for which $M_i(x,y,z)=1$ only if for all $j>i$ holds $M_j(x,y,z)=1$. Let $Q_j \rightleftharpoons M_j$ for all $j=0,\dots,k$. Now the filtration of cubical complexes related to $3D$ matrices $M_0,M_1,\dots,M_k$ is
$$Q_{0} \subset Q_{1} \subset \cdots \subset Q_{k}.$$
We finalize this section with the definition of a component matrix.
\begin{definition}
    Let $M,M_j\in \{0,1\}^{n\times m \times h}, j=1,\dots I$, and $M=\sum_{i=0}^I M_i$, where summing is done elementwise. Let $Q_j \rightleftharpoons M_j$ for all $j=1,\dots,I$ and $M\rightleftharpoons Q$. We say that $M_j$, $j=1,\dots, I$ is a \emph{component matrix} of $M$ if and only if $Q_i$ is a component of $Q$.  
\end{definition}

Now finding the components from the 3D matrix $M_j$ corresponds to finding generators of a group $H_0(Q_j)$, where $Q_j \rightleftharpoons M_j$.

\subsection{Morphological operations}

Morphological operations are often used in binary image processing. Morphological dilation makes objects in images more visible. It also fills small holes. Morphological erosion removes isolated pixels and makes lines thinner. Thus only 'essential' shapes remain. Many other operations are combinations of these two operations; see \cite{DigitalImageProcessing} Since binary data can be seen as images, the morphological operations can also be used for data processing, like we do in our method.

In morphology, we consider a binary image $\mathcal{D}$ as a subset of a set $\Z^2$. A translation by a vector $s \in \Z^2$ of a set $\mathcal{D}$ is a set $\mathcal{D}_s=\{x-s \mid x\in \mathcal{D}\}$. Structuring elements are also binary images. 

An \emph{erosion of a binary image} $\mathcal{D}\subset \Z^2$ by a structuring element $\mathcal{S}\subset \Z^2$ is defined by
\begin{align*}
    \mathcal{D} \ominus \mathcal{S} := \bigcap_{s \in \mathcal{S}} \mathcal{D}_{s}.
\end{align*}

A \emph{dilation of a binary image} $\mathcal{D}$ by a structuring element $\mathcal{S}$ is defined by
\begin{align*}
    \mathcal{D} \oplus \mathcal{S} := \bigcup_{s \in \mathcal{S}} \mathcal{D}_{-s}.
\end{align*} 

A \emph{morphological opening} $\mathcal{D} \circ \mathcal{S}$ is a morphological operator which does both the erosion and the dilation. It is defined by
\begin{align*}
\mathcal{D} \circ \mathcal{S} = (\mathcal{D}\ominus \mathcal{S} )\oplus \mathcal{S}.
\end{align*}

A \emph{morphological skeletonization} is a controlled iterative erosion process. Let denote $n$ times dilation of the structuring element $\mathcal{S}$ with itself by $n\mathcal{S}$, i.e., $$n\mathcal{S}=\underset{n \text{ times}}{\underbrace{\mathcal{S}\oplus \cdots \oplus \mathcal{S}}},$$ and $0\mathcal{S}=\{(0,0)\}$.
The skeletonization operation of an image $\mathcal{D}$ by a structuring element $\mathcal{S}$ is
\begin{align*}
\mathrm{Skel}(\mathcal{D})=\bigcup_{n=0}^N(\mathcal{D}\ominus n\mathcal{S}) \setminus [(\mathcal{D}\ominus n\mathcal{S})\circ \mathcal{S}],
\end{align*}
where $N=\max\{n \mid \mathcal{D}\ominus n\mathcal{S}\neq \varnothing\}$. See the examples of the above operations in Figure \ref{fig:morph_operations}.

\begin{figure}
     \centering
     \captionsetup[subfigure]{justification=centering}
     \begin{subfigure}[t]{0.25\textwidth}
         \centering
         \includegraphics[width=\textwidth]{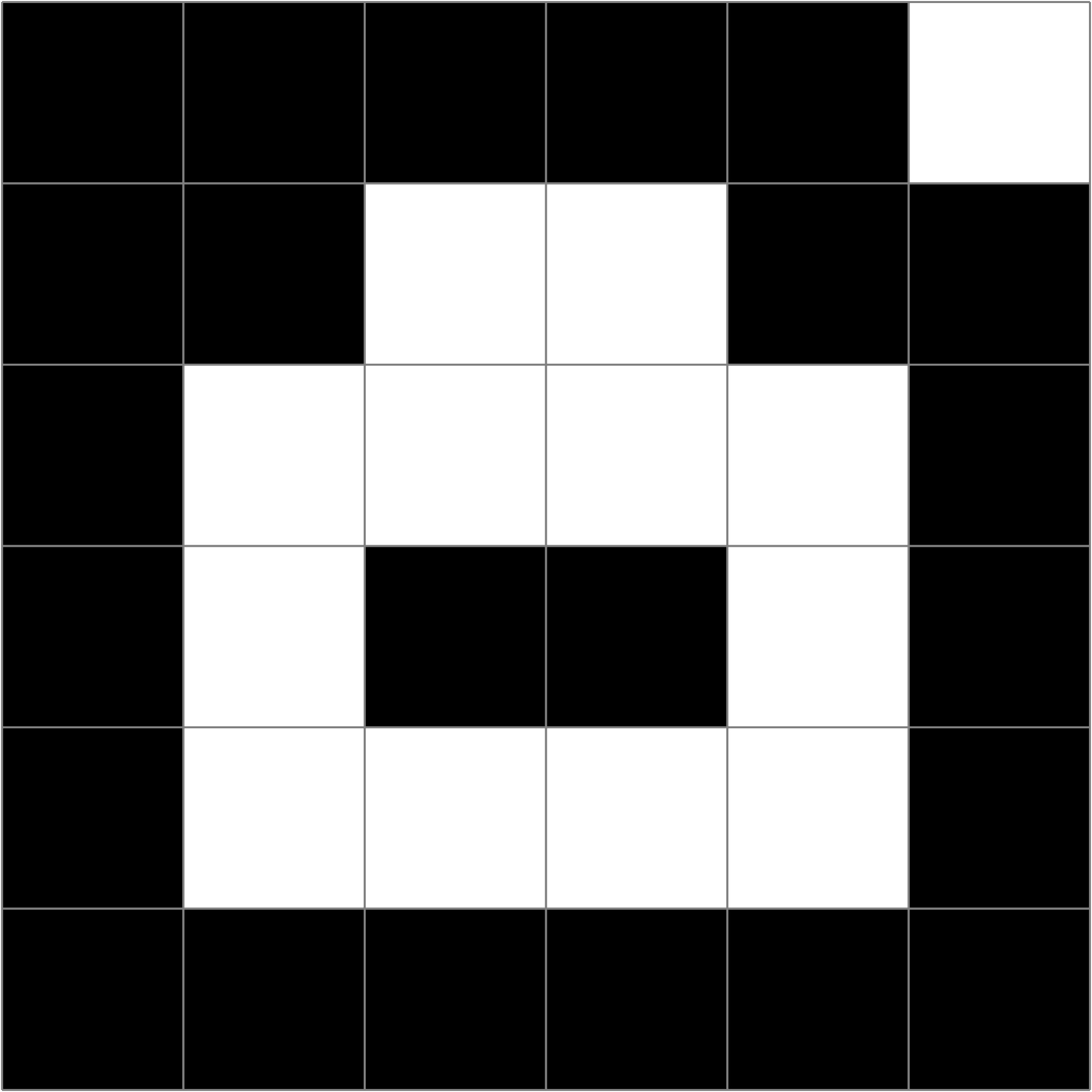}
         \caption{An original binary image $\mathcal{D}$}
         \label{fig:morph_orginal_image}
     \end{subfigure}
     \hspace{5mm}
     \begin{subfigure}[t]{0.25\textwidth}
         \centering
         \includegraphics[width=\textwidth]{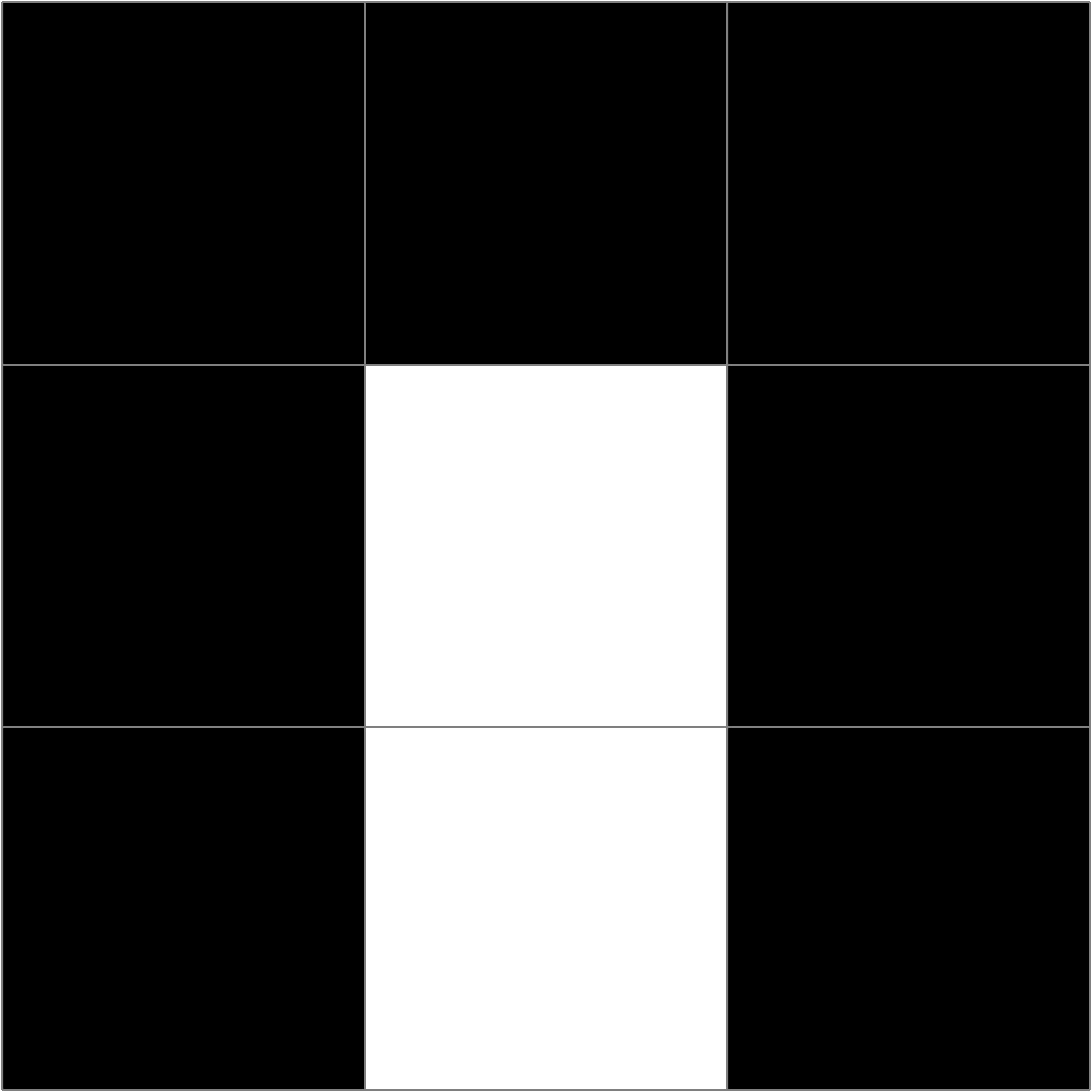}
         \caption{A structuring element $\mathcal{S}$}
         \label{fig:morph_struct_elem}
     \end{subfigure}
          \hspace{5mm}
     \begin{subfigure}[t]{0.25\textwidth}
         \centering
         \includegraphics[width=\textwidth]{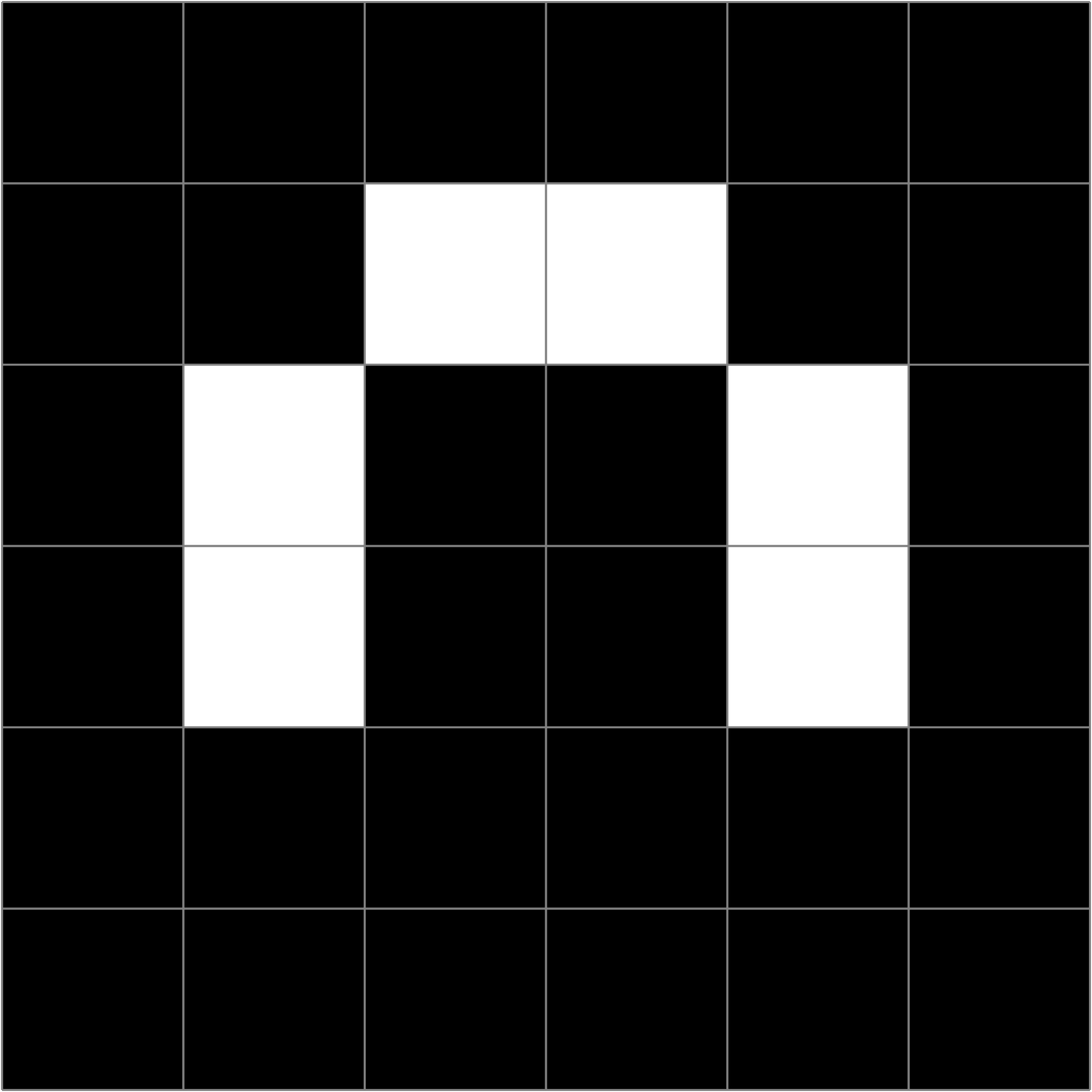}
         \caption{Morphological erosion $\mathcal{D} \ominus \mathcal{S}$}
         \label{fig:morph_erosion}
     \end{subfigure}
               \hspace{5mm}
     \begin{subfigure}[t]{0.25\textwidth}
         \centering
         \includegraphics[width=\textwidth]{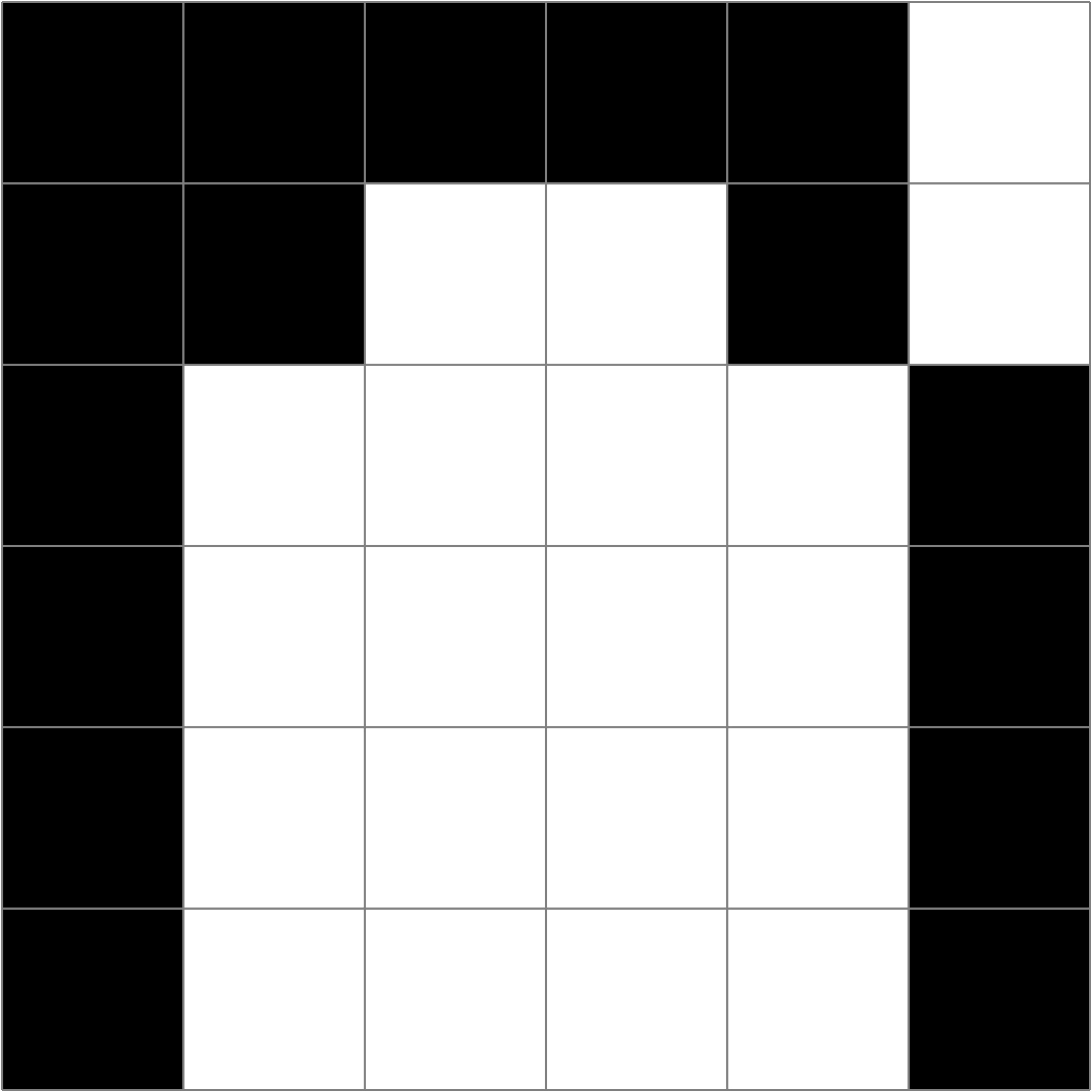}
         \caption{Morphological dilation $\mathcal{D} \oplus \mathcal{S}$}
         \label{fig:morph_dilation}
     \end{subfigure}
               \hspace{5mm}
     \begin{subfigure}[t]{0.25\textwidth}
         \centering
         \includegraphics[width=\textwidth]{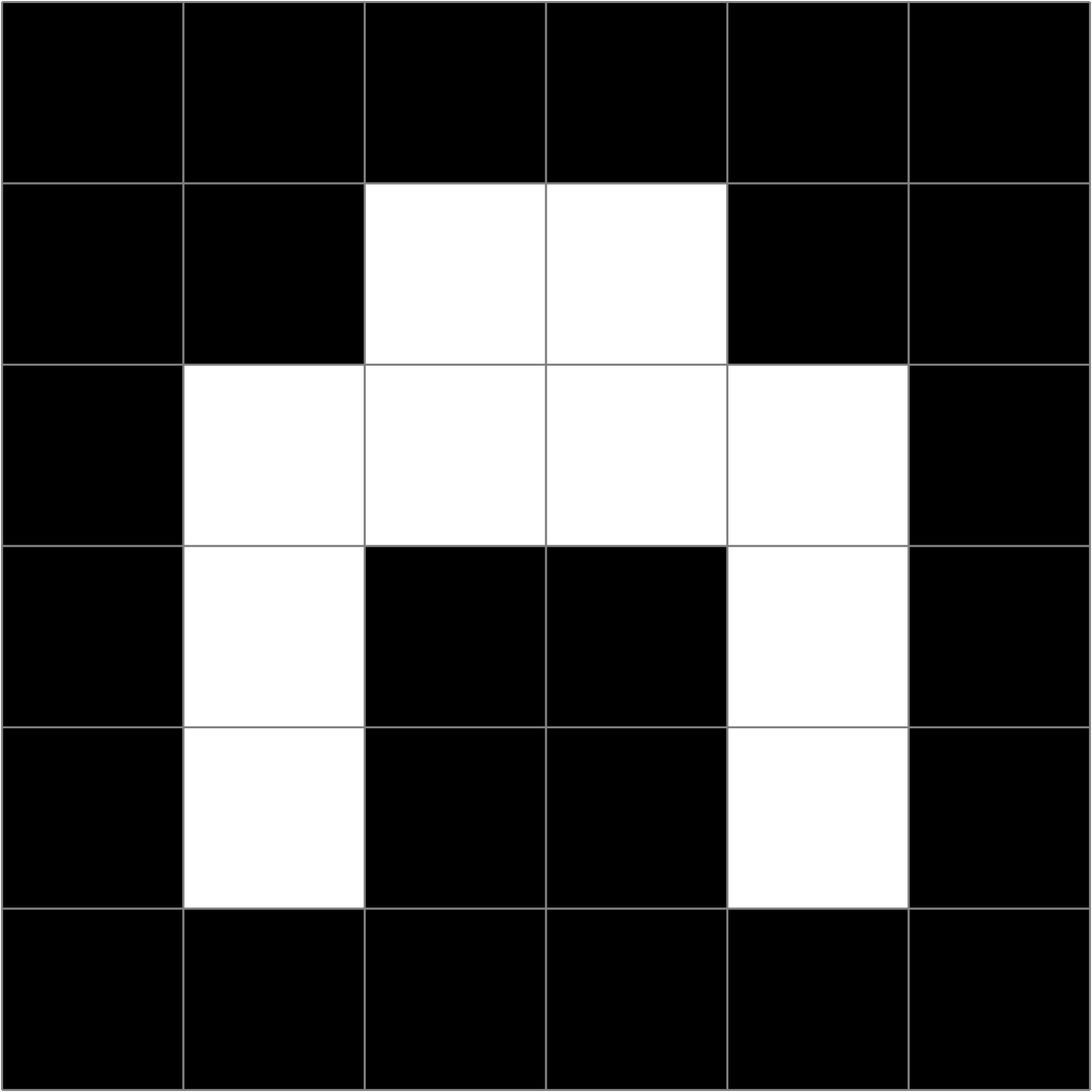}
         \caption{Morphological opening $\mathcal{D} \circ \mathcal{S}$}
         \label{fig:morph_open}
         \end{subfigure}
                   \hspace{5mm}
     \begin{subfigure}[t]{0.25\textwidth}
         \centering
         \includegraphics[width=\textwidth]{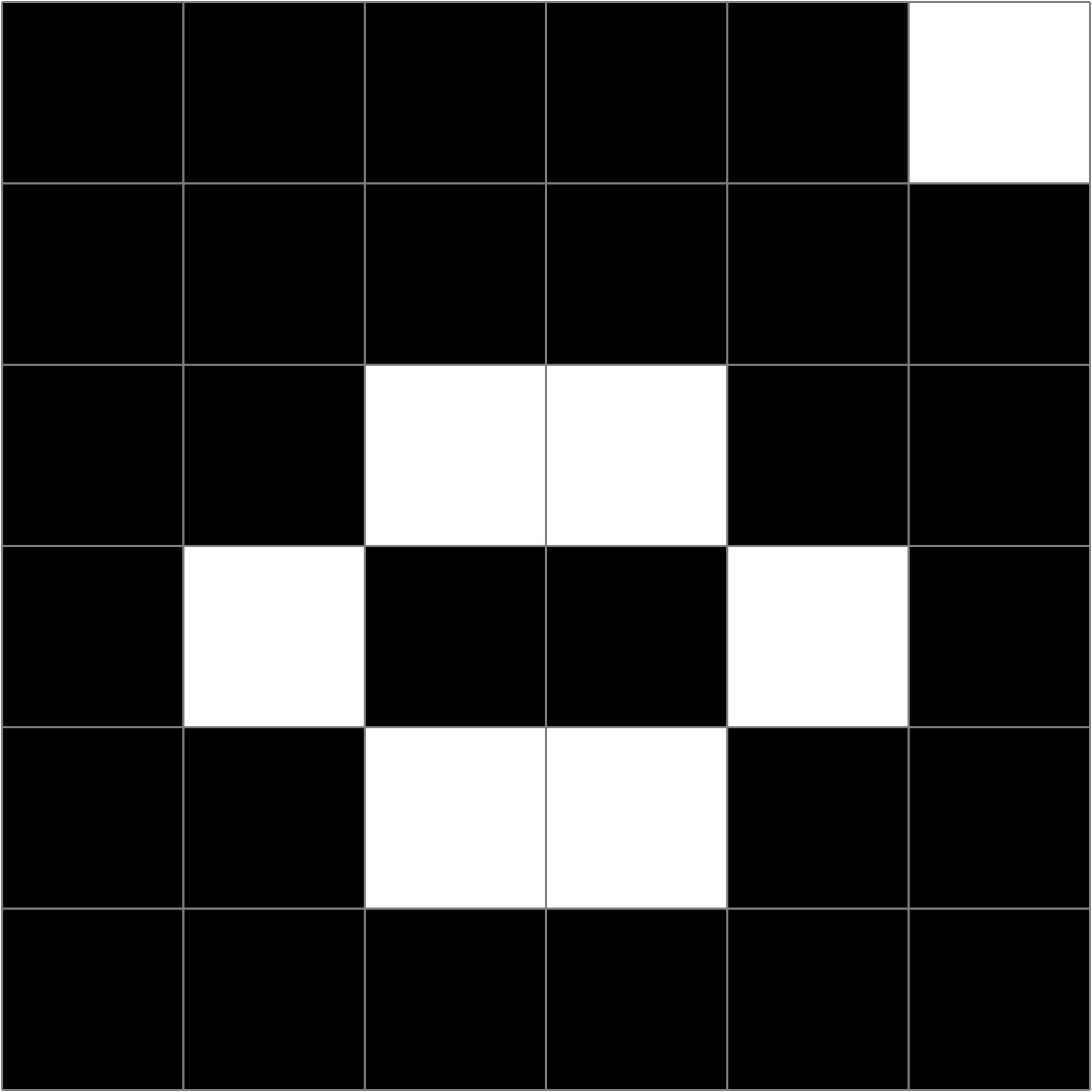}
         \caption{Morphological skeletonization $\mathrm{Skel}(\mathcal{D})$}
         \label{fig:morph_skeleton}
     \end{subfigure}
        \caption{Example of image processing with morphological operations.}
        \label{fig:morph_operations}
\end{figure}

\section{The TILT algorithm} \label{sec:method}

In this section, we show how neighborhoods of boundary components can be estimated in practice when only part of the singularities are known. These neighborhoods can be thought of as 'uncertainty' areas: areas where invisible singularities are supposed to live. For clarity, we suppose that the measuring geometry is fixed so that we see singularities with directions $HL$ and $H\oL$, while singularities with directions $LH$, $HH$, $H\oH$ and $L\oH$ are invisible.  However, the method can be used with any two neighboring directions known or, with slight modifications, the method can be used in cases where more than two directions are known. Before heading to the proposed method, we start by introducing some preprocessing steps and notation. 

\subsection{Preprocessing and notation}

Let $R\in \R^{n\times n}$ be an initial reconstruction computed from the limited-angle tomography data.
Let us denote that $C_{d,W}[R]\in \R^{m\times m}$ is a level $W$ complex wavelet (detail) coefficients matrix of initial reconstruction $R$, where $d\in\{HL, H\oL\}$ denotes the direction and $m=2^W$. 

Let
$C'_{d,W}[R]\in\R^{m\times m}$ elements be
\begin{align*}
     C'_{d,W}[R](i,j)=\frac{| C_{d,W}[R](i,j)|}{\underset{i',j'=1,\dots,n}{\max}| C_{d,W}[R](i',j')|}. 
\end{align*}
We simplify notation $C'_{d,W}[R]$ and simply write $C'_{d}$.
Furthermore, let $C^t_d\in\{0,1\}^{m\times m}$ elements be \begin{align*}
    C^t_d(i,j)=
    \begin{cases}
    1, \text{ if } C'_d(i,j) \geq t \\
    0, \text{ if } C'_d(i,j) < t,
    \end{cases}
\end{align*}
where $t\in\R$.
 
The previous notations should be taken as follows. If a pixel's coefficient is sufficiently large, it is more likely that there is a singularity. To compare complex-valued coefficients in these six coefficient matrices, we take the absolute value of coefficients and normalize them. In this way, we get $C'_d$ for each direction. Now we can choose a threshold-value $t$ to separate singularities from non-singularities. In a matrix $C^t_d$, value $1$ corresponds to singularity and $0$ to non-singularity.

Due to stretching artefacts and imperfect measurements caused by limited-angle problem setting, there might be incorrectly classified pixels. We remove possible isolated singularities and small gaps by performing morphological opening $C^t_d\circ L_{d,l}$, where $L_{d,l}$ is the directed line with line length $l$, see Figure \ref{fig:structurelines}. For simplicity, we denote that $L_{d,l}=L_d$, when line length $l$ is clearly stated.
There may be still small gaps or holes which should be there. Thus we perform morphological dilation with the small $3\times3-$matrix $S$, $S(i,j)=1$ for all $i$ and $j$, as the structuring element.

\begin{figure}
     \centering
     \begin{subfigure}[b]{0.15\textwidth}
         \centering
         \includegraphics[width=\textwidth]{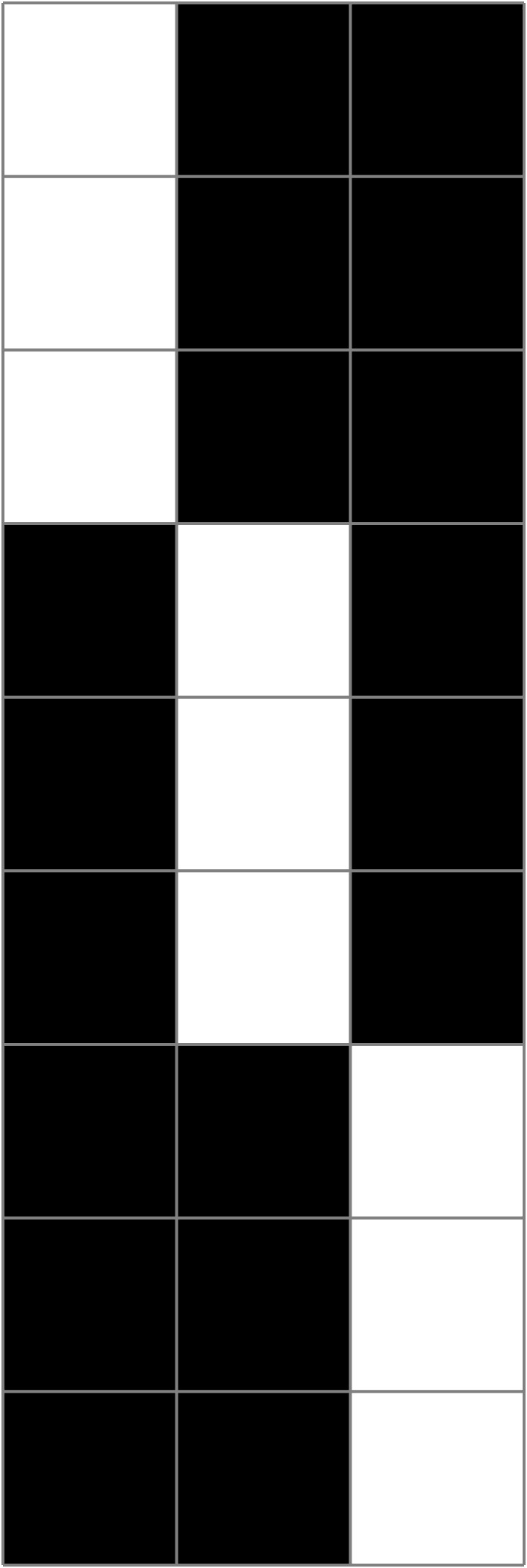}
         \caption{$L_{HL,9}$}
         \label{fig:line_0830}
     \end{subfigure}
     \hspace{2cm}
     \begin{subfigure}[b]{0.15\textwidth}
         \centering
         \includegraphics[width=\textwidth]{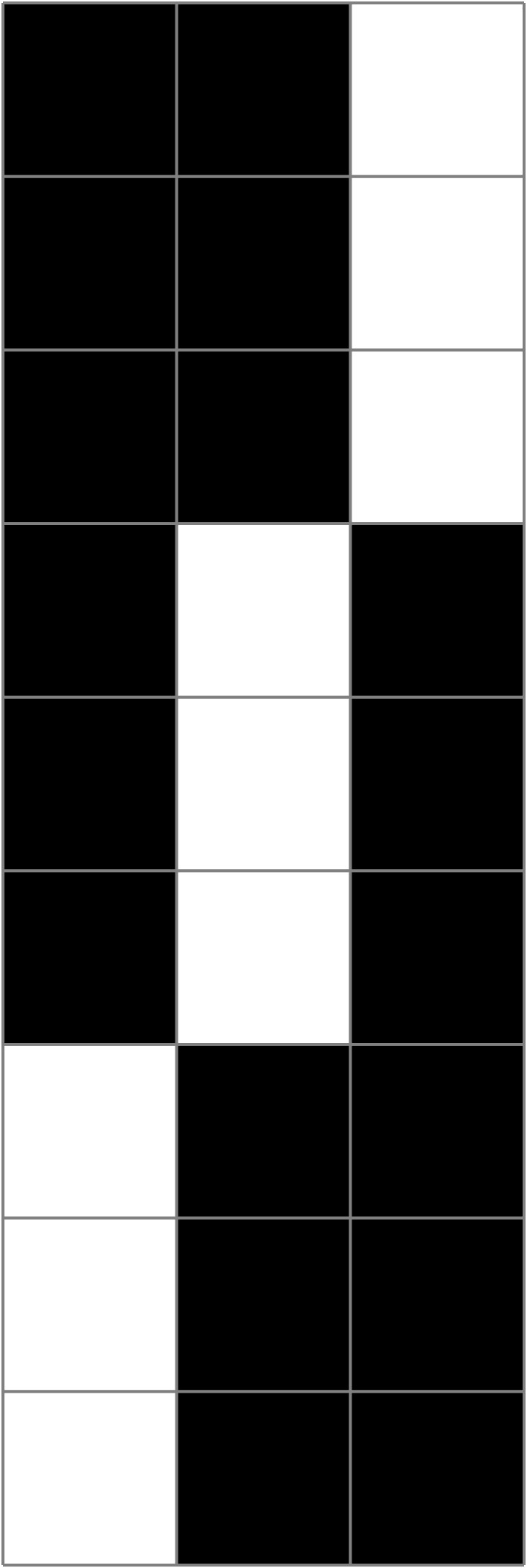}
         \caption{$L_{H\oL,9}$}
         \label{fig:line_0930}
     \end{subfigure}
        \caption{Isolated points can be removed using directed lines as structuring elements in the morphological opening.}
        \label{fig:structurelines}
\end{figure}

\begin{figure}
     \centering
     \begin{subfigure}[t]{0.15\textwidth}
         \centering
         \includegraphics[width=\textwidth]{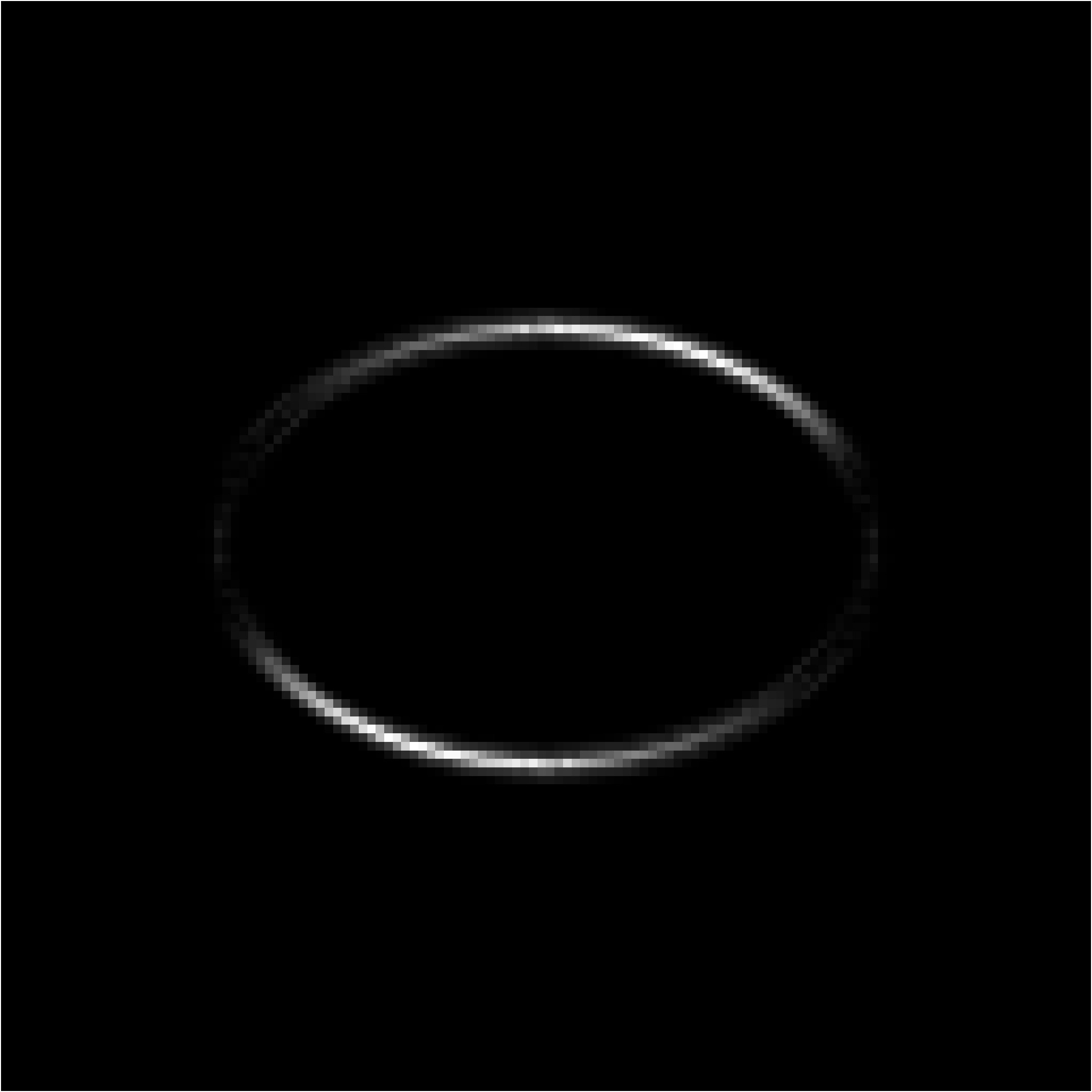}
         \caption{$C'_{LH}$}
         \label{fig:SBa}
     \end{subfigure}
     \begin{subfigure}[t]{0.15\textwidth}
         \centering
         \includegraphics[width=\textwidth]{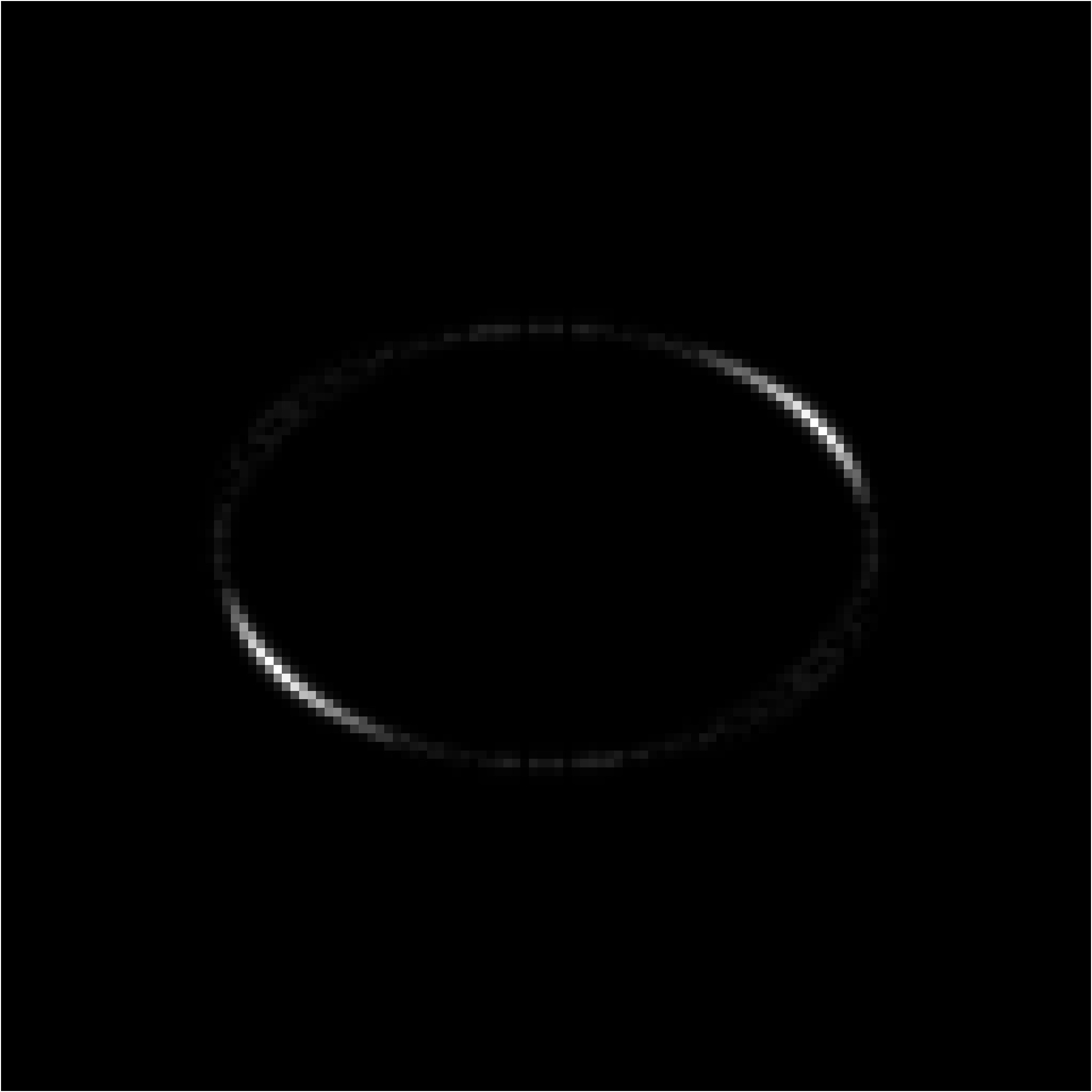}
         \caption{$C'_{HH}$}
         \label{fig:SBb}
     \end{subfigure}
     \begin{subfigure}[t]{0.15\textwidth}         
     \centering
         \includegraphics[width=\textwidth]{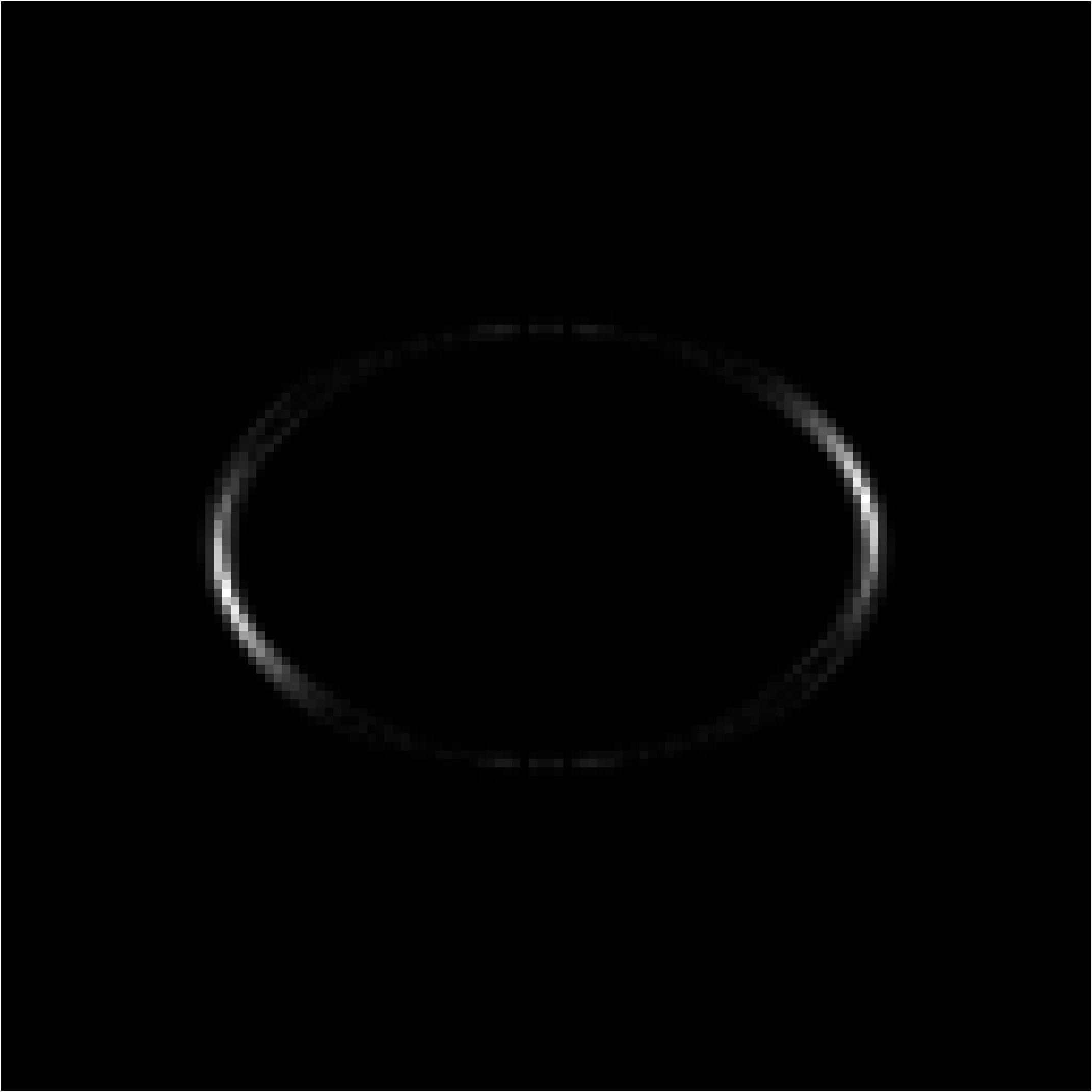}
         \caption{$C'_{HL}$}
         \label{fig:SBc}
     \end{subfigure}
     \begin{subfigure}[t]{0.15\textwidth}
         \centering
         \includegraphics[width=\textwidth]{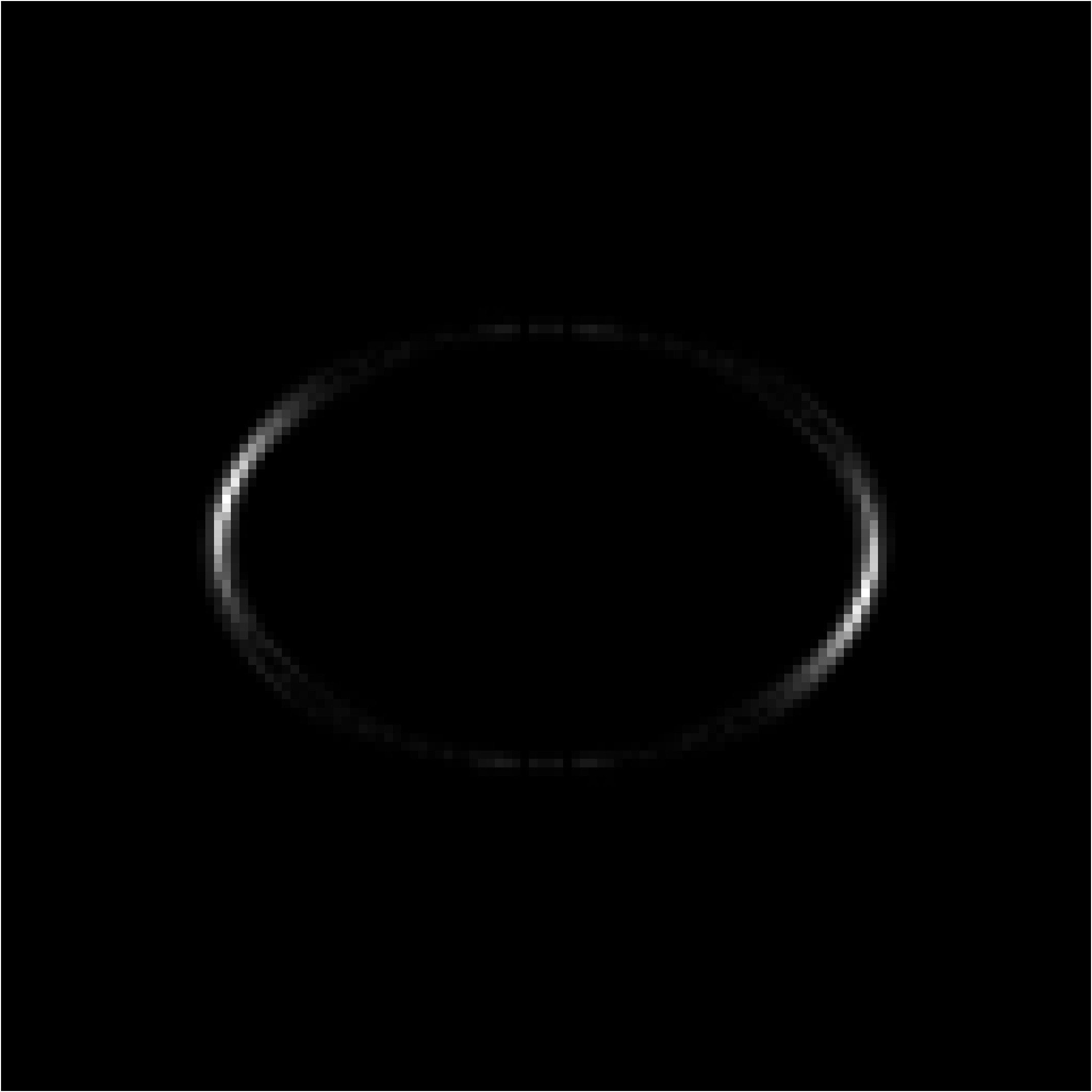}
         \caption{$C'_{H\oL}$}
         \label{fig:SBd}
     \end{subfigure}
     \begin{subfigure}[t]{0.15\textwidth}
         \centering
         \includegraphics[width=\textwidth]{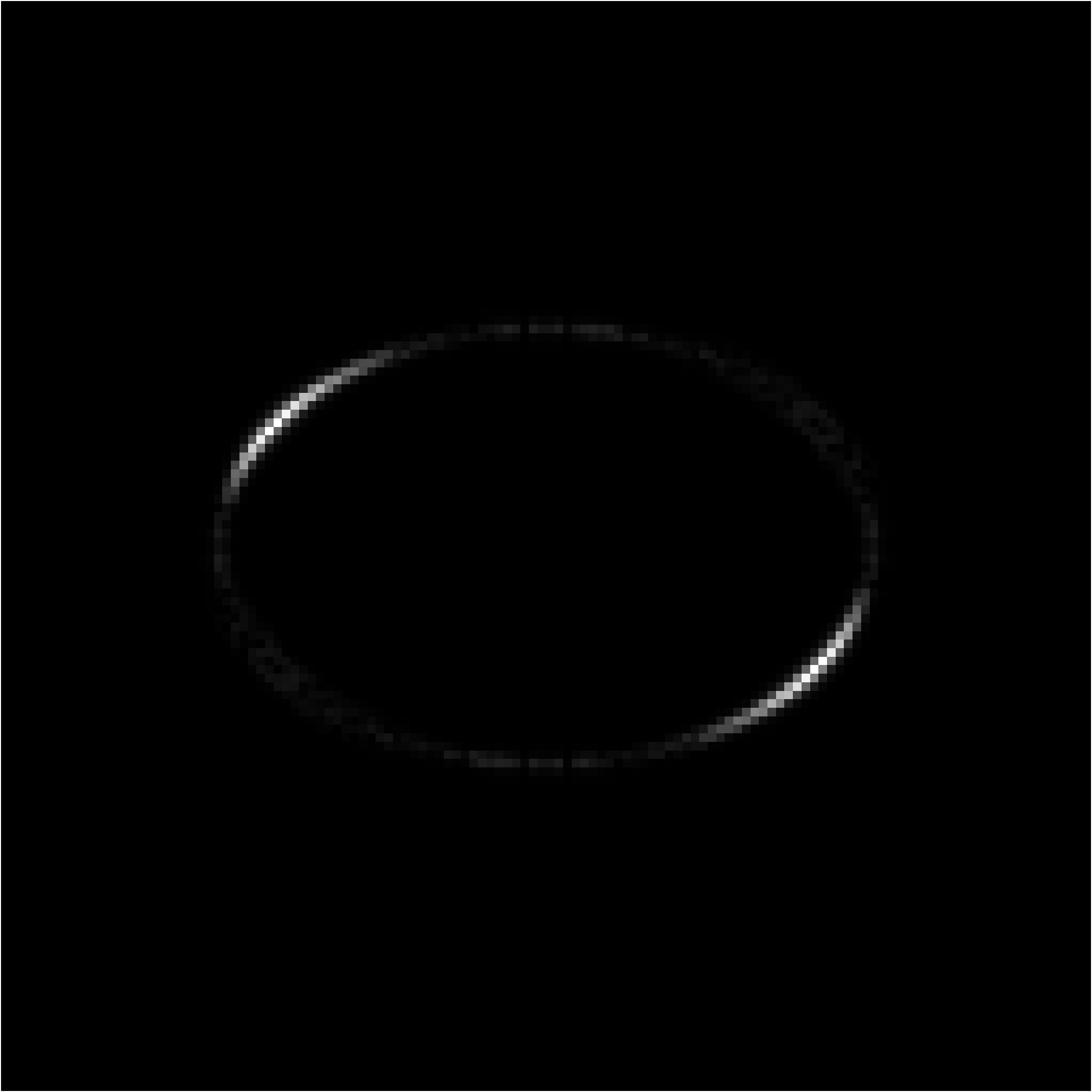}
         \caption{$C'_{H\oH}$}
         \label{fig:SBe}
         \end{subfigure}
     \begin{subfigure}[t]{0.15\textwidth}
         \centering
         \includegraphics[width=\textwidth]{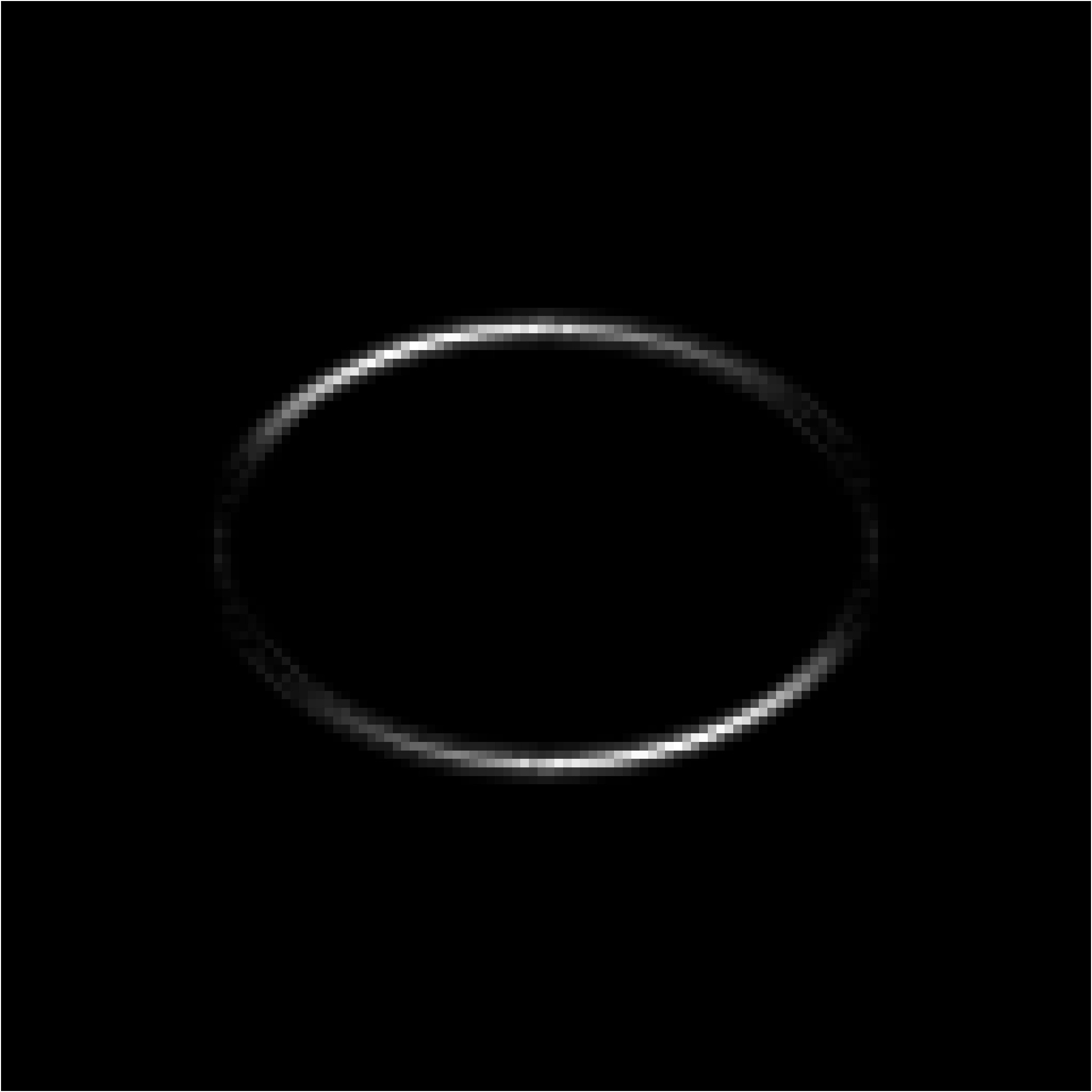}
         \caption{$C'_{L\oH}$}
         \label{fig:SBf}
     \end{subfigure}
     \vfill
     \begin{subfigure}[t]{0.15\textwidth}
         \centering
         \includegraphics[width=\textwidth]{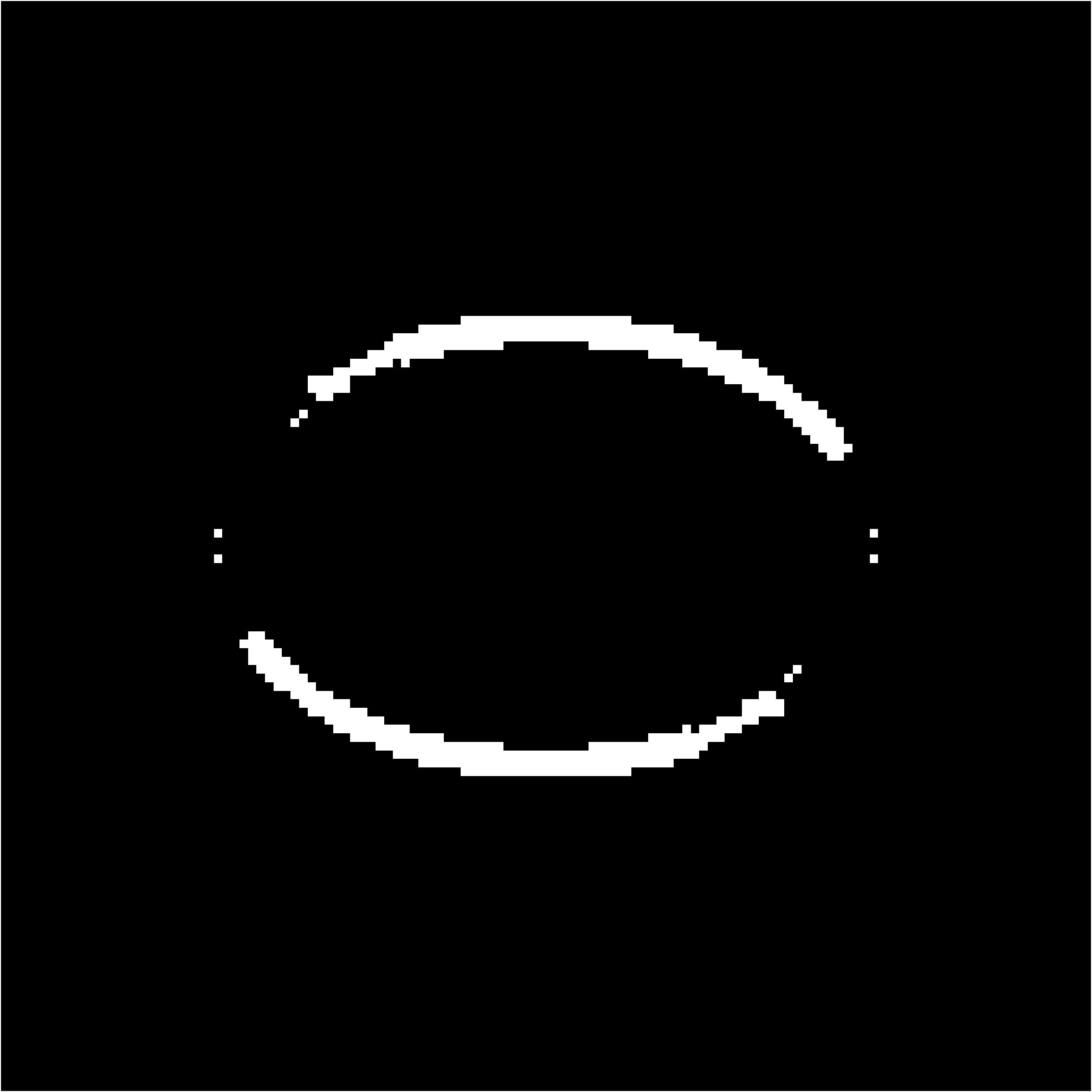}
         \caption{$C^{0.1}_{LH}$}
         \label{fig:SBa_t}
     \end{subfigure}
     \begin{subfigure}[t]{0.15\textwidth}
         \centering
         \includegraphics[width=\textwidth]{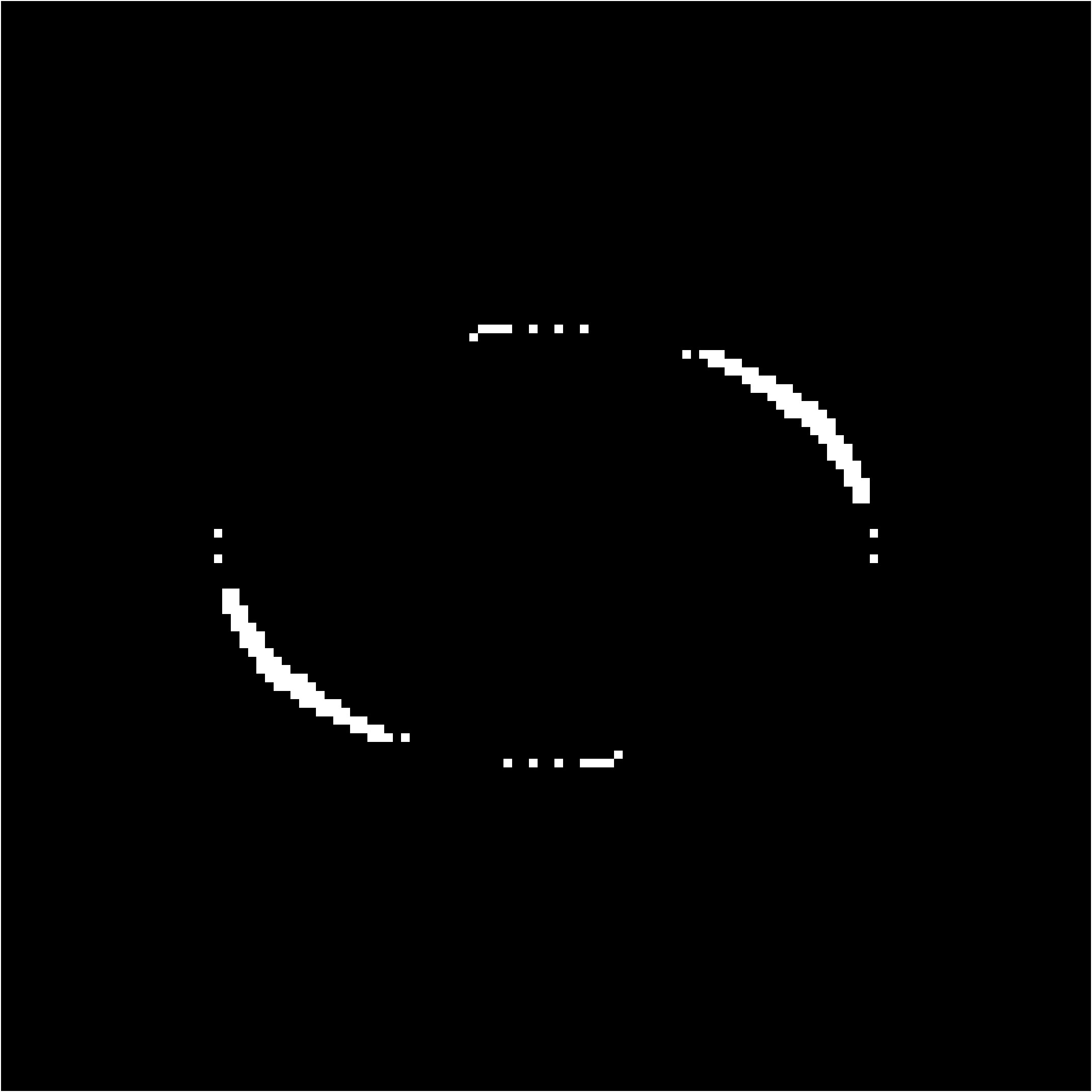}
         \caption{$C^{0.1}_{HH}$}
         \label{fig:SBb_t}
     \end{subfigure}
     \begin{subfigure}[t]{0.15\textwidth}         
     \centering
         \includegraphics[width=\textwidth]{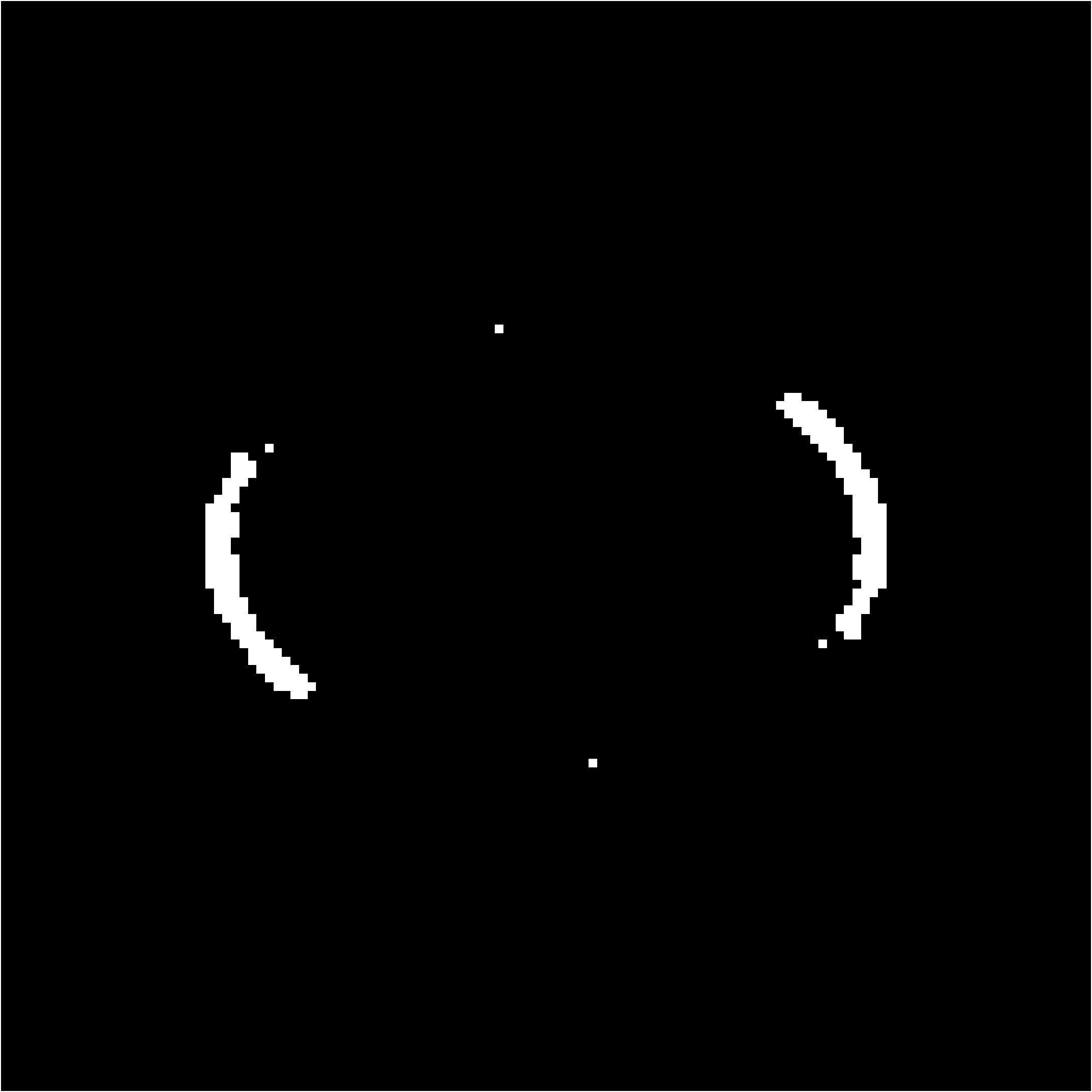}
         \caption{$C^{0.1}_{HL}$}
         \label{fig:SBc_t}
     \end{subfigure}
     \begin{subfigure}[t]{0.15\textwidth}
         \centering
         \includegraphics[width=\textwidth]{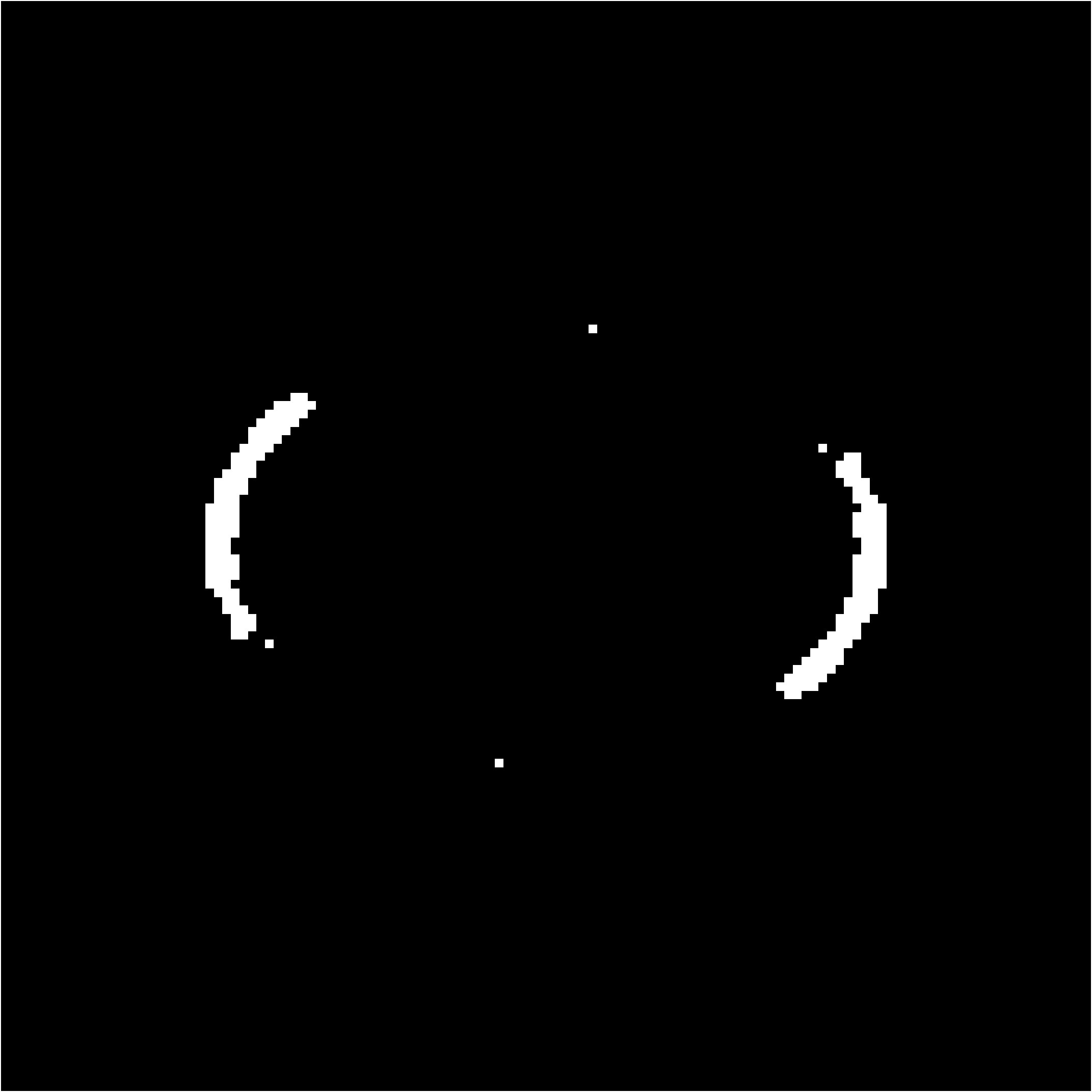}
         \caption{$C^{0.1}_{H\oL}$}
         \label{fig:SBd_t}
     \end{subfigure}
     \begin{subfigure}[t]{0.15\textwidth}
         \centering
         \includegraphics[width=\textwidth]{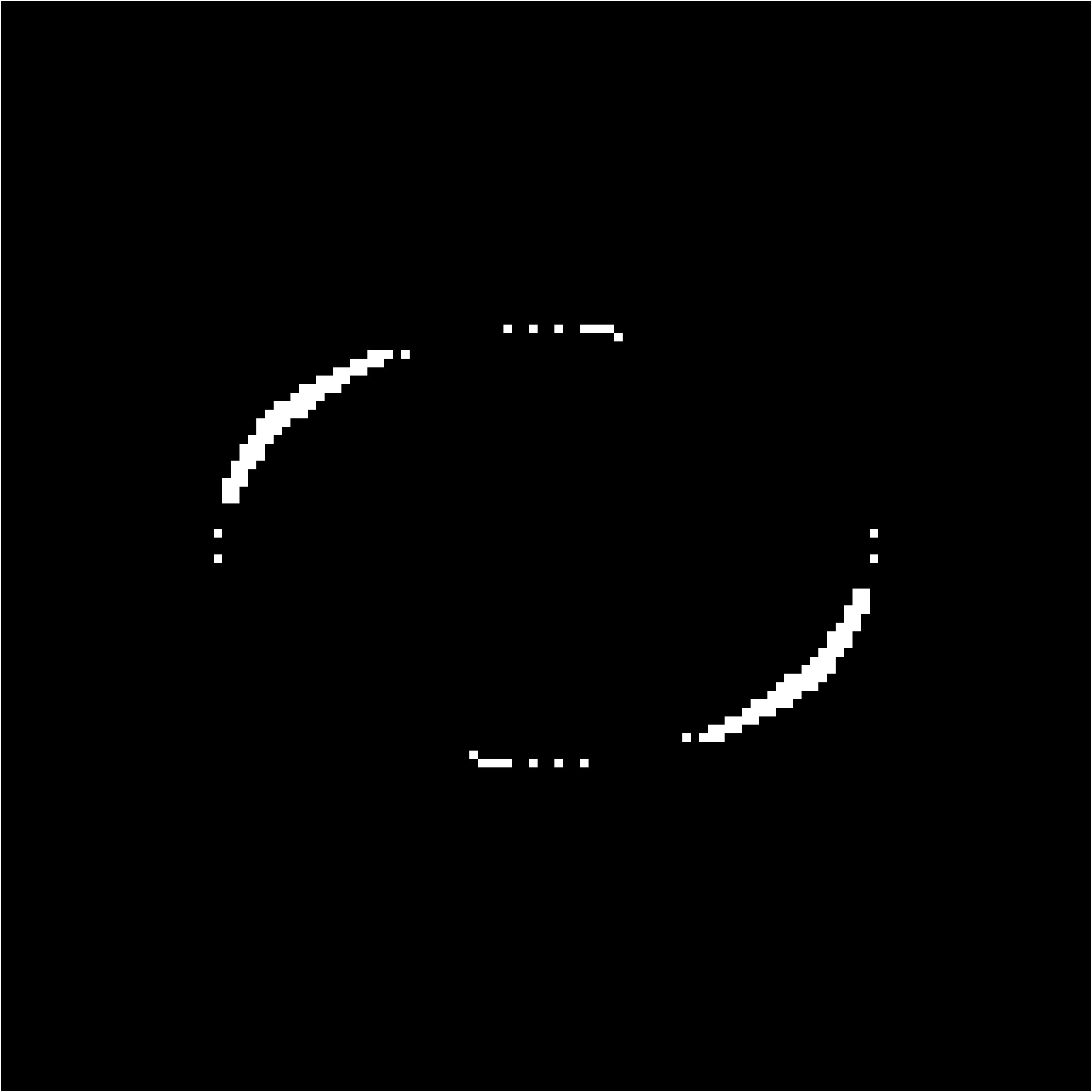}
         \caption{$C^{0.1}_{H\oH}$}
         \label{fig:SBe_t}
         \end{subfigure}
     \begin{subfigure}[t]{0.15\textwidth}
         \centering
         \includegraphics[width=\textwidth]{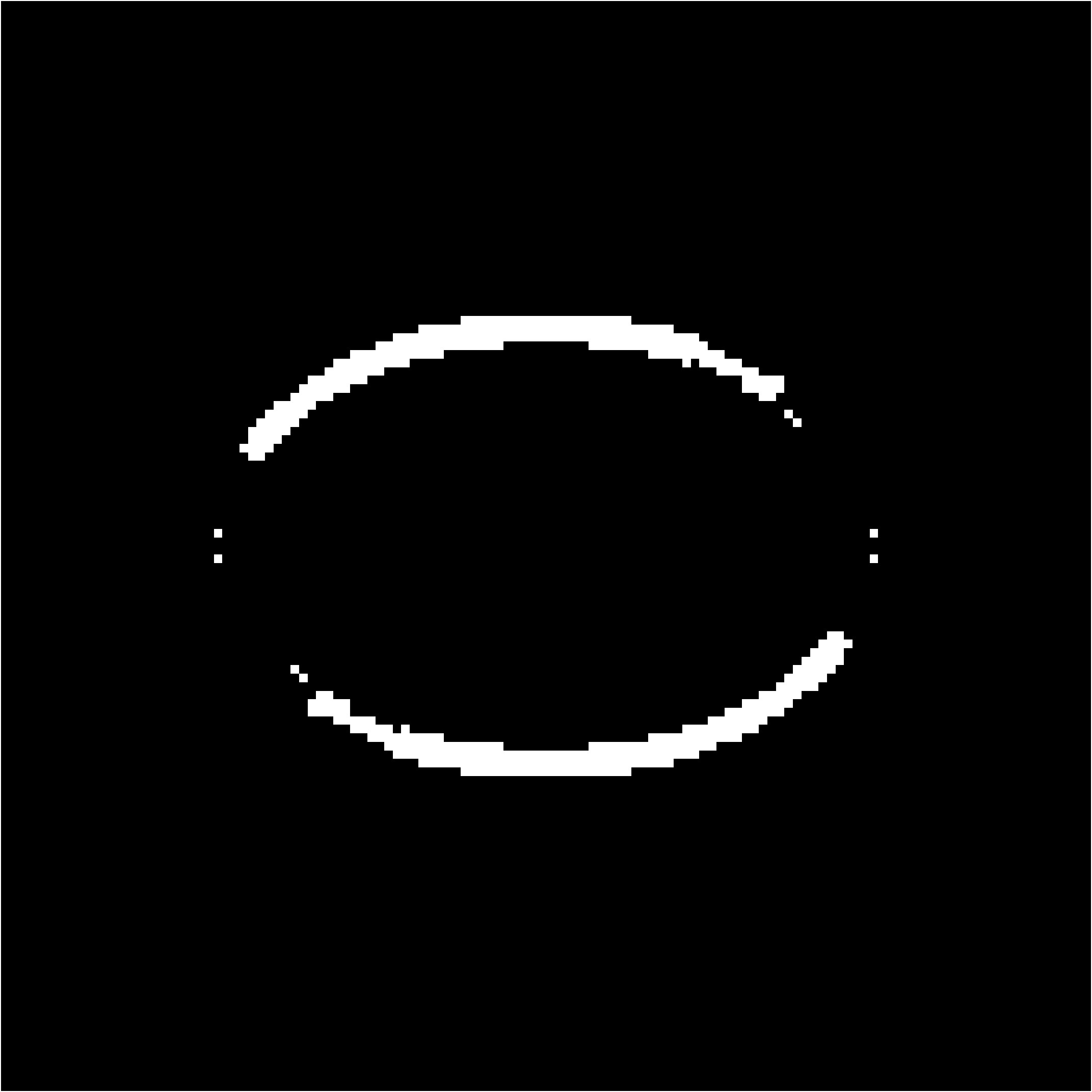}
         \caption{$C^{0.1}_{L\oH}$}
         \label{fig:SBf_t}
     \end{subfigure}
     \begin{subfigure}[t]{0.15\textwidth}
         \centering
         \includegraphics[width=\textwidth]{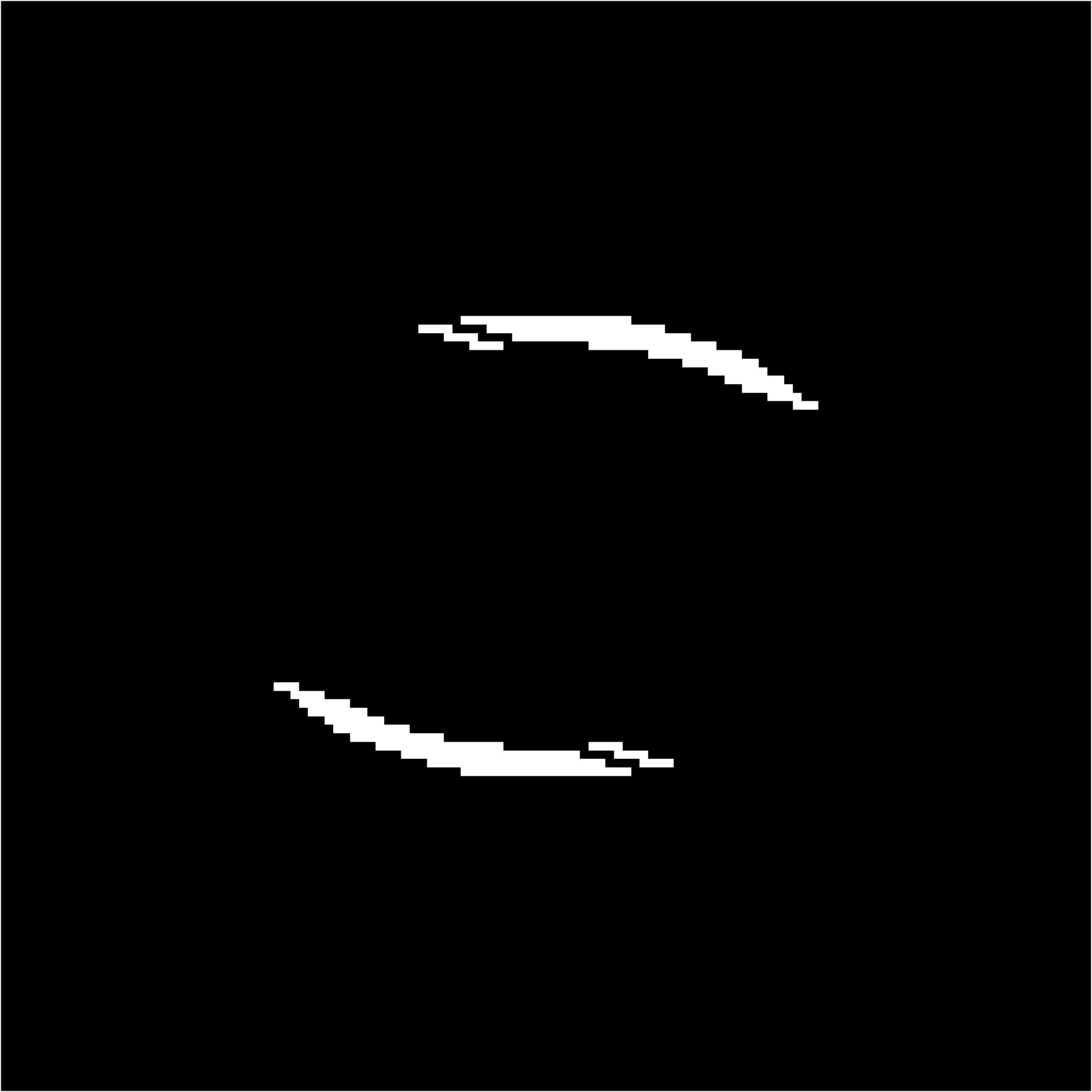}
         \caption{$C^{0.1}_{LH}\circ L_{LH}$}
         \label{fig:SBa_l}
     \end{subfigure}
     \begin{subfigure}[t]{0.15\textwidth}
         \centering
         \includegraphics[width=\textwidth]{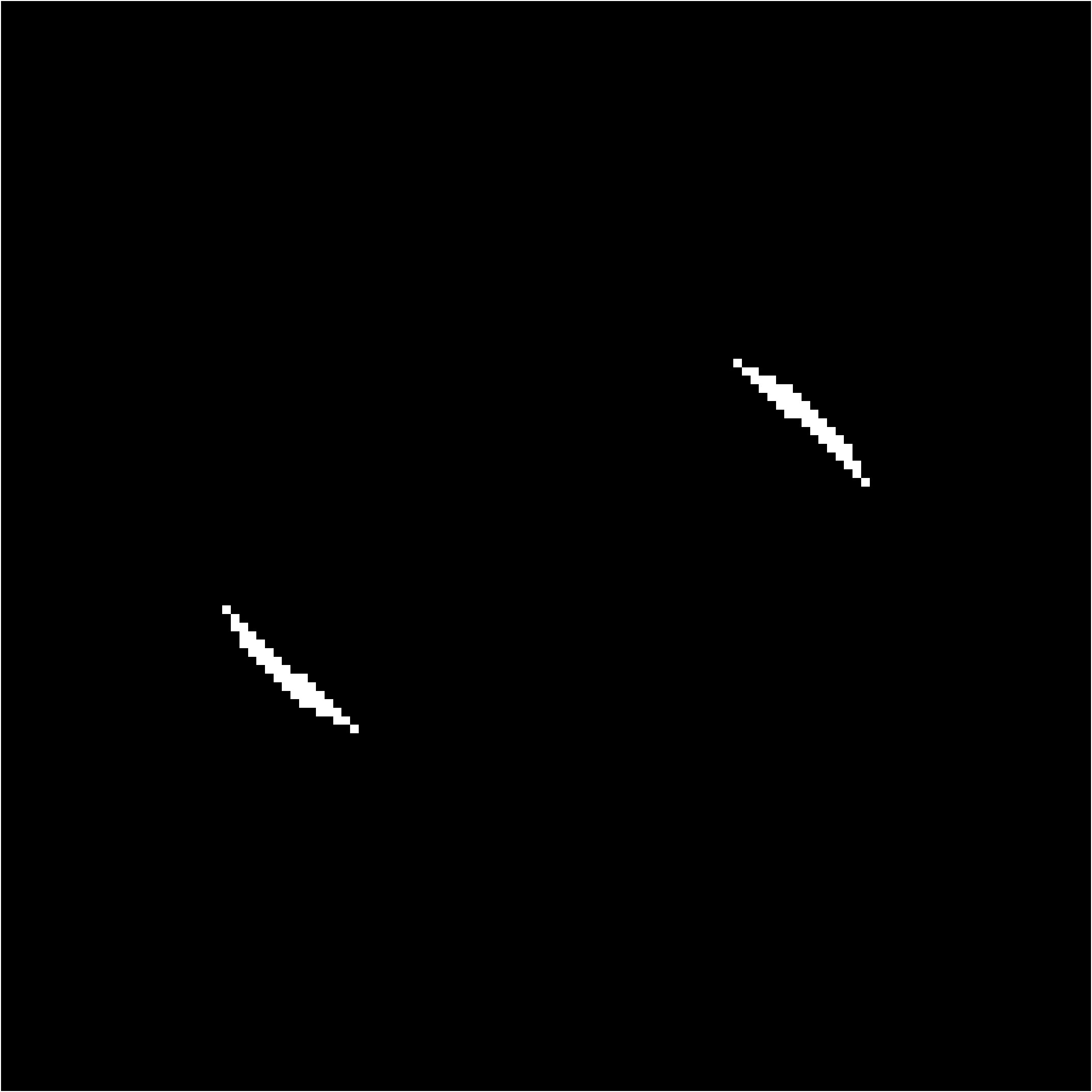}
         \caption{$C^{0.1}_{HH}\circ L_{HH}$}
         \label{fig:SBb_l}
     \end{subfigure}
     \begin{subfigure}[t]{0.15\textwidth}         
     \centering
         \includegraphics[width=\textwidth]{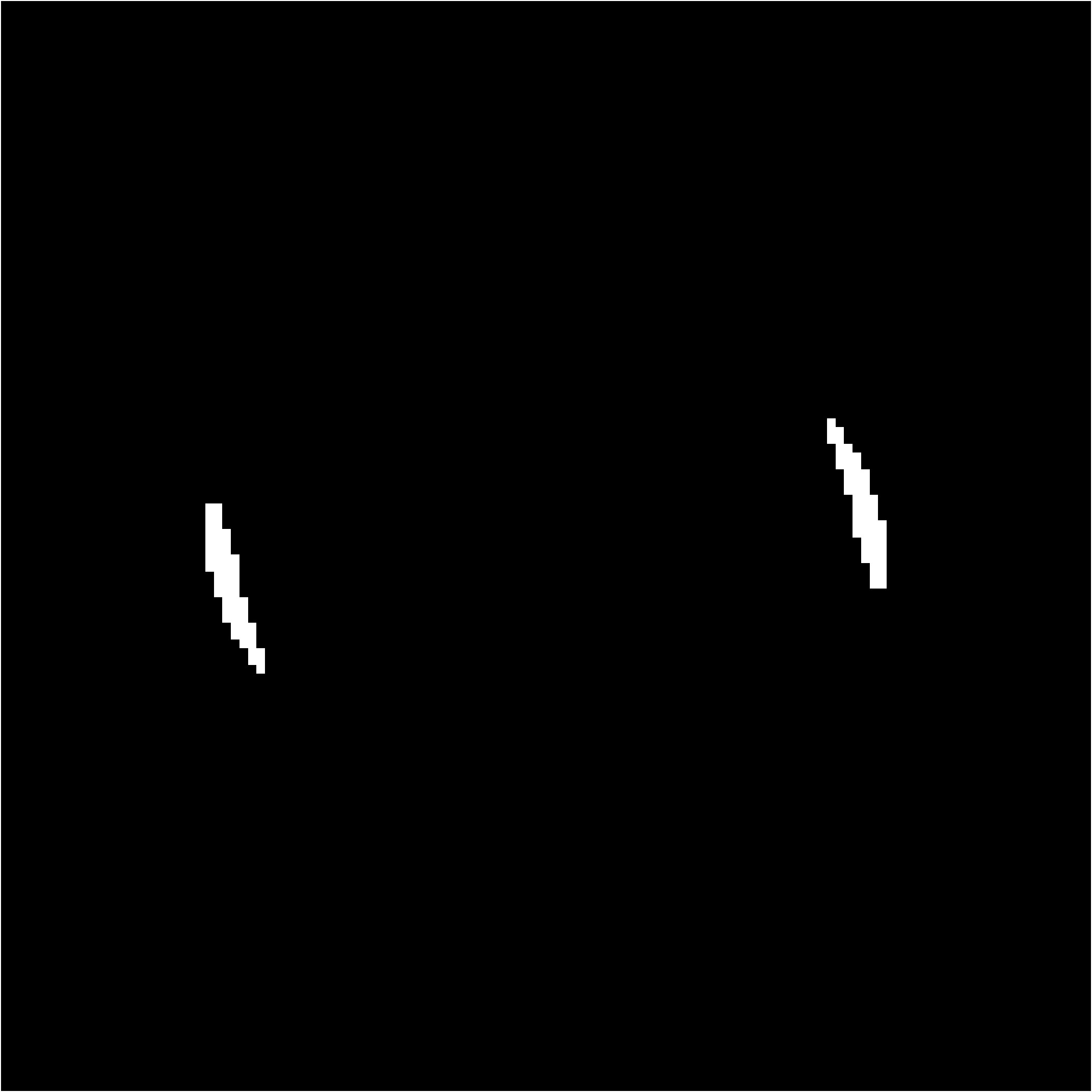}
         \caption{$C^{0.1}_{HL}\circ L_{HL}$}
         \label{fig:SBc_l}
     \end{subfigure}
     \begin{subfigure}[t]{0.15\textwidth}
         \centering
         \includegraphics[width=\textwidth]{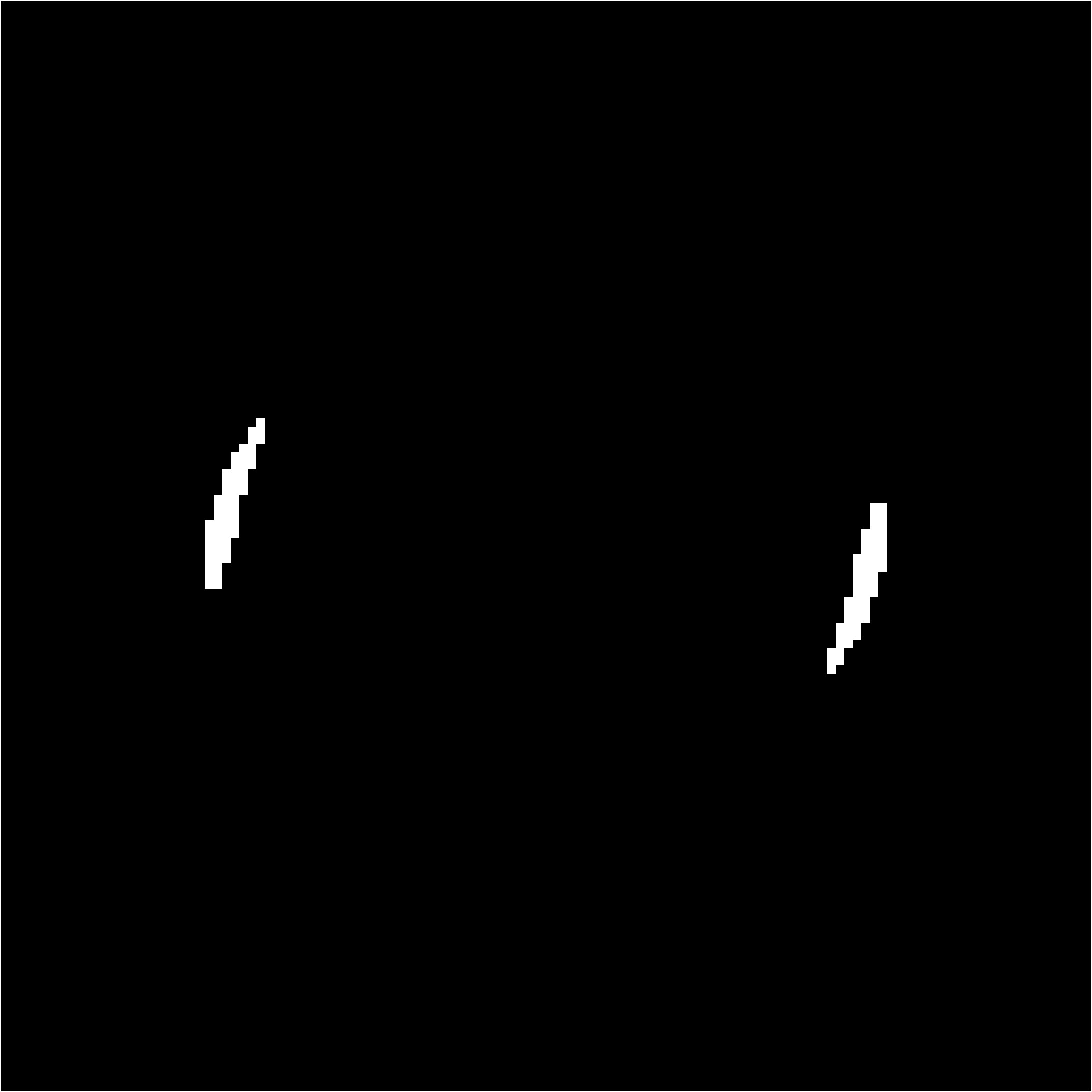}
         \caption{$C^{0.1}_{H\oL}\circ L_{H\oL}$}
         \label{fig:SBd_l}
     \end{subfigure}
     \begin{subfigure}[t]{0.15\textwidth}
         \centering
         \includegraphics[width=\textwidth]{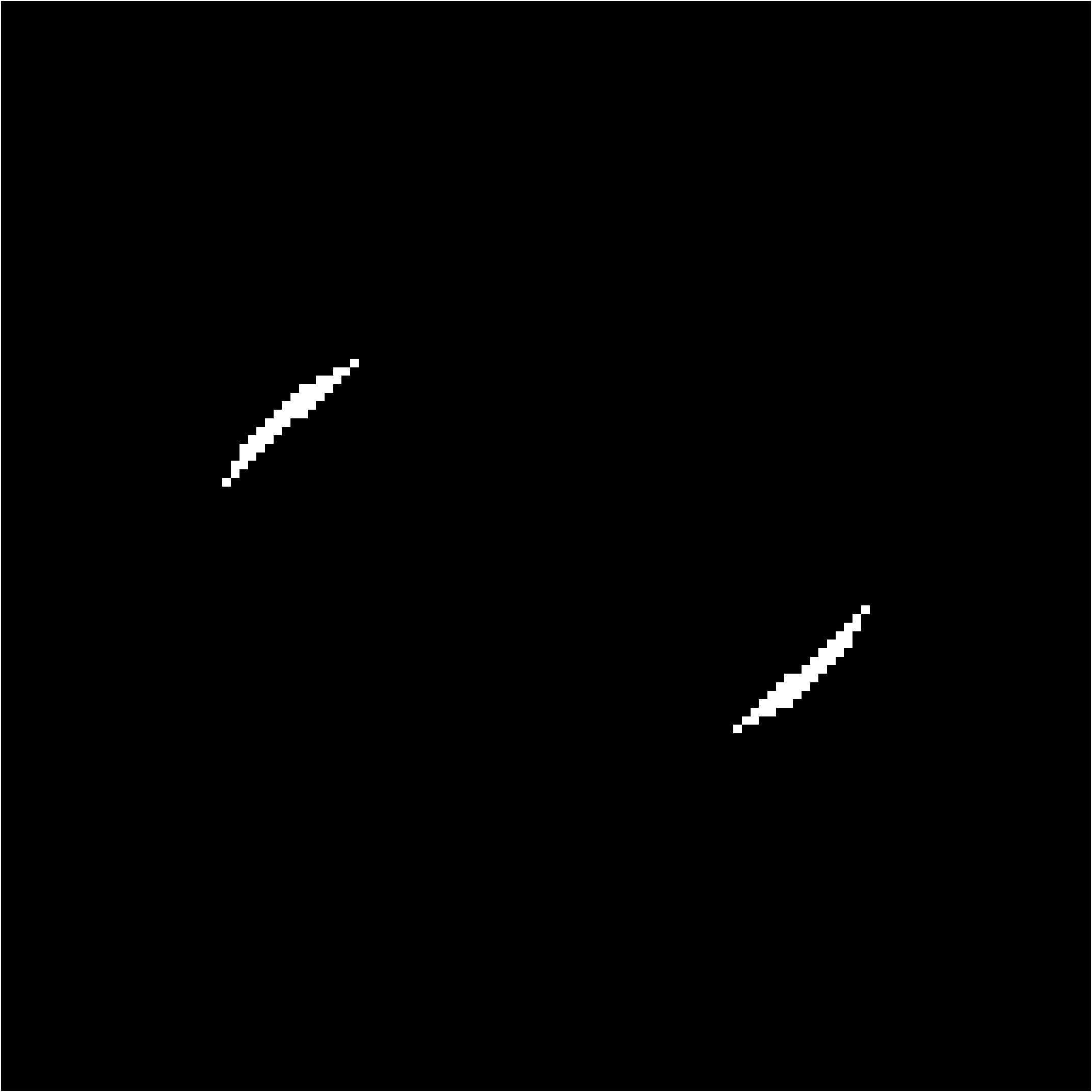}
         \caption{$C^{0.1}_{H\oH}\circ L_{H\oH}$}
         \label{fig:SBe_l}
         \end{subfigure}
     \begin{subfigure}[t]{0.15\textwidth}
         \centering
         \includegraphics[width=\textwidth]{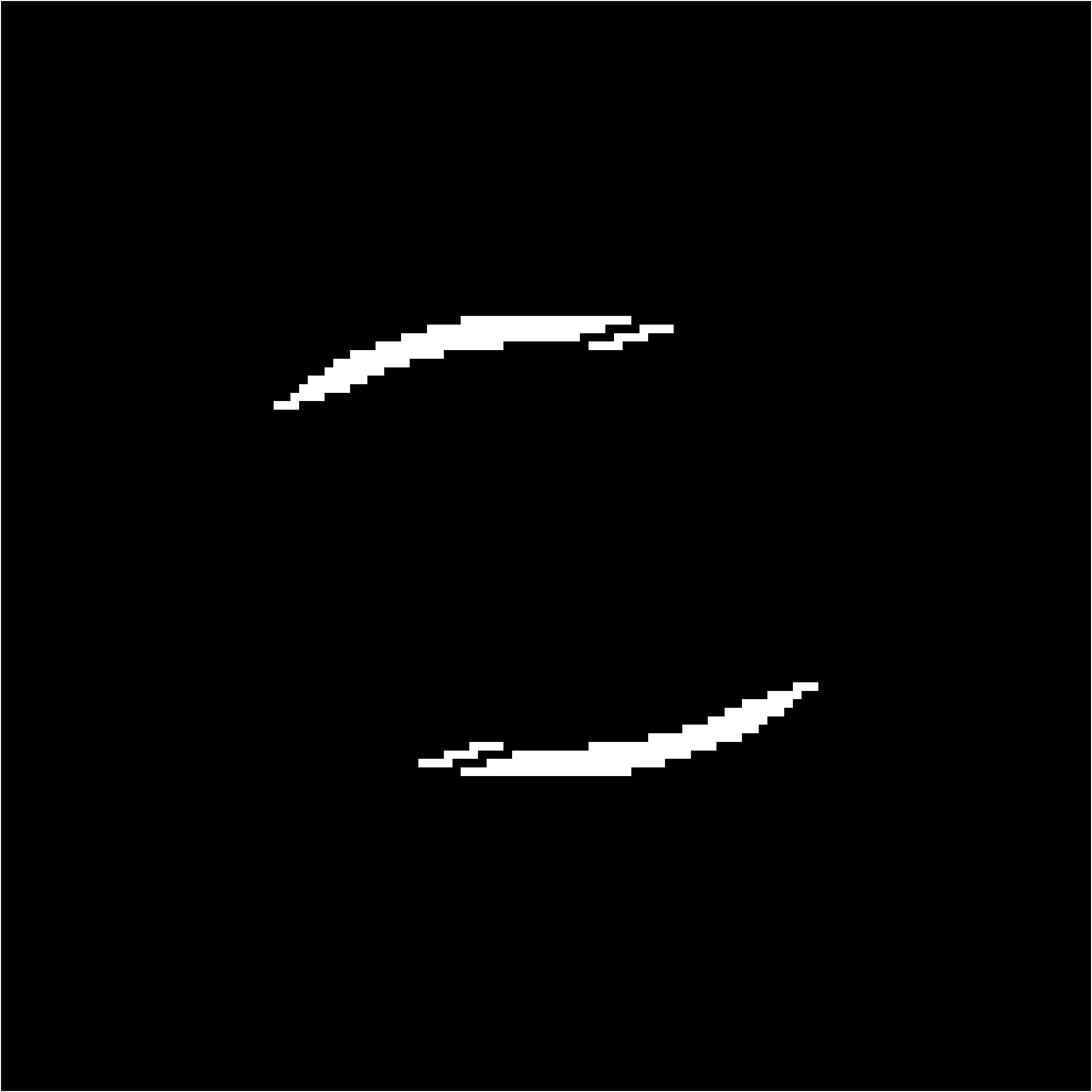}
         \caption{$C^{0.1}_{L\oH}\circ L_{L\oH}$}
         \label{fig:SBf_l}
     \end{subfigure}
     \begin{subfigure}[t]{0.15\textwidth}
         \centering
         \includegraphics[width=\textwidth]{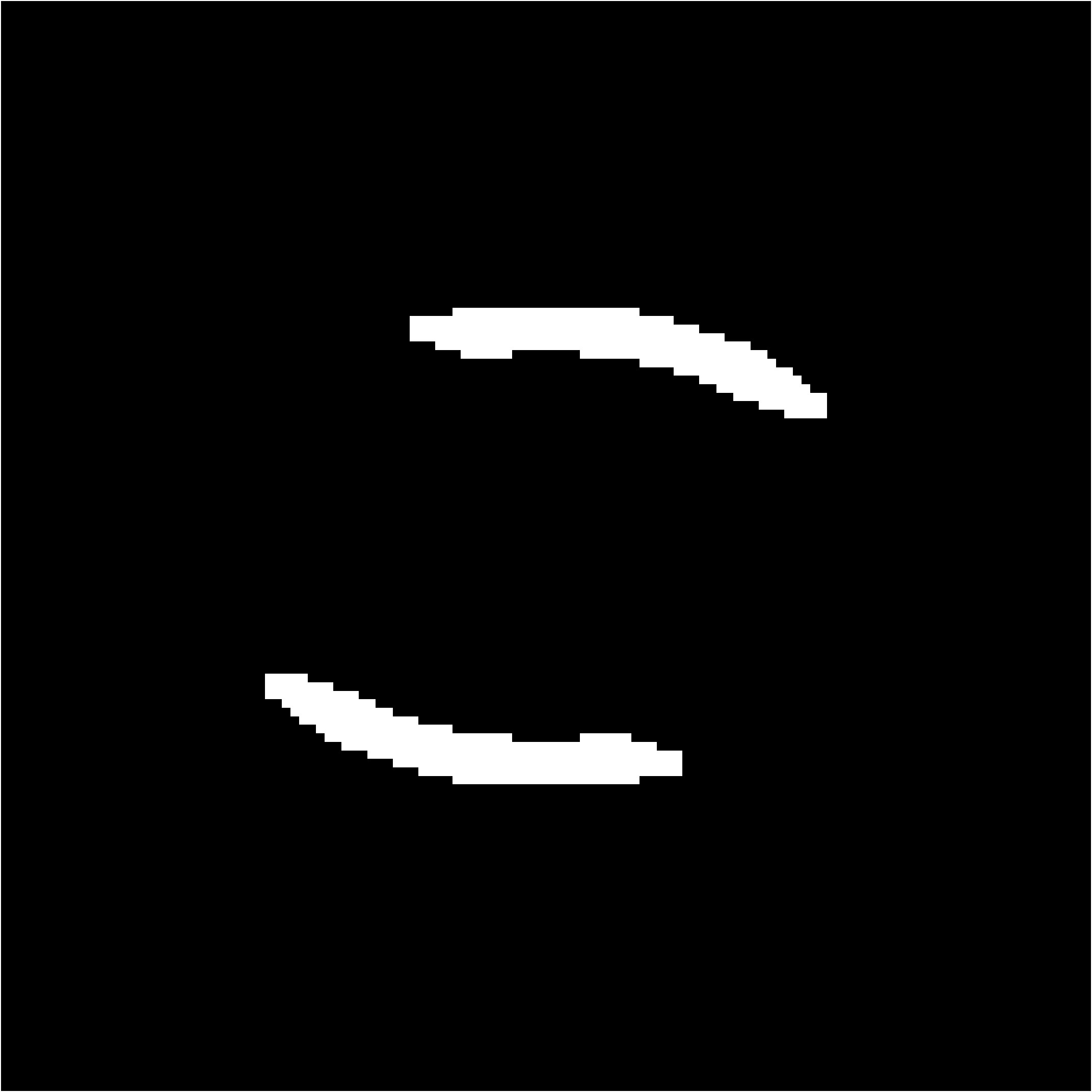}
         \caption{$SB^{0.1}_{LH}$}
         \label{fig:SBa_s}
     \end{subfigure}
     \begin{subfigure}[t]{0.15\textwidth}
         \centering
         \includegraphics[width=\textwidth]{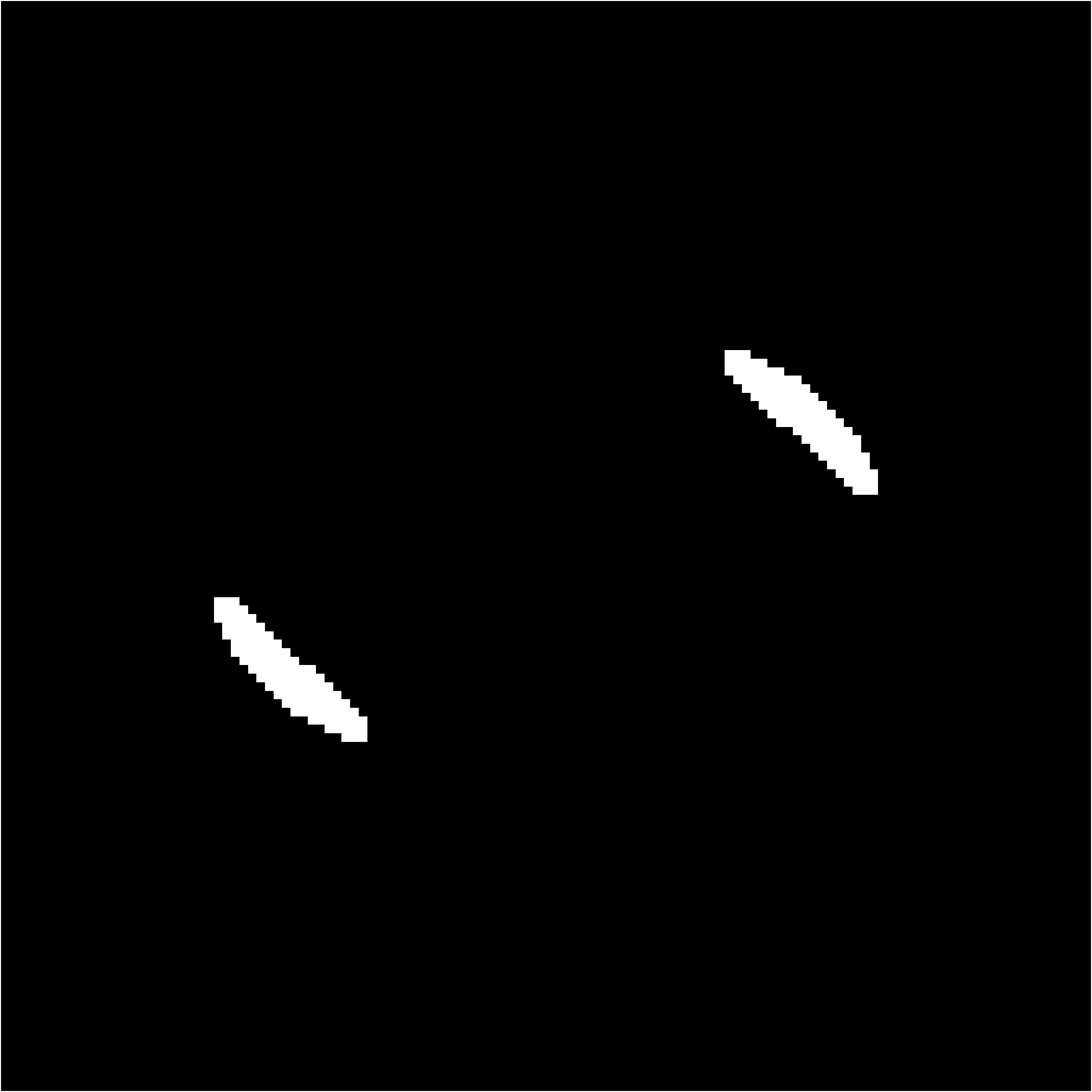}
         \caption{$SB^{0.1}_{HH}$}
         \label{fig:SBb_s}
     \end{subfigure}
     \begin{subfigure}[t]{0.15\textwidth}         
     \centering
         \includegraphics[width=\textwidth]{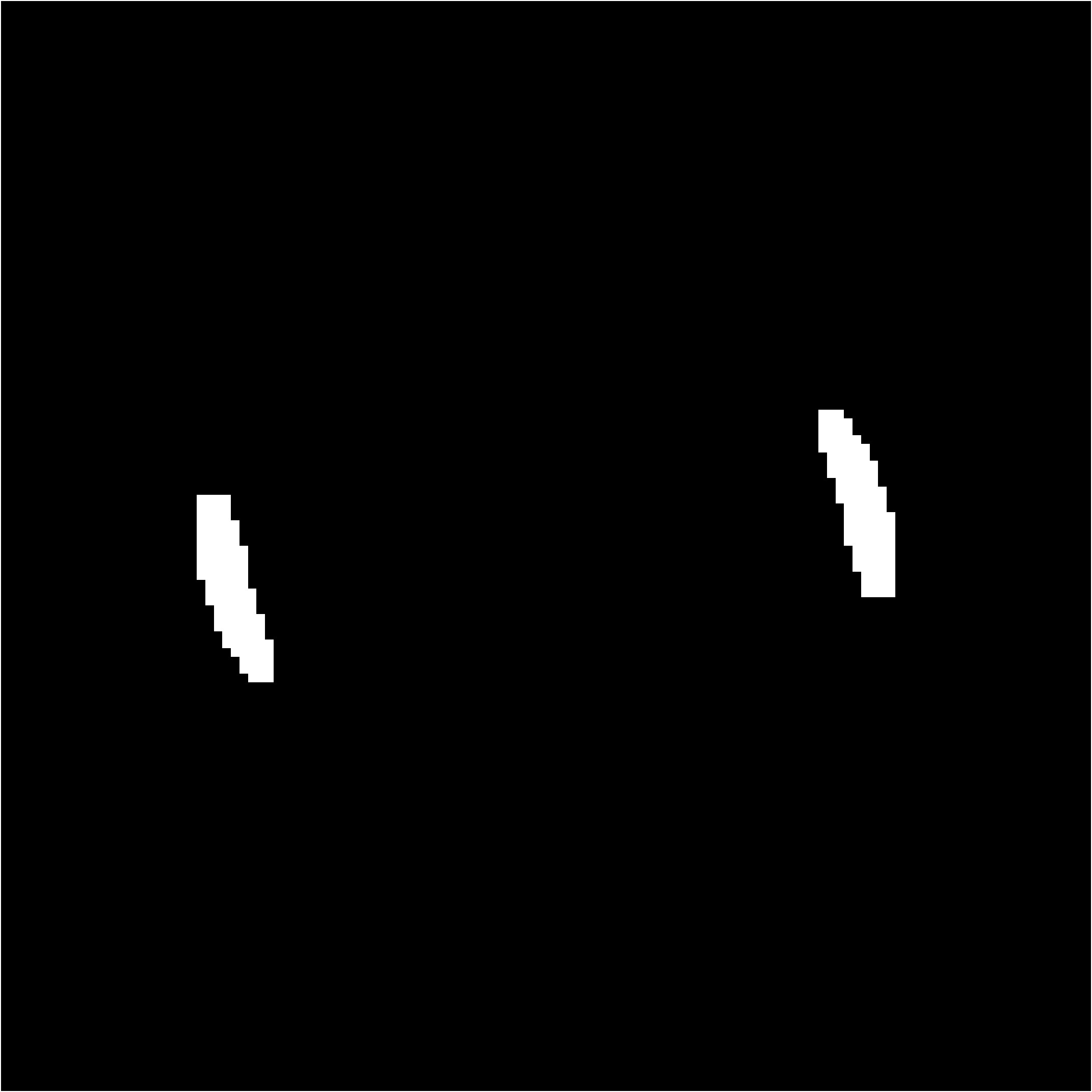}
         \caption{$SB^{0.1}_{HL}$}
         \label{fig:SBc_s}
     \end{subfigure}
     \begin{subfigure}[t]{0.15\textwidth}
         \centering
         \includegraphics[width=\textwidth]{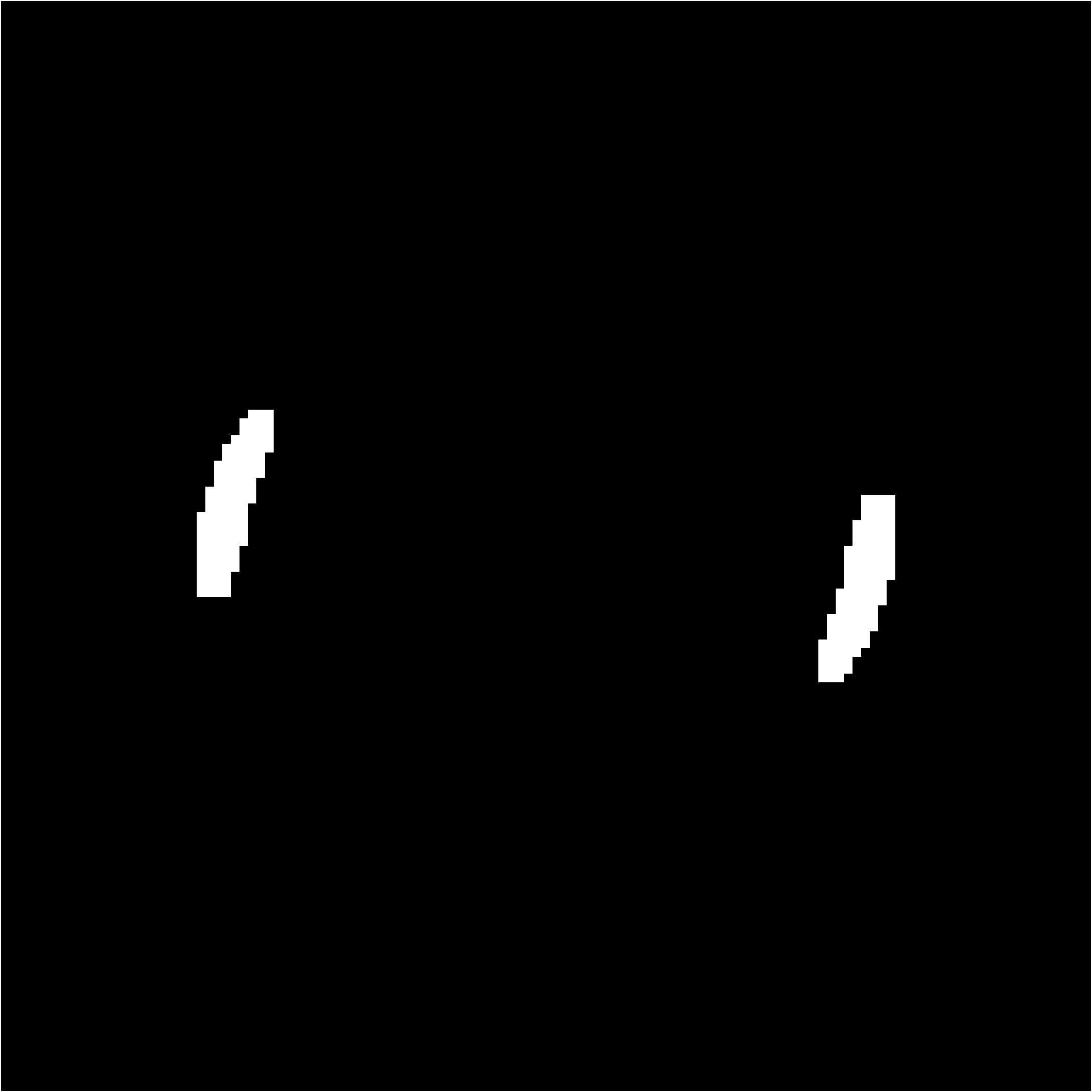}
         \caption{$SB^{0.1}_{H\oL}$}
         \label{fig:SBd_s}
     \end{subfigure}
     \begin{subfigure}[t]{0.15\textwidth}
         \centering
         \includegraphics[width=\textwidth]{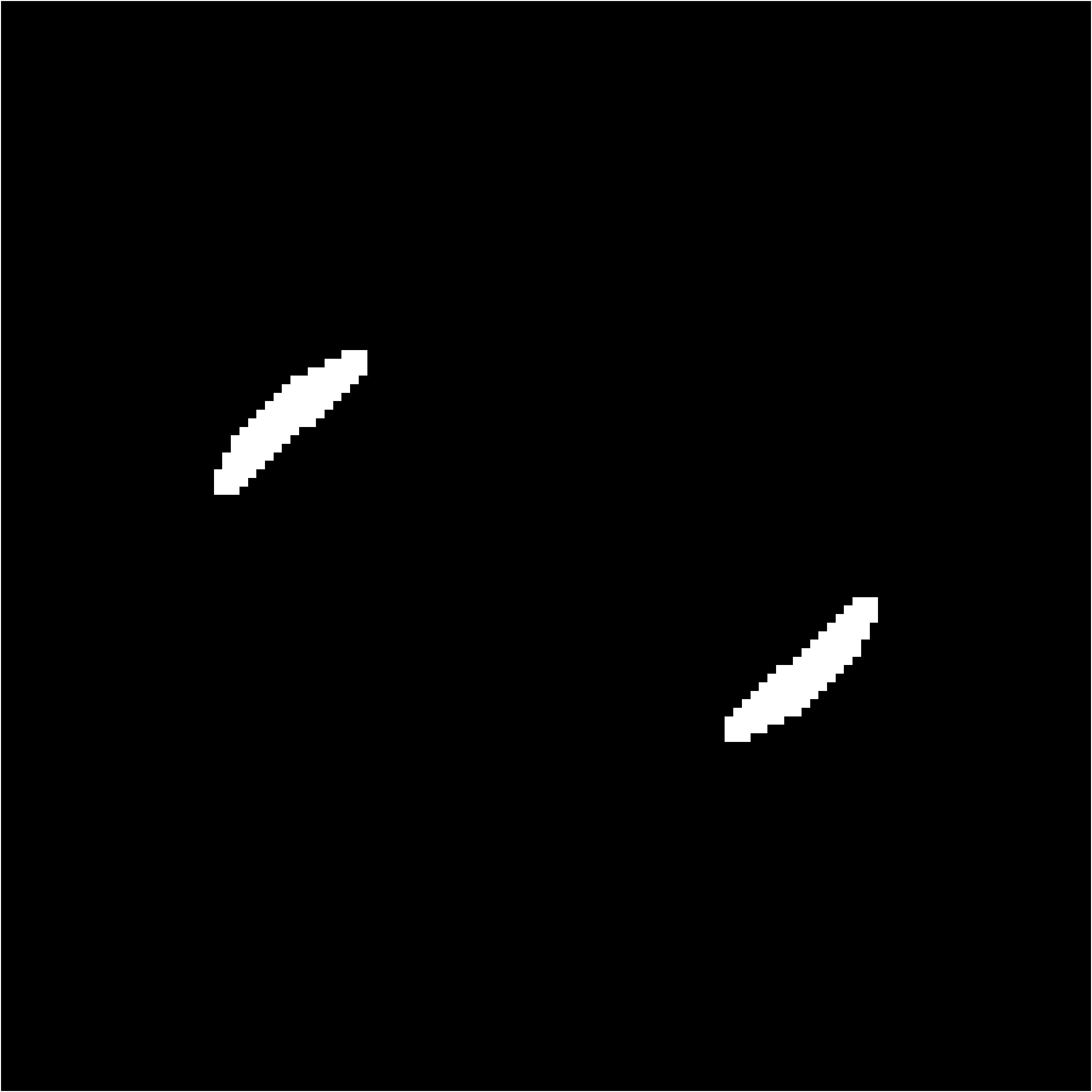}
         \caption{$SB^{0.1}_{H\oH}$}
         \label{fig:SBe_s}
         \end{subfigure}
     \begin{subfigure}[t]{0.15\textwidth}
         \centering
         \includegraphics[width=\textwidth]{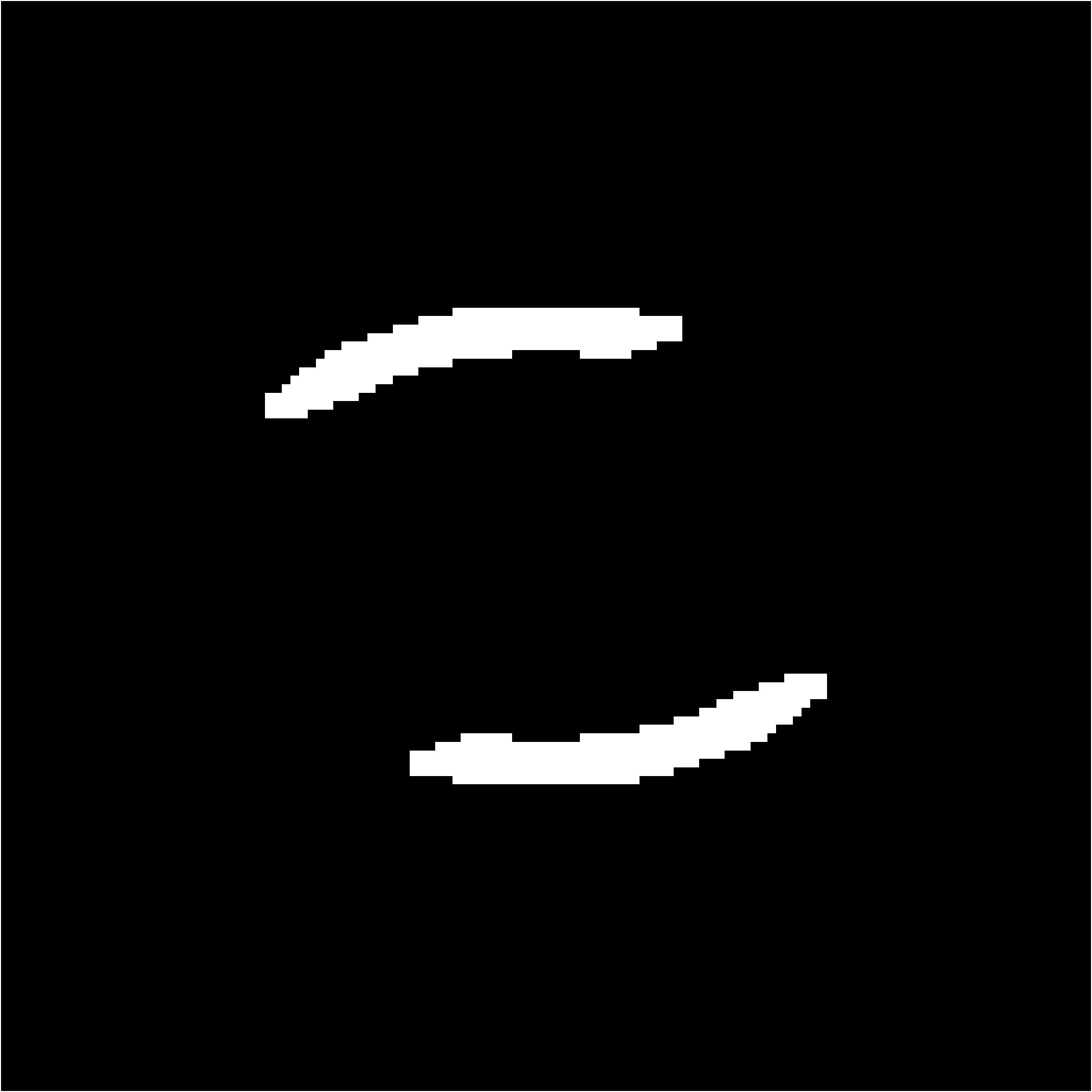}
         \caption{$SB^{0.1}_{L\oH}$}
         \label{fig:SBf_S}
     \end{subfigure}
     
        \caption{Approximation of an ellipse, presented in Figure \ref{fig:ellipse}, singularities and their directions using complex wavelets.}
        \label{fig:stable_singularities}
\end{figure}
We refer to these resulting binary matrices \begin{align*}
    SB^t_d = (C^t_d\circ L_{d})\oplus S, 
\end{align*}
as \emph{subband matrices}, and the singularities, i.e., elements with value $1$, as \emph{subbands}. See Figure $\ref{fig:stable_singularities}$ as an example.

The subband matrices give an approximation of stable singularities,  $SB^t_{LH}$ and $SB^t_{L\oH}$, divided by their directions. Note that in our study we suppose that subbands with directions $LH$, $HH$, $H\oH$, and $L\oH$ are unreliable since the singularities with these directions are invisible. Thus complex wavelet coefficients matrices $C_{d',W}[R]$, where $d'\in\{LH, HH, H\oH, L\oH\}$, are not of interest to us.

Next we will find the endpoints of subbands. Let $SB^t$ be a 3-dimensional matrix of size $m\times m \times 2$, whose elements are 
\begin{align*}
SB^t(x,y,z)=
     \begin{cases}
    SB^t_{HL}(x,y) \text{ if } z=1 \\
     SB^t_{H\oL}(x,y) \text{ if } z=2 \\
     \end{cases}
\end{align*}
Since subbands are multiple pixels wide, it is not clear which pixels are endpoints. Thus we take a morphological skeleton of the matrix $SB^{t}$. Now endpoints are those voxels that have only one voxel in their 26-neighborhood in $\skel(SB^t)$. To be more precise, let $E^t_{HL}$ be a matrix whose elements are
$E^t_{HL}(x,y)=1$ if $\skel(SB^t)(x,y,1)$ has exactly one neighboring voxel in its 26-neighborhood, and $E^t_{HL}(x,y)=0$ otherwise. Similarly $E^t_{H\oL}$ elements are 
$E^t_{H\oL}(x,y)=1$ if $\skel(SB^t)(x,y,2)$ has exactly one neighboring voxel in its 26-neighborhood and $E^t_{H\oL}(x,y)=0$ otherwise. Additionally, one can take a small neighborhood of an endpoint and take an intersection with subbands. One can say that all pixels in this intersection are endpoints. This gives more flexibility since the skeleton might not always reach the end of the subband.   

Futhermore let $E^{t,U}_{d}$ be matrices, where $U$ denotes up.
Elements of $E^{t,U}_{d}$ are
\begin{align*}
E^{t,U}_{d}(x,y) =
     \begin{cases}
     1, &\text{ if } \skel(SB^t)(x-1,y-1,z)=1, \\ & \quad \skel(SB^t)(x-1,y,z)=1, \\
     &\text{ or } \skel(SB^t)(x-1,y+1,z)=1\\
     0, &\text{ otherwise} 
     \end{cases},
\end{align*}
where $z=1$ if $d=LH$, and $z=2$ if $d=L\oH$. Now denote that $E^{t,D}_{d}:= E^{t}_{d}-E^{t,U}_{d}$. Here $D$ denotes down. Matrices $E^{t,U}_d$ contain those endpoints where corresponding subbands continue upwards from endpoints. Similarly, matrices $E^{t,D}_d$ contain those endpoints where subbands continue downwards. We benefit from these divisions of endpoints later when we form neighborhoods.

For computational purposes, we define four basic candywrap masks, which are discrete approximate intersections of metric balls (with respect to metric $D$ defined in Section \ref{sec:cw_dist}), where $\theta$ is a constant for simplicity. First, consider sets
\begin{align*}
\CW_s^{+R} &=\{(x,y) \mid \Tilde{D}((0,0,0),(x,y, \pi/6))<s, x>0\}, \\
\CW_s^{+L} &=\{(x,y) \mid \Tilde{D}((0,0,0),(x,y, \pi/6))<s, x<0\}, \\
\CW_s^{-R} &=\{(x,y) \mid \Tilde{D}((0,0,0),(x,y, -\pi/6))<s, x>0\}, \text{ and } \\
\CW_s^{-L} &=\{(x,y) \mid \Tilde{D}((0,0,0),(x,y, -\pi/6))<s, x<0\}.
\end{align*}
These fours sets are discretized to $N\times N$ binary matrices, called \emph{basic candywrap masks}, where $(x,y)\in[-10,10]\times[-10,10]$. We denote these binary matrices by $CW_s^{+R}, CW_s^{+L} , CW_s^{-R}$ and $ CW_s^{-L}$ respectively, see Figure \ref{fig:basic_candywraps}. 

Here $CW_s^{+R}$ is a neighborhood of a curve, in which the first endpoint is located at $(0,0)$ and has tangential direction $0$ radians, and the second endpoint is located at $(x,y)$, $x>0$, i.e, the curve is on the right-hand-side of the first endpoint, with tangential direction $\pi/6$ radians. The mask $CW_s^{+L}$ gives a similar neighborhood, but the second endpoint is located on the left-hand side of the first endpoint. The elements $CW_s^{-R}$ and $CW_s^{-L}$ are like previous neighborhoods with the difference that the second endpoint has tangential direction $-\pi/6$. 

\begin{figure}[!ht]
     \centering
     \begin{subfigure}[b]{0.2\textwidth}
         \centering
         \includegraphics[width=\textwidth]{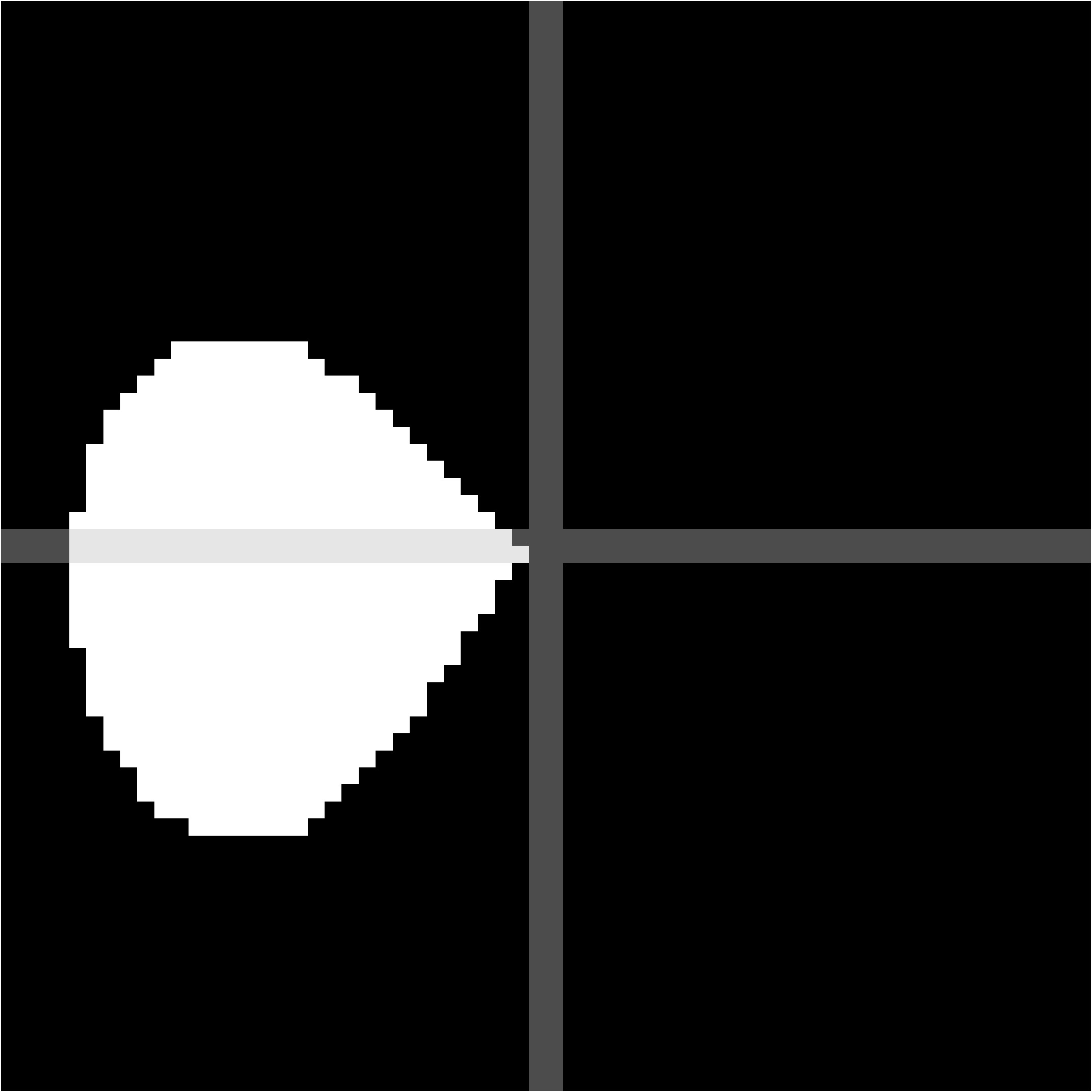}
         \caption{$CW_9^{+L}$}
         \label{fig:upleft-candywrap}
     \end{subfigure}
     \begin{subfigure}[b]{0.2\textwidth}
         \centering
         \includegraphics[width=\textwidth]{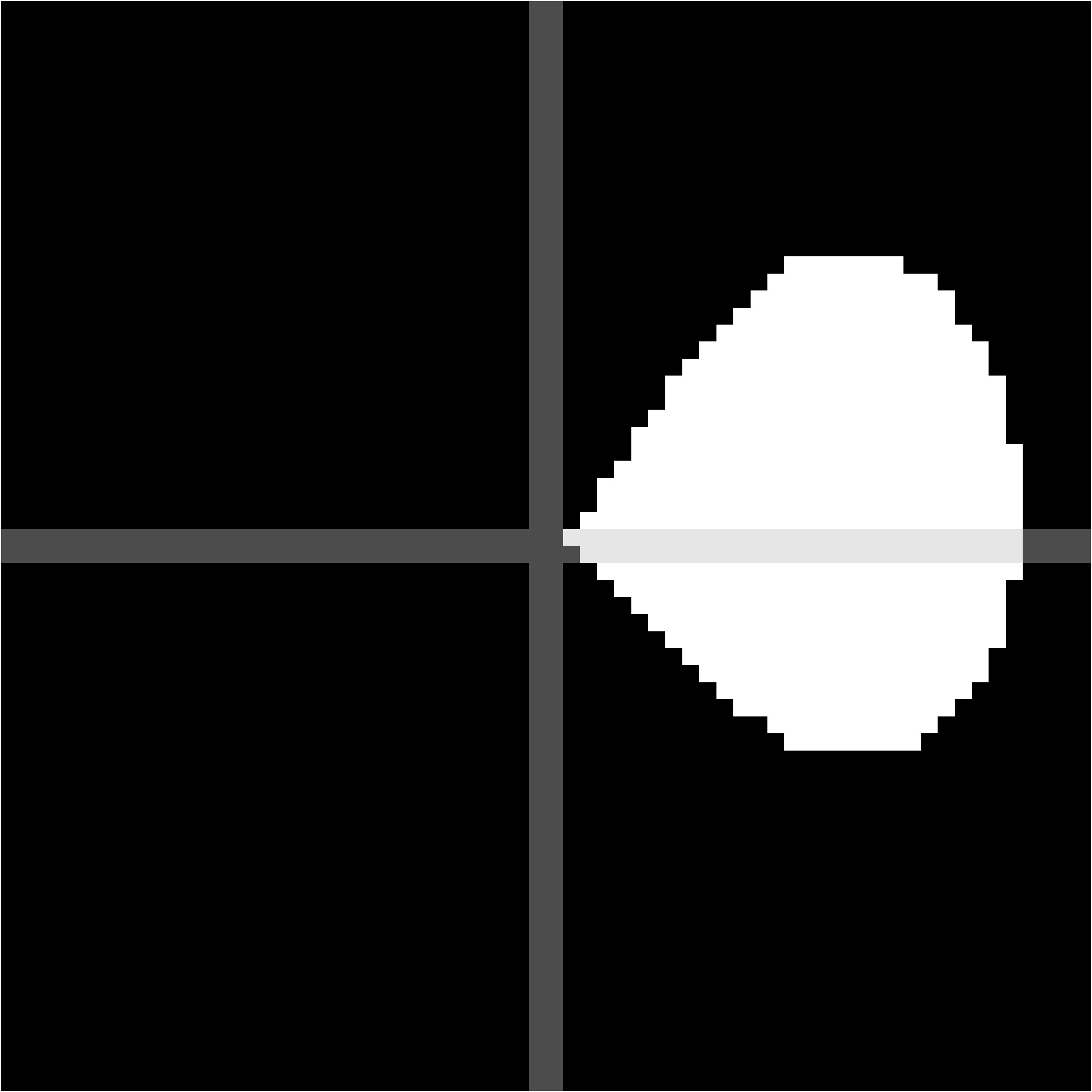}
         \caption{$CW_9^{+R}$}
         \label{fig:upright-candywrap}
     \end{subfigure}
     \begin{subfigure}[b]{0.2\textwidth}
         \centering
         \includegraphics[width=\textwidth]{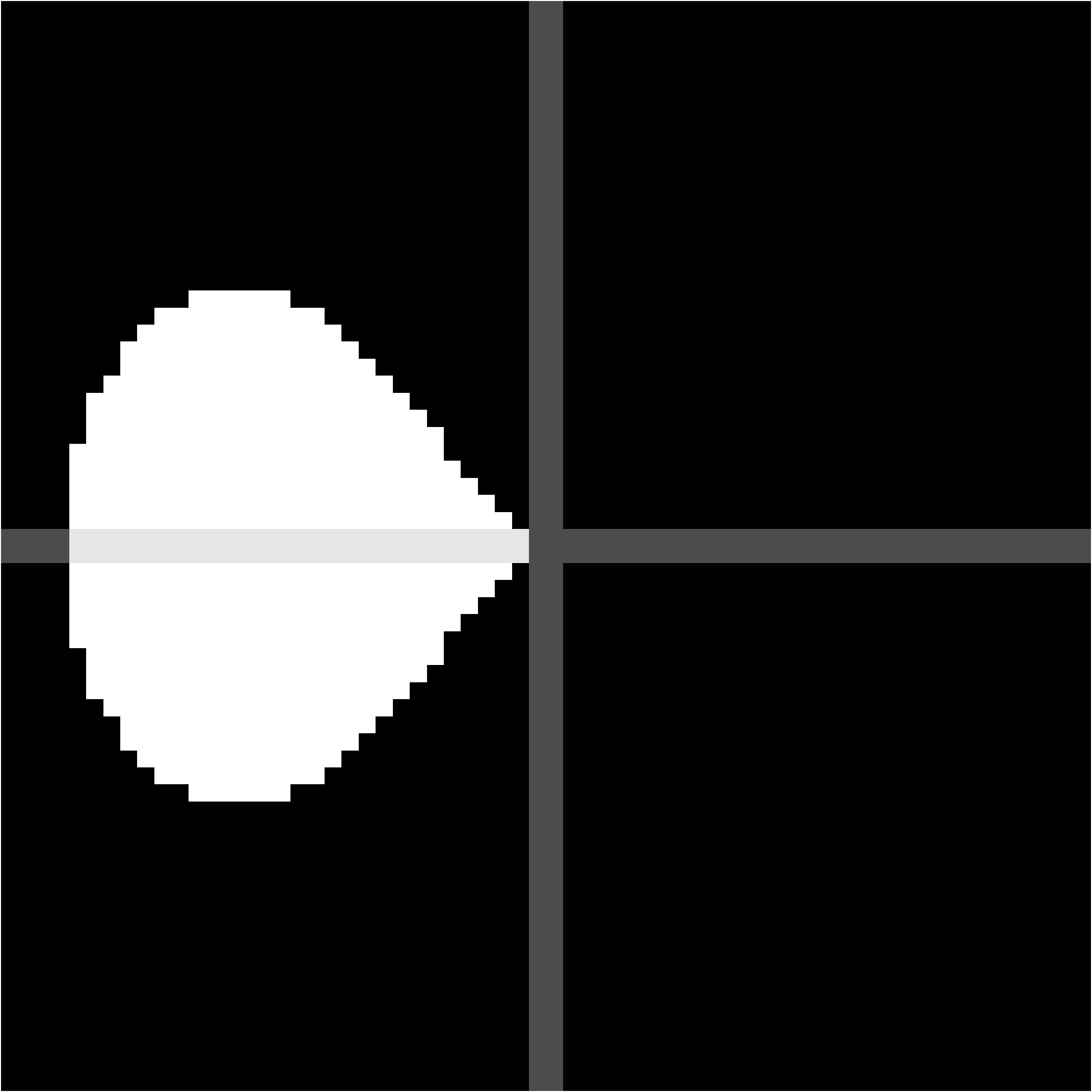}
         \caption{$CW_9^{-L}$}
         \label{fig:downleft-candywrap}
     \end{subfigure}
     \begin{subfigure}[b]{0.2\textwidth}
         \centering
         \includegraphics[width=\textwidth]{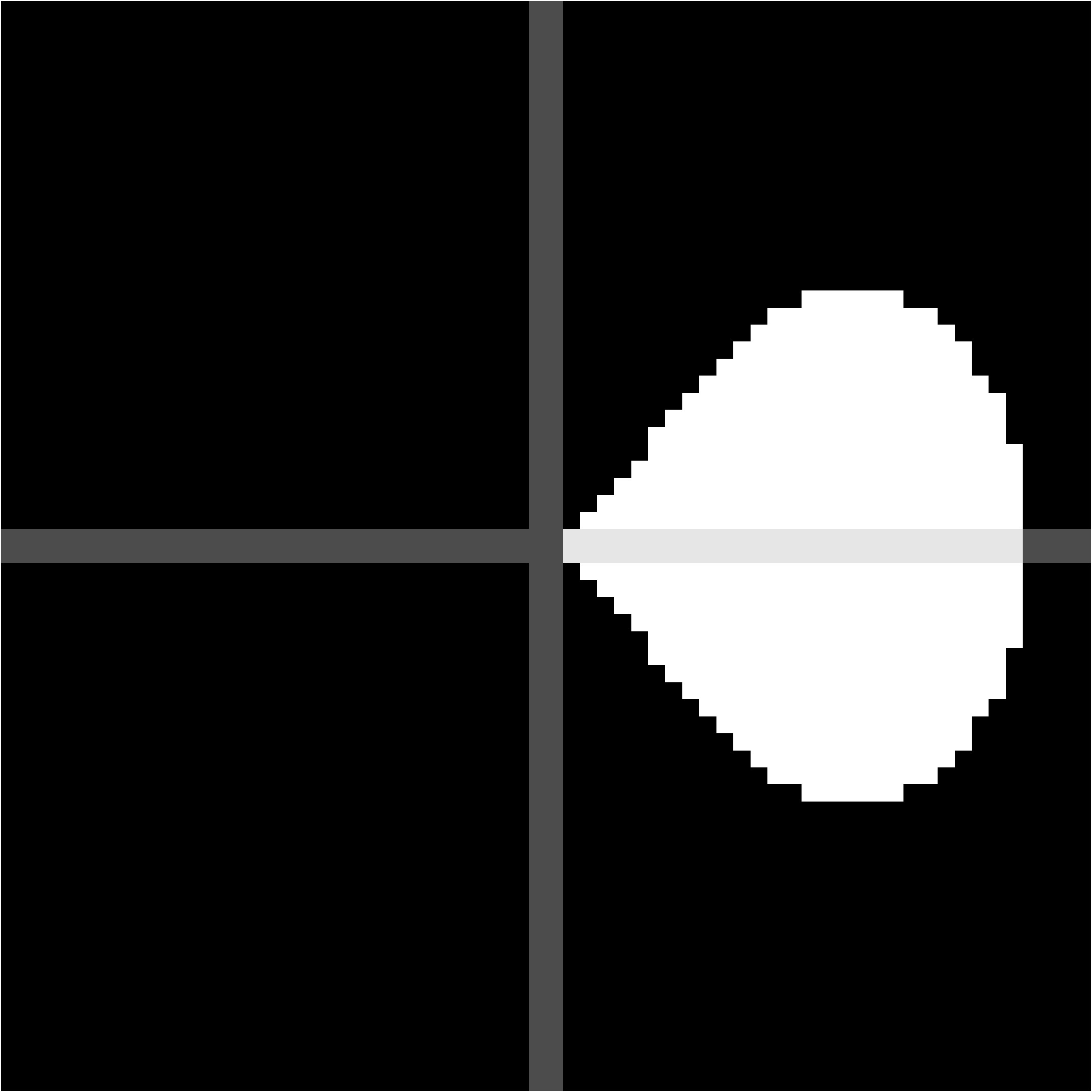}
         \caption{$CW_9^{-R}$}
         \label{fig:downright-candywrap}
     \end{subfigure}
     \caption{Basic candywrap masks are approximate intersections of metric balls (with respect to metric $D$). Here $N=64$. The gray axes are for illustration. }
        \label{fig:basic_candywraps}
\end{figure}

Candywrap masks, which we use for estimating neighborhoods of boundary curves, are created by rotating basic candywraps. Consider that we have a point $(0,0)$ with tangential direction $\theta_0$ degrees. This point is a supposed or known first endpoint of the boundary curve. Suppose then that the second endpoint has tangential direction $\theta_1 = \theta_0 + \frac{\pi}{6}$, $\theta_0 \in (-\pi, \pi]$. The aim is to estimate a neighborhood where a boundary curve lies with the assumption that the distance between two endpoints is at the most $s$. 

That can be done by rotating the first endpoint by $-\theta_0$ radians. Then the first (rotated) endpoint has tangential direction $0$ radians and the second endpoint has tangential direction $\frac{\pi}{6}$. Now $CW_s^{+R}$ defines the neighborhood, where a rotated boundary curve lies if we suppose the second endpoint to lie on the right-hand side of the first endpoint. Similarly, $CW_s^{+L}$  defines the neighborhood where the rotated boundary curve lies if the second endpoint is on the left-hand side of the first endpoint. The last thing is to rotate the neighborhood back by $+\theta_0$ radians.

If we suppose that the second endpoint has tangential direction $\theta_1 = \theta_0 - \frac{\pi}{6}$ degrees we can do a similar neighborhood estimation using $CW_s^{-R}$ or $CW_s^{-L}$ depending supposed placement of the boundary curve. We denote that $CW_{s,\theta}^{*\bullet}$ is $CW_s^{*\bullet}$ rotated by $\theta$ degrees, where $*$ is $+$ or $-$ and $\bullet$ is $R$ or $L$.

Now neighborhoods of missing singularities, i.e., missing parts of boundary curves or subbands, can be achieved by doing dilation with properly chosen endpoint matrices and rotated candywrap masks, which we will go through. Denote that 
\begin{align*}
    DM_{s,HH}^\bullet &= E_{HL}^{t,\bullet} \oplus CW^{+\bullet}_{s,-\frac{\pi}{3}} \text{ and}\\
    DM_{s,H\oH}^\bullet &= E_{H\oL}^{t,\bullet} \oplus CW^{-\bullet}_{s,\frac{\pi}{6}},
\end{align*}
where $\bullet$ is $U$ or $D$. 
Secondly, let 
\begin{align*}
    DM_{s,LH}^\bullet &= DM_{s,HH}^{\bullet} \oplus CW^{+\bullet}_{s,-\frac{\pi}{6}} \text{ and}\\
    DM_{s,L\oH}^\bullet &= DM_{s,H\oH}^{\bullet} \oplus CW^{-\bullet}_{s,\frac{\pi}{3}}.
\end{align*} Furthermore we define that $DM_{d',s}$ is a matrix whose elements are
\begin{align*}
    DM_{s,d'}(x,y)=
    \begin{cases}
        1, \text{ if }  DM_{s,d'}^U(x,y)=1 \\
        1, \text{ if }  DM_{s,d'}^D(x,y)=1\\
        0, \text{ otherwise }
    \end{cases}.
\end{align*}
Now $DM_{s,d'}$ gives size $s$ neighborhood estimation of unknown subband $d'$. Note that here dilation acts as a shifting operator. Moreover, note that for estimating neighborhoods of subbands $L\oH$ and $LH$ we perform dilations using neighborhood estimations of subbands $H\oH$ and $HH$. This is because we do not know the exact placement of endpoints in those subbands.      

Let $A_s$ be a 3D matrix of size $m\times m \times 13$, whose elements are
\begin{align*}
    A_s(x,y,z)=
    \begin{cases}
        SB^t_{HL}(x,y), \text{ if }  z=1,7,13 \\
        SB^t_{H\oL}(x,y), \text{ if }  z=2,8 \\
        DM_{s,H\oH}(x,y), \text{ if }  z=3,9 \\
        DM_{s,L\oH}(x,y), \text{ if }  z=4,10 \\
        DM_{s,LH}(x,y), \text{ if }  z=5,11 \\
        DM_{s,HH}(x,y), \text{ if }  z=6,12 
    \end{cases}.
\end{align*}

Finally, suppose that $X$ and $Y$ are $\{0,1\}^{n \times n}$ matrices, $n\in \N$, now
$X\setminus Y$ is a $\{0,1\}^{n \times n}$ matrix whose elements are 
\begin{align*}
    X\setminus Y(x,y) = \begin{cases}
        1 \text{ if } (X-Y)(x,y) =1 \\
        0, \text{ otherwise}.
    \end{cases}
\end{align*}

\subsection{Method description}

In this section, we will show how the estimation of the unknown boundaries can be done in practice. The complete process is also summarized in Figure \ref{fig:workflow}.

The final estimation of these neighborhoods is achieved iteratively through forming filtrations of multiple different neighborhood estimations and finding the estimation which contains components that fulfill Proposition \ref{prop:lifting} in a computational sense. Thus we will formulate a statement whose correctness we will check during the iterations.

\begin{stm}\label{stm:component}
There exists a component matrix $C$ of $A_s$ for which the following is true:
there exists $x$ and $y$ so that $C(x,y,1)=1$ and $C(x,y,13)=1$.
\end{stm}

The proposed method starts by doing an initial reconstruction $R$. We choose a wavelet level $W$, a threshold value $t$, and a line length $l$. We compute subband matrices $SB^t_d$ from the reconstruction $R$. Furthermore we compute matrices $E^{t,\bullet}_{d}$. Next, we choose a candywrap size $N$, an initial distance $s>0$, and a stepsize $z>0$. With these initial settings, we create $A_s$ and check if Statement \ref{stm:component} holds. If such a component matrix $C$ exists, it means that we have found an estimation of a neighborhood of some boundary curve in three-dimensional space. Now the projection of $C$ to its first two coordinates gives an estimation of a boundary neighborhood in two-dimensional space. 

If a component consistent with Statement 1 is found, we continue neighborhood estimation as follows. We update that \begin{align*}
SB^t_{HL}=SB^t_{HL}\setminus C_{\mid z=1},\hspace{5mm} &SB^t_{H\oL}=SB^t_{H\oL}\setminus C_{\mid z=2}, \\
E^{t,\bullet}_{HL}=E^{t,\bullet}_{HL}\setminus C_{\mid z=1},\hspace{5mm} &E^{t,\bullet}_{H\oL}=E^{t,\bullet}_{H\oL}\setminus C_{\mid z=2}, 
\end{align*}
where  $C_{\mid z=i}$ is a 2D matrix whose elements are $C_{\mid z=i}(x,y) = C(x,y,i)$.
In other words, we remove parts of the found boundary from known subbands. Next, we compute $DM_{s,d'}$ and furthermore $A_s$ using these updated matrices. We compute components of $A_s$ again and check if the component matrix which fulfills Statement \ref{stm:component} is found.

If component matrix $C$ is not found, we update that $s=s+z$ and compute $DM_{s,d'}$ and $A_s$ using the updated size, i.e. we increase the estimation distance. 

Next, we check if $C$ is found. We process the method as explained above depending on the result. If $A_s$ does not contain any components, we have found all neighborhood estimations for every boundary curve. See the pseudocode \ref{pseudocode} for the method.

In other words, we consider a filtration $Q_{s_0}\subset Q_{s_1} \subset \cdots \subset Q_{s_K}$, where $Q_{s_i}\rightleftharpoons A_{s_i}$, $s_0$ is the initial $s$, and $s_{i}=s+iz$, when $i=1,\dots K$. Furthermore, we consider homology groups $H_0(Q_{s_i})$. We are looking for a birth time of a component that connects the bottom and top layers. The iterative approach, removing parts of the found neighborhood from known subbands and forming new filtration, is done because we want to minimize the risk that two boundary curves' neighborhoods merge together, giving us misleading results.     

If estimations of boundary curves are desired, a user may use, for example, splines to do so. See for example Figures \ref{fig:annulus_spline}, \ref{fig:shapes_spline}, and  \ref{fig:blob_spline}.

\begin{algorithm}
\caption{Estimations of boundaries' neighborhoods}
\label{pseudocode}
\begin{algorithmic}
\Require{$s>0$, $z>0$}
\Repeat
\State{Compute component matrices of $A_s$}
\If{$\exists C$ a component matrix of $A_s$ and  $x,y$, such that, $C(x,y,1)=1$ and $C(x,y,1)=13$}
    \State{save $C$}
    \State{$SB^t_{HL} \gets SB^t_{HL}\setminus C_{\mid z=1}$}
    \State{$SB^t_{H\oL} \gets SB^t_{H\oL}\setminus C_{\mid z=2}$}
    \State{$E^{t,\bullet}_{HL} \gets E^{t,\bullet}_{HL}\setminus C_{\mid z=1}$}
    \State{$E^{t,\bullet}_{H\oL} \gets E^{t,\bullet}_{H\oL}\setminus C_{\mid z=2}$}
    \State{Compute $DM_{s,d'}$}
    \State{Compute $A_{s}$}
\Else{}
\State{$s \gets s+z$}
\EndIf{}
\Until{$A_s$ has components}
\end{algorithmic}
\end{algorithm}

\usetikzlibrary{positioning, fit,calc}
\tikzstyle{block} = [draw=black, thick, text width=3cm, minimum height=1cm, align=center]  
\tikzstyle{arrow} = [thick,->,>=stealth, text width=3cm]
\tikzstyle{line} = [thick,-,, text width=3cm]

\begin{figure}[!ht]
    \centering
    \begin{tikzpicture} 
    \node (sino) {\includegraphics[width=2.5cm]{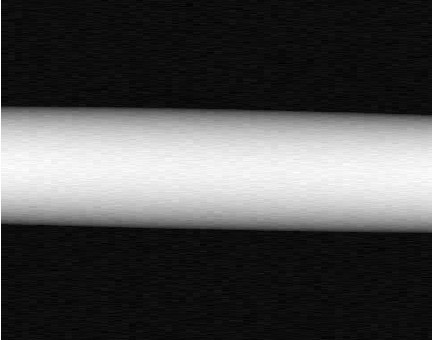}};  
    \node[right =4.5cm of sino] (rec) {\includegraphics[width=2.5cm]{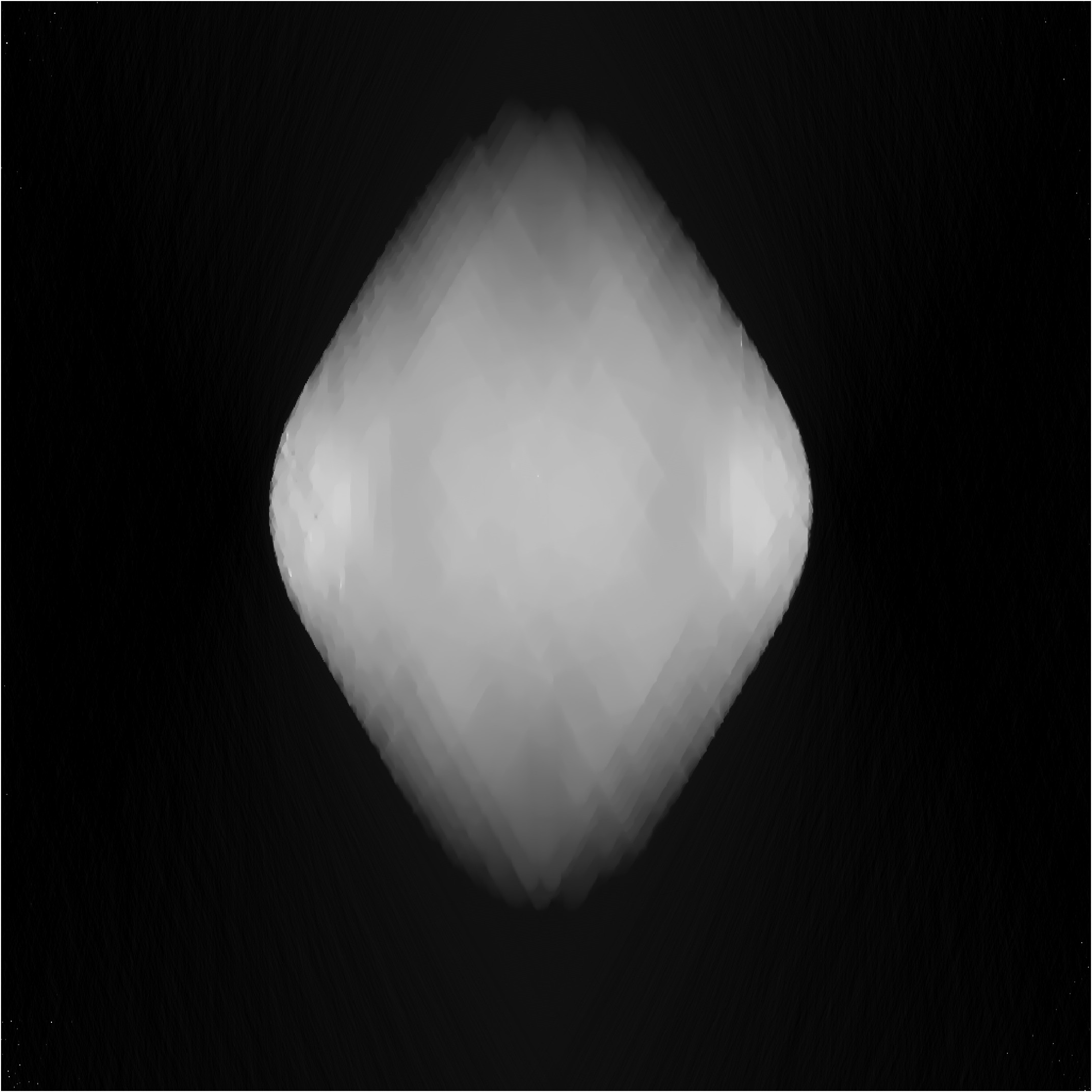}};
    \node[below =1.8cm of rec] (complex) {\includegraphics[width=2.5cm]{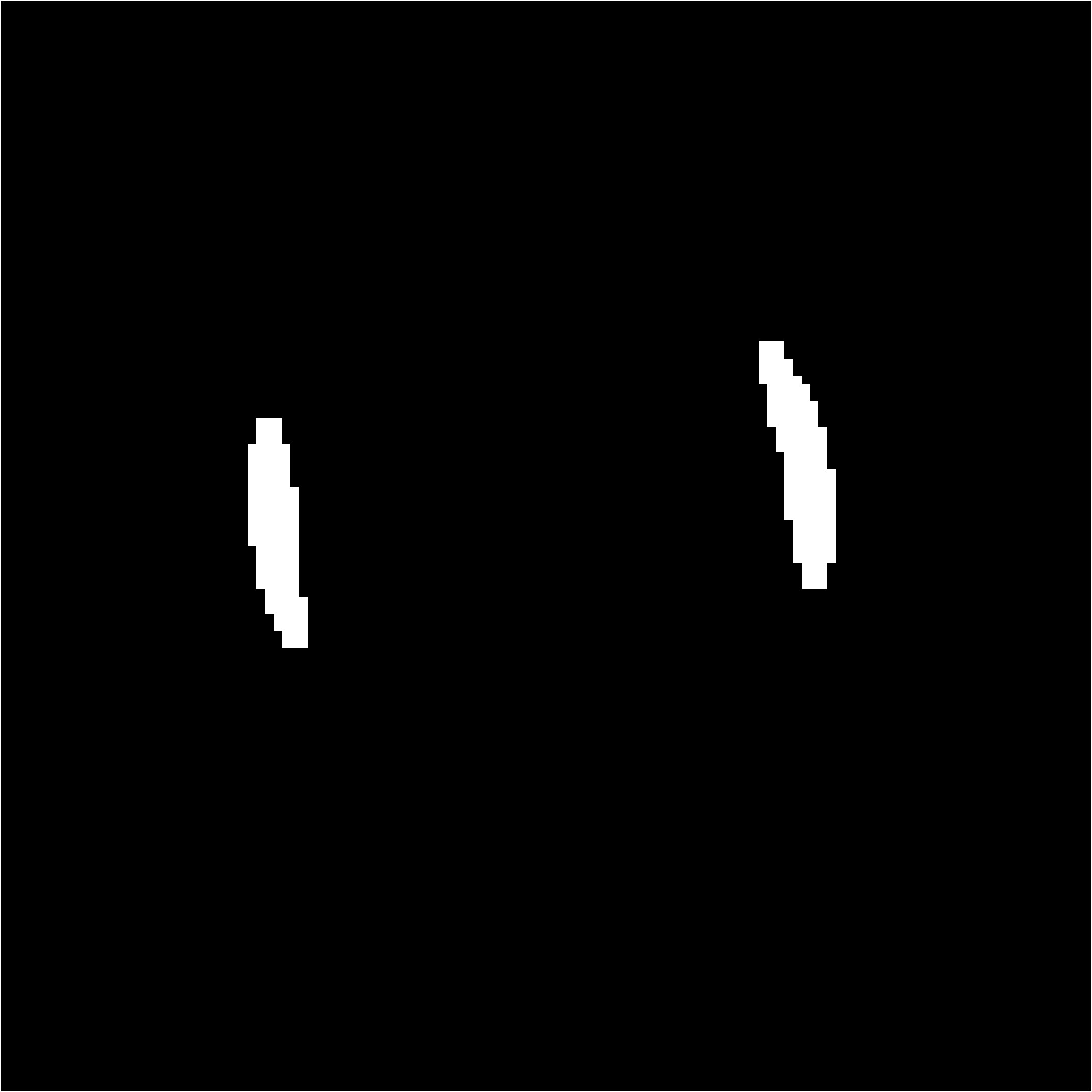} \includegraphics[width=2.5cm]{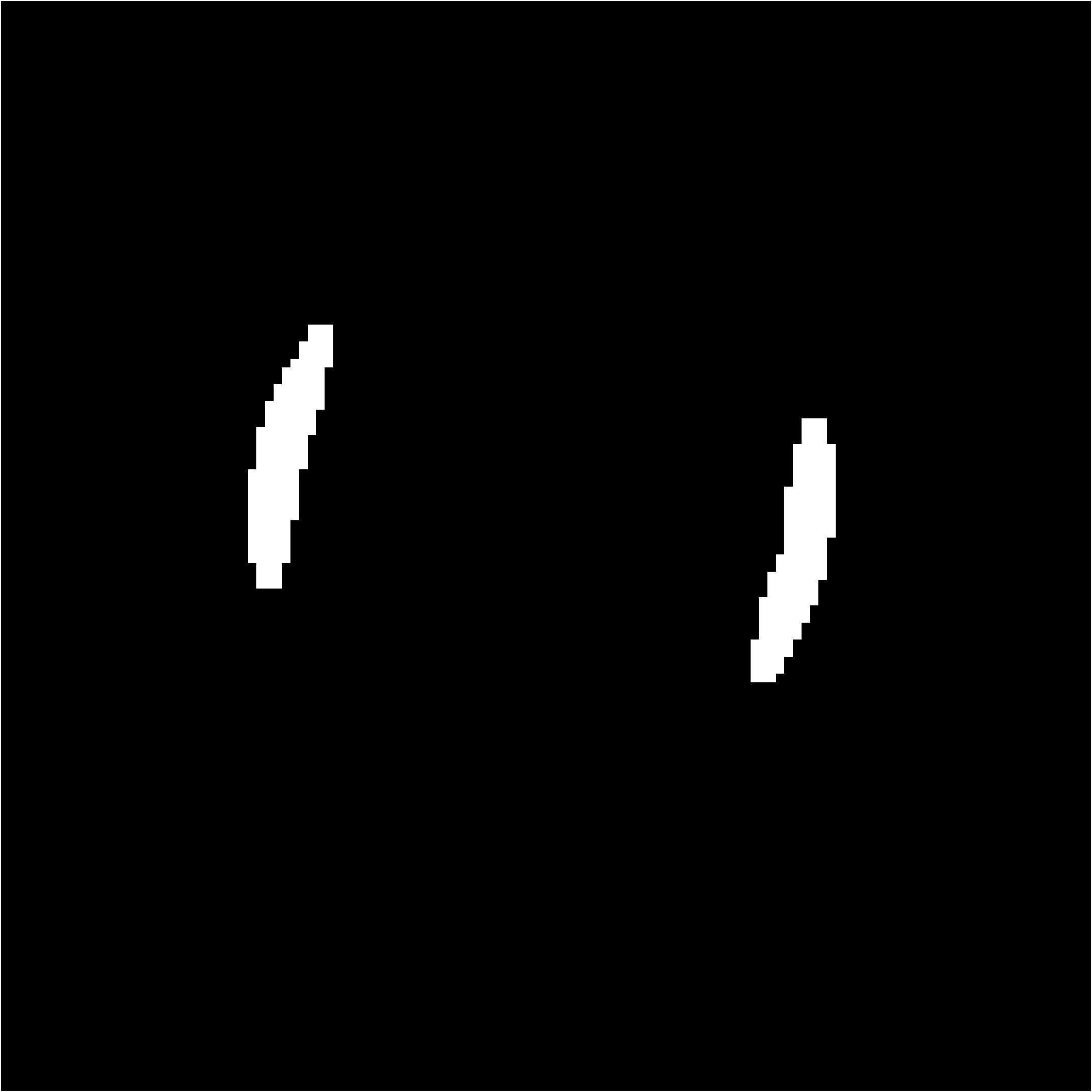}};  
    \node[left =4.5cm of complex] (endpoints) {\includegraphics[width=2.5cm]{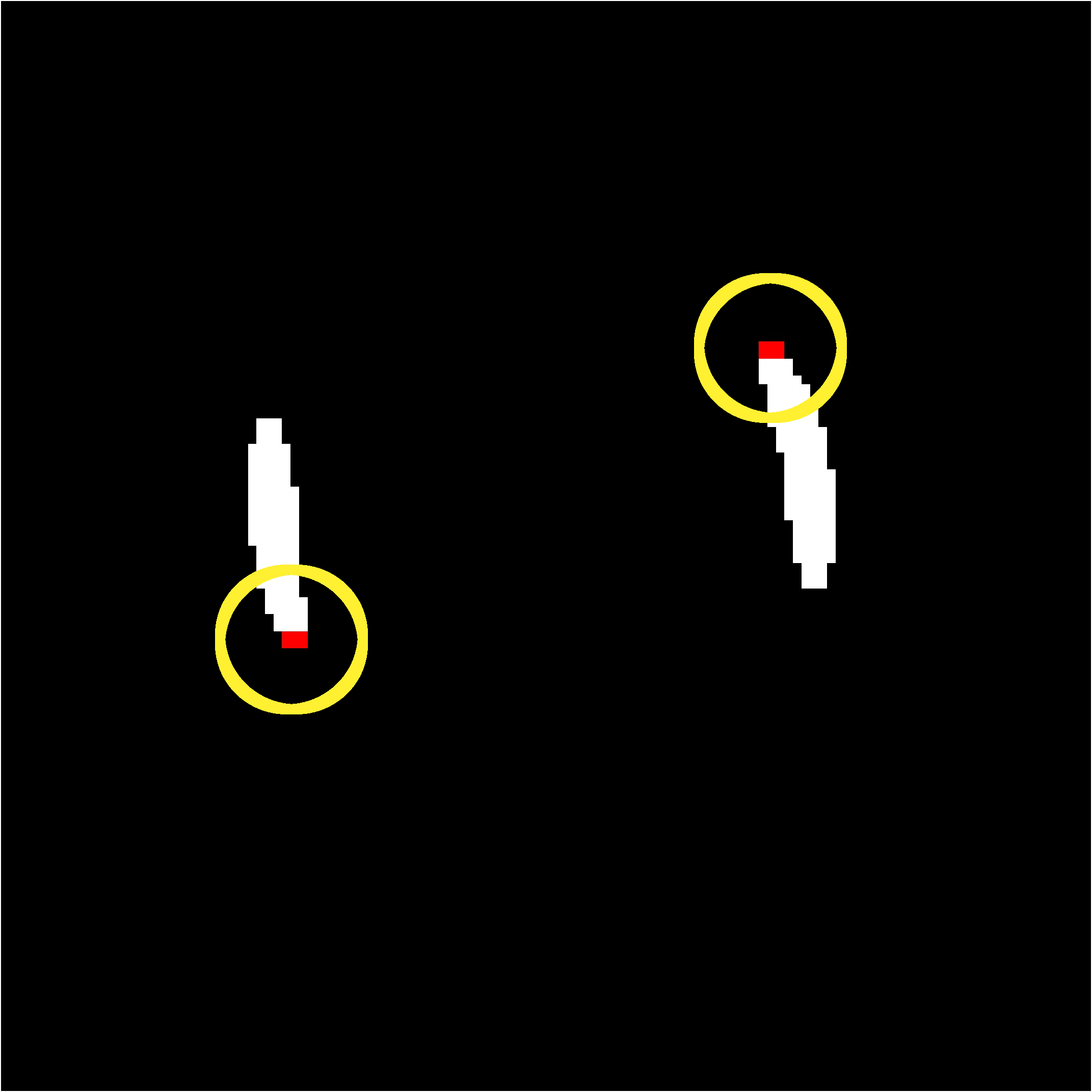} \includegraphics[width=2.5cm]{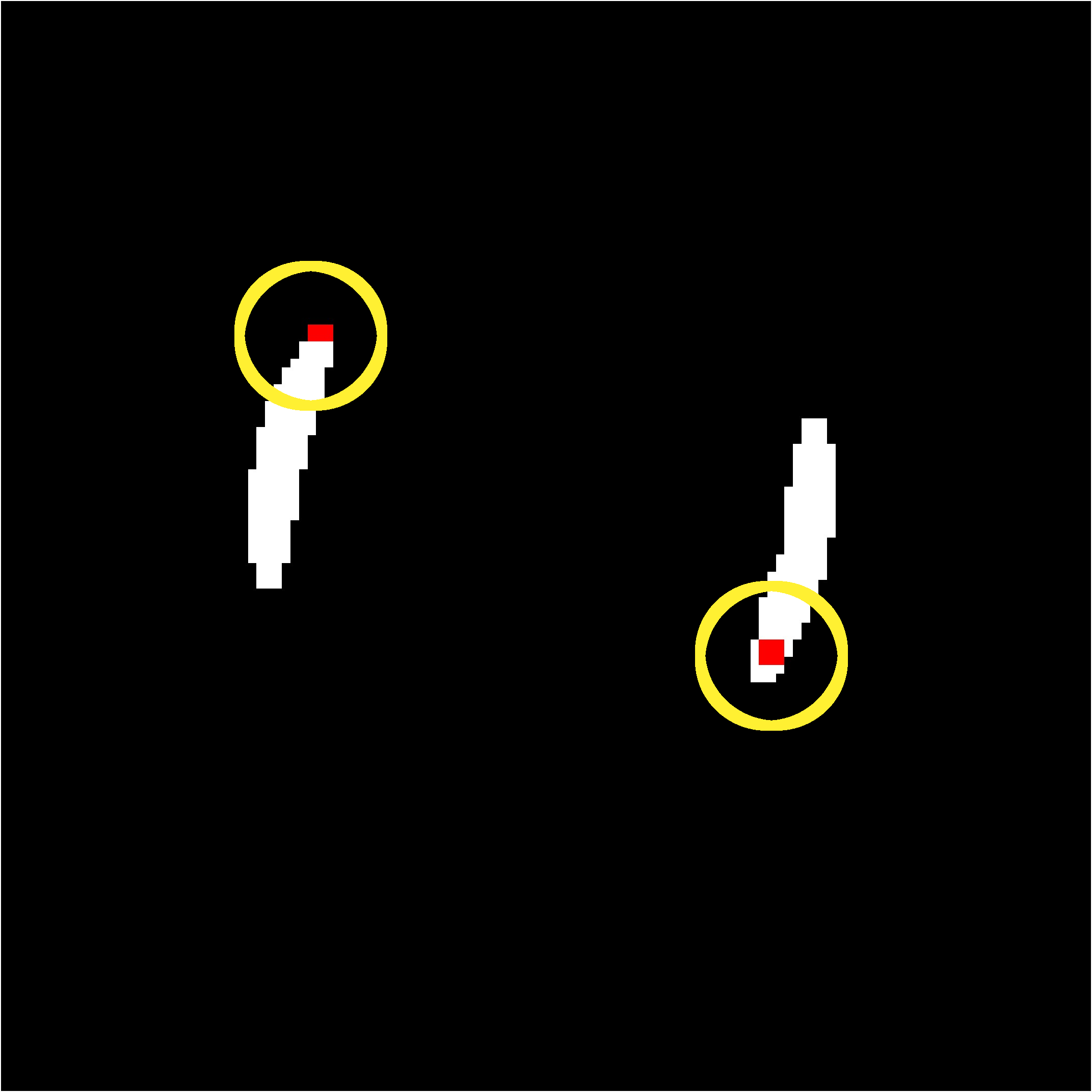}};
    \node[rectangle, draw, below right =1cm of endpoints]  (dil) {\includegraphics[width=2.5cm]{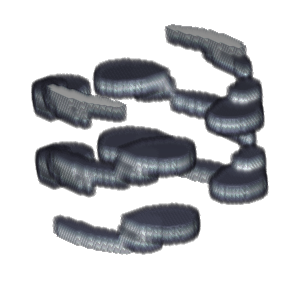}};
    \node[circle, draw, below =2cm of dil](p)  {};
    \node[rectangle, draw, right =1.5cm of p](compIf){Statement \ref{stm:component}};
    \node[rectangle, draw, left =1.5cm of p](nocomp){There is no components in $A_s$.};
    \node[diamond,minimum height = 1.2cm,minimum width = 1.2cm, draw](no) at (5,-10) {no};
    \node[diamond,minimum height = 1.2cm,minimum width = 1.2cm, draw,right = 1cm of no](yes) {yes};
    \node(spline) at (-1,-15){\includegraphics[width=2.5cm]
    {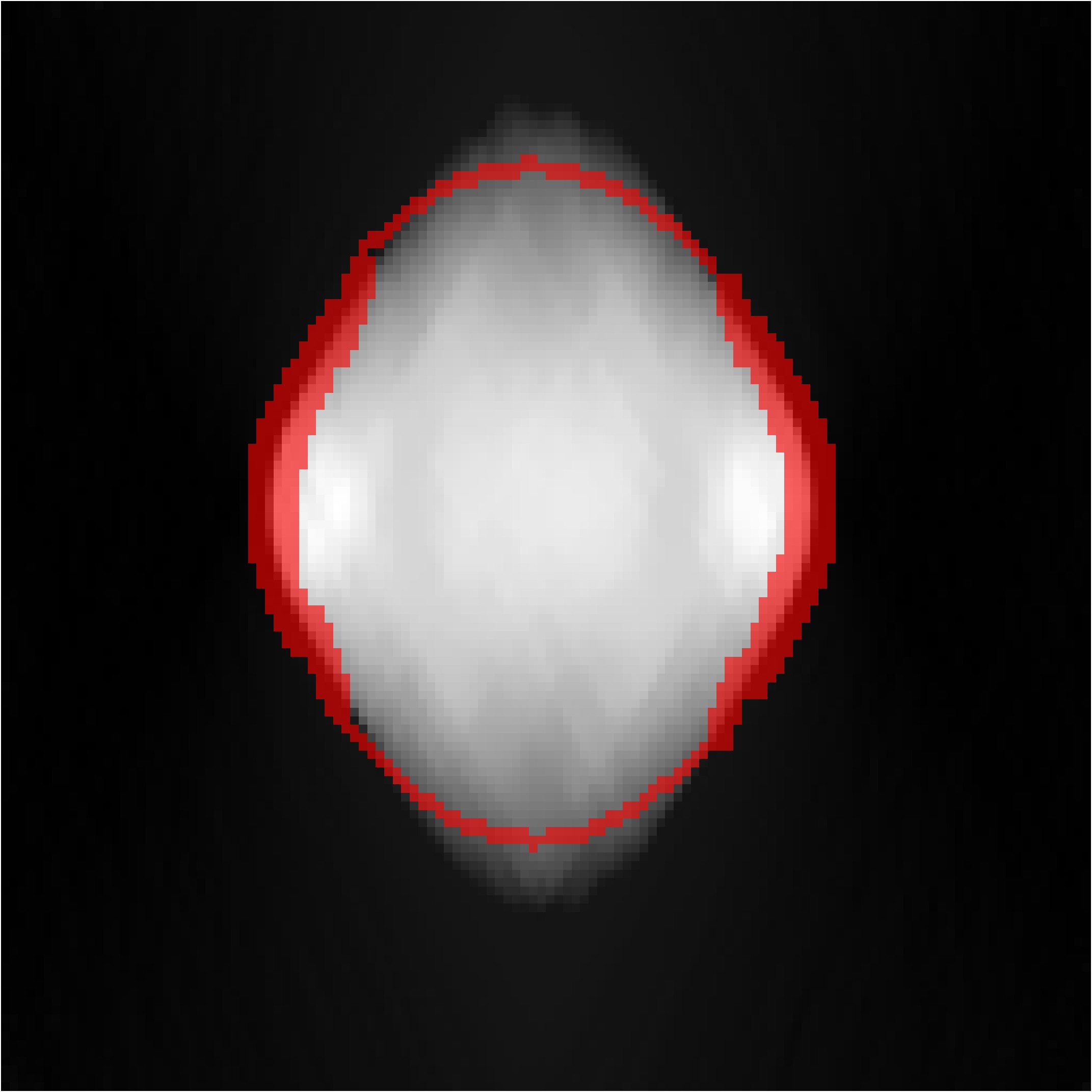}};
    \node(est) at (3,-15){\includegraphics[width=2.5cm]
    {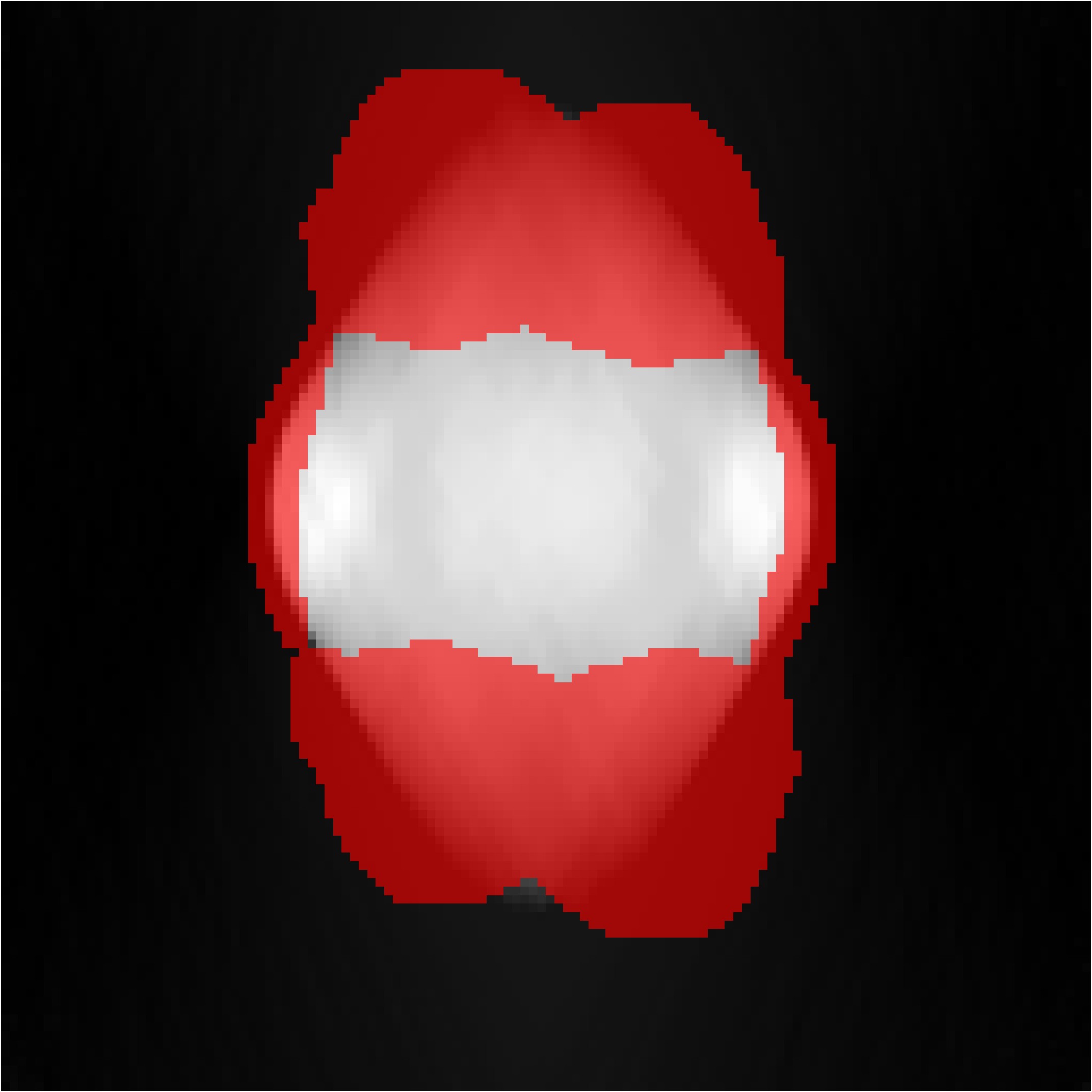}};
\draw[arrow] (sino)--node [text width=3.5cm,midway,above]{1. Reconstruction $R$} (rec);
\draw[arrow] (rec)--node [text width=4cm,midway,left]{2. Subband matrices of $R$}(complex);
\draw[arrow] (complex)--node [text width=3.5cm,midway,above]{3. Find endpoints \\ (red points inside yellow circles)} (endpoints);
\draw[arrow] (endpoints)--node [text width=4.7cm,midway,below left]{4. Morphological dilations using candywrap masks as structuring elements} (dil);
\draw[arrow](dil)--node [text width=3.5cm,midway,left]{5. Compute components of $A_s$}(p);
\draw[arrow](p)--node [text width=1cm,midway,above]{6.}(compIf);
\draw[arrow](p)--node [text width=1cm,midway,above]{7.}(nocomp);

\draw[arrow] (compIf)--node [text width=4cm,midway,right]{} (no);
\draw[arrow] (compIf)--node [text width=4cm,midway,right]{} (yes);
\draw[arrow] (no)--node [text width=4cm,midway, above right]{6.1. Grow $s$} (dil);
\draw[arrow] (yes)--node [text width=4cm,midway,right]{6.2. Remove found component} (complex);
\draw[arrow] (nocomp)--node [text width=2.5cm,midway,left]{7.1. Spline fills of missing parts} (spline);
\draw[arrow] (nocomp)--node [text width=4cm,midway,above right]{7.2. Projection of found components} (est);

\end{tikzpicture}  
    \caption{The workflow of TILT is as follows. 1. Compute an initial reconstruction $R$ from limited-angle tomographic data. 2. Compute subband matrices of $R$ using complex wavelets. 3. Find endpoints of subbands. 4. Perform morphological dilations using a rotated candywrap mask formed using current distance $s$. 5. Compute components of the 3D matrix $A_s$. 6. Check if Statement \ref{stm:component} holds. 6.1. If Statement holds, remove the found component and return to step 3. 6.2 If Statement does not hold, grow $s$ with chosen step size and return to step 4. This step is where persistent homology appears. 7. If $A_s$ does not contain any components, all boundary neighborhoods are supposed to be found. 7.1 Do spline fill to complete missing parts. 7.2. Project found neighborhoods.}
    \label{fig:workflow}
\end{figure}

\section{Results} \label{sec:results}

To test the TILT method, we created four binary phantoms. They are progressively more complicated, pushing the limits of our assumptions to explore the scope of applicability of TILT.
\begin{itemize}
    \item[(i)] Annulus phantom: not simply connected but only one component, Figure \ref{fig:target_annulus}.
    \item[(ii)] Several components: group of ellipses, Figure \ref{fig:target_multipleshapes}.
    \item[(iii)] One component, but non-convex, Figure \ref{fig:target_blob}.
    \item[(iv)] One component, non-convex, high curvature, \ref{fig:target_makkara}.
\end{itemize}
We simulated X-ray measurements of phantoms using a parallel-beam geometry of a 60-degree opening angle with 61 projection images, spanning from -30 to 30 degrees. We added normally distributed pseudo-random noise to sinograms with  $3\%$ relative noise.

To form the initial reconstructions, we adapted the deconvolution code in \cite{bredies2014recovering} for computing TV-regularized reconstructions of the phantoms. The complex wavelet coefficients were computed using Kingsbury Q-shift 2-D Dual-tree complex wavelet transform. We took $3\times3$ neighborhood of skeleton's endpoints and intersect those with subbands, and considered all these points in intersections to be endpoints. For candywrap masks, we used $N=64$. Following the method explained in previous sections, we estimated neighborhoods of boundaries. Furthermore, we did spline fills for missing parts. Figure \ref{fig:workflow} shows the entire workflow of TILT. As an example, the computations for phantom \ref{fig:target_annulus} after an initial reconstruction took slightly less than 40 seconds using an 11th Gen Intel(R) Core(TM) i5-1135G7 processor. For each extra boundary component, computations took an additional three to eight seconds.   


\begin{figure}[!ht]
    \centering
    \begin{subfigure}[t]{0.3  \textwidth}
    \includegraphics[width=\textwidth]{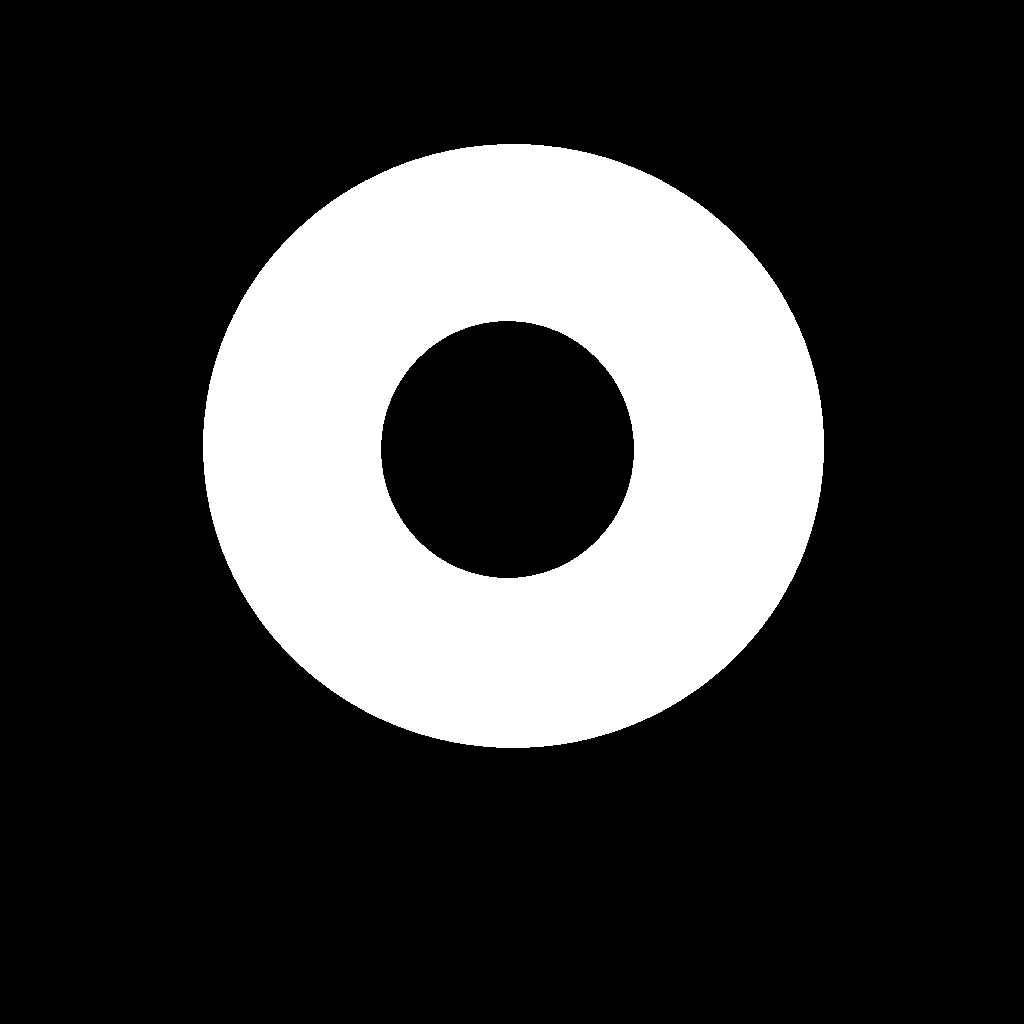} 
    \caption{A target phantom}
    \label{fig:target_annulus}
    \end{subfigure}
    \centering
    \hspace{2mm}
    \begin{subfigure}[t]{0.3  \textwidth}
    \includegraphics[width=\textwidth]{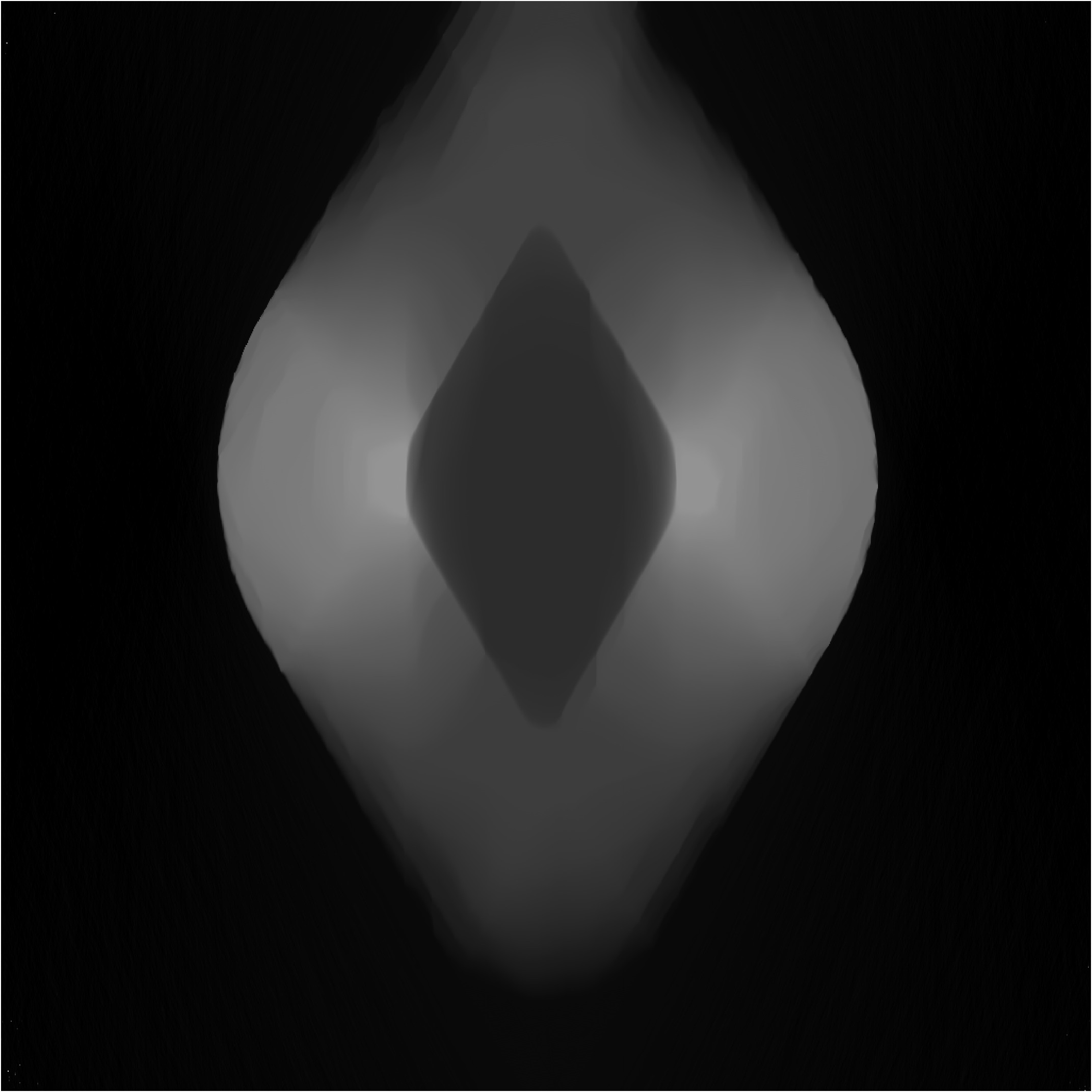} 
    \caption{An initial reconstruction of size $1024 \times 1024$ computed using total variation regularization with regularization parameter $\alpha=1$ and 2000 iterations.}
    \end{subfigure}
    \hspace{2mm}
    \begin{subfigure}[t]{0.3  \textwidth}
    \includegraphics[width=\textwidth]{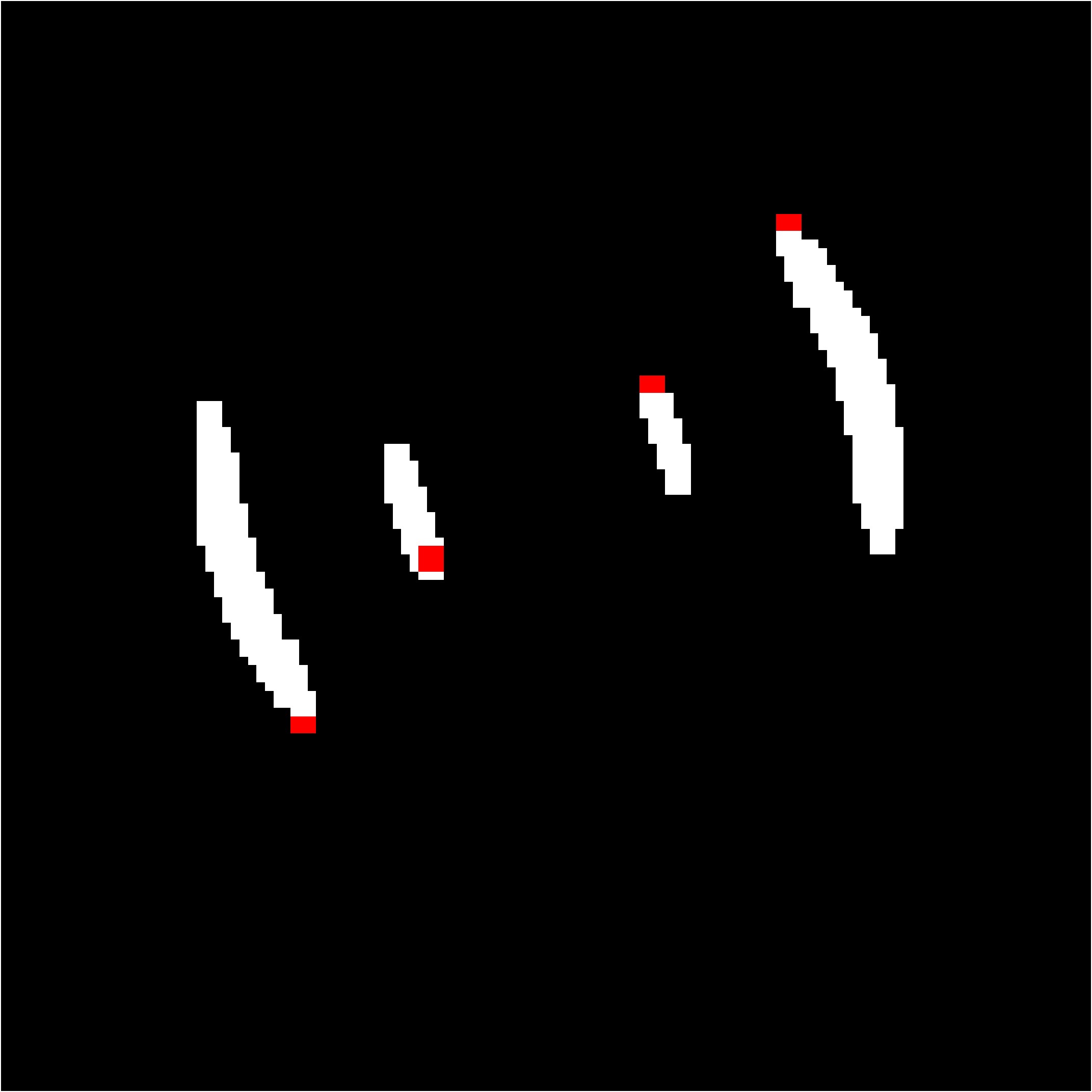} 
    \caption{Subbands $HL$ (white) of wavelet level $7$ computed using threshold value $t=0.09$ and line length $l=9$. Endpoints marked as red over the subbands.}
    \end{subfigure}
    \begin{subfigure}[t]{0.3  \textwidth}
    \includegraphics[width=\textwidth]{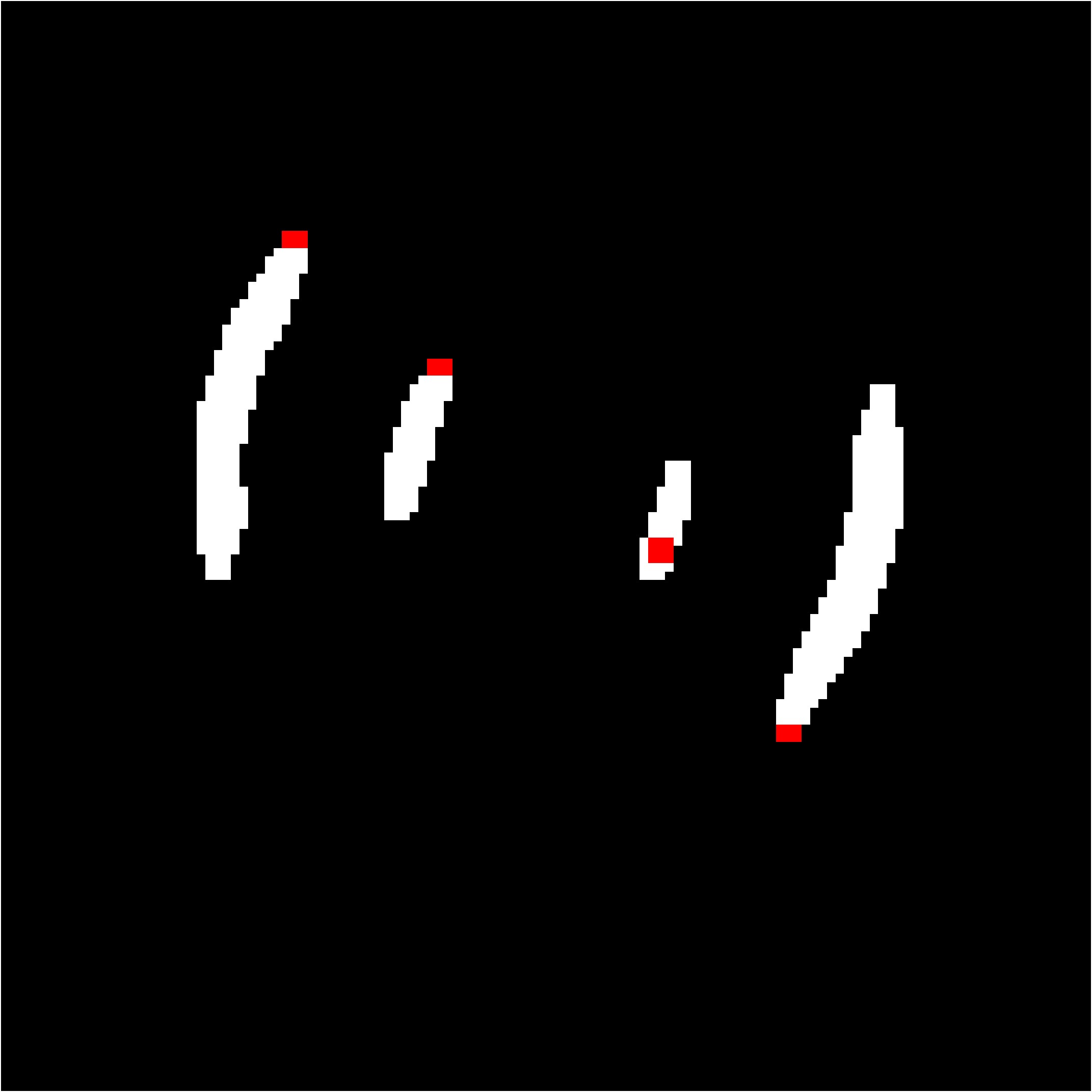} 
    \caption{ Subbands $H\oL$ (white) of wavelet level $7$ computed using threshold value $t=0.09$ and line length $l=9$. Endpoints marked as red over the subbands.}
    \end{subfigure}
    \hspace{2mm}
    \begin{subfigure}[t]{0.3  \textwidth}
    \includegraphics[width=\textwidth]{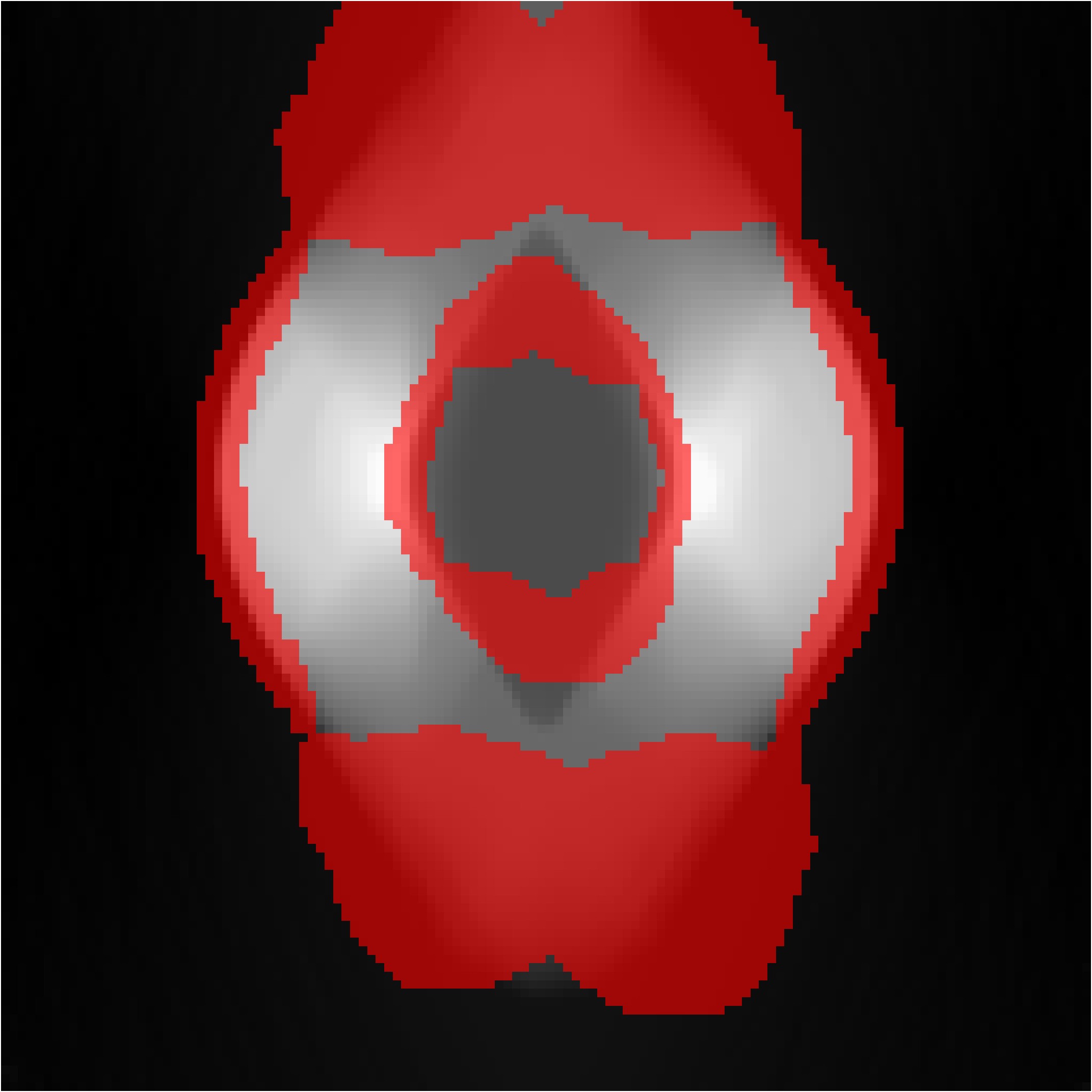} 
    \caption{Boundary neighborhood estimations over bicubic interpolated initial reconstruction.}
    \end{subfigure}
    \hspace{2mm}
\begin{subfigure}[t]{0.3  \textwidth}
    \includegraphics[width=\textwidth]{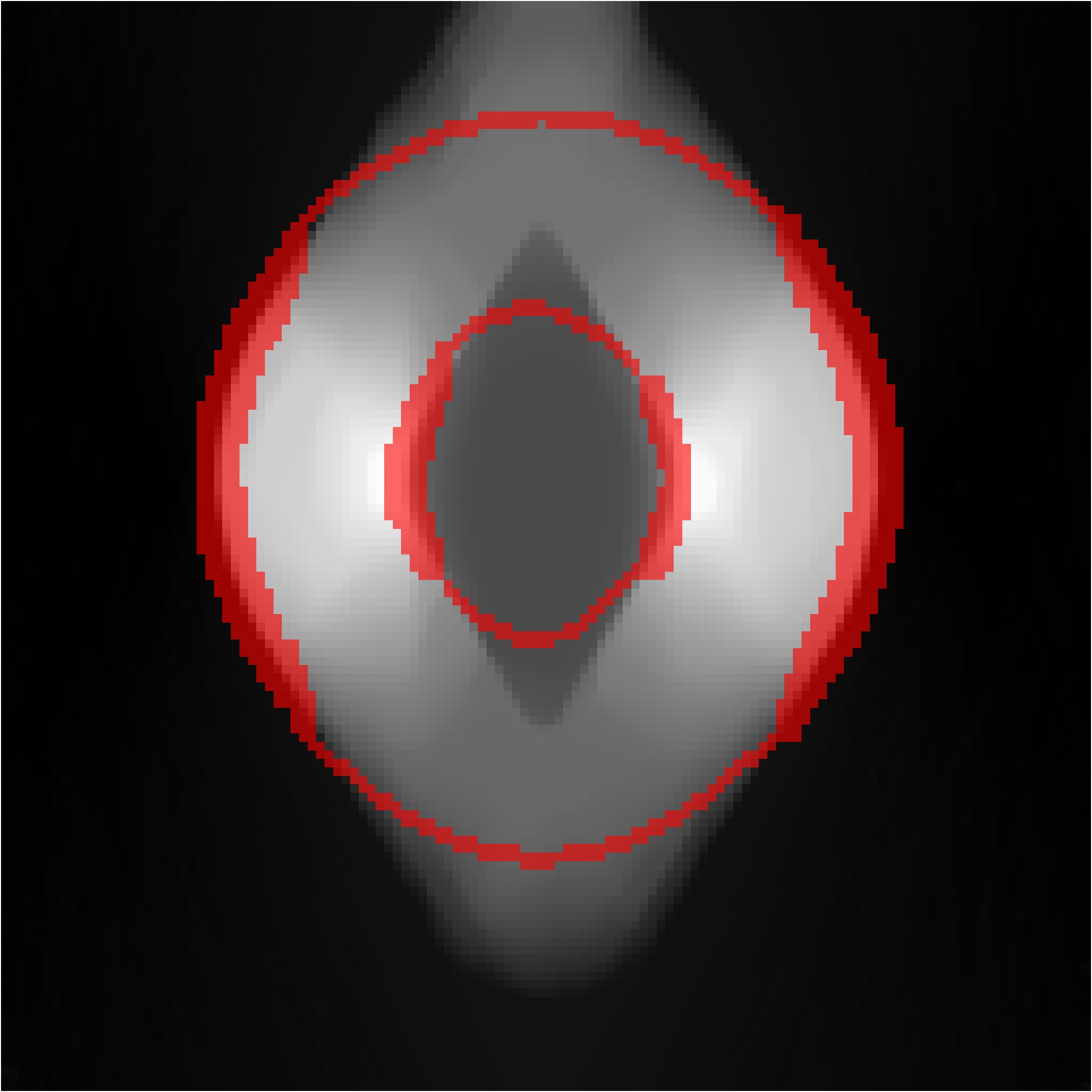} 
    \caption{Spline fill over bicubic interpolated initial reconstruction}
    \label{fig:annulus_spline}

    \end{subfigure}
    \begin{subfigure}[t]{0.3  \textwidth}
    \includegraphics[width=\textwidth]{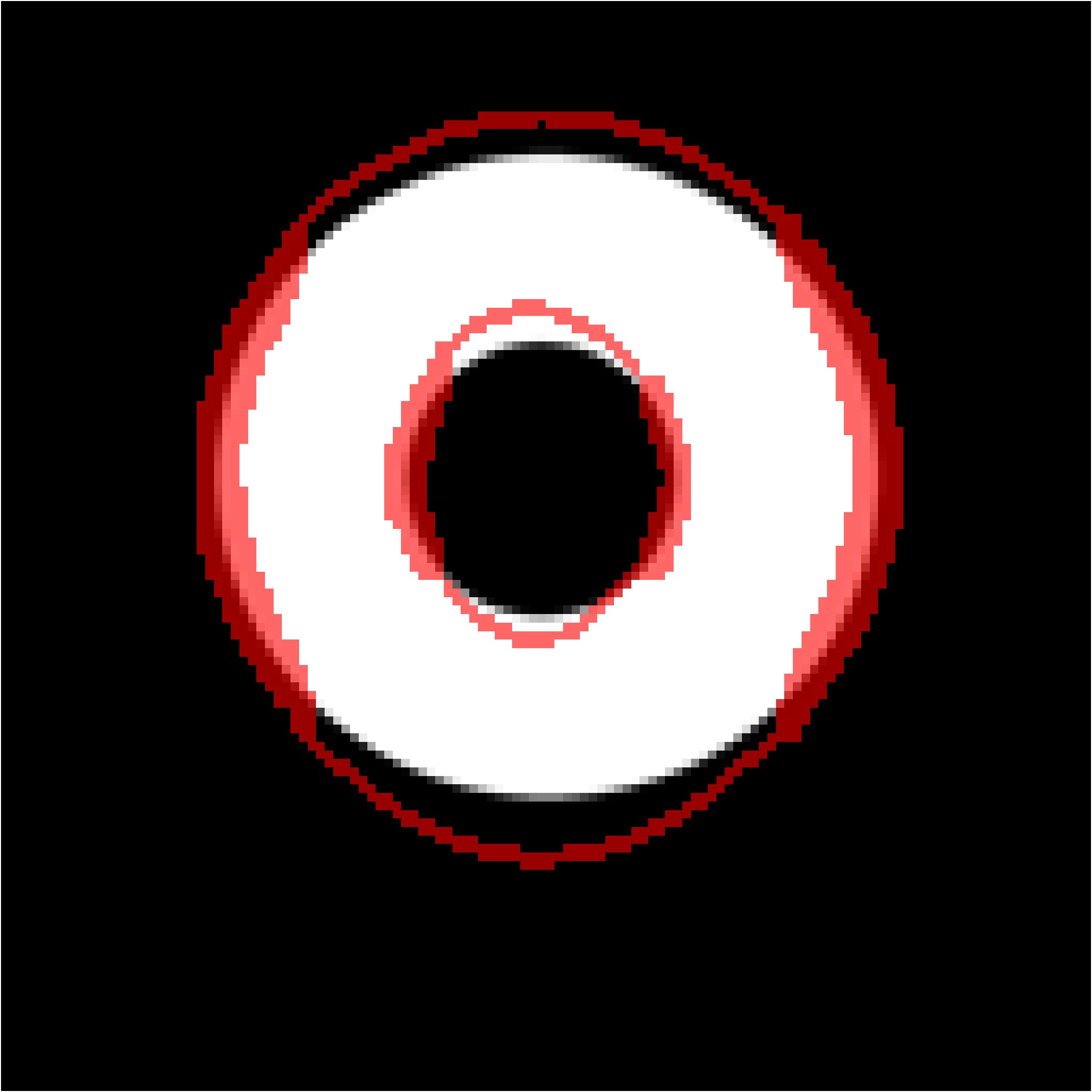} 
    \caption{Spline fill over bicubic interpolated target object.}

    \end{subfigure}
       \hspace{2mm}
\begin{subfigure}[t]{0.6  \textwidth}
    \includegraphics[width=\textwidth]{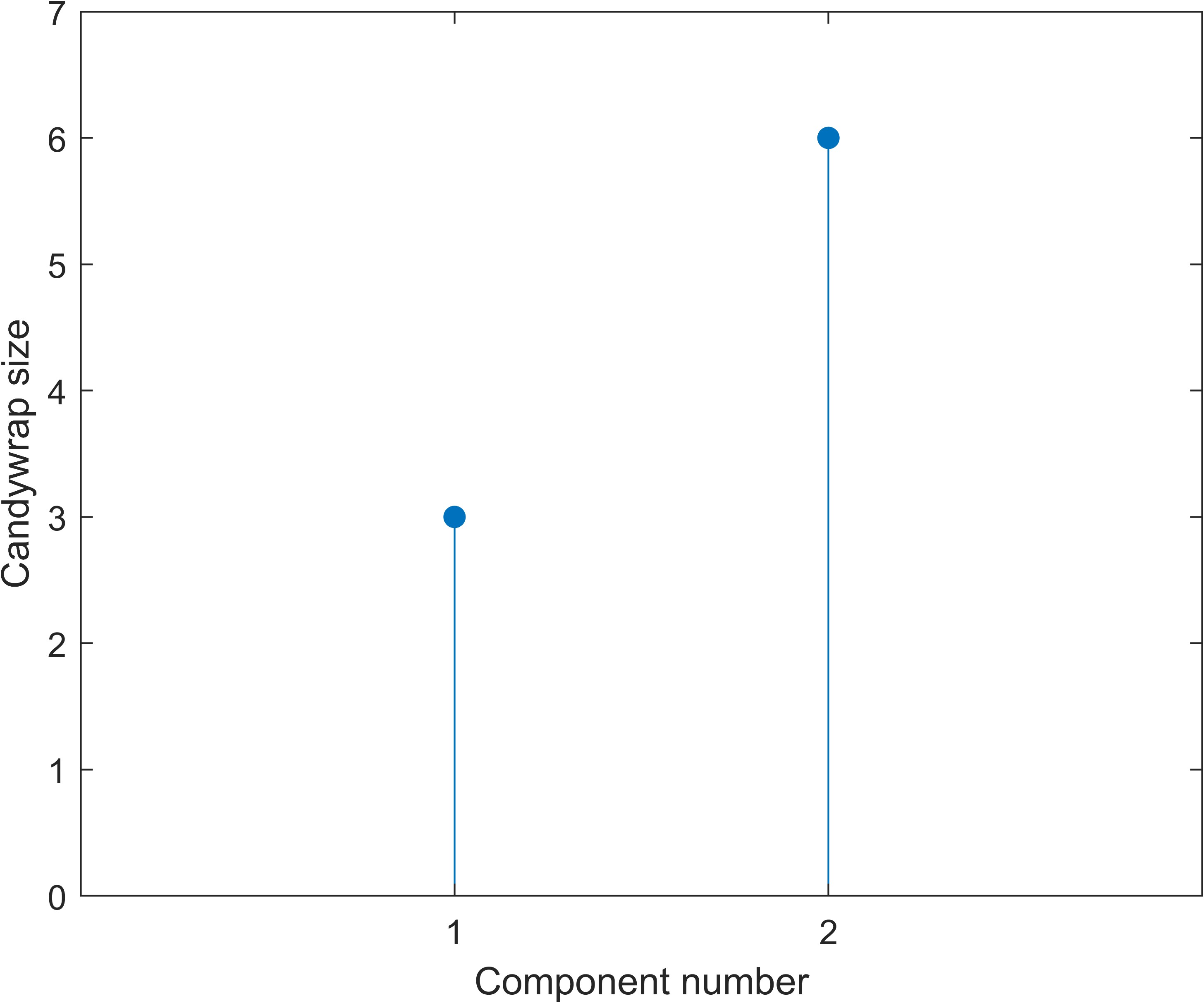}
    \caption{Number of found components and birth values of those, i.e., the size of the candywrap masks}
    \end{subfigure}
    
    \caption{The simplest target phantom containing only one inclusion which is not simply connected. Both boundary components are found with TILT.}
    \label{fig:annulus_results}
\end{figure}

\begin{figure}[!ht]
    \centering
    \begin{subfigure}[t]{0.3  \textwidth}
    \includegraphics[width=\textwidth]{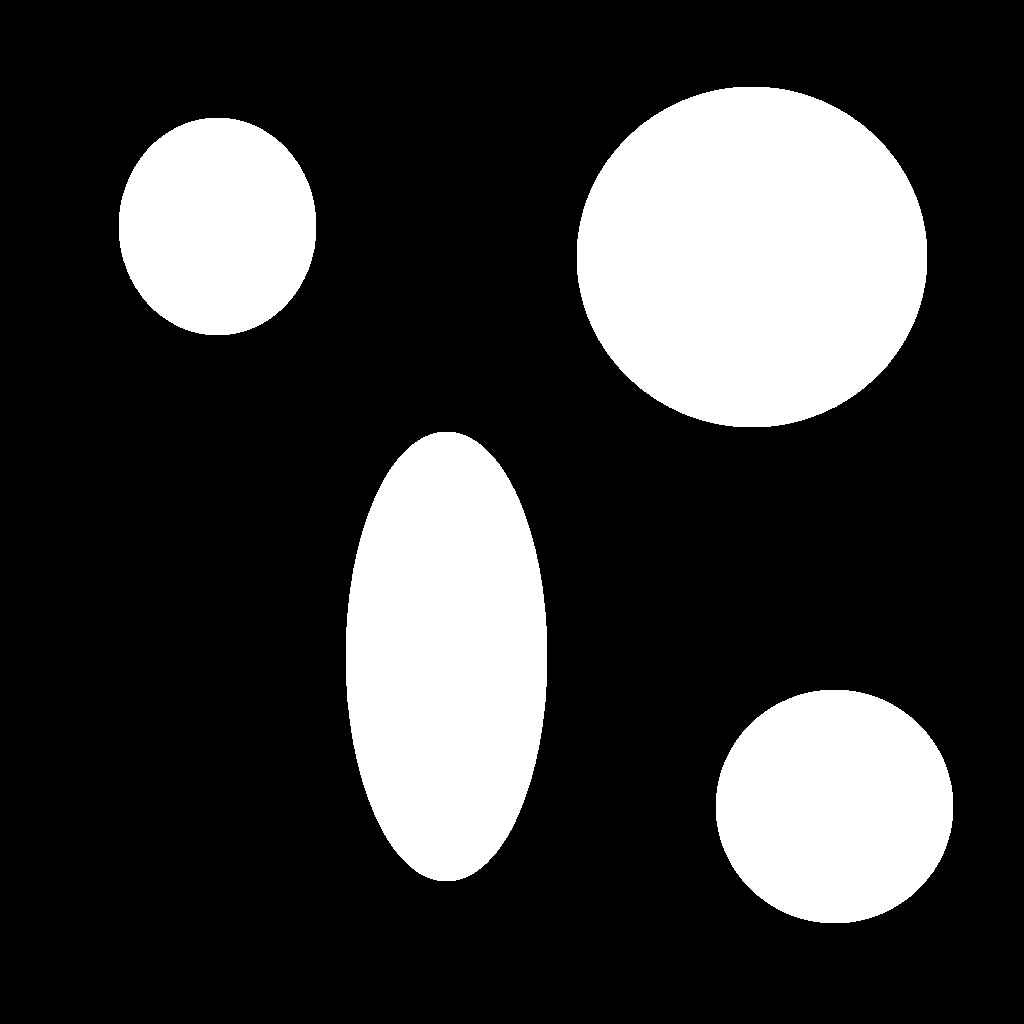} 
    \caption{A target phantom}
    \label{fig:target_multipleshapes}
    \end{subfigure}
        \hspace{2mm}
    \centering
    \begin{subfigure}[t]{0.3  \textwidth}
    \includegraphics[width=\textwidth]{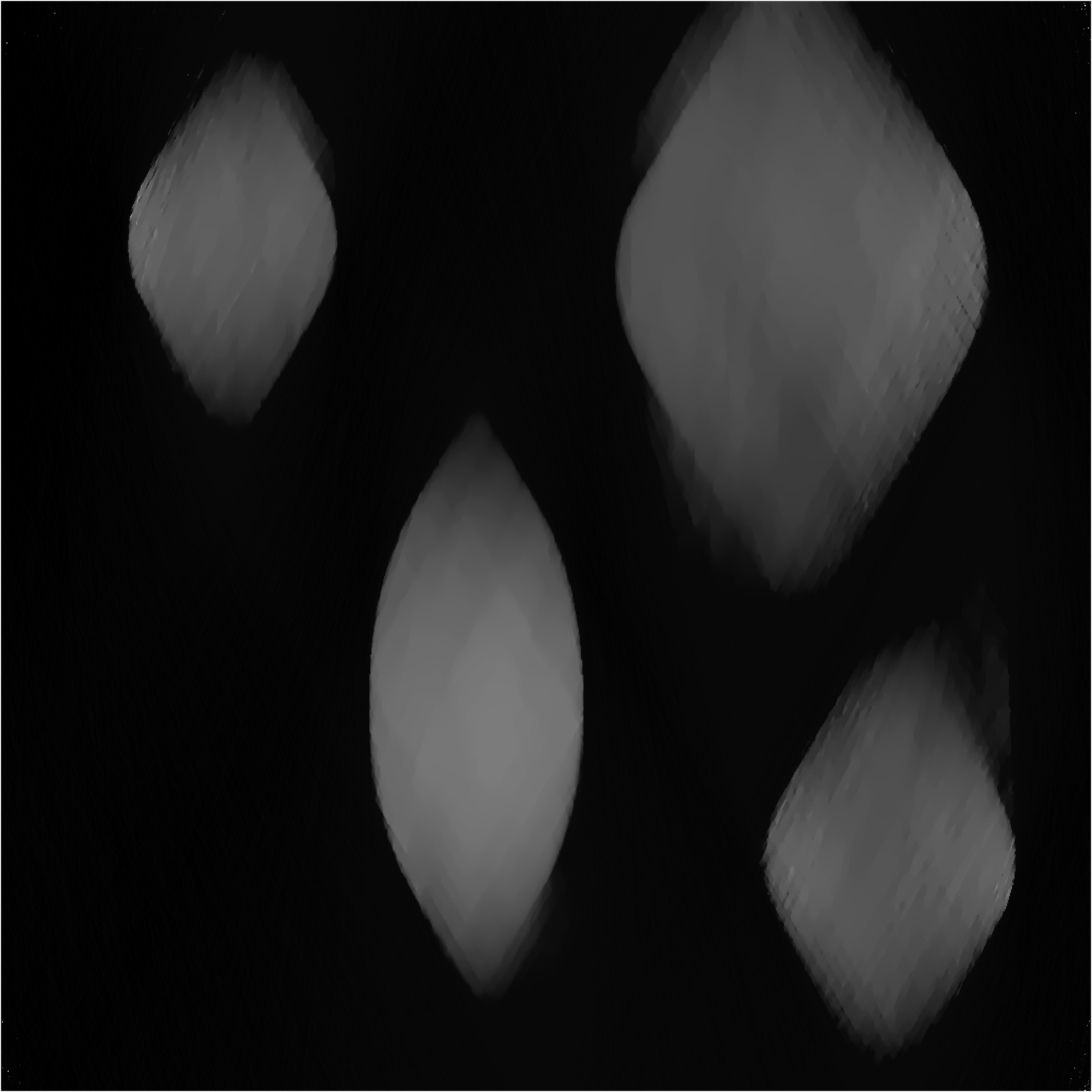} 
    \caption{An initial reconstruction of size $1024 \times 1024$ computed using total variation regularization with regularization parameter $\alpha=1$ and 2000 iterations.}
    \end{subfigure}
            \hspace{2mm}
    \begin{subfigure}[t]{0.3  \textwidth}
    \includegraphics[width=\textwidth]{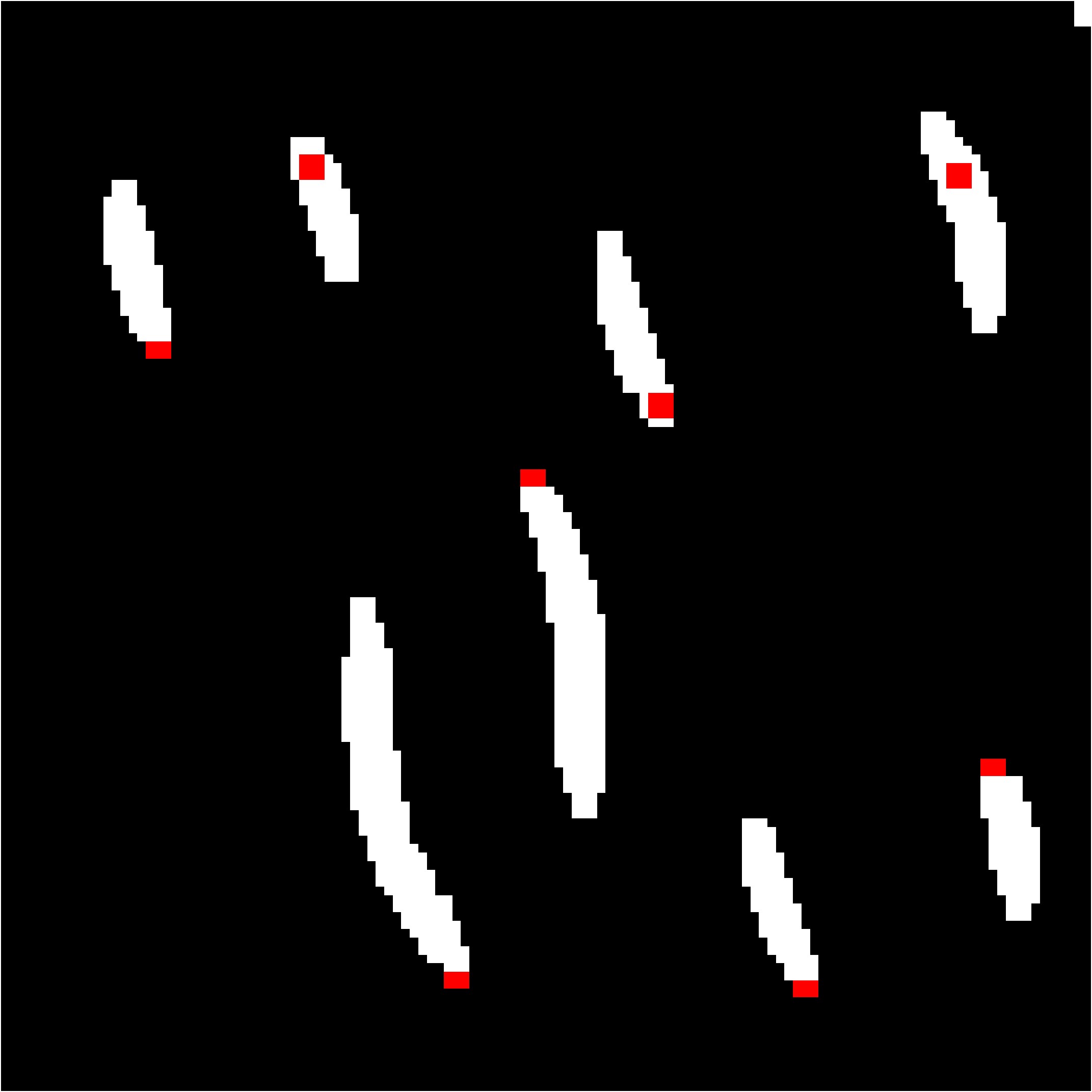} 
    \caption{Subbands $HL$ (white) of wavelet level $7$ computed using threshold value $t=0.1$ and line length $l=9$. Endpoints marked as red over the subbands.}
    \end{subfigure}
    \begin{subfigure}[t]{0.3  \textwidth}
    \includegraphics[width=\textwidth]{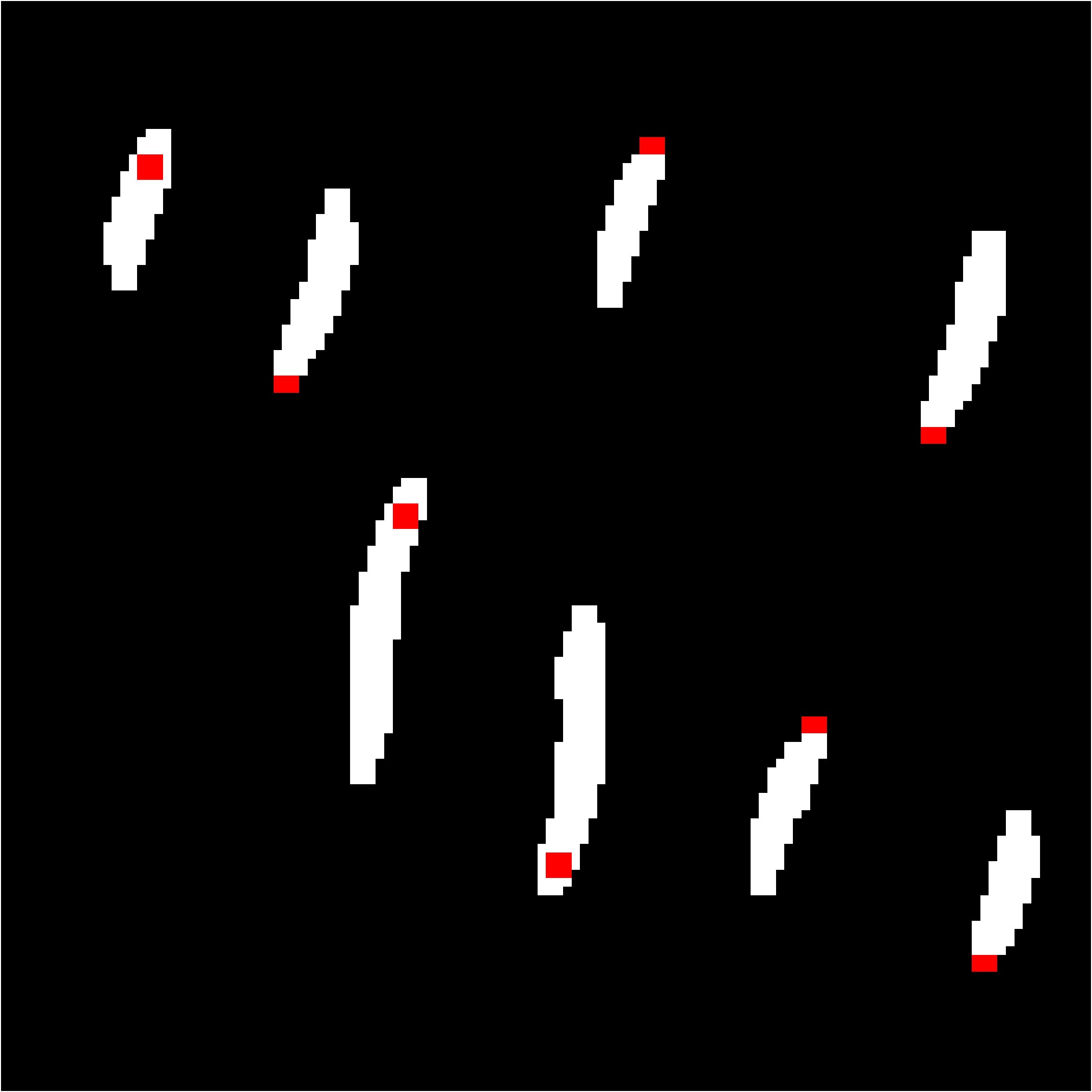} 
    \caption{Subbands $H\oL$ (white) of wavelet level $7$ computed using threshold value $t=0.1$ and line length $l=9$. Endpoints marked as red over the subbands.}
    \end{subfigure}
    \hspace{2mm}
    \begin{subfigure}[t]{0.3  \textwidth}
    \includegraphics[width=\textwidth]{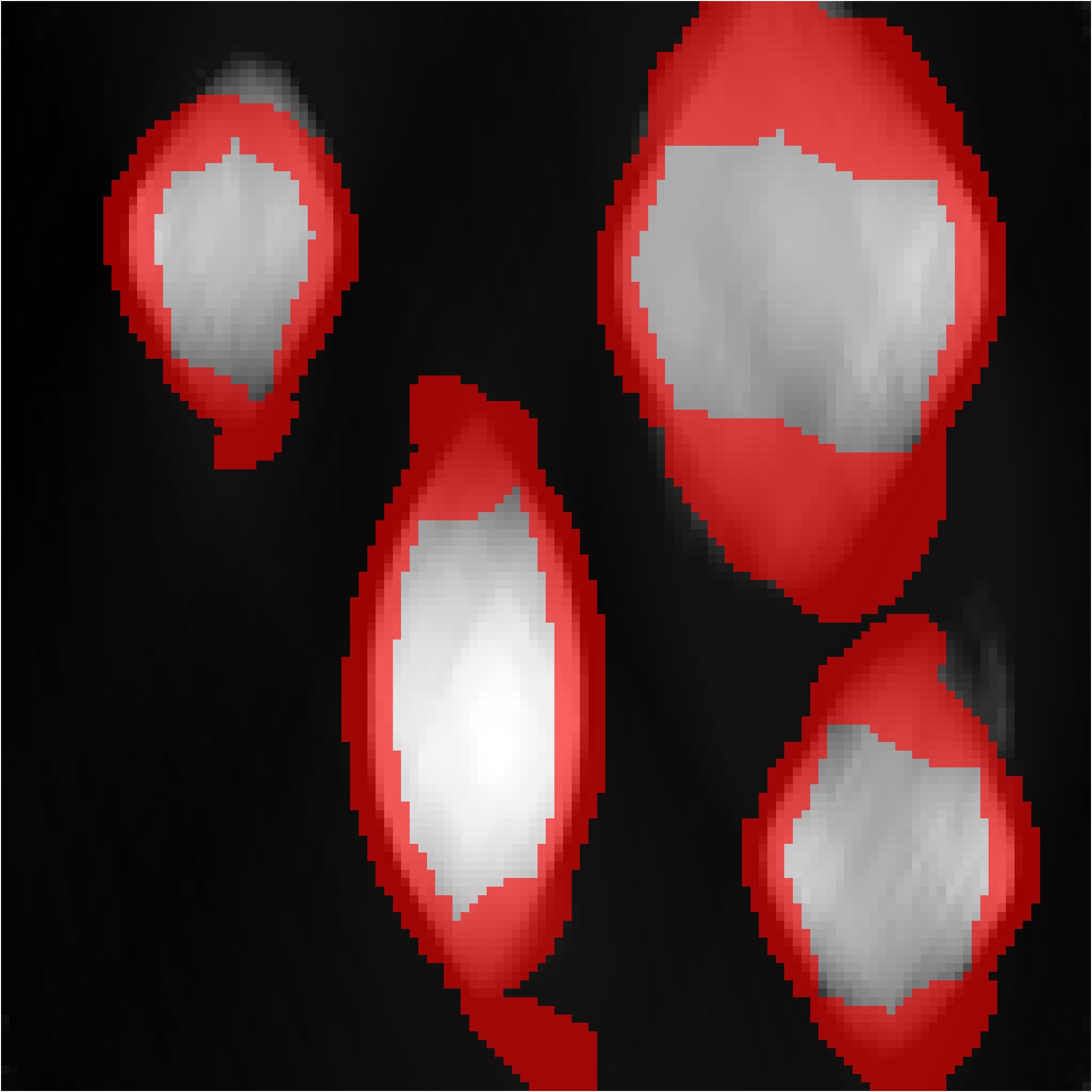} 
    \caption{Boundary neighborhood estimations over bicubic interpolated initial reconstruction.}
    \end{subfigure}
    \hspace{2mm}
\begin{subfigure}[t]{0.3  \textwidth}
    \includegraphics[width=\textwidth]{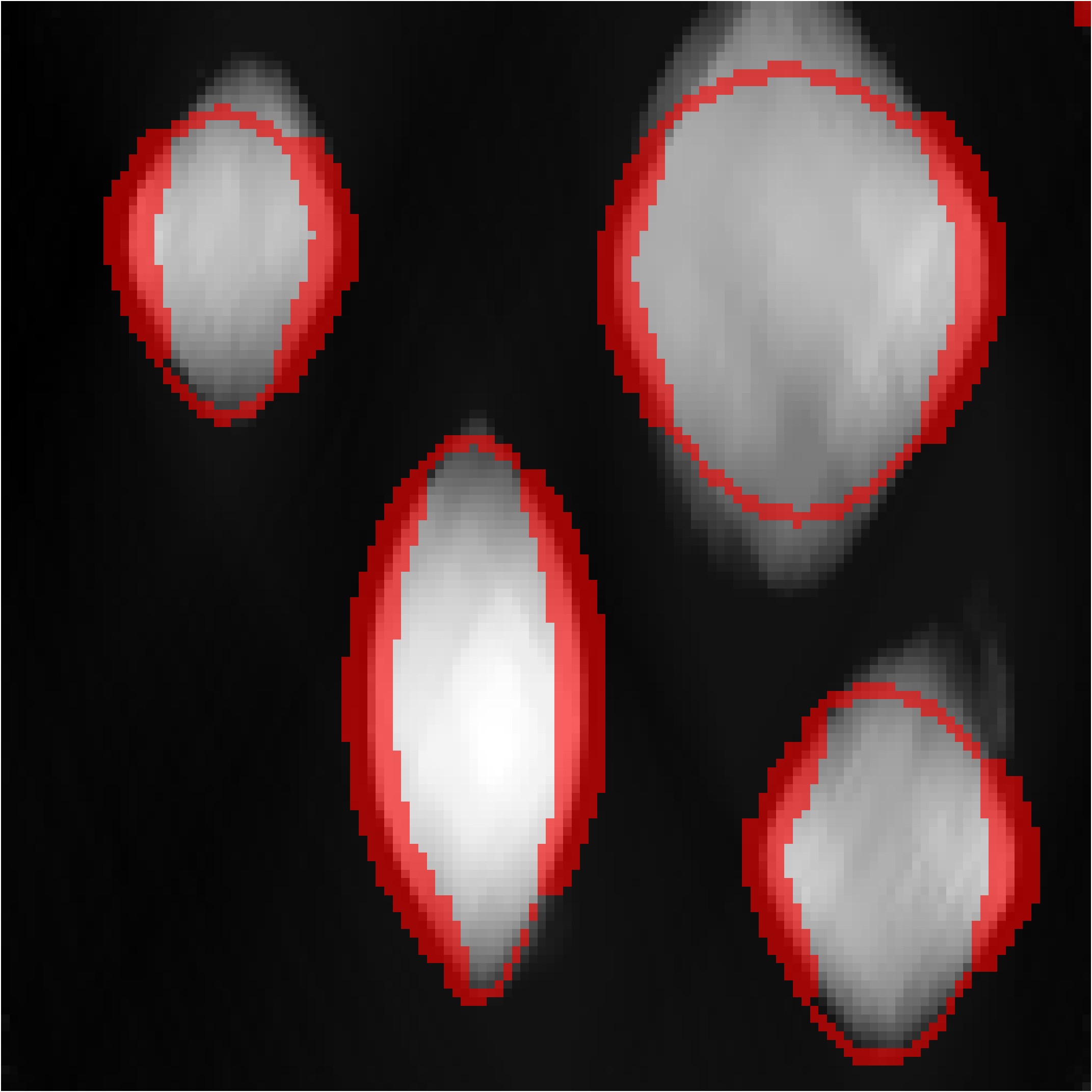} 
    \caption{Spline fill over bicubic interpolated initial reconstruction}
    \label{fig:shapes_spline}
    \end{subfigure}
    \begin{subfigure}[t]{0.3  \textwidth}
    \includegraphics[width=\textwidth]{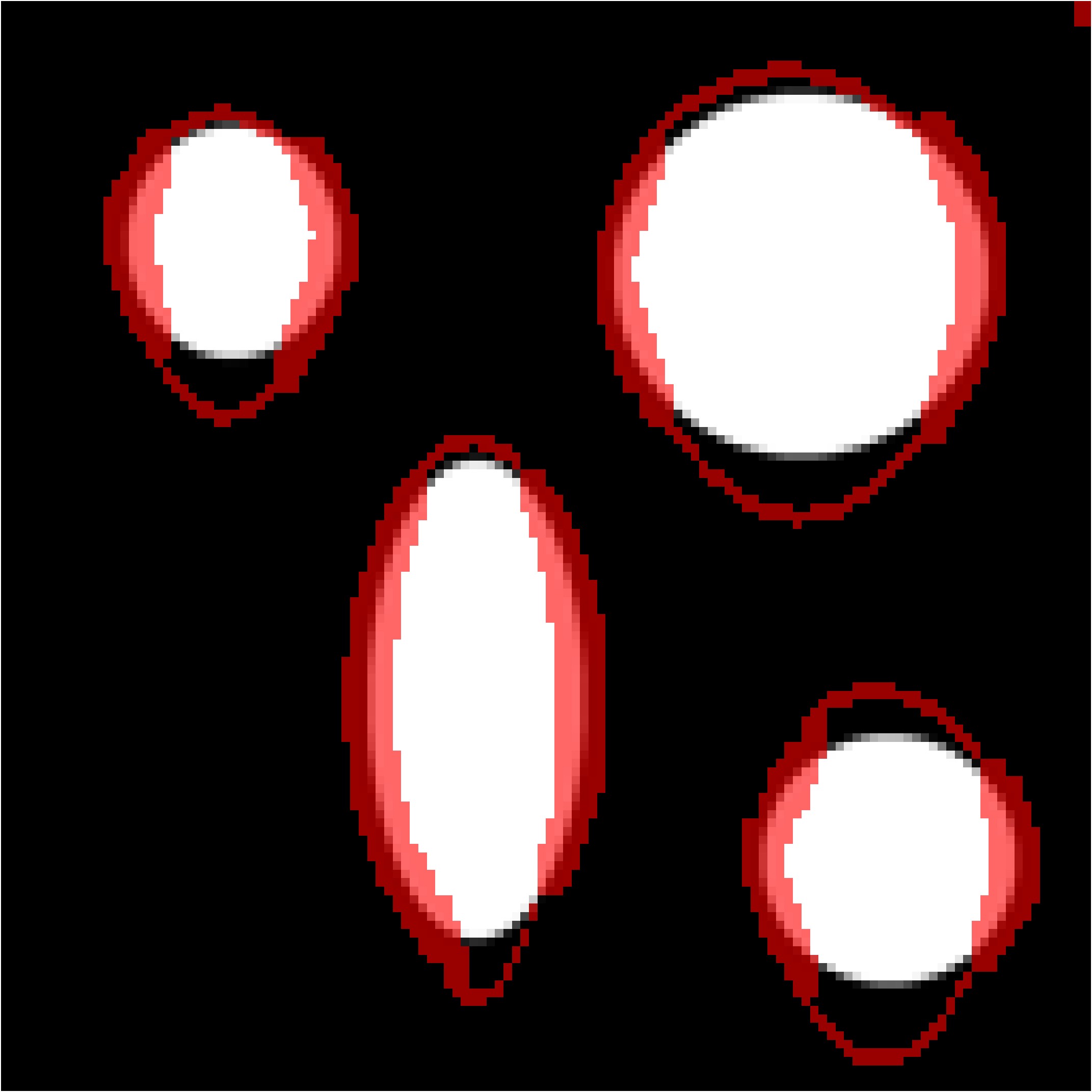} 
    \caption{Spline fill over bicubic interpolated target object.}
    \end{subfigure}
           \hspace{2mm}
\begin{subfigure}[t]{0.6  \textwidth}
    \includegraphics[width=\textwidth]{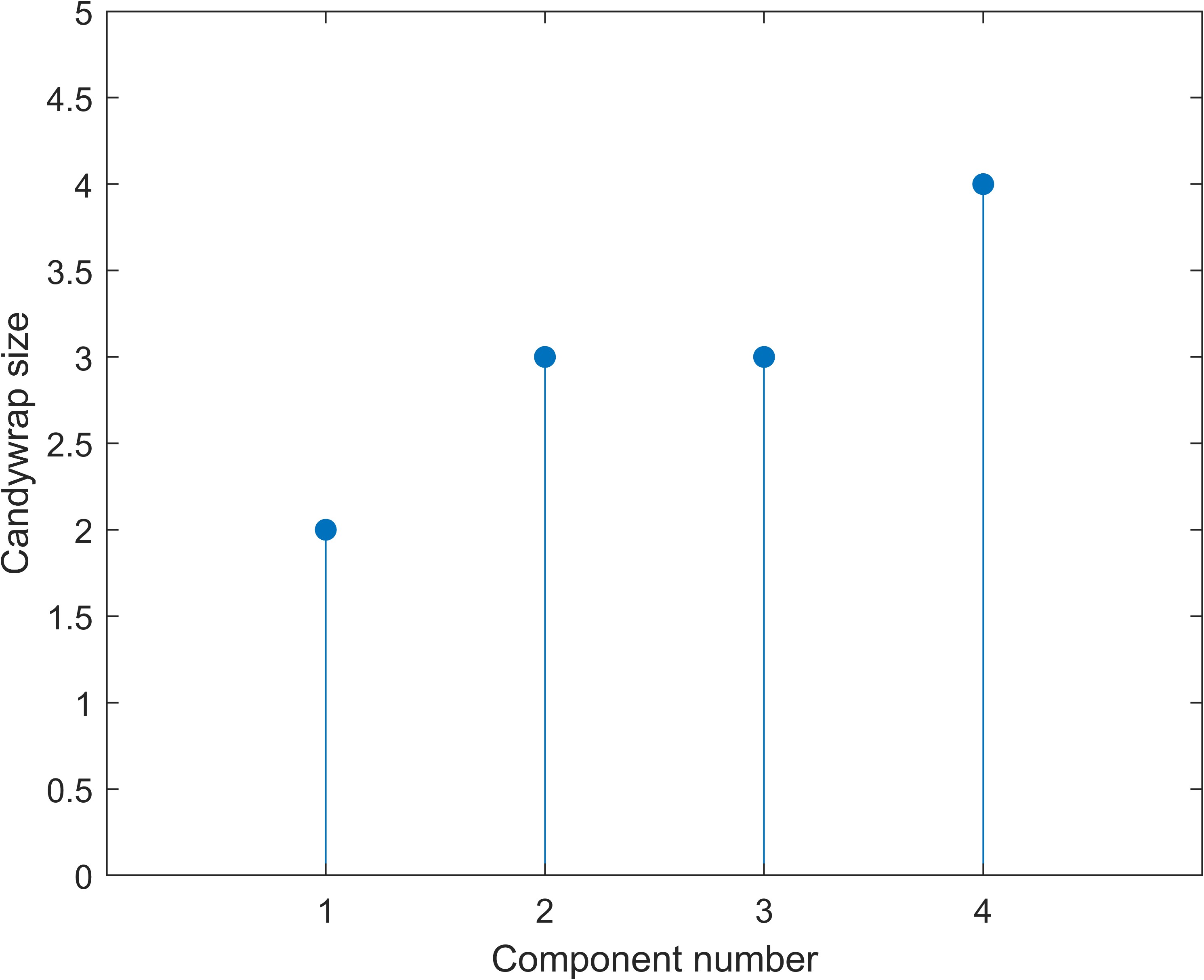}
    \caption{Number of components found and their birth values, i.e., the size of the candywrap masks}
    \end{subfigure}
    
    \caption{A target phantom containing multiple well-separated simple connected inclusions. TILT found all boundary components.}
    \label{fig:ellipses_results}
\end{figure}

\begin{figure}[!ht]
    \centering
    \begin{subfigure}[t]{0.3  \textwidth}
    \includegraphics[width=\textwidth]{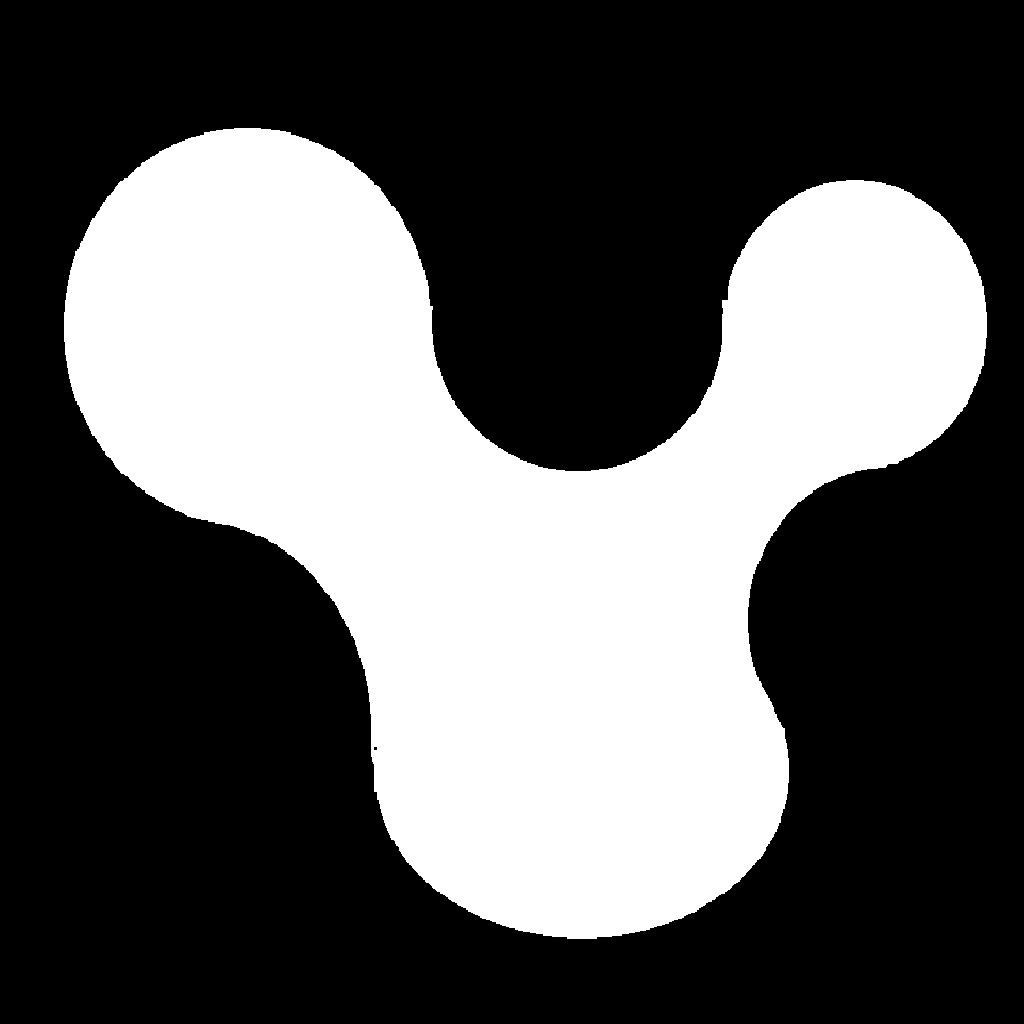} 
    \caption{A target phantom}
    \label{fig:target_blob}
    \end{subfigure}
    \centering
            \hspace{2mm}
    \begin{subfigure}[t]{0.3  \textwidth}
    \includegraphics[width=\textwidth]{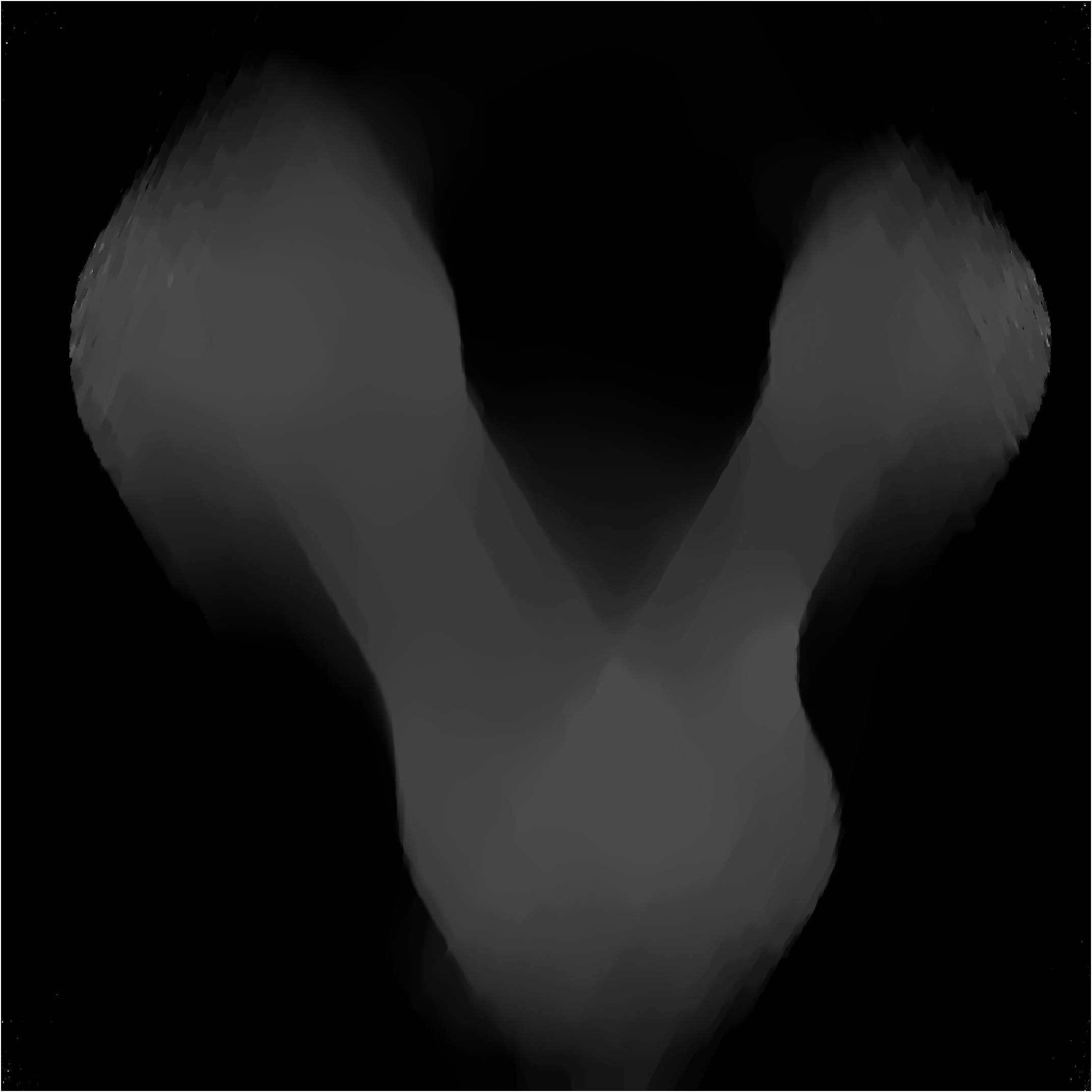} 
    \caption{An initial reconstruction of size $1024 \times 1024$ computed using total variation regularization with regularization parameter $\alpha=1$ and 2000 iterations.}
    \end{subfigure}
            \hspace{2mm}
    \begin{subfigure}[t]{0.3  \textwidth}
    \includegraphics[width=\textwidth]{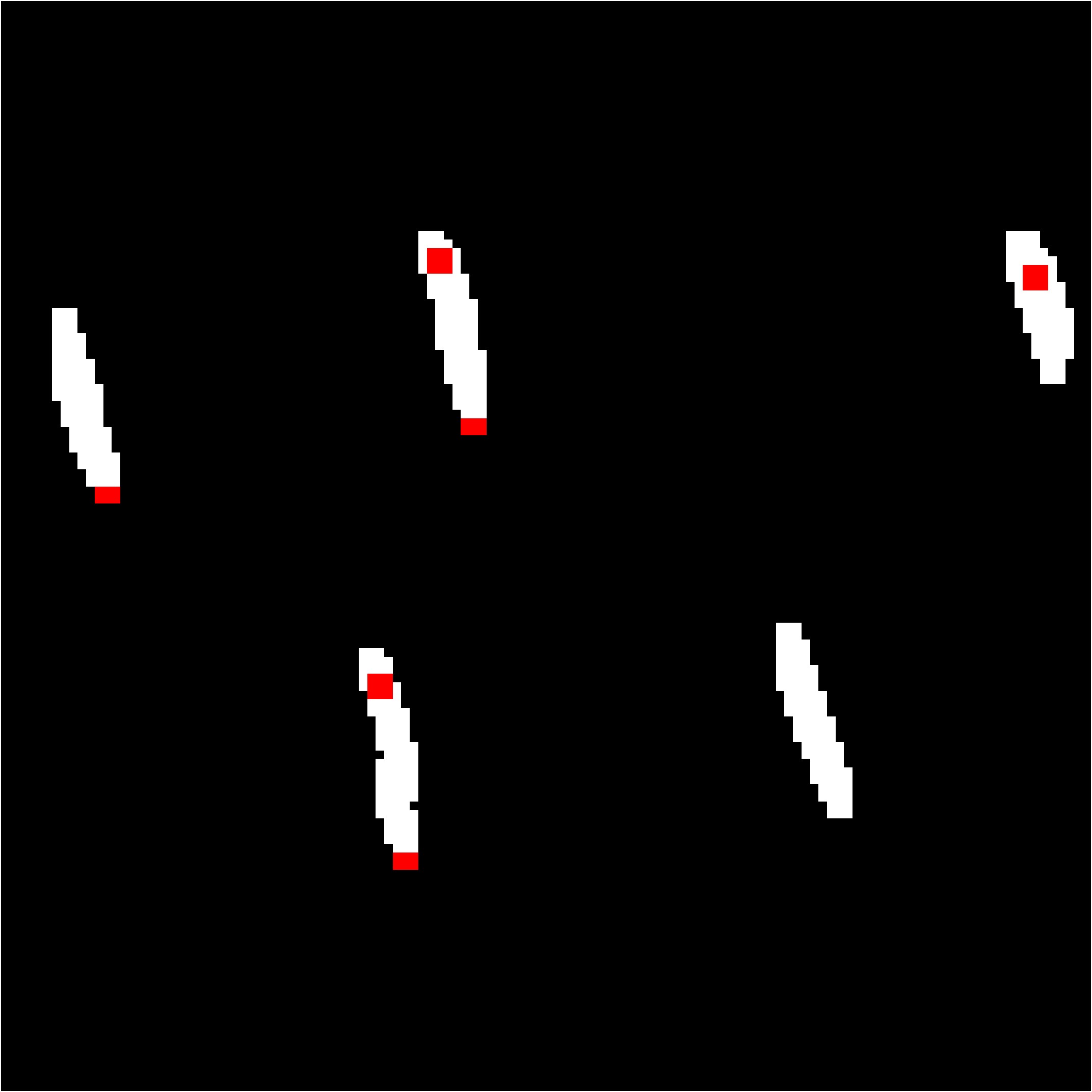} 
    \caption{Subbands $HL$ (white) of wavelet level $7$ computed using threshold value $t=0.1$ and line length $l=9$. Endpoints marked as red over the subbands.}
    \end{subfigure}
    \begin{subfigure}[t]{0.3  \textwidth}
    \includegraphics[width=\textwidth]{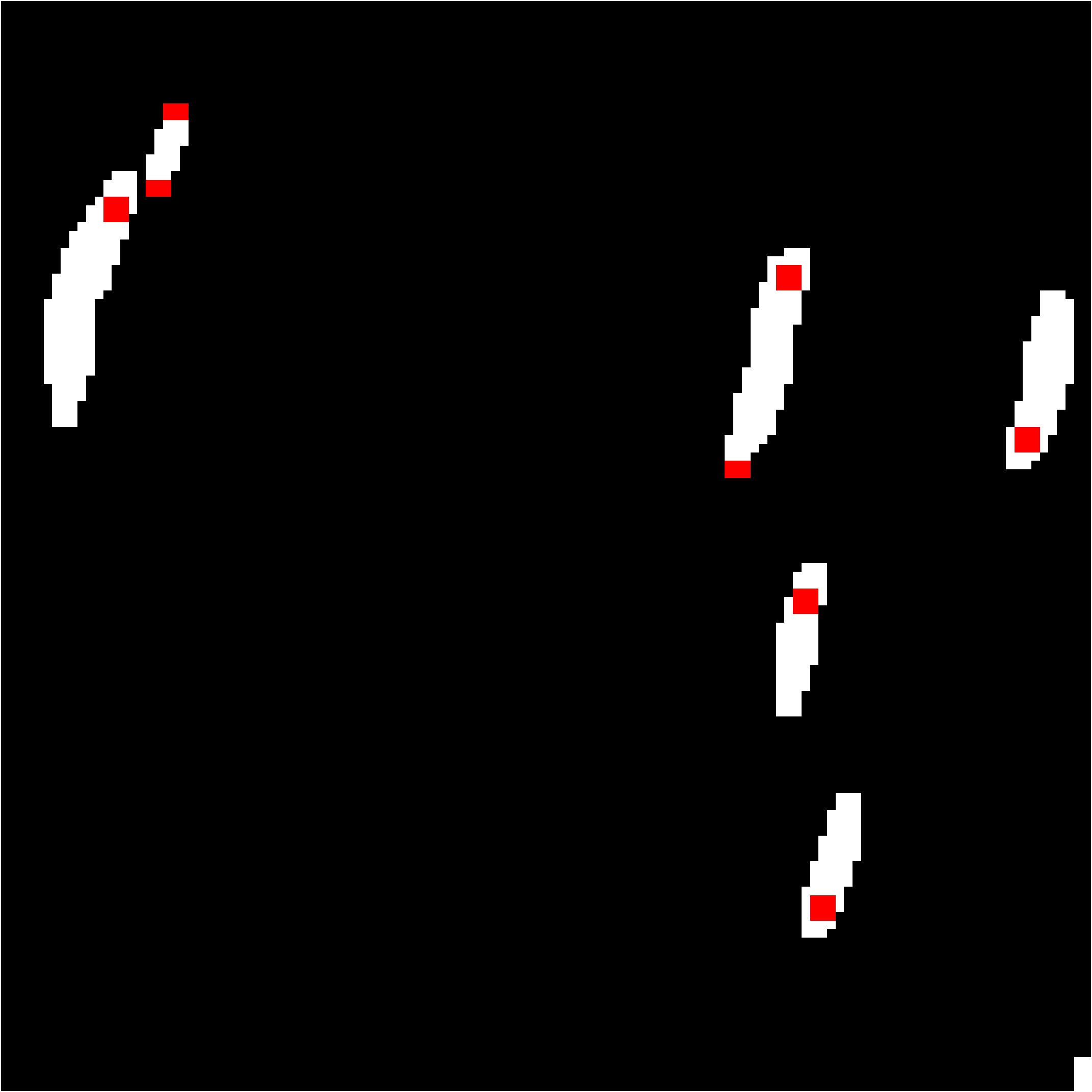} 
    \caption{Subbands $H\oL$ (white) of wavelet level $7$ computed using threshold value $t=0.1$ and line length $l=9$. Endpoints marked as red over the subbands.}
    \end{subfigure}
                \hspace{2mm}
    \begin{subfigure}[t]{0.3  \textwidth}
    \includegraphics[width=\textwidth]{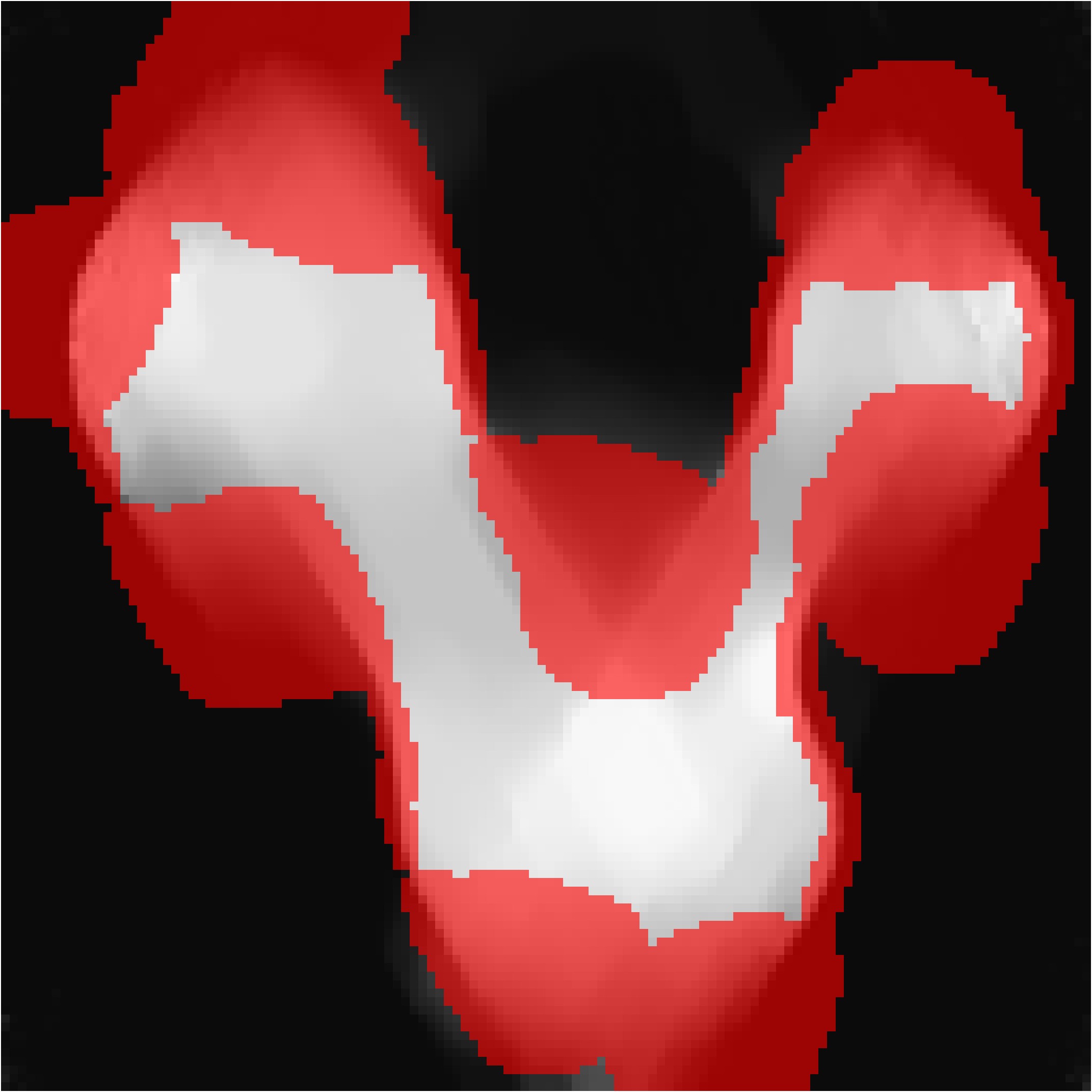} 
    \caption{Boundary neighborhood estimations over bicubic interpolated initial reconstruction.}
    \end{subfigure}
            \hspace{2mm}
\begin{subfigure}[t]{0.3  \textwidth}
    \includegraphics[width=\textwidth]{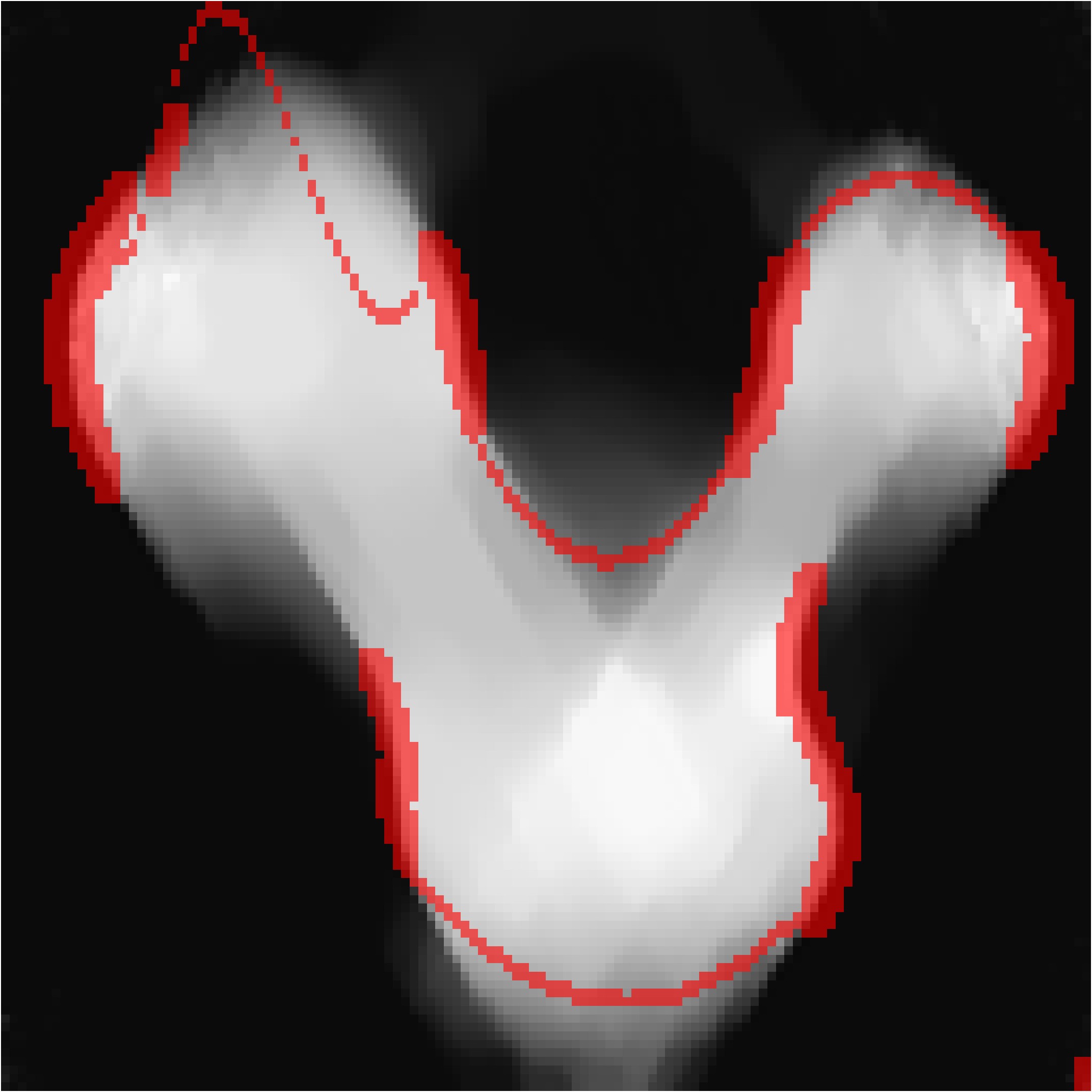} 
    \caption{Spline fill over bicubic interpolated initial reconstruction}
    \label{fig:blob_spline}
    \end{subfigure}
    \begin{subfigure}[t]{0.3  \textwidth}
    \includegraphics[width=\textwidth]{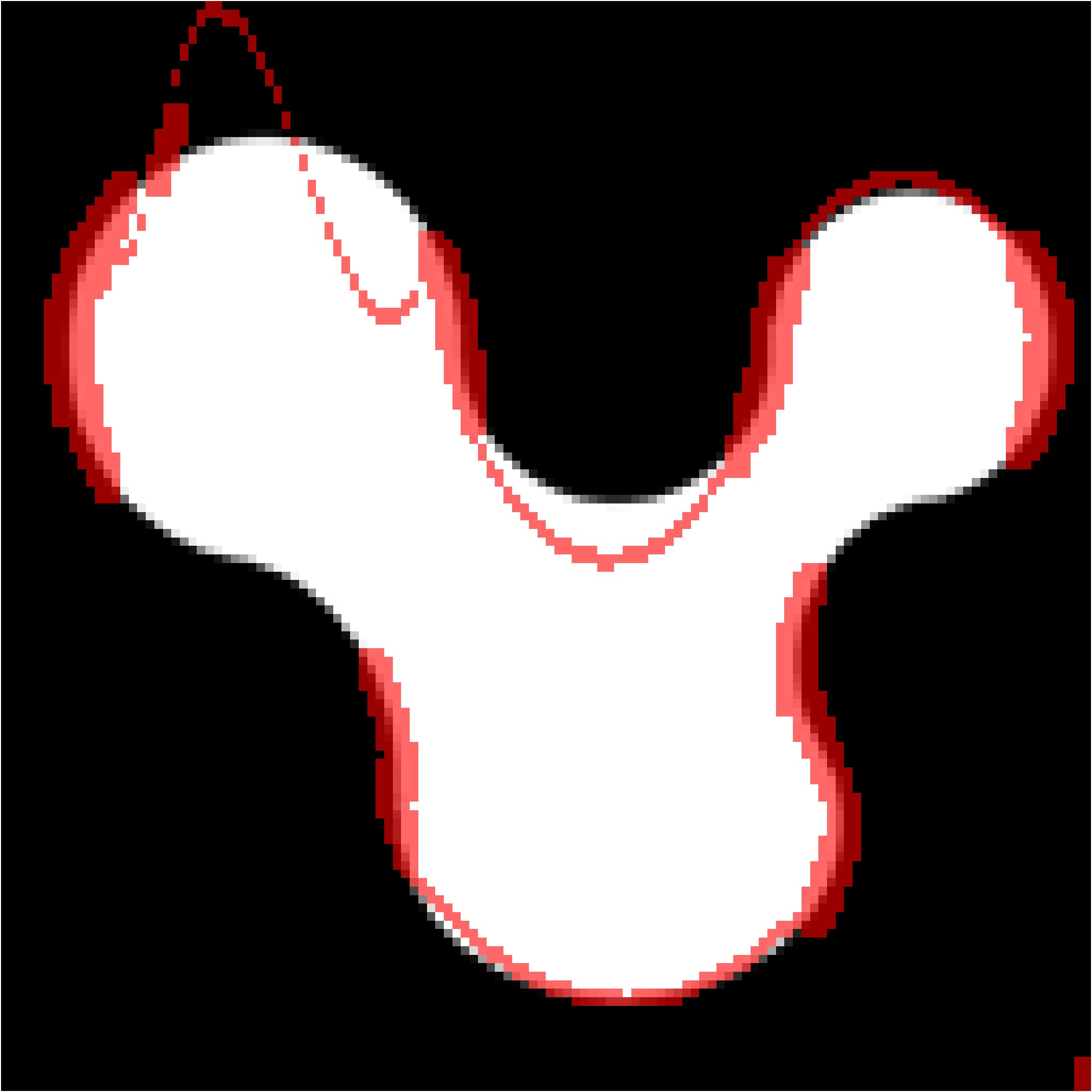} 
    \caption{Spline fill over bicubic interpolated target object.}
    \end{subfigure}
   \hspace{2mm}
\begin{subfigure}[t]{0.6  \textwidth}
    \includegraphics[width=\textwidth]{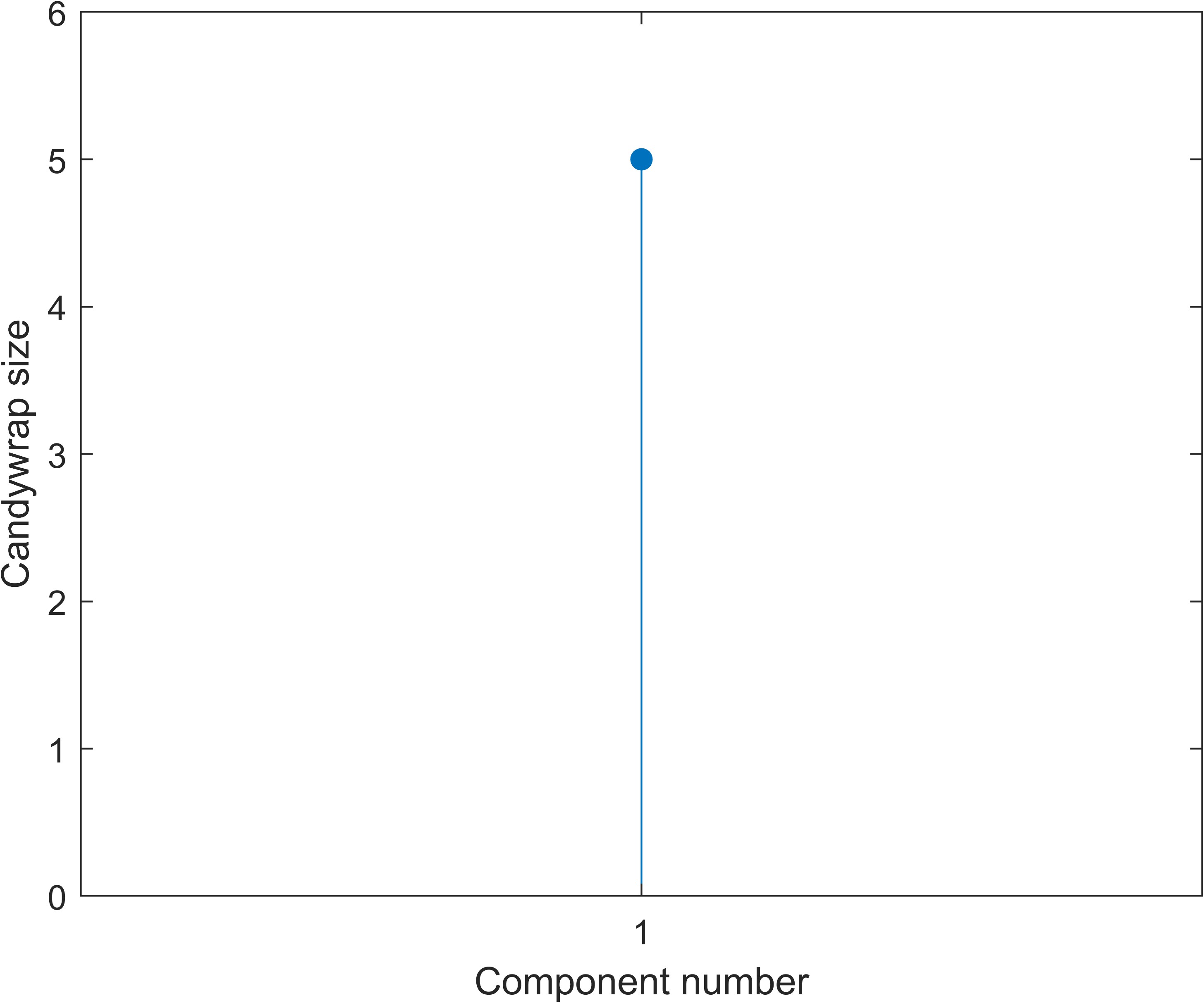}
    \caption{Number of found components and birth values of those, i.e., the size of the candywrap masks}
    \end{subfigure}

    \caption{A non-convex target phantom containing one boundary component. The proposed method found one boundary component. Imperfections in subbands, see Subfigure D, leads neighborhoods to partly wrong directions. We still found one boundary component when using TILT. }
    \label{fig:blob_results}
\end{figure}

\begin{figure}[!ht]
    \centering
    \begin{subfigure}[t]{0.3  \textwidth}
    \includegraphics[width=\textwidth]{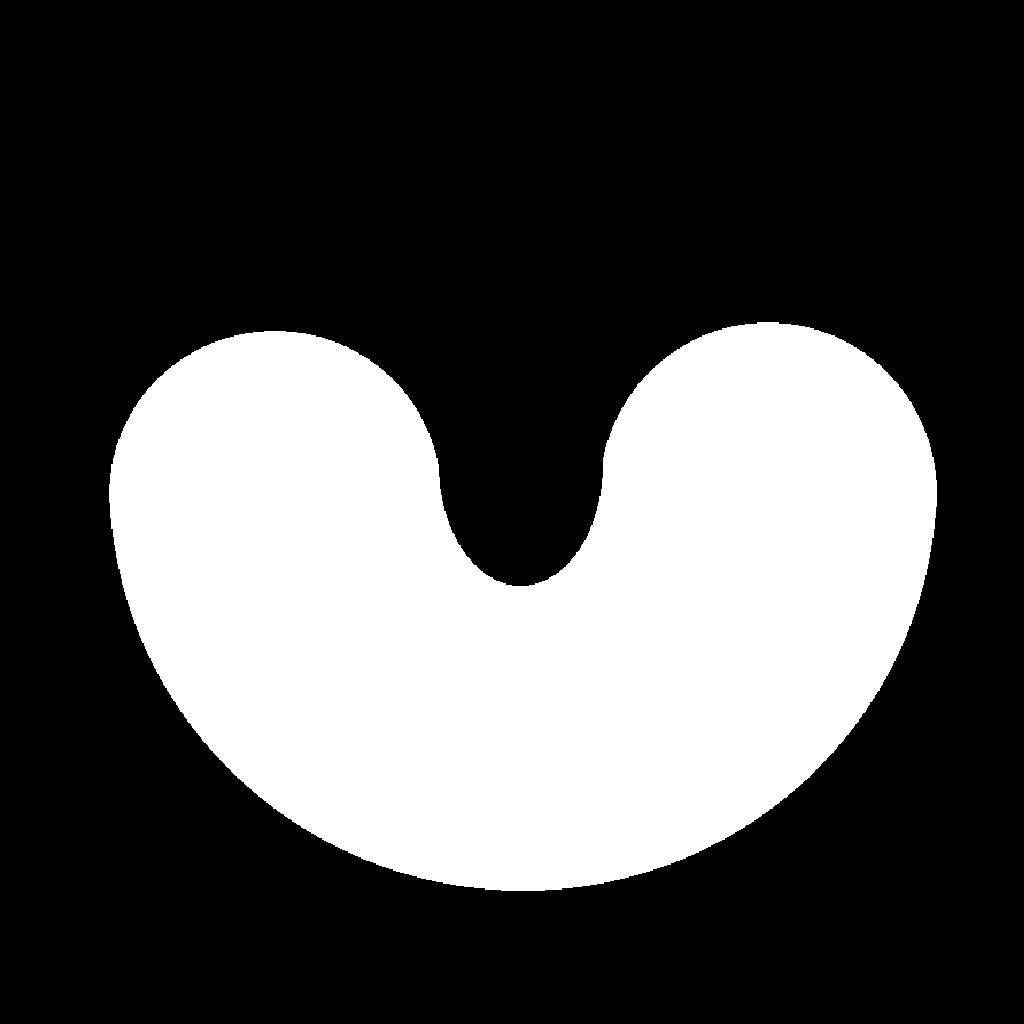} 
    \caption{A target phantom}
    \label{fig:target_makkara}
    \end{subfigure}
    \centering
                \hspace{2mm}
    \begin{subfigure}[t]{0.3  \textwidth}
    \includegraphics[width=\textwidth]{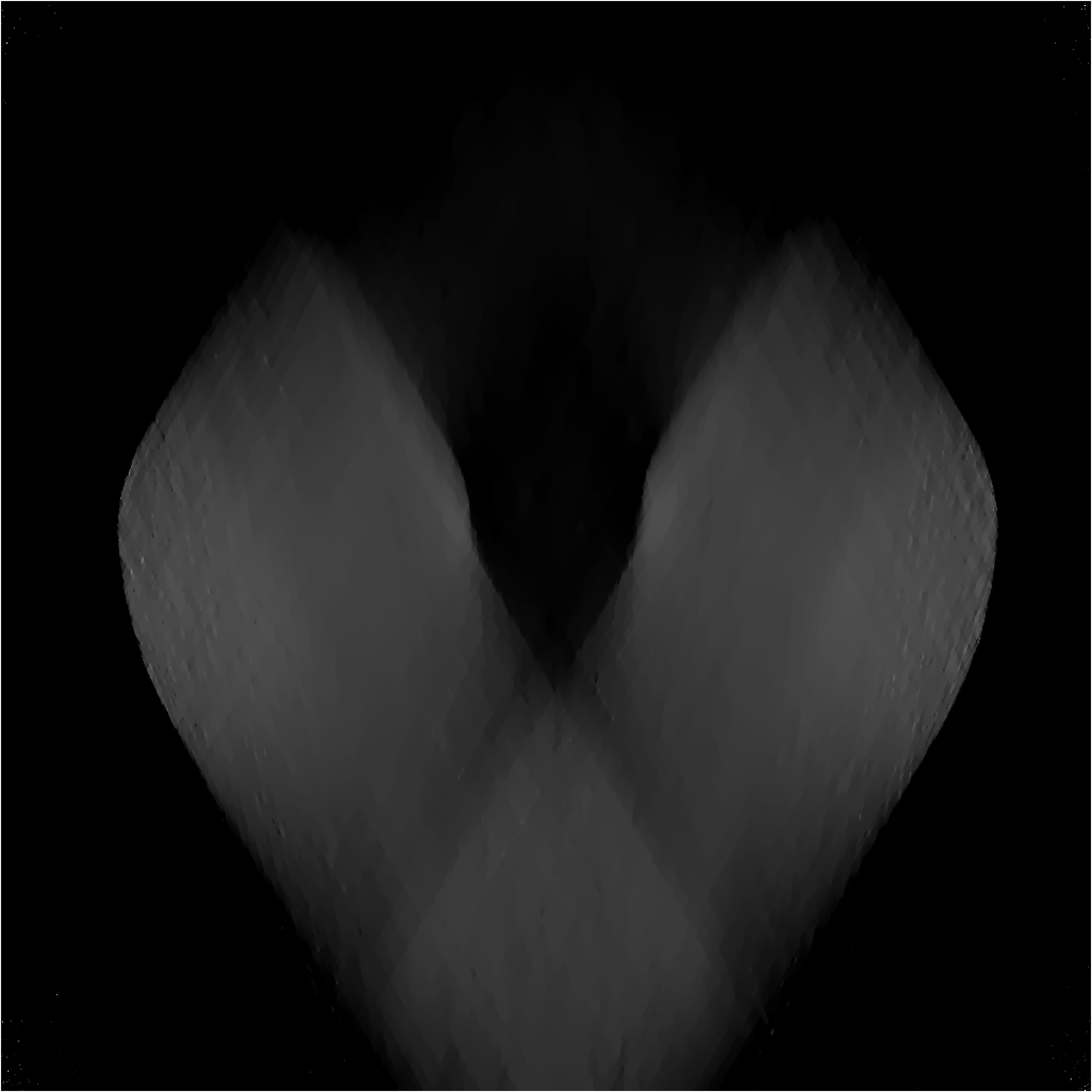} 
    \caption{An initial reconstruction of size $1024 \times 1024$ computed using total variation regularization with regularization parameter $\alpha=1$ and 2000 iterations.}
    \end{subfigure}
                \hspace{2mm}
    \begin{subfigure}[t]{0.3  \textwidth}
    \includegraphics[width=\textwidth]{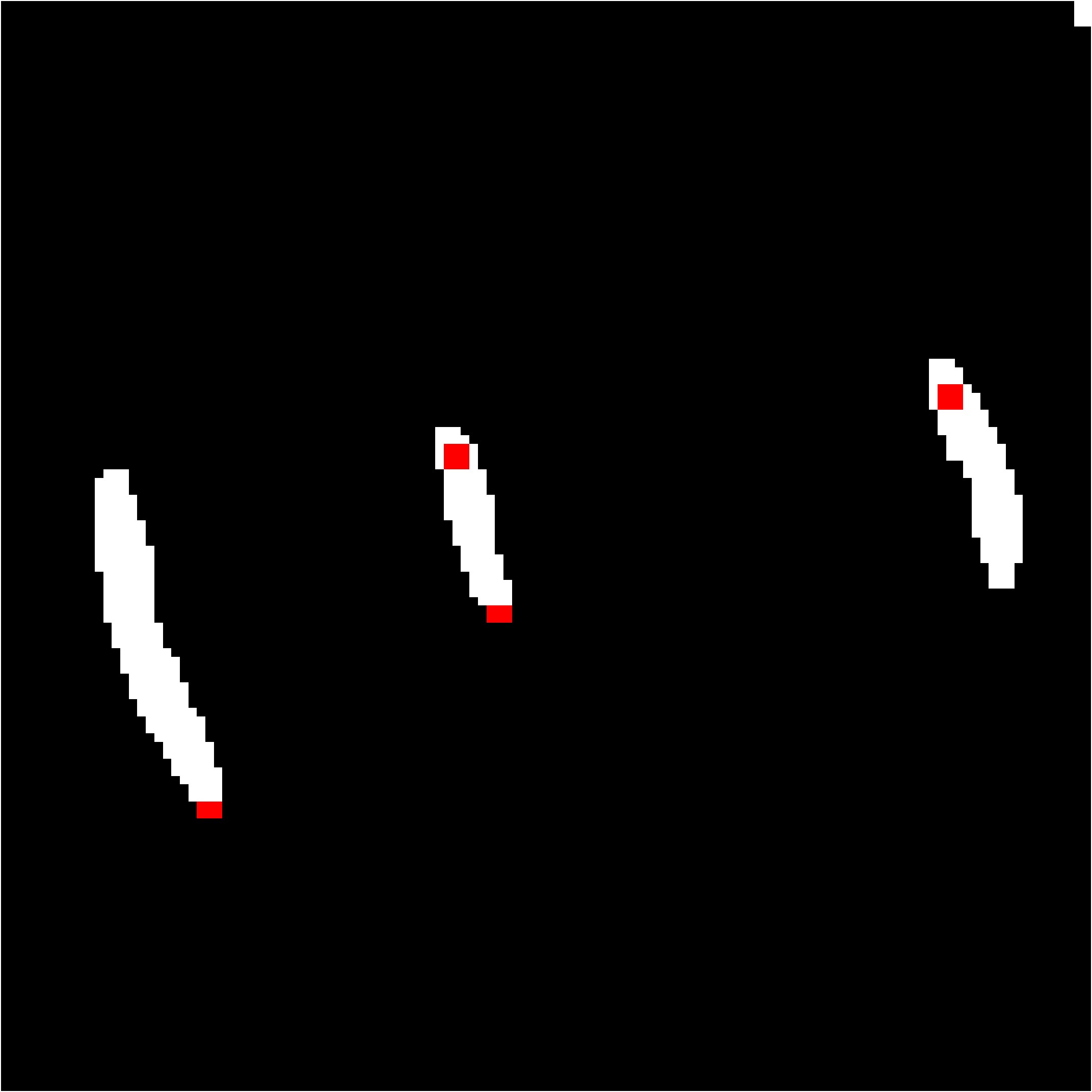} 
    \caption{Subbands $HL$ (white) of wavelet level $7$ computed using threshold value $t=0.1$ and line length $l=9$. Endpoints marked as red over the subbands.}
    \end{subfigure}
    \begin{subfigure}[t]{0.3  \textwidth}
    \includegraphics[width=\textwidth]{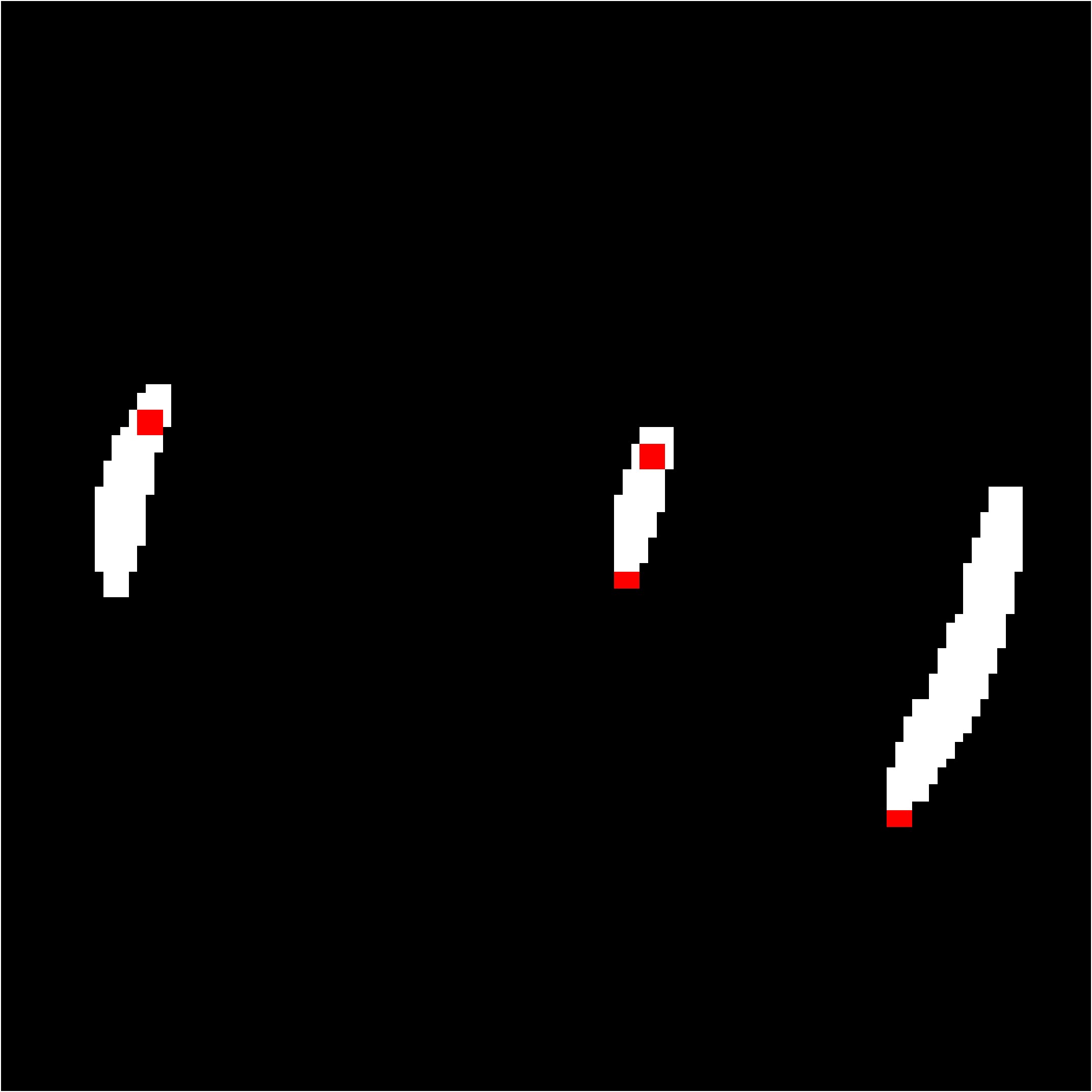} 
    \caption{Subbands $H\oL$ (white) of wavelet level $7$ computed using threshold value $t=0.1$ and line length $l=9$. Endpoints marked as red over the subbands.}
    \end{subfigure}
                \hspace{2mm}
    \begin{subfigure}[t]{0.3  \textwidth}
    \includegraphics[width=\textwidth]{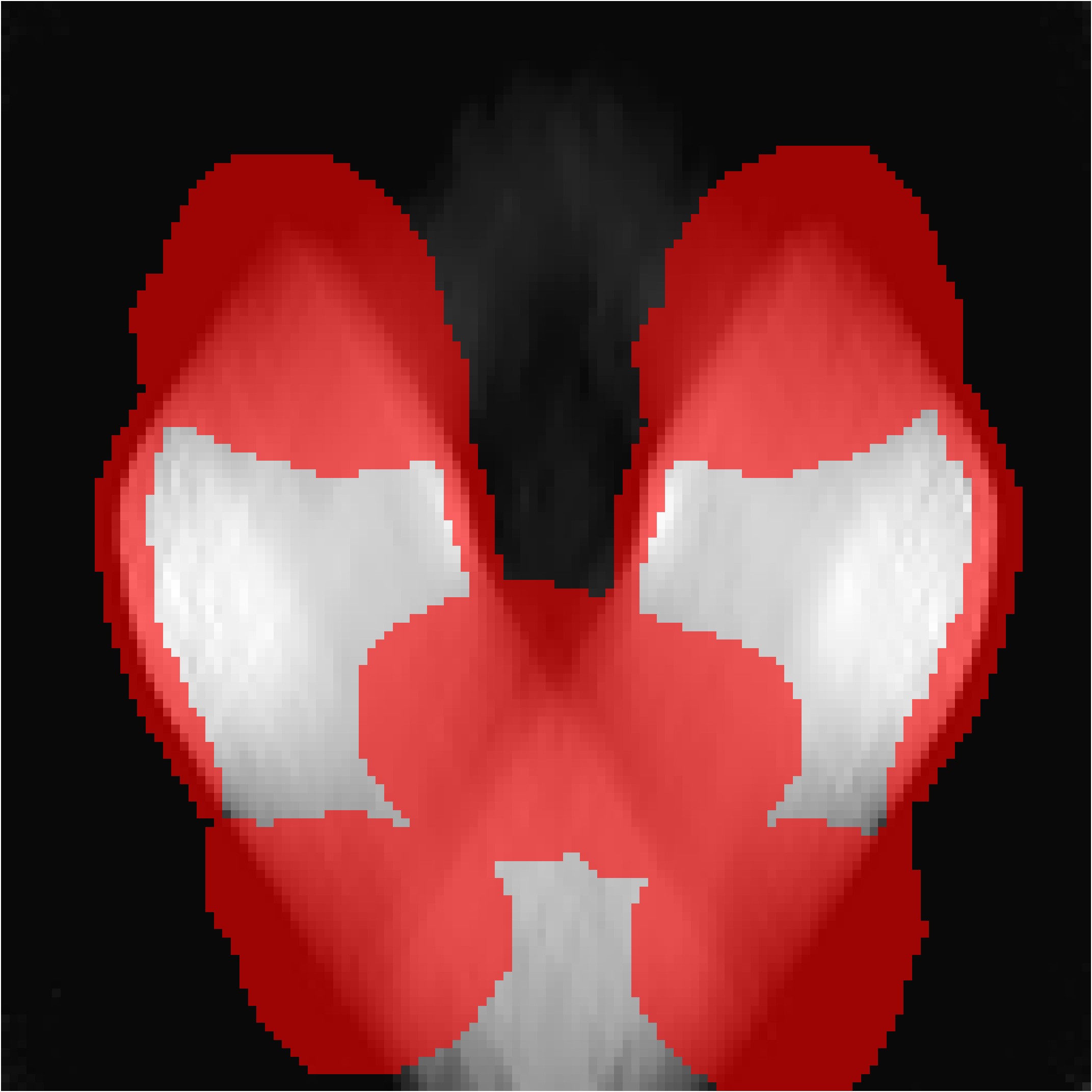} 
    \caption{Boundary neighborhood estimations over bicubic interpolated initial reconstruction.}
    \end{subfigure}
            \hspace{2mm}
\begin{subfigure}[t]{0.6  \textwidth}
    \includegraphics[width=\textwidth]{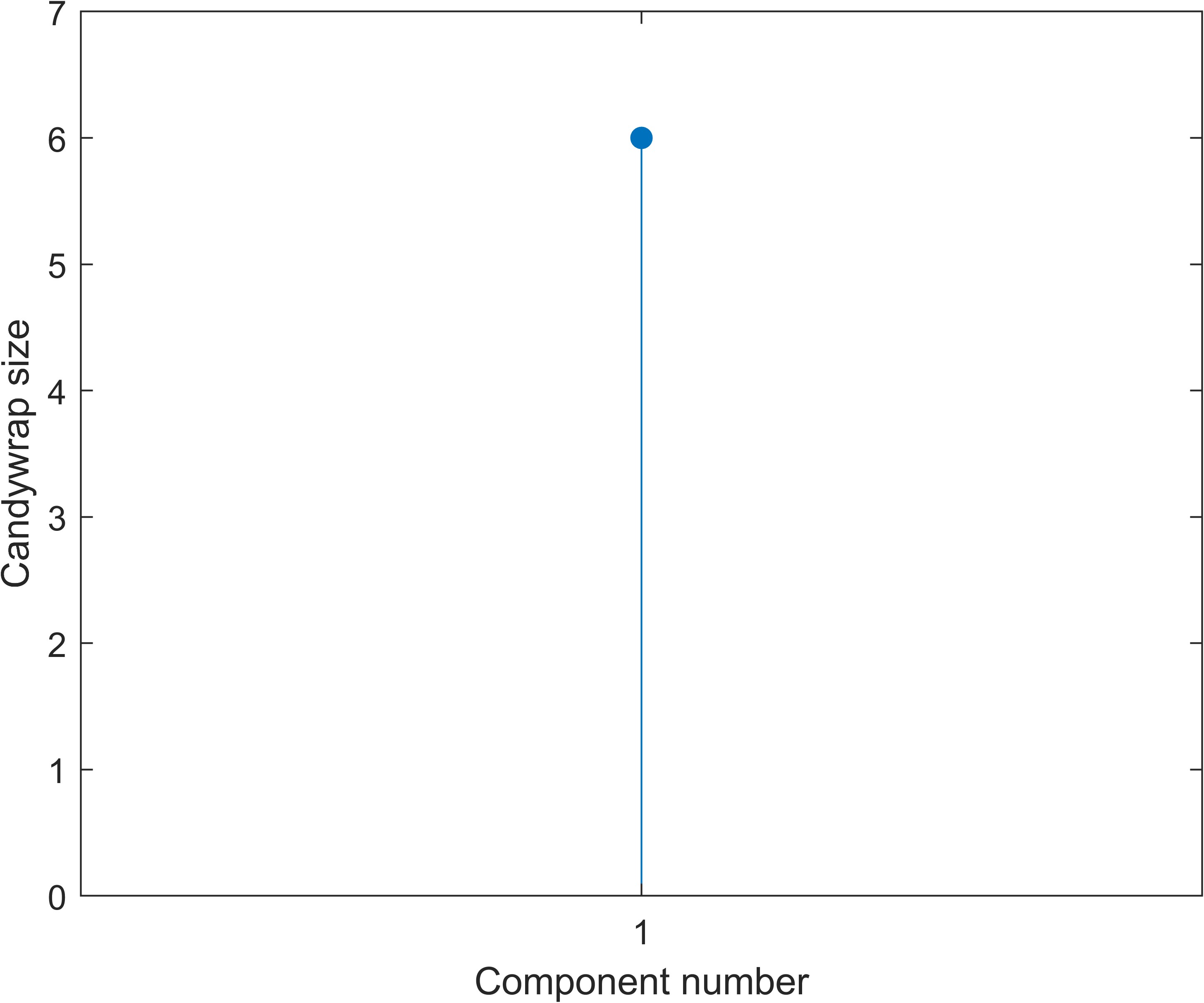}
    \caption{Number of found components and birth values of those, i.e., the size of the candywrap masks}
    \end{subfigure}
    \caption{A target phantom with non-convex inclusion with too much curvature for the proposed method. The bottom part of the missing boundary is not fully covered by neighborhood because the middle part neighborhoods and the bottom neighborhoods merge together when using TILT. }
    \label{fig:high_curvature_results}
\end{figure}


\section{Discussion} \label{sec:discussion}

In this paper, we propose a novelty way to use persistent homology with non-euclidean distance on the highly ill-posed inverse problem of limited-angle tomography. This study is inspired by the need to separate boundaries of structures, which is a demanding task in limited-angle cases. That is because only part of the wavefront set can be recovered stably. In our method, we estimate neighborhoods of invisible singularities, which give uncertain areas where the boundaries lie. We emphasize that each step in our method is theoretically justified. Also, every step's results are traceable, and hence the final results can be explained rigorously, which is crucial in certain applications.   	

If only a few angles are missing, and we are dealing with one simply connected inclusion with smooth boundary of moderate curvature, our theoretical background guarantees that we can estimate the locations of invisible singularities reliably. In our proposed computational method and experiments, we extended these assumptions by experimenting with the limits of the method. We consider situations when only a 60-degree angle is known and there is an unknown number of inclusion that may not be simply connected. As results showed, our proposed method could find a correct number of boundary components in each mentioned example. Phantom (iv) was too difficult for our method in the sense that the predicted neighborhoods overlapped in an unexpected way; see Figure \ref{fig:high_curvature_results}(E).

The main strength of our method is the concrete computational representation for the recovered boundary. It is especially useful in applications where the number of defects or inclusions is of interest. We are not aware of other methods offering this benefit. 

We stress that the TILT-based recovery of the unstable part may not be perfectly overlapping with the true boundary. This applies to any limited-angle tomography reconstruction method. The unstable part is recovered based on {\it a priori} knowledge imposed by the algorithm; in the case of TILT we assume that the boundary is smooth and not very curved. 

We originally approached the problem by computing the first homology groups, coming from the fact that each boundary component is diffeomorphic to $\bS ^1$. This approach leads to an enormous 3D matrix. The observation that exactly the same results can be achieved using the zeroth homology groups helped us speed up computations remarkably. The old approach took around one hundred times longer than the new way.

Since this work is just an introduction and initial feasibility study of a completely new method, there is plenty of room for future experiments and improvements. In this study, we do not investigate different initial reconstructions' impact on final estimations, namely whether complex wavelets perform better in extracting stable singularities than other reconstruction algorithms. Also, we noticed that complex wavelets are sensitive to noise, meaning that more noise leads to incorrectly classified singularities. Moreover, any kind of imperfection in the subbands inevitably leads to inaccurate final neighborhood estimations. Thus it might be worth exploring curvelets or shearlets to extract singularities from the reconstruction.       
\clearpage

\section*{Acknowledgments}
Elli Karvonen was funded by the Emil Aaltonen Foundation and Academy of Finland grant, decision number 332671.
We thank Kristian Bredies for sharing his TV regularization codes. Matti Lassas and Samuli Siltanen are supported by the Finnish Centre of Excellence in Inverse Modelling and Imaging, 2018-2025, decision numbers 312339 and 336797, Academy of Finland grants 284715 and  312110. Pekka Pankka is supported by the Academy of Finland grant, decision number 332671.

\bibliographystyle{abbrv}
\bibliography{main}

\begin{thebibliography}{10}

\bibitem{andersen1989algebraic}
A.~H. Andersen.
\newblock Algebraic reconstruction in ct from limited views.
\newblock {\em IEEE transactions on medical imaging}, 8(1):50--55, 1989.

\bibitem{elastica_Antonin_Pock2019}
C.~Antonin and T.~Pock.
\newblock Total roto-translational variation.
\newblock {\em Numerische Mathematik}, 142:611--666, 2019.

\bibitem{barutcu2021limited}
S.~Barutcu, S.~Aslan, A.~K. Katsaggelos, and D.~G{\"u}rsoy.
\newblock Limited-angle computed tomography with deep image and physics priors.
\newblock {\em Scientific reports}, 11(1):17740, 2021.

\bibitem{bredies2014recovering}
K.~Bredies.
\newblock Recovering piecewise smooth multichannel images by minimization of convex functionals with total generalized variation penalty.
\newblock In {\em Efficient Algorithms for Global Optimization Methods in Computer Vision: International Dagstuhl Seminar, Dagstuhl Castle, Germany, November 20-25, 2011, Revised Selected Papers}, pages 44--77. Springer, 2014.

\bibitem{pershomMedicalBrodzki}
J.~Brodzki, F.~Belchí, R.~Djukanovic, J.~Conway, M.~Pirashvili, and M.~Bennett.
\newblock Lung topology characteristics in patients with chronic obstructive pulmonary disease.
\newblock {\em Scientific Reports}, 8, 03 2018.

\bibitem{Bubba2021}
T.~A. Bubba, M.~Galinier, M.~Lassas, M.~Prato, L.~Ratti, and S.~Siltanen.
\newblock Deep neural networks for inverse problems with pseudodifferential operators: An application to limited-angle tomography.
\newblock {\em SIAM Journal on Imaging Sciences}, 14(2):470--505, 2021.

\bibitem{bubba2019learning}
T.~A. Bubba, G.~Kutyniok, M.~Lassas, M.~M{\"a}rz, W.~Samek, S.~Siltanen, and V.~Srinivasan.
\newblock Learning the invisible: a hybrid deep learning-shearlet framework for limited angle computed tomography.
\newblock {\em Inverse Problems}, 35(6):064002, 2019.

\bibitem{Bubenik_2020}
P.~Bubenik, M.~Hull, D.~Patel, and B.~Whittle.
\newblock Persistent homology detects curvature.
\newblock {\em Inverse Problems}, 36(2):025008, jan 2020.

\bibitem{curvelet}
E.~J. Candès and D.~L. Donoho.
\newblock Continuous curvelet transform: I. resolution of the wavefront set.
\newblock {\em Applied and Computational Harmonic Analysis}, 19(2):162--197, 2005.

\bibitem{chen2013limited}
Z.~Chen, X.~Jin, L.~Li, and G.~Wang.
\newblock A limited-angle ct reconstruction method based on anisotropic tv minimization.
\newblock {\em Physics in Medicine \& Biology}, 58(7):2119, 2013.

\bibitem{Daubechies}
I.~Daubechies.
\newblock {\em Ten Lectures on Wavelets}.
\newblock Society for Industrial and Applied Mathematics, USA, 1992.

\bibitem{davison1983ill}
M.~E. Davison.
\newblock The ill-conditioned nature of the limited angle tomography problem.
\newblock {\em SIAM Journal on Applied Mathematics}, 43(2):428--448, 1983.

\bibitem{de2014industrial}
L.~De~Chiffre, S.~Carmignato, J.-P. Kruth, R.~Schmitt, and A.~Weckenmann.
\newblock Industrial applications of computed tomography.
\newblock {\em CIRP annals}, 63(2):655--677, 2014.

\bibitem{delaney1998globally}
A.~H. Delaney and Y.~Bresler.
\newblock Globally convergent edge-preserving regularized reconstruction: an application to limited-angle tomography.
\newblock {\em IEEE Transactions on Image Processing}, 7(2):204--221, 1998.

\bibitem{DoCarmo}
M.~do~Carmo.
\newblock {\em Differential Geometry of Curves and Surfaces}.
\newblock Number s. 2 in Differential Geometry of Curves and Surfaces. Prentice-Hall, 1976.

\bibitem{dobbins2003digital}
J.~T. Dobbins~III and D.~J. Godfrey.
\newblock Digital x-ray tomosynthesis: current state of the art and clinical potential.
\newblock {\em Physics in medicine \& biology}, 48(19):R65, 2003.

\bibitem{engelhardt2000electron}
P.~Engelhardt.
\newblock Electron tomography of chromosome structure.
\newblock {\em Encyclopedia of analytical chemistry}, 6:4948--4984, 2000.

\bibitem{elastica_Euler}
L.~Euler.
\newblock Methodus inveniendi lineas curvas maximi minimive proprietate gaudentes, sive solutio problematis isoperimetrici lattissimo sensu accept.
\newblock {\em Lausanne, Geneva: Marcum-Michaelem Bousquet \& socios}, pages 1--322, 1744.

\bibitem{frikel2013sparse}
J.~Frikel.
\newblock Sparse regularization in limited angle tomography.
\newblock {\em Applied and Computational Harmonic Analysis}, 34(1):117--141, 2013.

\bibitem{FrikelQuinto}
J.~Frikel and E.~T. Quinto.
\newblock Characterization and reduction of artifacts in limited angle tomography.
\newblock {\em Inverse Problems}, 29, 2013.

\bibitem{gerth2015method}
D.~Gerth, B.~N. Hahn, and R.~Ramlau.
\newblock The method of the approximate inverse for atmospheric tomography.
\newblock {\em Inverse Problems}, 31(6):065002, 2015.

\bibitem{goethals2022dynamic}
W.~Goethals, T.~Bultreys, Z.~Manigrasso, A.~Mascini, J.~Aelterman, and M.~N. Boone.
\newblock Dynamic ct reconstruction with improved temporal resolution for scanning of fluid flow in porous media.
\newblock {\em Water Resources Research}, 58(4):e2021WR031365, 2022.

\bibitem{DigitalImageProcessing}
R.~C. Gonzalez and R.~E. Woods.
\newblock {\em Digital Image Processing (3rd Edition)}.
\newblock Prentice-Hall, Inc., USA, 2006.

\bibitem{greenleaf1989non}
A.~Greenleaf and G.~Uhlmann.
\newblock Non-local inversion formulas in integral geometry.
\newblock {\em Duke J. Math}, 58:205--240, 1989.

\bibitem{guo2021limited}
R.~Guo, I.~Barnea, and N.~T. Shaked.
\newblock Limited-angle tomographic phase microscopy utilizing confocal scanning fluorescence microscopy.
\newblock {\em Biomedical Optics Express}, 12(4):1869--1881, 2021.

\bibitem{haith2017defect}
M.~I. Haith, P.~Huthwaite, and M.~J. Lowe.
\newblock Defect characterisation from limited view pipeline radiography.
\newblock {\em NDT \& E International}, 86:186--198, 2017.

\bibitem{pershomPhysicsHamilton}
G.~Hamilton, T.~Dore, and C.~Plumberg.
\newblock Applications of persistent homology in nuclear collisions.
\newblock {\em Phys. Rev. C}, 106:064912, Dec 2022.

\bibitem{hammernik2017deep}
K.~Hammernik, T.~W{\"u}rfl, T.~Pock, and A.~Maier.
\newblock A deep learning architecture for limited-angle computed tomography reconstruction.
\newblock In {\em Bildverarbeitung f{\"u}r die Medizin 2017: Algorithmen-Systeme-Anwendungen. Proceedings des Workshops vom 12. bis 14. M{\"a}rz 2017 in Heidelberg}, pages 92--97. Springer, 2017.

\bibitem{hanson1983bayesian}
K.~M. Hanson and G.~W. Wecksung.
\newblock Bayesian approach to limited-angle reconstruction in computed tomography.
\newblock {\em JOSA}, 73(11):1501--1509, 1983.

\bibitem{hauptmann2018model}
A.~Hauptmann, F.~Lucka, M.~Betcke, N.~Huynh, J.~Adler, B.~Cox, P.~Beard, S.~Ourselin, and S.~Arridge.
\newblock Model-based learning for accelerated, limited-view 3-d photoacoustic tomography.
\newblock {\em IEEE transactions on medical imaging}, 37(6):1382--1393, 2018.

\bibitem{helin2018atmospheric}
T.~Helin, S.~Kindermann, J.~Lehtonen, and R.~Ramlau.
\newblock Atmospheric turbulence profiling with unknown power spectral density.
\newblock {\em Inverse Problems}, 34(4):044002, 2018.

\bibitem{huang2019traditional}
Y.~Huang, Y.~Lu, O.~Taubmann, G.~Lauritsch, and A.~Maier.
\newblock Traditional machine learning for limited angle tomography.
\newblock {\em International journal of computer assisted radiology and surgery}, 14:11--19, 2019.

\bibitem{huang2016new}
Y.~Huang, O.~Taubmann, X.~Huang, V.~Haase, G.~Lauritsch, and A.~Maier.
\newblock A new weighted anisotropic total variation algorithm for limited angle tomography.
\newblock In {\em 2016 IEEE 13th International Symposium on Biomedical Imaging (ISBI)}, pages 585--588. IEEE, 2016.

\bibitem{huang2018scale}
Y.~Huang, O.~Taubmann, X.~Huang, V.~Haase, G.~Lauritsch, and A.~Maier.
\newblock Scale-space anisotropic total variation for limited angle tomography.
\newblock {\em IEEE Transactions on Radiation and Plasma Medical Sciences}, 2(4):307--314, 2018.

\bibitem{huang2020limited}
Y.~Huang, S.~Wang, Y.~Guan, and A.~Maier.
\newblock Limited angle tomography for transmission x-ray microscopy using deep learning.
\newblock {\em Journal of synchrotron radiation}, 27(2):477--485, 2020.

\bibitem{CompHomKaczynski}
T.~Kaczynski, K.~Mischaikow, and M.~Mrozek.
\newblock {\em Computational homology}.
\newblock Springer, New York, 2004.

\bibitem{kolehmainen2003statistical}
V.~Kolehmainen, S.~Siltanen, S.~J{\"a}rvenp{\"a}{\"a}, J.~P. Kaipio, P.~Koistinen, M.~Lassas, J.~Pirttil{\"a}, and E.~Somersalo.
\newblock Statistical inversion for medical x-ray tomography with few radiographs: Ii. application to dental radiology.
\newblock {\em Physics in Medicine \& Biology}, 48(10):1465, 2003.

\bibitem{kurfiss20123}
M.~Kurfiss and G.~Streckenbach.
\newblock 3-dimensional x-ray inspection of very large objects 600 kv digital laminography offers a solution.
\newblock In {\em 18th World Conference on Nondestructive Testing}, pages 16--20, 2012.

\bibitem{shearlets}
G.~Kutyniok and D.~Labate.
\newblock Resolution of the wavefront set using continuous shearlets.
\newblock {\em Transactions of the American Mathematical Society}, 361:2719--2754, 05 2009.

\bibitem{landi2017limited}
G.~Landi, E.~L. Piccolomini, and J.~Nagy.
\newblock A limited memory bfgs method for a nonlinear inverse problem in digital breast tomosynthesis.
\newblock {\em Inverse Problems}, 33(9):095005, 2017.

\bibitem{landi2019nonlinear}
G.~Landi, E.~L. Piccolomini, and J.~Nagy.
\newblock Nonlinear conjugate gradient method for spectral tomosynthesis.
\newblock {\em Inverse Problems}, 35(9):094003, 2019.

\bibitem{elasticaHistory_Levien2008}
R.~Levien.
\newblock The elastica: a mathematical history.
\newblock Technical Report UCB/EECS-2008-103, EECS Department, University of California, Berkeley, Aug 2008.

\bibitem{louis1986incomplete}
A.~K. Louis.
\newblock Incomplete data problems in x-ray computerized tomography: I. singular value decomposition of the limited angle transform.
\newblock {\em Numerische Mathematik}, 48:251--262, 1986.

\bibitem{mauriello2020role}
S.~M. Mauriello, A.~M. Broome, E.~Platin, A.~Mol, C.~Inscoe, J.~Lu, O.~Zhou, and K.~Moss.
\newblock The role of stationary intraoral tomosynthesis in reducing proximal overlap in bitewing radiography.
\newblock {\em Dentomaxillofacial Radiology}, 49(8):20190504, 2020.

\bibitem{elastica_Miura2020}
T.~Miura.
\newblock Elastic curves and phase transitions.
\newblock {\em Mathematische Annalen}, 376, 04 2020.

\bibitem{mueller2012linear}
J.~L. Mueller and S.~Siltanen.
\newblock {\em Linear and nonlinear inverse problems with practical applications}.
\newblock SIAM, 2012.

\bibitem{Elastica_Mumford1994}
D.~Mumford.
\newblock {\em Elastica and Computer Vision}, pages 491--506.
\newblock Springer New York, New York, NY, 1994.

\bibitem{natterer2001mathematics}
F.~Natterer.
\newblock {\em The mathematics of computerized tomography}.
\newblock SIAM, 2001.

\bibitem{niklason1997digital}
L.~T. Niklason, B.~T. Christian, L.~E. Niklason, D.~B. Kopans, D.~E. Castleberry, B.~Opsahl-Ong, C.~E. Landberg, P.~J. Slanetz, A.~A. Giardino, R.~Moore, et~al.
\newblock Digital tomosynthesis in breast imaging.
\newblock {\em Radiology}, 205(2):399--406, 1997.

\bibitem{norberg2015ionospheric}
J.~Norberg, L.~Roininen, J.~Vierinen, O.~Amm, D.~McKay-Bukowski, and M.~Lehtinen.
\newblock Ionospheric tomography in bayesian framework with gaussian markov random field priors.
\newblock {\em Radio Science}, 50(2):138--152, 2015.

\bibitem{osterloh2011limited}
K.~Osterloh, D.~Fratzscher, M.~Jechow, T.~B{\"u}cherl, B.~Schillinger, A.~Hasenstab, U.~Zscherpel, and U.~Ewert.
\newblock {\em Limited view tomography of wood with fast and thermal neutrons}, volume 128.
\newblock Deutsche Gesellschaft f{\"u}r Zerst{\"o}rungsfreie Pr{\"u}fung (DGZfP), 2011.

\bibitem{pershomMedicalOyama}
A.~Oyama, Y.~Hiraoka, I.~Obayashi, Y.~Saikawa, S.~Furui, K.~Shiraishi, S.~Kumagai, T.~Hayashi, and J.~Kotoku.
\newblock Hepatic tumor classification using texture and topology analysis of non-contrast-enhanced three-dimensional t1-weighted mr images with a radiomics approach.
\newblock {\em Scientific Reports}, 9, 06 2019.

\bibitem{piccolomini2016fast}
E.~L. Piccolomini and E.~Morotti.
\newblock A fast total variation-based iterative algorithm for digital breast tomosynthesis image reconstruction.
\newblock {\em Journal of Algorithms \& Computational Technology}, 10(4):277--289, 2016.

\bibitem{Quinto1993}
E.~T. Quinto.
\newblock Singularities of the x-ray transform and limited data tomography in $\mathbb{R}^2$ and $\mathbb{R}^3$.
\newblock {\em SIAM J. Math. Anal.}, 24(5):1215–1225, sep 1993.

\bibitem{rantala2006wavelet}
M.~Rantala, S.~Vanska, S.~Jarvenpaa, M.~Kalke, M.~Lassas, J.~Moberg, and S.~Siltanen.
\newblock Wavelet-based reconstruction for limited-angle x-ray tomography.
\newblock {\em IEEE transactions on medical imaging}, 25(2):210--217, 2006.

\bibitem{riis2018limited}
N.~A.~B. Riis, J.~Fr{\o}sig, Y.~Dong, and P.~C. Hansen.
\newblock Limited-data x-ray ct for underwater pipeline inspection.
\newblock {\em Inverse Problems}, 34(3):034002, 2018.

\bibitem{ritschl2011improved}
L.~Ritschl, F.~Bergner, C.~Fleischmann, and M.~Kachelrie{\ss}.
\newblock Improved total variation-based ct image reconstruction applied to clinical data.
\newblock {\em Physics in Medicine \& Biology}, 56(6):1545, 2011.

\bibitem{rudin1992nonlinear}
L.~I. Rudin, S.~Osher, and E.~Fatemi.
\newblock Nonlinear total variation based noise removal algorithms.
\newblock {\em Physica D: nonlinear phenomena}, 60(1-4):259--268, 1992.

\bibitem{DualTree}
I.~Selesnick, R.~Baraniuk, and N.~Kingsbury.
\newblock The dual-tree complex wavelet transform.
\newblock {\em IEEE Signal Processing Magazine}, 22(6):123--151, 2005.

\bibitem{DoubleDensityDualTree_Selesnick}
I.~W. Selesnick.
\newblock The double-density dual-tree dwt.
\newblock {\em IEEE Transactions on Signal Processing}, 52:1304--1314, 2004.

\bibitem{sidky2006accurate}
E.~Y. Sidky, C.-M. Kao, and X.~Pan.
\newblock Accurate image reconstruction from few-views and limited-angle data in divergent-beam ct.
\newblock {\em Journal of X-ray Science and Technology}, 14(2):119--139, 2006.

\bibitem{silva2021x}
W.~Silva, R.~Lopes, U.~Zscherpel, D.~Meinel, and U.~Ewert.
\newblock X-ray imaging techniques for inspection of composite pipelines.
\newblock {\em Micron}, 145:103033, 2021.

\bibitem{turk2020promise}
M.~Turk and W.~Baumeister.
\newblock The promise and the challenges of cryo-electron tomography.
\newblock {\em FEBS letters}, 594(20):3243--3261, 2020.

\bibitem{vedantham2015digital}
S.~Vedantham, A.~Karellas, G.~R. Vijayaraghavan, and D.~B. Kopans.
\newblock Digital breast tomosynthesis: state of the art.
\newblock {\em Radiology}, 277(3):663--684, 2015.

\bibitem{webber1997tuned}
R.~Webber, R.~Horton, D.~Tyndall, and J.~Ludlow.
\newblock Tuned-aperture computed tomography (tact). theory and application for three-dimensional dento-alveolar imaging.
\newblock {\em Dentomaxillofacial Radiology}, 26(1):53--62, 1997.

\bibitem{wu2004comparison}
T.~Wu, R.~H. Moore, E.~A. Rafferty, and D.~B. Kopans.
\newblock A comparison of reconstruction algorithms for breast tomosynthesis.
\newblock {\em Medical physics}, 31(9):2636--2647, 2004.

\end{thebibliography}

\end{document}